    \numberwithin{equation}{section}
\newcommand{\tp}{{\rm T}}
\newcommand{\Tr}{{\rm Tr}}
\newcommand{\lb}{\left [}
\newcommand{\rb}{\right ]}
\newcommand{\lp}{\left (}
\newcommand{\rp}{\right )}
\newcommand{\bb}{\mathbb}
\newcommand{\Arg}{{\rm Arg}}
\newcommand{\GL}{{\rm GL}}
\newcommand{\Hm}{{\rm H}}
\newcommand{\Sy}{{\rm S}}
\newcommand{\Cp}{{\bb C}}
\newtheorem{theorem}{Theorem}[section]
\newtheorem{corollary}{Corollary}[theorem]
\newtheorem{proposition}[theorem]{Proposition}
\renewcommand{\[}{\begin{equation}}
\renewcommand{\]}{\end{equation}}
\let\ophi\phi
\renewcommand{\phi}{\varphi}
\title{\textbf{An equivalence in random matrix and tensor models via a dually weighted intermediate field representation}}
\author{Juan Abranches$^{a}$\footnote{Email: \href{mailto:juan.abranches@oist.jp}{juan.abranches@oist.jp}} \qquad 
Alicia Castro$^{b}$\footnote{Email: \href{mailto:A.Castro@thphys.uni-heidelberg.de}{A.Castro@thphys.uni-heidelberg.de}} \qquad 
Reiko Toriumi$^{a}$\footnote{Email: \href{mailto:reiko.toriumi@oist.jp}
{reiko.toriumi@oist.jp}}
\\[3mm]
{\small $^a$ Okinawa Institute of Science and Technology Graduate University,} \\ {\small 1919-1, Tancha, Onna, Kunigami District, Okinawa 904-0495, Japan. }\\
{\small $^b$ Institute for Theoretical Physics, Heidelberg University, Heidelberg, Germany.}}
\date{ }
\begin{document}

\maketitle
\begin{abstract}
\noindent We present novel equivalences in random matrix and tensor models between complex and self-adjoint theories with nontrivial quadratic terms in the action, established through an intermediate field representation. More precisely, we show that the partition functions of certain self-adjoint models and their complex counterparts are different integral representations of the exact same function. A special case of these equivalences takes a form of newly found dually weighted intermediate field representations, which generalize the standard intermediate field representation. We also find indications of an equivalence between real tensor models and self-transpose tensor models.
\end{abstract}

\tableofcontents

\section{Introduction}

\noindent Random matrix models have long played a central role in two-dimensional (2D) Euclidean quantum gravity, where they provide a direct combinatorial and analytic handle on the sum over random surfaces.
In particular, \emph{self-adjoint random matrix models}~\cite{DiFrancesco:1993,DiFrancesco:2004qj} provide a direct handle on the sum over random oriented geometries. Through the dual-graph correspondence, the partition function of this matrix model was shown to be equal to that of \emph{Euclidean Dynamical Triangulations} (EDT)~\cite{Ambjorn:1985az}. 

EDT is a state-sum formulation of quantum gravity (QG). In this framework, the geometry of spacetime is discretized by a tessellation into intrinsically flat simplices, and the Einstein--Hilbert action is approximated by its Regge version~\cite{Regge:1961}. In contrast to quantum field theory (QFT) on a fixed background, in quantum gravity the lattice itself is dynamical, as its structure represents the gravitational degrees of freedom. The partition function is therefore obtained as the continuum limit of a state sum over all tessellations. In the matrix model formulation, the continuum limit is reached by sending the matrix size $N \to \infty$ and tuning the face weights to criticality. Due to the availability of powerful combinatorial and analytic tools~\cite{Budd2023} as well as known techniques from high-energy physics, these discrete models become considerably more tractable and their continuum limits can be studied analytically and numerically. For example, Functional Renormalization Group (FRG) techniques provide a systematic way to analyze this limit through the large-$N$ expansion~\cite{Eichhorn:2013isa}.
Altogether, this motivates the search for analytic frameworks \emph{beyond 2D}, capable of capturing higher-dimensional generalizations of EDT.

However, in higher dimensions, extensive numerical evidence shows that the continuum limit of EDT does not yield a geometric phase compatible with our universe. Instead, the path integral is dominated by a ``crumpled'' phase and a tree-like ``branched polymer'' phase~\cite{Ambjorn:1985az}. This shortcoming motivated the development of discrete approaches that enforce a causal structure at the level of the building blocks themselves, most notably \emph{Causal Set Theory}~\cite{Sorkin1987,Surya:2019ndm} and \emph{Causal Dynamical Triangulations} (CDT)~\cite{Ambjorn:1998,Ambjorn:2012jv,Loll:2019rdj}. In $D$-dimensional CDT, simplicial building blocks carry both spacelike and timelike $(D-1)$-faces, and gluing rules forbid configurations incompatible with a global time foliation. Crucially, CDT exhibits strong numerical evidence for the appearance of a genuine \emph{geometric phase} in the continuum limit~\cite{Loll:2019rdj}. This motivated the implementation of such causal structures in other QG state-sum formulations, including random matrix models. 

In the 2D case, \cite{Benedetti:2008hc} introduced a causal structure using dually weighted matrix models (DWMMs)~\cite{das_1990,Kazakov:1995gm,Kazakov:1996zm}. In these, the causal structure is imposed at the level of the Feynman rules by assigning different weights to faces of the ribbon graph. Concretely, the action contains interaction terms dressed with fixed ``weight matrices’’ that distinguish timelike from spacelike identifications of edges. In the dual triangulation, these weights control how different types of simplicial building blocks can meet, and thereby implement a combinatorial remnant of causality already at the level of the matrix integral. While this framework successfully reproduces features of CDT in two dimensions, the resulting models are analytically challenging: the presence of nontrivial face weights obscures the usual large-$N$ techniques and complicates the extraction of continuum limits.

By looking for a tool to regain analytic control over dually weighted matrix models, we encounter the \emph{intermediate field representation}. This technique introduces an auxiliary field that allows to rewrite quartic (or higher) matrix and tensor interactions in simpler forms. This effectively replaces complicated multitrace couplings by a Gaussian integrals over the auxiliary field. Despite its success in several matrix and tensor models~\cite{Bonzom2015ColoredTO,Lionni:2016ush,Rivasseau:2010ke,Gurau:2014lua,Nguyen:2014mga,Lionni2016NoteOT,Erbin:2019zug, bardy2025largenlimiton3invariantgeneral}, its application to theories
with dually-weighted interactions has remained largely unexplored.

Building on this foundation,
and motivated by the causal matrix models studied in~\cite{Benedetti:2008hc,Castro:2020dzt,abranches2025} as well as higher-dimensiomal analogues in the form of dually-weighted tensor models \cite{Benedetti:2011nn}, we consider certain complex matrix and tensor models that include an external rigidity matrix in the 
propagator. We find that the partition and correlation functions of these complex matrix/tensor models
are \emph{exactly equivalent} to certain self-adjoint two--matrix/tensor models via a dually-weighted intermediate field representations. Interestingly, examples of this type of equivalence have been observed in~\cite{Szabo:1996fj, difrancesco1992generatingfunctionfatgraphs}.
In this work, we formalize this equivalence for more general models by constructing an intermediate field representation for matrix/tensor models with nontrivial propagators and potentials.

The paper is organized as follows. In Section~\ref{sec:notations}, we begin with a summary of the notation and basic identities used throughout the paper.
In Section~\ref{sec:causalmm}, we review DWMMs, as well as the causal matrix model introduced by Benedetti and Henson \cite{Benedetti:2008hc} and motivate the introduction of an analogous complex matrix model. 
In Section \ref{sec:summarymain}, we summarize our main results.

In Section~\ref{sec:equivmm}, we introduce and prove equivalences between certain random matrix models of complex and self-adjoint matrices using the intermediate field representation. This section includes several examples illustrating these equivalences in random matrix theories, including the causal matrix model. It also contains ancillary results which, while not essential for our main computations, may be relevant in other contexts

In Section~\ref{sec:equivtensor}, we extend our results of equivalences to complex and self-adjoint random tensor models. This includes an exact intermediate-field equivalence between complex random tensor models of order $D$ and self-adjoint tensor models of order $2D$. 
Furthermore, we show that when the interaction potential exhibits additional symmetries, the self-adjoint formulation reduces to an effective tensor model of lower order $2(D-1)$.
This section also includes several illustrative examples. We conclude the section by presenting an example of an equivalence between real and self-transpose random tensor models, and comment on its potential as a toy model for CDT in three and higher dimensions.

Finally, in Section~\ref{sec:concl}, we summarize our results and give a general overview of possible future directions.

\section{Notations and essentials}\label{sec:notations}

\paragraph{Matrix multi-traces.}

\begin{itemize}

\item \textbf{Definition.}  
For $\sigma\in \Sy_n$ and $M$ a complex $N\times N$ matrix,
$
\Tr_{[\sigma]}(M)
= \sum_{k_1,\dots,k_n=1}^N
  \prod_{i=1}^n M_{k_ik_{\sigma(i)}}$, 
where $[\sigma]$ denotes the equivalence class of $\sigma$.

\item \textbf{Cycle-type dependence.}  
Set 
$\Tr_\lambda(M):=\prod_{j=1}^{n}\Tr(M^{\lambda_j})$, where $\lambda\vdash n$ is the partition corresponding to the equivalence class $[\sigma]$. Then $ \Tr_\lambda(M)=\Tr_{[\sigma]}(M).$

\item \textbf{Homogeneity.} For $z\in\mathbb{C}$
and $\lambda\vdash n$, 
$\Tr_\lambda(zM)=z^n\Tr_\lambda(M)$.

\item \textbf{Examples.} $
\Tr_{[n^1]}(M)=\Tr(M^n),\qquad
\Tr_{[1^n]}(M)=\Tr(M)^n,\qquad
\Tr_{[k^d]}(M)=\Tr(M^k)^d.$

\end{itemize}
\paragraph{Representations and characters.}

\begin{itemize}
\item \textbf{Symmetric group $\Sy_n$.} 
A partition $\lambda\vdash n$ labels an irreducible representation 
$\rho_\lambda:\Sy_n\to\GL(d_\lambda)$ with character
$\chi_\lambda(\sigma)=\Tr(\rho_\lambda(\sigma))$ and the dimension
of the representation is given by $d_\lambda=\chi_\lambda(e)$.
\item \textbf{Orthogonality and completeness (discrete).} 
For $\mu,\nu\vdash n$, and $\alpha, \beta \in S_n$,
\[
\frac{1}{n!}\sum_{\sigma\in \Sy_n} \chi_\mu(\sigma)^* \chi_\nu(\sigma)
     = 
     \delta_{\mu\nu}
     ,\qquad
\sum_{\lambda\vdash n} \chi_\lambda(\alpha)^* \chi_\lambda(\beta)
     = \frac{n!}{|[\alpha]|}\delta_{[\alpha][\beta]}.
\]

\item \textbf{Double-class identity (discrete convolution).}
$
\frac{1}{n!}\sum_{\sigma\in \Sy_n}
\chi_\lambda(\alpha\sigma\beta\sigma^{-1})
= \frac{1}{d_\lambda}\chi_\lambda(\alpha)\chi_\lambda(\beta).$

\item \textbf{Matrix characters of $\GL(N)$ and $U(N)$.}  
A partition $\lambda\vdash n$ labels an irreducible representation  
$\rho_\lambda:\GL(N)\to\GL(D_\lambda)$ with character
$
\chi_\lambda(M)=\Tr(\rho_\lambda(M))$ and dimension $D_\lambda=\chi_\lambda(\bb 1)$.
If $M$ has eigenvalues $m_1,\dots,m_N$, then
$
\chi_\lambda(M)=s_\lambda(m_i),$
the Schur polynomial.

\item \textbf{Unitary orthogonality.} For $U\in U(N)$
\[
\int_{U(N)} dU \;\chi_\mu(U)^* \chi_\nu(U)
= \delta_{\mu\nu}.
\]

\item \textbf{Character factorization over $U(N)$ 
(separation of $A$ and $B$).}
\[
\int_{U(N)} dU\;\chi_\lambda(A U B U^\dagger)
= \frac{1}{d_\lambda}\,\chi_\lambda(A)\chi_\lambda(B).
\]

\item \textbf{Schur--Weyl duality.}
\[
\Tr_{[\sigma]}(M)
= \sum_{\lambda\vdash n} \chi_\lambda(\sigma)\chi_\lambda(M),
\qquad
\chi_\lambda(M)
= \frac{1}{n!}\sum_{\sigma\in \Sy_n}
\chi_{\lambda}(\sigma)\Tr_{[\sigma]}(M).
\]

\end{itemize}
\paragraph{Index conventions.}

\begin{itemize}

\item \textbf{Multi-indices.}  
For a fixed $D\ge 1$, a multi-index is an element
$
\bm a = (a_1,\dots,a_D)\in\{1,\dots,N\}^{D}.$

\item \textbf{Deletion of an entry.}  
For $c\in\{1,\dots,D\}$,
$
\bm a_{\hat c} := (a_1,\dots,a_{c-1},a_{c+1},\dots,a_D).$

\item \textbf{Substitution.}  
For $\bm a,\bm b\in\{1,\dots,N\}^{D}$ and $c\in\{1,\dots,D\}$,
$
\bm a_{\hat c}\bm b := (a_1,\dots,a_{c-1},b_c,a_{c+1},\dots,a_D).$

\end{itemize}
\paragraph{Tensors.}

\begin{itemize}

\item \textbf{Complex tensors.}  
An order-$D$ complex tensor
$\phi\in\mathbb C_N^{\otimes D}$
has components
$\phi^{\bm a}\in\mathbb C.$

\item \textbf{Conjugate tensors.}  
The conjugate tensor $\phi^\dagger\in\mathbb C_N^{*\otimes D}$ has components
$
\phi^\dagger_{\bm a} = (\phi^{\bm a})^*.$

\item \textbf{Complex mixed tensors.}  
A tensor
$
P \in \mathbb C_N^{\otimes D}\otimes\mathbb C_N^{*\otimes D}
$
has components 
$P^{\bm a}_{\bm b}\in \Cp.$

\item \textbf{Conjugate mixed tensors.}  
The tensor
$
P^\dagger \in \mathbb C_N^{\otimes D}\otimes\mathbb C_N^{*\otimes D}
$
has components 
$[P^\dagger]^{\bm a}_{\bm b}=(P_{\bm a}^{\bm b})^*.$

\item \textbf{Self-adjoint tensors.} 
A self-adjoint tensor
$\Phi\in\mathrm H(\mathbb C_N^{\otimes D})
\subset
\mathbb C_N^{\otimes D}\otimes\mathbb C_N^{*\otimes D}$ is characterized by
$
\Phi^\dagger = \Phi.$

\item \textbf{Identity and transposition operators.} 
For $\bm a,\bm b,\bm c,\bm d\in\{1,\dots,N\}^{D}$, $[{\bb 1}^{\otimes D}]^{\bm a}_{\bm b}=\delta^{\bm a}_{\bm b}$ and
we define
$\Sigma^{\bm a \bm c}_{\bm b \bm d}:=\delta^{\bm a}_{\bm d}\delta^{\bm c}_{\bm b}$\;.
\end{itemize}
\paragraph{Tensor operations.}

\begin{itemize}

\item \textbf{Index contraction.}
Repeated indices are summed over. 
If $\phi,\psi\in \mathbb C_N^{\otimes D}$, then
\[\psi^\dagger_{\bm a}\phi^{\bm a}=\sum_{{\bm a}}\psi^\dagger_{\bm a}\phi^{\bm a},\qquad  \psi^\dagger_{\bm a}\phi^{{\bm b}_{\hat c}\bm a}=\sum_{a_c}\psi^\dagger_{\bm a}\phi^{{\bm b}_{\hat c}\bm a}.\]

\item \textbf{Tensor product and contraction.}  
If $\phi,\psi\in \mathbb C_N^{\otimes D}$, then
$
\phi\psi^\dagger \in \mathbb C_N^{\otimes D}\otimes \mathbb C_N^{*\otimes D},
\;
[\phi\psi^\dagger]^{\bm a}_{\bm b} = \phi^{\bm a}\psi_{\bm b}$
and
$
\psi^\dagger\phi \in \mathbb C,\; \psi^\dagger\phi = \psi^\dagger_{\bm a}\phi^{\bm a}.$

\item \textbf{Product of an operator and a tensor.}  
If $P\in \mathbb C_N^{\otimes D}\otimes \mathbb C_N^{*\otimes D}$ and  
$\phi\in \mathbb C_N^{\otimes D}$, then
$
P\phi \in \mathbb C_N^{\otimes D},
\;
[P\phi]^{\bm a} = P^{\bm a}_{\bm b}\phi^{\bm b},$
and
$
\phi^\dagger P \in \mathbb C_N^{*\otimes D},
\;
[\phi^\dagger P]_{\bm a} = \phi^\dagger_{\bm b}P^{\bm b}_{\bm a}.$

\item \textbf{Composition of operators.}  
If $P,Q\in \mathbb C_N^{\otimes D}\otimes \mathbb C_N^{*\otimes D}$, then
$
PQ \in \mathbb C_N^{\otimes D}\otimes \mathbb C_N^{*\otimes D},
\;
[PQ]^{\bm a}_{\bm b}
= P^{\bm a}_{\bm c}Q^{\bm c}_{\bm b}.$

\item \textbf{Trace.}  
For $P\in \mathbb C_N^{\otimes D}\otimes \mathbb C_N^{*\otimes D}$,
$\Tr(P)\in \Cp,\;
\Tr(P) = P^{\bm a}_{\bm a}.$

\item \textbf{Trace invariance.}  
If $U\in U(N^D)$, then $\Tr(UPU^\dagger)=
\Tr(P)$.

\item \textbf{Partial trace.}  
For $P\in \mathbb C_N^{\otimes D}\otimes \mathbb C_N^{*\otimes D}$,
$
{\rm Tr}_{(c)}(P)\in  \mathbb C_N^{\otimes (D-1)}\otimes \mathbb C_N^{*\otimes (D-1)},\; [{\rm Tr}_{(c)}(P)]^{\bm a_{\hat c}}_{\bm b_{\hat c}} = P^{\bm a_{\hat c}\bm d}_{\bm b_{\hat c}{\bm d}}.$

\item \textbf{Partial trace invariance.}  
If $U\in U(N)$ and $\check U=U\otimes {\bb 1}^{\otimes (D-1)}$, then ${\rm Tr}_{(1)}(\check UP\check U^\dagger)=
{\rm Tr}_{(1)}(P)$.

\item \textbf{From tensor product to contraction.} 
If $\phi,\psi\in \mathbb C_N^{\otimes D}$, then
$
\Tr(\phi\psi^\dagger)=\psi^\dagger\phi.$

\end{itemize}

\paragraph{Tensor and operator derivatives.}
\begin{itemize}
    \item \textbf{Tensor derivatives.} For $\phi\in \Cp_N^{\otimes D}$,
\begin{equation}
    \left[\frac{\partial}{\partial \phi}\right]_{\bm a}:=\frac{\partial}{\partial \phi^{\bm a}}\;.
\end{equation} 
    \item \textbf{Operator derivative identity.} For $\phi\in \Cp_N^{\otimes D}$, $\frac{\partial}{\partial \phi}\phi={\bb 1}$.
    \item \textbf{Operator derivatives.} For $\Phi\in \Cp_N^{\otimes D}\otimes\Cp_N^{*\otimes D}$,
\begin{equation}
    \left[\frac{\partial}{\partial \Phi}\right]^{\bm a}_{\bm b}:=\frac{\partial}{\partial \Phi^{\bm b}_{\bm a}}\;.
\end{equation} 
    \item \textbf{Operator derivative identity.} For $\Phi\in \Cp_N^{\otimes D}\otimes\Cp_N^{*\otimes D}$, $\frac{\partial}{\partial \Phi}\Phi=\Sigma$, where $\Sigma{}^{\bm a\bm c}_{\bm b \bm d}=\delta^{\bm a}_{\bm d}\delta^{\bm c}_{\bm b}$.
    
\end{itemize}
\paragraph{Trace invariants for tensors.}
In the literature 
and in our paper, 
 ``trace invariants for tensors" may be referred to as ``tensor invariants" or ``bubbles".

\begin{itemize}

\item \textbf{Multi-permutations.}  
For $n\ge 1$, a multi-permutation is a $D$-tuple $
\bm{\alpha} = (\alpha_1,\dots,\alpha_D) \in \Sy_n^{D},$
with each $\alpha_j\in \Sy_n$.  

\item \textbf{Scalar distribution.} For $\mu\in \Sy_n$ and $\bm{\alpha} \in \Sy_n^{D}$,
we define 
$
\mu\bm{\alpha} := (\mu\alpha_1,\dots,\mu\alpha_D)\quad \text{and}\quad \bm{\alpha}\mu := (\alpha_1\mu,\dots,\alpha_D\mu).$

\item \textbf{Action on index arrays.}  
Let $\bm{\alpha} 
\in \Sy_n^{D}
$, and let $\bm k = (k_{ij})$ be an $n\times D$ array of indices with  
$k_{ij}\in\{1,\dots,N\}$.  
The action of $\bm{\alpha} 
$ on $\bm k$ is defined componentwise by
$
(\bm{\alpha}_* \bm k)_{ij}
= k_{\sigma_j^{-1}(i)\, j}.$
That is, $\sigma_j$ permutes the $i$-index in the $j$-th column.

\item \textbf{Trace invariants of an operator.
}  
For $P\in\mathbb C_N^{\otimes D}\otimes\mathbb C_N^{*\otimes D}$,
\[
\Tr_{[\bm{\alpha}]}(P)
:= \sum_{\bm k}\prod_{i=1}^n
P^{(\bm{\alpha}_*\bm k)_i}_{\bm k_i},
\]
where ${\bm k_i}$ 
and 
$(\bm{\alpha}_*\bm k)_i$
respectively
denote
the $i$-th row of ${\bm k}$
and
the $i$-th row of
$(\bm{\alpha}_*\bm k)$.

\item \textbf{Permutation conjugation invariance.}  
For any $\mu\in \Sy_n$,
$
\Tr_{[\mu\bm{\alpha}\mu^{-1}]}(P)
= \Tr_{[\bm{\alpha}]}(P).
$

\item \textbf{Multi-unitary conjugation invariance.} For any $U\in U(N)^D$, 
$
\Tr_{[\bm{\alpha}]}(UPU^{-1})
= \Tr_{[\bm{\alpha}]}(P).
$

\item \textbf{Trace invariants of a tensor pair.
} 
For $\phi,\psi\in\mathbb C_N^{\otimes D}$,
\[
\Tr_{[\bm{\alpha}]}(\psi^\dagger,\phi)
:=\sum_{\bm k}\prod_{j=1}^n\psi^\dagger_{\bm k_j}
\phi^{(\bm{\alpha}_*\bm k)_j}= \Tr_{[\bm{\alpha}]}(\phi\psi^\dagger)
.
\]

\item \textbf{Left and right invariance.}  
For any $\mu,\nu\in \Sy_n$,
$
\Tr_{[\mu\bm{\alpha}\nu]}(\phi\psi^\dagger)
= \Tr_{[\bm{\alpha}]}(\phi\psi^\dagger)
.
$

\item \textbf{Gauge choice.} For $c\in\{1,...,D\}$, 
$\Tr_{[\bm{\alpha}]}(\phi\phi^\dagger)=\Tr_{[\alpha_c^{-1}\bm{\alpha}_{\hat c}]}(\Tr_{(c)}(\phi\phi^\dagger))$
\end{itemize}
\paragraph{Integration measures.}

\begin{itemize}

\item \textbf{Complex numbers.}
$
dz^*dz = dadb$ with $z=a+ib.$

\item \textbf{Complex matrices.}
$
\mathcal DM^\dagger\mathcal DM
=\prod_{ij} dM_{ij}.$

\item \textbf{Self-adjoint matrices.}
$
\mathcal DA
=\prod_{i\leq j} dA_{ij}.$

\item \textbf{Complex tensors.}
$
\mathcal D\phi^\dagger\mathcal D\phi
=\prod_{\bm a} d\phi^{\bm a}.$

\item \textbf{Self-adjoint tensors.}
$
\mathcal D\Phi
= \prod_{\bm a\le\bm b} d\Phi^{\bm a}_{\bm b},$
using lexicographic order.

\end{itemize}

\section{Motivation}\label{sec:causalmm}
\noindent In this section, we review three types of random matrix models: dually weighted, self--adjoint multi--matrix,
and complex matrix models.

\subsection{Dually weighted matrix model}\label{subsec:DWMM}
\noindent Dually weighted matrix models 
(DWMM)
extend the self-adjoint one--matrix models by allowing independent control of the combinatorial weights of vertices and faces in the ribbon graph expansion
amplitudes.
This is implemented using two fixed ``background'' matrices
which control the weights.
In this subsection, we give a brief summary of this.

\paragraph{External matrices and couplings.}

Let $X,Y \in M_N(\mathbb{C})$ be two fixed matrices.  
They generate the 
sequences
\begin{equation}\label{eq:DWMM_COUPLINGS}
    x^{(k)} = \Tr(X^k)\;, \qquad
    y^{(k)} = \Tr(Y^k)\;, \qquad k \ge 1\;.
\end{equation}  
No specific assumptions on $X$ or $Y$ are required beyond the existence of these traces. For practical purposes one may take them diagonalizable, but their algebraic role in the model depends solely on the above traces.

\paragraph{Dually weighted potential.}
For a self-adjoint matrix $A\in{\rm H}(N)$,
specify
\begin{equation}\label{eq:DW_potential_log}
    V(A)
    = -\Tr\ln \left(\mathbb{1}^{\otimes 2} - i\,X\otimes(YA)\right)\; .
\end{equation}
Expanding the logarithm results in
\begin{equation}\label{eq:DW_potential_series}
    V(A)
    = \sum_{k=1}^{\infty} \frac{i^{k}}{k}\,
      \Tr(X^{k})\Tr \left((YA)^{k}\right)\; .
\end{equation}
\paragraph{Partition function.}

The 
one-matrix
DWMM is defined by
\begin{equation}\label{eq:Z_DY}
\mathcal{Z}_{\rm DWMM}
    [V]
    = \int_{{\rm H}(N)} \mathcal{D}A\;
      e^{-\frac{N}{2}\Tr(A^2) -\Tr\ln (\mathbb{1}^{\otimes 2} - i\,X\otimes(YA))}\; .
      \end{equation}
The ribbon graph expansion of this partition function describes discretized surfaces in which both vertex degrees and face lengths can be adjusted independently. This comes from looking at the role of the coupling sequences \eqref{eq:DWMM_COUPLINGS}\footnote{The original DWMM literature \cite{Kazakov_1996,das_1990} works with the series \eqref{eq:DW_potential_series} (without the $i$). The factor of $i$ in \eqref{eq:DW_potential_log} is introduced here only because it arises naturally in the Fourier/intermediate-field representations used later; it does not change the combinatorics.} in the potential \eqref{eq:DW_potential_series}: the factor $\Tr(X^{k})$ provides a weight for $k$--valent vertices, and the terms $\Tr((YA)^{k})$ produce a cycles of length $k$, so that each face of length $k$ is weighted by $\Tr(Y^{k})$. Thus, in the ribbon graph expansion of this matrix model:
\begin{itemize}
    \item a vertex of degree $k$ contributes $\Tr(X^{k})=x^{(k)}$,  
    \item a face of length $k$ contributes $\Tr(Y^{k})=y^{(k)}$\,,
\end{itemize}
to the amplitudes.
Therefore, the matrices $X$ and $Y$, independently, determine the vertex and face
weights
of the model. This flexibility makes DWMM suitable for encoding local geometric constraints, like the causal matrix model presented in the next subsection \ref{subsec:CDT}.

\subsection{Causal matrix model}
\label{subsec:CDT}
\noindent The causal matrix model~\cite{Benedetti:2008hc} provides a matrix integral representation of two--dimensional CDT by translating global causal rules into local constraints on ribbon graphs.

\paragraph{Two--matrix formulation.}
Let $A,B\in{\rm H}(N)$.  
The causal matrix model is defined by the partition function
\begin{equation}\label{cdtmm}
    \mathcal{Z}_{\rm CDT}(g)
    = \int dA\,dB\;
      e^{-N\Tr \left[
          \frac{1}{2}A^2
        + \frac{1}{2}(C_2^{-1}B)^2
        + g\,A^2 B
      \right]}\; ,
\end{equation}
where $C_2$ is the special case of the matrices $C_k$ satisfying
\begin{equation}\label{cnprop}
    \Tr(C_k^p) = N\,\delta_{p,k}\;, \qquad p=1,\dots,N\; .
\end{equation}

\noindent In the Feynman expansion of this model, the propagator of $A$ \footnote{This is standard physics jargon for $\langle A_{ij}A_{kl}\rangle(g=0)$, in this case. Along this work, we use this term when refering to Feynman expansions.} corresponds to spacelike ribbon-graph edges, while that of $B$ (involving $C_2$) corresponds to timelike ribbon-graph edges.
In this model, properties of dually-weighted matrix models are incorporated through the matrix $C_2$ (see \cite{Benedetti:2008hc} for more details).
Ribbon-graph faces containing $B$--lines acquire factors $\Tr(C_2^m)$, hence, by \eqref{cnprop}, contain exactly two $B$--lines, reproducing the CDT rigidity condition: each triangle has only two timelike ribbon-graph edges, and their gluing imposes that no spatial topology change is allowed. In the dual triangulation, these rules imply that each dual face contains either two or zero timelike ribbon-graph edges. These local rules enforce the global foliation structure.
\begin{figure}[h]
\centering
    \begin{subfigure}[H]{0.3\textwidth}
    \centering
    \includegraphics[width=\linewidth]{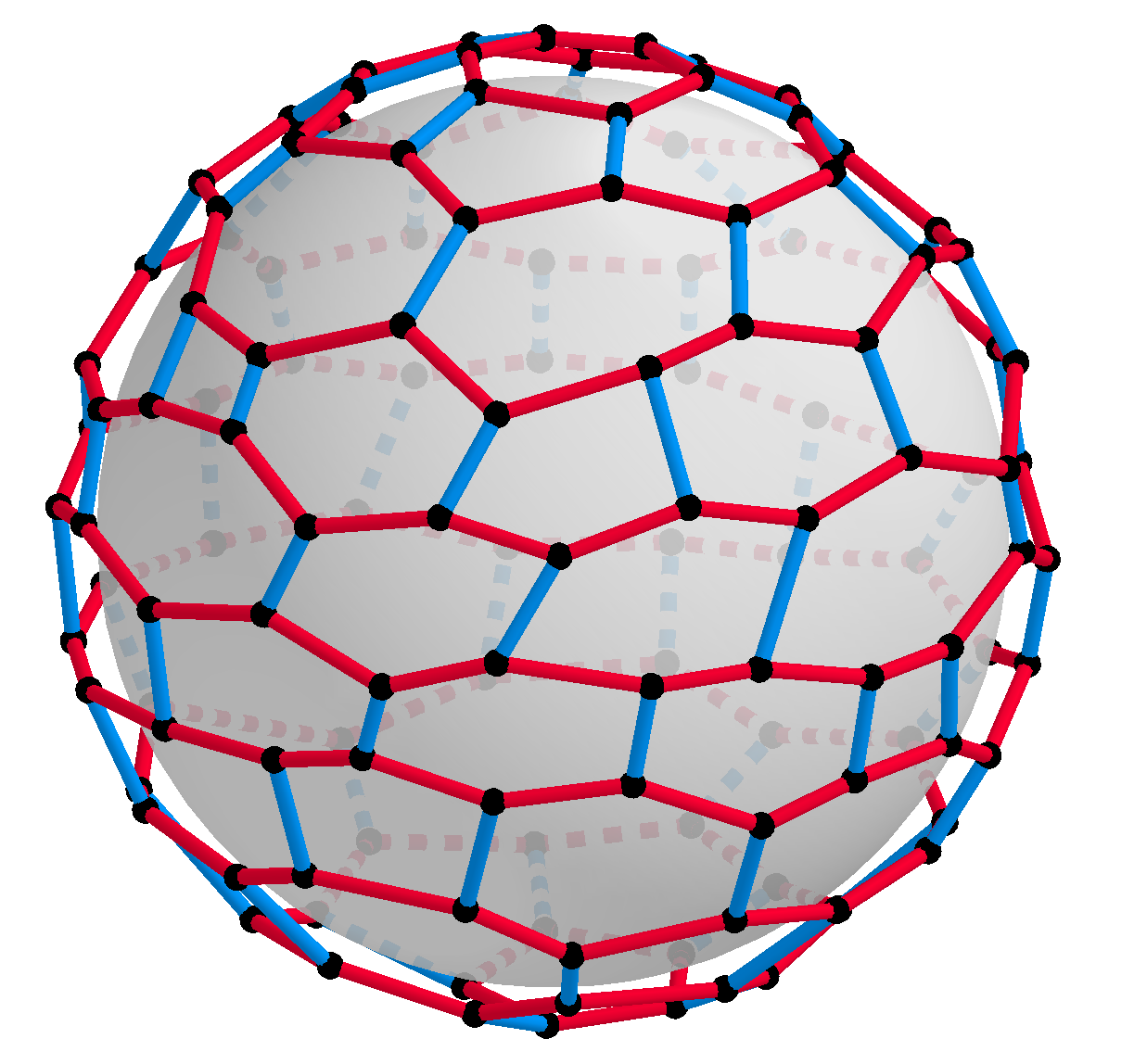}
    \caption{}
    \label{causrib}
    \end{subfigure}
    \;\;
    \begin{subfigure}[H]{0.3\textwidth}
    \centering
    \includegraphics[width=\linewidth]{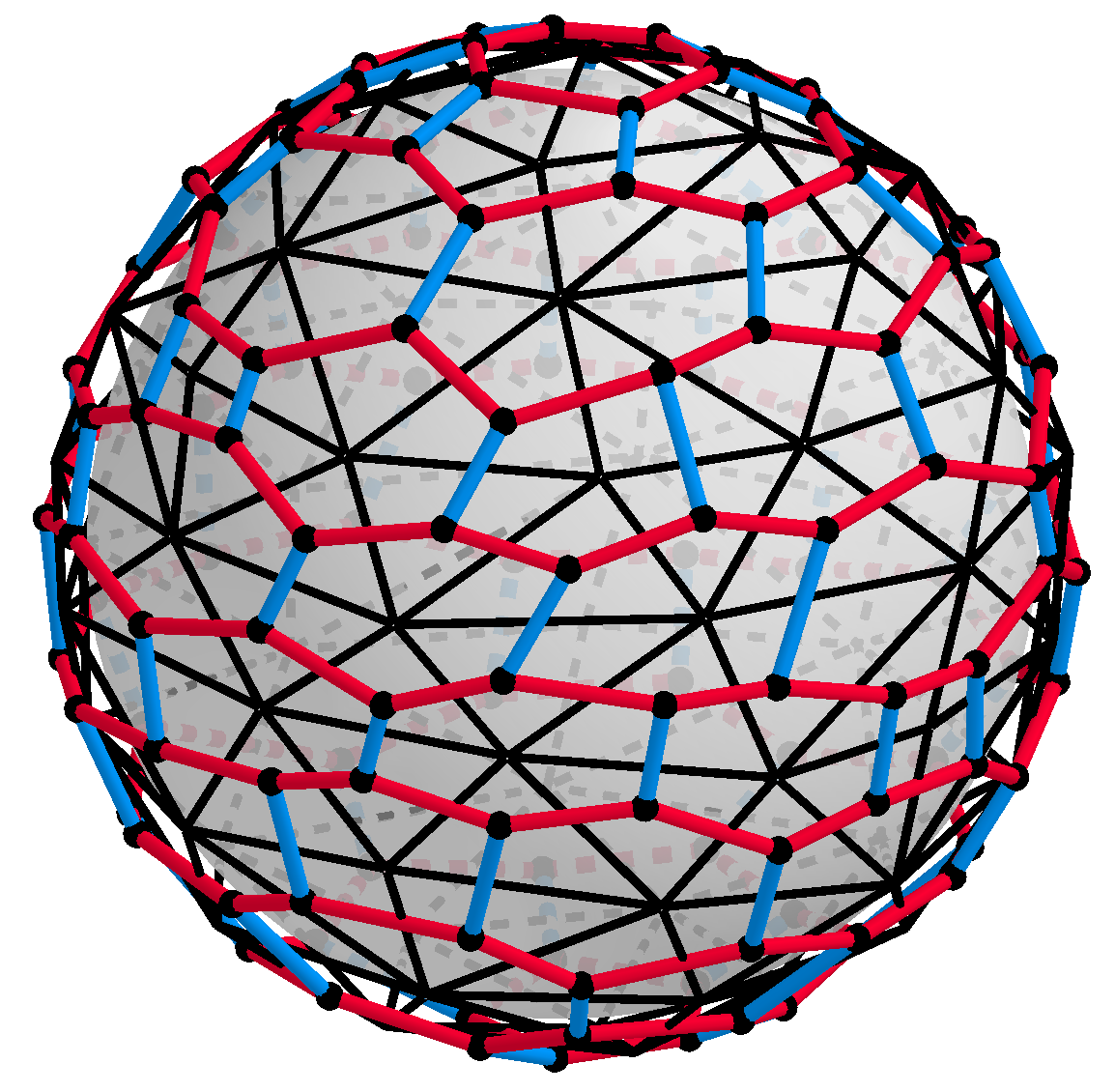}
    \caption{}
    \label{causribtocdtgr}
    \end{subfigure}
    \;\;
    \begin{subfigure}[H]{0.3\textwidth}
    \centering
    \includegraphics[width=\linewidth]{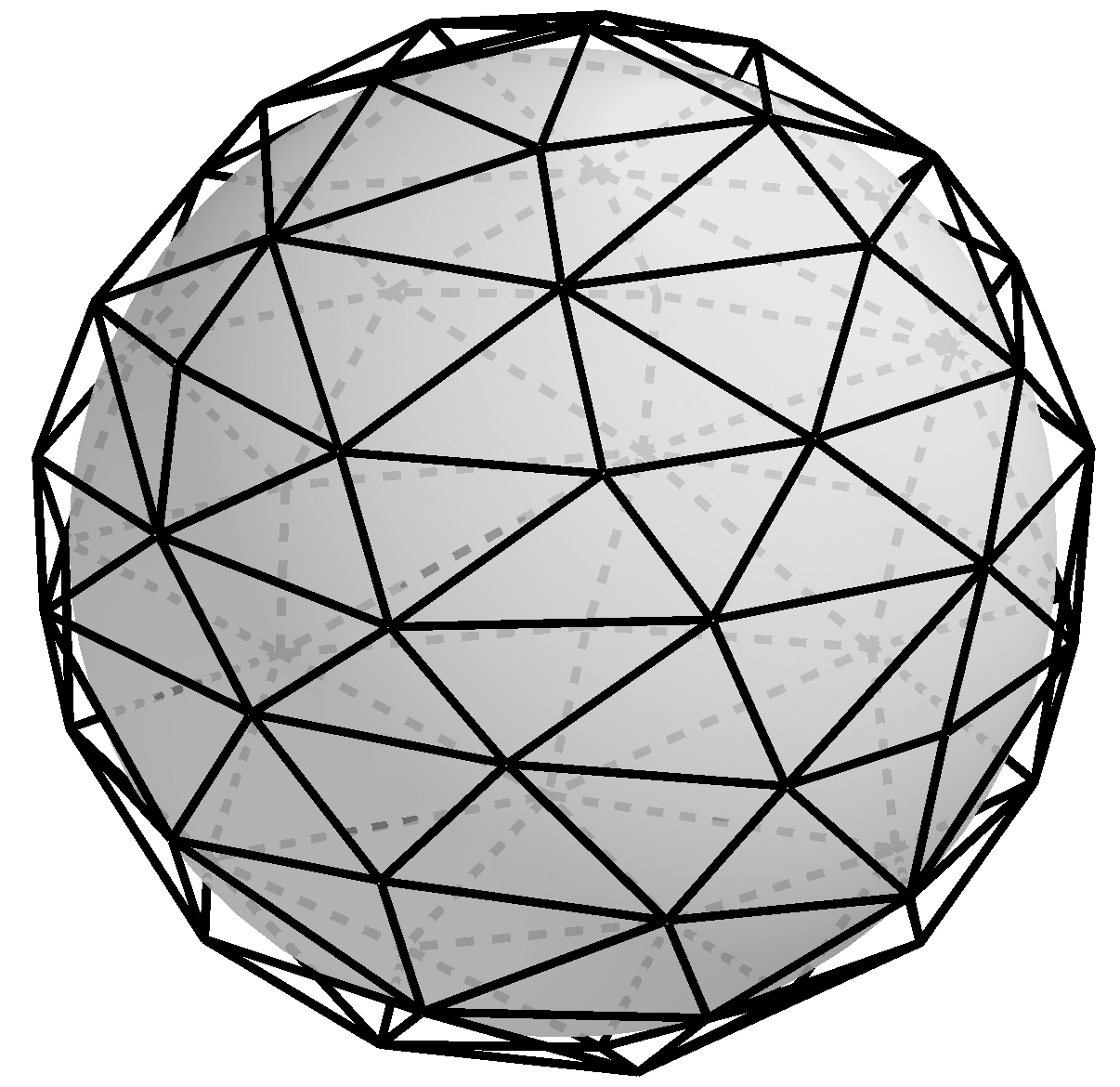}
    \caption{}
    \label{cdtgr}
    \end{subfigure}
    \caption{(a) 
    A ribbon graph generated by the causal matrix model \eqref{cdtmm}. The spacelike 
     ribbon-graph
    edges are drawn in red and the timelike
    ribbon-graph
    edges in blue.
    (b) Its dual graph corresponds to a CDT triangulation.
    (c) A CDT triangulation. The parallel horizontal lines correspond to the foliation structure.}
    \label{cdtgraphs}
\end{figure}

\paragraph{Integrating out the timelike matrix.}
Since the $B$--integral in \eqref{cdtmm} is Gaussian, it can be evaluated explicitly. This results on the self-adjoint one--matrix model
\begin{equation}\label{eq:cdmm_onematrix}
    \mathcal{Z}_{\rm CDT}(g)
    = \kappa_N \det(C_2)^N
      \int dA\;
      e^{-N\Tr \left[
          \frac{1}{2}A^2
        - \frac{g^2}{2}(C_2 A^2)^2
      \right]}\; ,
\end{equation}
with 
\[
\kappa_N
= \int dB\; e^{-\frac{N}{2}\Tr B^2}
= N^{-\frac{N^2}{2}} 2^{\frac{N}{2}} \pi^{\frac{N^2}{2}} \; .
\]
The quartic interaction $(C_2A^2)^2$ reflects the CDT rigidity encoded in the original propagator of $B$. 
\medskip

Even though \eqref{eq:cdmm_onematrix} successfully reproduces a sum over causal triangulations, this success has not translated into analytic control, since the partition function is not exactly computable with existing techniques. Motivated by FRG studies \cite{Eichhorn:2013isa,Castro:2020dzt}, we propose a change of variables that moves the constant matrix $C_2$ from the quartic to the quadratic part of the action in \eqref{eq:cdmm_onematrix}. Placing the rigidity matrix $C_2$ in the propagator is particularly natural, as it modifies the quadratic part of the action and is analogous to introducing a non-trivial dispersion relation in field theory. Somewhat unexpectedly, this change of variables forces us from working with self-adjoint to complex matrix ensembles, since the change of variables is not valid in the self-adjoint model, but is valid in the complex model. Moreover, it leads to the observation that the partition functions of these complex models coincide with those of certain random self-adjoint matrix and tensor models, revealing an equivalence between complex and self-adjoint ensembles.
\subsection{Motivating the equivalence with a complex matrix model}\label{subsec:motiv_glnmm}
\noindent Motivated by making a change of variables in the exponential of \eqref{eq:cdmm_onematrix} that takes the rigidity matrix $C_2$ from the quartic to the quadratic part of the action, we consider the complex matrix model
\begin{equation}\label{bc}
    \mathcal{Z}(g)
    = \det(C_2)^N
      \int dM^\dagger dM\;
      e^{-N\Tr \left[
        M^\dagger M
        +g(C_2 M M^\dagger)^2
      \right]}\; .
\end{equation}
Although the quartic term in this action resembles that of~\eqref{eq:cdmm_onematrix}, this model is \emph{not} equivalent to the causal matrix model, its ribbon graphs are far more restricted. This can be seen by making a change of variables in \eqref{bc} to bring the external matrix $C_2$ to the quadratic part of the action (See Appendix~\ref{change}). This results in
\begin{equation}\label{model1}
    \mathcal{Z}(g)
    = \int dM^\dagger dM\;
      e^{-N\Tr \left[
          M^\dagger C_2^{-1} M
        - \frac{g^2}{2}(M^\dagger M)^2
      \right]}\; .
\end{equation}
The Feynman rules for this model are:
\begin{itemize}
    \item the interaction $\Tr[(M^\dagger M)^2]$ is a single 4--valent vertex;
    \item the propagator is modified by $C_2^{-1}$;
    \item the propagator only allows edges of the form $M^\dagger$--$M$.
\end{itemize}

\paragraph{Restricted ribbon graphs and their duals.}
With only one vertex type present, the generated graphs form a necklace-lik sequence of loops of identical quartic vertices (See Fig. \ref{fig:korder}).  These diagrams lack the $AAB$-vertices and face constraints necessary to realize the CDT structure. Thus, \eqref{model1} does \emph{not} reproduce CDT.
\begin{figure}[H]
    \centering
    \includegraphics[width=0.60\linewidth]{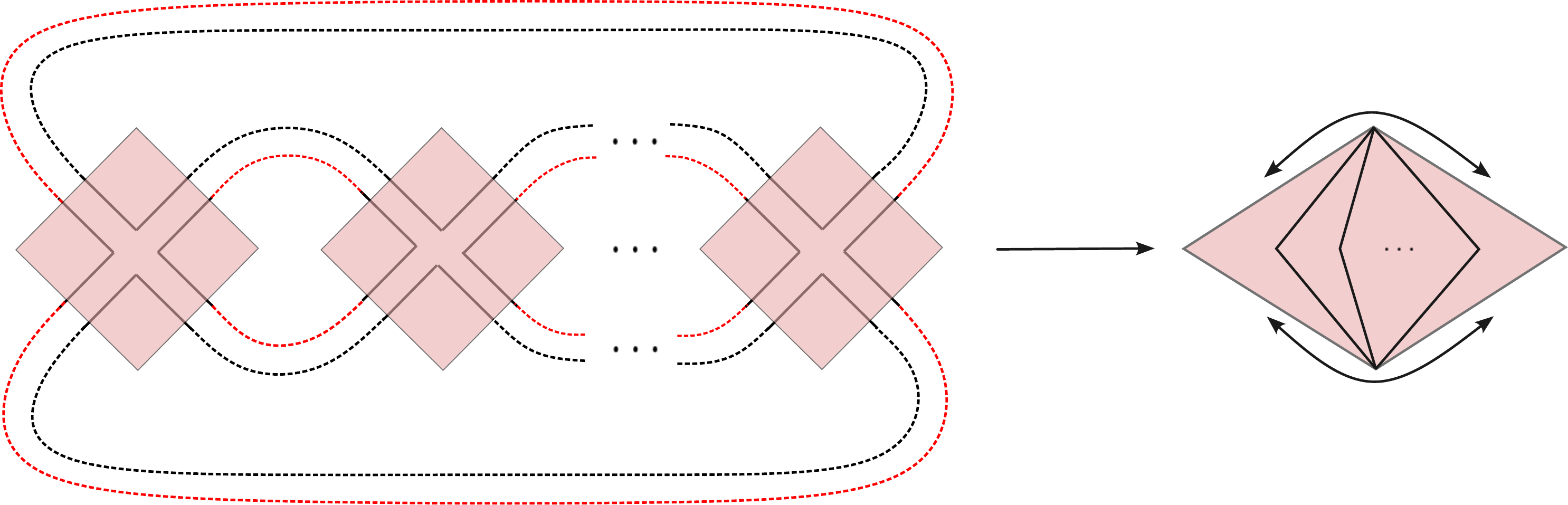}
    \caption{Ribbon graph and dual quadrangulation associated with the complex model~\eqref{model1}.  
    Its chain-like structure shows why the model does not reproduce CDT.}
    \label{fig:korder}
\end{figure}
\medskip
\noindent These observations motivate a general question:
\emph{What is the precise relationship between complex matrix models with general quadratic/kinetic term and self-adjoint dually-weighted multi--matrix models?} This paper provides the full answer. We show that a complex matrix model with covariance $(P,Q)$\footnote{This term is used to indicate that the pair of matrices $(P,Q)$ appears in the quadratic part of the action. It is not a formal statement about the probabilistic nature of these models.}  is \emph{exactly equivalent} to a self-adjoint two--matrix model via an intermediate field representation. In this, the intermediate field carries a logarithmic, dually weighted potential (characteristic of DWMM and the causal matrix model). 
Furthermore, the complex matrix model \eqref{model1} is simply a special case of the general mechanism proved in Section~\ref{sec:equivmm}.

\section{Summary of main results}
\label{sec:summarymain}

\noindent
In this section, we survey the main results of this paper.

\paragraph{Matrix models.}
Let $P,Q$ be fixed $N\times N$ complex matrices. Our complex random matrix model is defined by the partition function
\begin{equation}
\mathcal{Z}_{\rm CM}[V](P,Q)
  = \iint_{\mathbb C^{N\times N}} {\cal D}M^\dagger\,{\cal D}M\;
    e^{-N\Tr(P^{-1} M^\dagger Q^{-1} M) + V(M^\dagger M)} ,
\end{equation}
where $V$ is an arbitrary potential of  the self–adjoint matrix $M^\dagger M$. On the self–adjoint side, we consider a two–matrix model on ${\rm H}(N)\times{\rm H}(N)$ with partition function
\begin{equation}
\mathcal{Z}_{\rm HM}[V,
{Y}_{\ln}
]
  = \iint_{{\rm H}(N)^2} {\cal D}A\,{\cal D}B\;
      e^{- iN\Tr(AB) + 
{Y}_{\ln}
      [P,Q](B)+ V(A) },
\hspace{1mm}
{Y}_{\ln}
      [P,Q](B)
    = - \Tr\ln \left(\mathbb 1^{\otimes 2}
      - i\,Q\otimes(PB)\right).
\end{equation}
The {\it intermediate effective field} $A$ carries the effective potential $V(A)$, while the {\it dually-weighted intermediate field} $B$ carries a dually-weighted logarithmic potential $Y_{\ln}[P,Q](B)$.
\medskip

\noindent
Then, Theorem~\ref{matrixequiv_thm} implies that: for any potential $V$,
\begin{equation}
\label{ZCMconc}
\frac{\mathcal Z_{\rm CM}[V](P,Q)}{\mathcal Z_{\rm CM}[0](P,Q)}
   = 
\frac{\mathcal Z_{\rm HM}[V,
{Y}_{\ln}
[P,Q]]}{\mathcal Z_{\rm HM}[0,
{Y}_{\ln}
[P,Q]]}.
\end{equation}
In general, this equality holds at the level of formal power series and its convergence depends on the form of the potential $V$\footnote{The convergence criteria of $Z_{\rm CM}$ in terms of the spectra of $P$ and $Q$ is studied in Appendix~\ref{convcrit}.}.

Equivalently, any observable depending only on the self–adjoint combination $M^\dagger M$ can be computed in either model.
Corollary~\ref{cor:exp_val_mm} states that, for any potential $V$ and any function $f:{\rm H}(N)\to\mathbb C$,
\begin{equation}
\left\langle f(M^\dagger M)\right\rangle[V](P,Q)
    = \left\langle f(A)\right\rangle[V,
{Y}_{\ln}
    [P,Q]].
\end{equation}

So far, the equivalence has been stated algebraically. A complementary 
combinatorial
perspective on this equivalence 
follows
from Propositions~\ref{avM} and~\ref{avAB} (later contained in Corollary~\ref{cor:exp_val_mm}), which state that all multi-trace observables match in the two models.
In the complex matrix model, multi-trace observables correspond to ribbon graphs with two complementary families of faces, weighted by the matrices $P$ and $Q$, respectively (Proposition~\ref{avM}). In contrast, Proposition~\ref{avAB} shows that in the self-adjoint formulation this structure is reorganized into a dually weighted model, in which faces carry weights associated with $P$, while vertices are weighted by $Q$.
The equivalence can thus be viewed 
as a reorganization of the same combinatorial data, trading one set of face weights for vertex weights through the introduction of the intermediate field.

\paragraph{Tensor models.}
\noindent The equivalence between complex and self-adjoint theories persists also in tensor models. Let $\phi\in\Cp_N^{\otimes D}$ be a random complex order–$D$ tensor with covariance
\(
R\in\Cp_N^{\otimes D}\otimes\Cp_N^{*\otimes D}
\), and $V(\phi\phi^\dagger)$,  {\emph{a scalar-$U(1)$ invariant}} potential,
as elaborated in Subsection \ref{subsec:complex-tensor}.
As in \eqref{CTVscalar}, the partition function of the complex theory in question is 
\begin{equation}\label{eq:CTpf}
    {\cal Z}_{\rm CT}[V](R)
    =\iint {\cal D}\phi^\dagger {\cal D}\phi\;
      e^{-\phi^\dagger R^{-1}\phi+V(\phi{\phi^\dagger})}\;.
\end{equation}
\noindent
We consider random tensors $\Phi,\Psi\in\Hm(\Cp_N^{\otimes D})$, both serving as intermediate fields, distributed according to the partition function of the self-adjoint theory \eqref{HCT} with potential \eqref{eq:logpot}
\begin{equation}\label{eq:HTpf}
    {\cal Z}_{\rm HT}[V,
{Y}_{\ln}
    [R]]
      =\iint {\cal D}\Phi {\cal D}\Psi\;
        e^{-i \Tr(\Phi\Psi)+
{Y}_{\ln}
        [R](\Psi)+V(\Phi)}\,,
\qquad
{Y}_{\ln}
    [R](\Psi)=-\Tr\ln\lp {\bb 1}^{\otimes D}-i R\Psi\rp \,,
\end{equation}
where the potential $Y[R](\Psi)$ gives $\Psi$ the role of the dually-weighted intermediate field.
\noindent
Our Theorem~\ref{mainthm} states that, for any potential function $V$, 
\begin{equation}
    \frac{\mathcal{Z}_{\rm CT}[V](R)}{\mathcal{Z}_{\rm CT}[0](R)}
    =\frac{\mathcal{Z}_{\rm HT}[V,Y_{\ln}[R]]}{\mathcal{Z}_{\rm HT}[0,Y_{\ln}[R]]}\;.
\end{equation}
Just as in the matrix model case, this should be understood as an equality between the formal power series, since its convergence depends on the potential $V$.
Equivalently, as stated in Corollary~\ref{cor:exp_val_tm},  for any potential $V$ and any function $f:{\rm H}(\Cp_N^{\otimes D})\to\mathbb C$, 
\begin{equation}
\left\langle f(\phi\phi^\dagger)\right\rangle[V](R)
    = \left\langle f(\Phi)\right\rangle[V,
{Y}_{\ln}
    [R]].
\end{equation}
\vskip 5pt
Under an additional symmetry, i.e., {\emph{the partial-$U(N)$ invariance}} explained in Subsection \ref{subsec:complex-tensor}, the self–adjoint tensor model can be reduced to a model with tensors of order $2(D-1)$ \eqref{eq:partialUN_self-adjoint}, as
shown
in  Theorem~\ref{theorem:partialtensor}. 
By working with the partial trace ${\rm Tr}_{(1)}(\phi\phi^\dagger)$, this takes the form
\begin{equation}
    \frac{\mathcal{Z}_{\rm CT}[\hat V\circ{\rm Tr}_{(1)}](R)}{\mathcal{Z}_{\rm CT}[0](R)}=\frac{\mathcal{Z}_{\rm \hat HT}[\hat V,
\hat{Y}_{\ln}
    [R]]}{\mathcal{Z}_{\rm \hat HT}[0,
\hat{Y}_{\ln}
    [R]]}\;.
\end{equation}

\medskip 

\noindent 
We remark that when $D=2$ and the propagator factorizes as $R=Q\otimes P^{\tp}$, the tensor equivalence of Theorem~\ref{theorem:partialtensor} has the matrix equivalence in Theorem~\ref{matrixequiv_thm} as a direct corollary.

\medskip
In summary, our results presented in Theorems \ref{matrixequiv_thm}, \ref{mainthm}, and \ref{theorem:partialtensor} show that complex matrix
and tensor
models with nontrivial covariance and interactions depending on $M^\dagger M$ or $\phi\otimes\phi^\dagger$, respectively,
admit exact self–adjoint representations in which the covariance data is carried by a dually-weighted intermediate field 
$B$ or $\Psi$, respectively.

\noindent This equivalence provides both a conceptual reinterpretation of such models and a practical computational tool: any expectation values/observable built from the self–adjoint combination may be computed in whichever formulation (complex or self–adjoint).
In particular, self-adjoint models have lower order potential, so they tend to provide easier computation. In certain simple cases, the complex model in question may even correspond to a Gaussian self-adjoint models.
In this work, we give an example of this simplification in the case of the CDT-like matrix model in \eqref{eq:quadmatrix} and in Appendix~\ref{quaex}.

\paragraph{Relation to existing intermediate field representations.}

\noindent The identities established in this paper put a broad family of tensor and matrix intermediate field representations into a single unified framework. Several constructions in the literature appear as special cases of our main formula (Theorem \ref{mainthm}),
\begin{equation}
\label{eq:ifrep_dwmm}
        \mathcal Z_{\rm CT}[V](R)
        \!
    = \!\! \iint {\cal D}\phi^\dagger{\cal D}\phi\, 
      e^{-\phi^\dagger R^{-1}\phi + V(\phi\phi^\dagger)}
    = {\cal N} \!\! \iint {\cal D}\Phi\,{\cal D}\Psi\,
      e^{-i\Tr(\Psi\Phi)
        - \Tr\ln({\bb 1}- i R\Psi) + V(\Phi)}={\cal N}\mathcal Z_{\rm HT}[V](R),
\end{equation}
where ${\cal N}=(\mathcal Z_{\rm HT}[0,Y[R]])^{-1}(\mathcal Z_{\rm CT}[0](R))$, 
and, with an additional partial symmetry
(Theorem~\ref{theorem:partialtensor}),
\begin{equation}
\label{eq:lowerorderresult}
        \mathcal Z_{\rm CT}[V](R)
        \!
    = \!\! \int\!\!\!\!\!\int \!\! {\cal D}\phi^\dagger{\cal D}\phi\, 
      e^{-\phi^\dagger R^{-1}\phi + V(\phi\phi^\dagger)}
    = \hat{\cal N} \!\!\! \int\!\!\!\!\!\int \!\!{\cal D}\hat\Phi{\cal D}\hat\Psi\,
      e^{-i\Tr(\hat\Psi\hat\Phi)
        - \Tr\ln({\bb 1}- i R({\bb 1}\otimes \hat\Psi)) + \hat V(\hat\Phi)}=\hat{\cal N}\mathcal Z_{\rm \hat HT}[\hat V](R),
\end{equation}
for when there is a $\hat V$ such that $V(\phi\phi^\dagger)=\hat V(\Tr_{(1)}(\phi\phi^\dagger))$.
Previously known intermediate field representations arise from suitable specializations of this identity \eqref{eq:ifrep_dwmm}.

\begin{itemize}
    \item 
The tensor intermediate field representation of Bonzom--Lionni--Rivasseau.

In~\cite{Bonzom2015ColoredTO}, the intermediate field formula is developed 
for a \emph{single} trace-invariant tensor interaction $\Tr_B(T,\bar T)$ with
unit covariance and is proven as a formal series in the coupling constant $\lambda$. 
In our notation, their intermediate field formula
corresponds to choosing $R={\bb 1}$ and repeatedly introducing a new field $\Psi_i$ for each pair of the $k$ pairs of $\phi$ with $\phi^\dagger$ in a 2k-vertex bubble $B$, so $V(\Phi)=\frac{\lambda}{n} N^{s}\,V_{B,\Omega}(\{\Phi_i\})$ (Eq. (52) in \cite{Bonzom2015ColoredTO}). Making these substitutions in \eqref{eq:ifrep_dwmm}, we obtain the following representation (Eq. (53) in \cite{Bonzom2015ColoredTO})
\[
    Z_B(\lambda,N)
      = \iint \prod_{i=1}^k{\cal D}\Phi_i\,{\cal D}\Psi_i\;
          e^{-i\Tr(\Psi\Phi) 
            - \Tr\ln({\bb 1}^{\otimes D}-i\Psi)
            + \frac{\lambda}{n}N^{s} V_{B,\Omega}(\Phi)} .
\]
A transformation similar to a Wick rotation, turning the self-adjoint tensors $(\Psi,\Phi)$ into a complex adjoint pair $(M, M^\dagger)$ yields the formulation that appears in~\cite{Bonzom2015ColoredTO}. However, a conceptual difference is that our equality holds exactly, while the derivation in~\cite{Bonzom2015ColoredTO} is performed at the level of formal power series.

\item
Hubbard--Stratonovich representation.

The classical intermediate field representation for the quartic one-matrix model,\;see e.g.~\cite{gurau2014analyticityresultscumulantsrandom}, follows from choosing $D=2$,
$R=\sqrt{\frac{\lambda}{N}}{\bb 1^{\otimes 2}}$, 
and $\hat V(\hat\Phi)= -\frac12\Tr(\hat\Phi^2)$.
Our identity \eqref{eq:lowerorderresult}, 
after integrating out $\Phi$,
yields
\[
  Z = \int {\cal D}\hat\Psi\;
         e^{
            -\frac12\Tr(\hat\Psi^2)- N\,\Tr\ln({\bb 1}- i\sqrt{\frac{\lambda}{N}}\hat\Psi)},
\]
which is the well-known one-matrix formula. Thus the quartic Hubbard--Stratonovich representation is also contained as a direct corollary of our results.
\end{itemize}

\medskip

\noindent In summary, the various intermediate field constructions developed in the matrix and tensor literature arise as specific choices of the covariance $R$ and the potential $V$ in the universal identity
\eqref{eq:ifrep_dwmm} that we newly discovered in this paper. 
This provides a unified analytic framework that extends these previous results to completely general potentials and covariances.

\section{Equivalence in random matrix models}\label{sec:equivmm}
\noindent This section establishes an exact correspondence between two seemingly different matrix models: (i) a complex matrix model with covariance $(P,Q)$ and potential $V(M^\dagger M)$, and (ii) a self-adjoint two-matrix model whose kinetic term is the mixed bilinear $\Tr(AB)$ and whose {\it dually-weighted} intermediate $B$-field carries a logarithmic potential.
The equivalence between the partition functions of these models, as well as all observables depending only on $M^\dagger M$ (or equivalently on the intermediate effective field $A$), holds for arbitrary potentials $V$. In general, this equivalence is to be understood at the level of formal power series, while for suitable choices of $V$ it also holds at the level of convergent integrals.
In summary, this result shows that a complex matrix integral with covariance of a general form can be reformulated as a self-adjoint two-matrix model whose intermediate field $B$ carries a dually weighted logarithmic interaction.

\subsection{The complex matrix model}
\label{subsec:GLmodel}
\noindent Let the partition function\footnote{Convergence criteria for~\eqref{eq:ZCM} in terms of the spectra of $P$ and $Q$ are given in Appendix~\ref{convcrit}.}
\begin{equation}
\label{eq:ZCM}
\mathcal{Z}_{\rm CM}[V](P,Q)
  = \iint_{\mathbb C^{N\times N}} {\cal D}M^\dagger\,{\cal D}M\;
    e^{-N\Tr(P^{-1} M^\dagger Q^{-1} M) + V(M^\dagger M)} ,
\end{equation}
define a complex matrix model with arbitrary potential $V$,
usually taken analytic in the
self-adjoint matrix $M^\dagger M$. 
Expectation values in this model
can be computed
 by
\[
\langle f(M^\dagger M)\rangle[V](P,Q)
  = \frac{
      \iint {\cal D}M^\dagger{\cal D}M\;
      f(M^\dagger M)\;e^{-N\Tr(P^{-1}M^\dagger Q^{-1}M)+V(M^\dagger M)}
    }{
      \iint {\cal D}M^\dagger{\cal D}M\;
      e^{-N\Tr(P^{-1}M^\dagger Q^{-1}M)+V(M^\dagger M)}
    }\;.
\]

\paragraph{Propagator.}
In the Gaussian theory ($V=0$), Feynamn rules only  allow pairwise Wick contractions of $M^\dagger$ with $M$. Analytically, this takes the form
\begin{equation}\label{eq:Mprop}
    \langle M^\dagger_{ij} M_{kl}\rangle[0](P,Q)
      = N^{-1} P_{il}\,Q_{kj}\; ,\qquad
    \langle M_{ij} M_{kl}\rangle[0](P,Q)=0\; ,\qquad
    \langle M^\dagger_{ij} M^\dagger_{kl}\rangle[0](P,Q)=0\; .
\end{equation}
Thus a contraction `carries' the matrix $Q$ along the first index and $P$ along the second one, giving rise to two complementary families of faces in the ribbon-graph expansion.
\begin{figure}[H]
    \centering
    \includegraphics[width=0.3
    \linewidth]{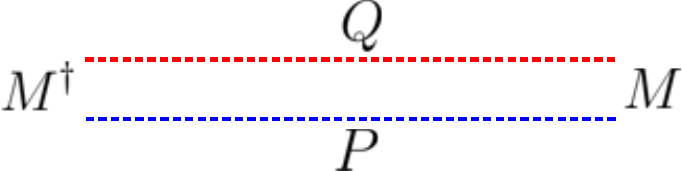}
    \caption{Ribbon graph representation of the propagator of the complex matrix model defined by the partition function \eqref{eq:ZCM} and given in \eqref{eq:Mprop}. The blue and red strands respectively represent the presence of $P$ and $Q$.
    }
    \label{fig:mprop}
\end{figure}

\begin{figure}[H]
\begin{subfigure}[c]{0.33\textwidth}
\centering
\begin{minipage}[c][4cm][c]{\linewidth}
    \centering
    \includegraphics[width=0.5\linewidth]{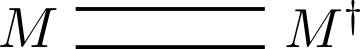}
\end{minipage}
\caption{$\Tr(M^\dagger M)$ vertex.}
\label{fig:mvert2}
\end{subfigure}
\begin{subfigure}[c]{0.33\textwidth}
\centering
\begin{minipage}[c][4cm][c]{\linewidth}
    \centering
    \includegraphics[width=0.5\linewidth]{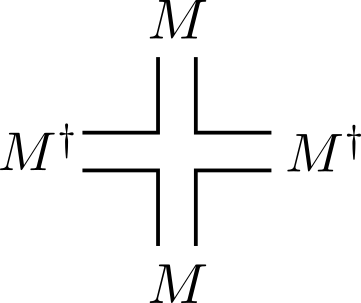}
\end{minipage}
\caption{$\Tr(M^\dagger M)^2$ vertex.}
\label{fig:mvert4}
\end{subfigure}
\begin{subfigure}[c]{0.33\textwidth}
\centering
\begin{minipage}[c][4cm][c]{\linewidth}
    \centering
    \includegraphics[width=0.5\linewidth]{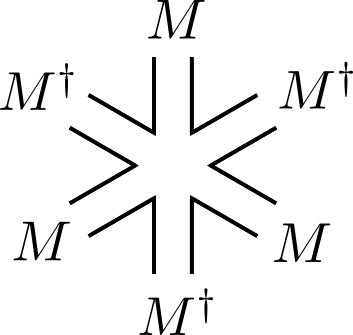}
\end{minipage}
\caption{$\Tr(M^\dagger M)^3$ vertex.}
\label{fig:mvert6}
\end{subfigure}

\caption{Examples of interaction vertices that can be considered in a theory of the form given in \eqref{eq:ZCM} via ribbon graph representation.}
\label{fig:mvert}
\end{figure}

\begin{figure}[H]
    \begin{subfigure}[c]{0.33\textwidth}
    \centering
    \begin{minipage}[c][4cm][c]{\linewidth}
    \centering
    \includegraphics[width=
    0.3\linewidth]{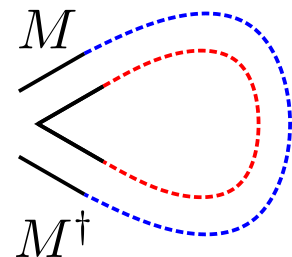}
    \end{minipage}
    \caption{Length-1 face.}
    \label{fig:mface1}
    \end{subfigure}
    \begin{subfigure}[c]{0.33\textwidth}
    \centering
    \begin{minipage}[c][4cm][c]{\linewidth}
    \centering
    \includegraphics[width=
    0.5\linewidth]{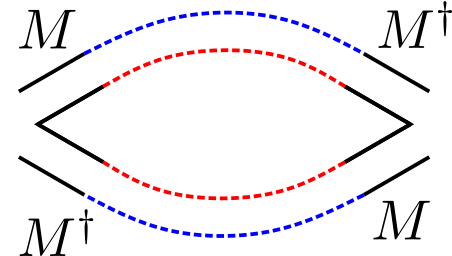}
    \end{minipage}
    \caption{Length-2 face.}
    \label{fig:mface2}
    \end{subfigure}
    \begin{subfigure}[c]{0.33\textwidth}
    \centering
    \begin{minipage}[c][4cm][c]{\linewidth}
    \centering
    \includegraphics[width=
    0.5\linewidth]{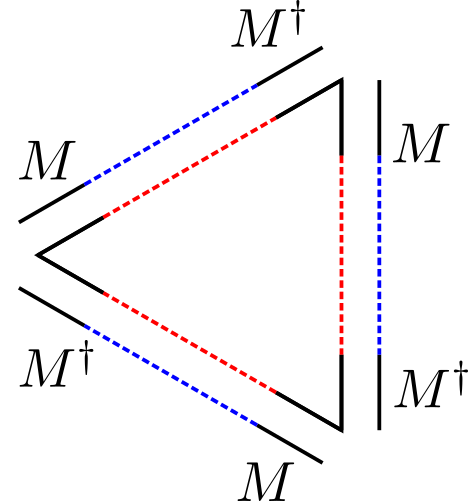}
    \end{minipage}
    \caption{Length-3 face.}
    \label{fig:mface3}
    \end{subfigure}
    \caption{Examples of faces that can be generated in \eqref{eq:ZCM} when vertices (see Fig. \ref{fig:mvert}) are connected by propagators (see Fig. \ref{fig:mprop}).}
    \label{fig:mface}
\end{figure}

\paragraph{Multi-trace invariants.}
A natural next step is to examine observables built from multi-traces, since any class-function potential can be expanded in that basis and the Feynman rules become fully transparent in terms of these invariants. The following result gives the Gaussian expectation of all multi-traces.
 \begin{proposition}\label{avM}
Let $M\in\mathbb C^{N\times N}$ be distributed according to $\mathcal Z_{\rm CM}[0](P,Q)$.
Then, for any $\sigma\in S_n$,
\begin{equation}
\label{eq:propcomplexmm}
        \left\langle \Tr_{[\sigma]}(M^\dagger M)\right\rangle[0]
(P,Q)
        = N^{-n} \sum_{\mu\in S_n}
          \Tr_{[\mu]}(Q)\,
          \Tr_{[\mu\sigma]}(P)\; .
\end{equation}
\end{proposition}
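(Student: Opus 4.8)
The plan is to compute the Gaussian expectation directly by Wick's theorem and then match the resulting index sums to the multi-trace invariants of $P$ and $Q$. First I would expand the observable in components: writing $(M^\dagger M)_{ab}=\sum_c M^\dagger_{ac}M_{cb}$ and inserting the definition of $\Tr_{[\sigma]}$, one obtains
\[
\Tr_{[\sigma]}(M^\dagger M)=\sum_{k_1,\dots,k_n}\ \sum_{c_1,\dots,c_n}\ \prod_{i=1}^n M^\dagger_{k_i c_i}\,M_{c_i k_{\sigma(i)}}\; .
\]
This is a balanced monomial of degree $n$ in $M^\dagger$ and degree $n$ in $M$, so its Gaussian average is a $2n$-point function.

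Next I would apply Wick's theorem. Since $\langle MM\rangle=\langle M^\dagger M^\dagger\rangle=0$ by \eqref{eq:Mprop}, only the pairings matching each of the $n$ factors $M_{c_i k_{\sigma(i)}}$ with one of the $n$ factors $M^\dagger_{k_j c_j}$ survive, and these are in bijection with permutations $\mu\in S_n$, where $\mu(i)$ labels the $M^\dagger$-factor paired to the $i$-th $M$-factor. Substituting the propagator $\langle M^\dagger_{ab}M_{cd}\rangle=N^{-1}P_{ad}Q_{cb}$ with $(a,b)=(k_{\mu(i)},c_{\mu(i)})$ and $(c,d)=(c_i,k_{\sigma(i)})$ produces a factor $N^{-1}P_{k_{\mu(i)}k_{\sigma(i)}}\,Q_{c_i c_{\mu(i)}}$ for each $i$. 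The key observation is that the $P$-factors involve only the $k$-indices and the $Q$-factors only the $c$-indices, so the index sum factorizes:
\[
\left\langle\Tr_{[\sigma]}(M^\dagger M)\right\rangle
=N^{-n}\sum_{\mu\in S_n}\Big(\sum_{k}\prod_{i=1}^n P_{k_{\mu(i)}k_{\sigma(i)}}\Big)\Big(\sum_{c}\prod_{i=1}^n Q_{c_i c_{\mu(i)}}\Big)\; .
\]

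Finally I would identify each factor with a multi-trace. The $c$-sum is exactly $\Tr_{[\mu]}(Q)$ by the definition of $\Tr_{[\cdot]}$, while reindexing $j=\mu(i)$ in the $k$-sum gives $\Tr_{[\sigma\mu^{-1}]}(P)$. The only remaining step, and the one requiring slight care, is cosmetic: to bring the argument of $P$ into the stated form $\mu\sigma$, I substitute $\mu\mapsto\mu^{-1}$ in the outer sum and invoke cycle-type invariance of multi-traces, namely $\Tr_{[\mu^{-1}]}(Q)=\Tr_{[\mu]}(Q)$ and $\Tr_{[\sigma\mu]}(P)=\Tr_{[\mu\sigma]}(P)$ (since $\sigma\mu=\sigma(\mu\sigma)\sigma^{-1}$ is conjugate to $\mu\sigma$), which converts the expression into $N^{-n}\sum_{\mu}\Tr_{[\mu]}(Q)\,\Tr_{[\mu\sigma]}(P)$, as claimed.

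I expect the main obstacle to be purely the bookkeeping of index flow through the propagator at the Wick step, together with keeping straight which index of $P$ and $Q$ is carried where. The genuine combinatorial content is entirely captured by the bijection between nonvanishing pairings and $S_n$; once the factorization into independent $k$- and $c$-sums is recognized, the matching with $\Tr_{[\mu]}(Q)$ and $\Tr_{[\mu\sigma]}(P)$ follows from the definitions and cycle-type invariance alone.
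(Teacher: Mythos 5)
Your proof is correct, but it takes a different route from the paper's. The paper proves Proposition~\ref{avM} (in Appendix~\ref{app:proof}) via a generating functional: it couples $M$ to a source $J$, computes $\mathcal Z_{\rm G}[J,J^\dagger](P,Q)=\mathcal Z_{\rm G}[0](P,Q)\,e^{N\Tr(PJ^\dagger QJ)}$, realizes $\Tr_{[\sigma]}(M^\dagger M)$ as $2n$ derivatives in $J,J^\dagger$ acting on this exponential, extracts the degree-$n$ term, and then applies the Leibniz rule, which produces a \emph{double} sum over permutations $\mu,\gamma\in S_n$; one of the two sums collapses upon relabelling, yielding an $n!$ that cancels the $1/n!$ from the exponential and landing directly on $N^{-n}\sum_{\gamma}\Tr_{[\gamma]}(Q)\Tr_{[\gamma\sigma]}(P)$. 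You instead apply Wick's theorem directly to the component expansion, parametrizing the nonvanishing pairings by a single permutation $\mu\in S_n$ from the outset, so the double sum and the $n!$ bookkeeping never appear; the price is your small closing step invoking cycle-type invariance, $\Tr_{[\mu^{-1}]}(Q)=\Tr_{[\mu]}(Q)$ and $\Tr_{[\sigma\mu]}(P)=\Tr_{[\mu\sigma]}(P)$ via the conjugation $\sigma\mu=\sigma(\mu\sigma)\sigma^{-1}$, both of which are valid by the cycle-type dependence of multi-traces stated in Section~\ref{sec:notations}. Each of your intermediate identifications checks out: the propagator substitution correctly gives the factor $N^{-1}P_{k_{\mu(i)}k_{\sigma(i)}}Q_{c_ic_{\mu(i)}}$, the $k$- and $c$-sums genuinely decouple, and the reindexing $j=\mu(i)$ correctly yields $\Tr_{[\sigma\mu^{-1}]}(P)$. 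What the paper's heavier source machinery buys is uniformity: the same template is reused verbatim for Propositions~\ref{avAB}, \ref{avTM}, and \ref{avPhiPsi}, where a direct Wick argument would have to be redone with different index structures each time. Your argument is the more elementary and self-contained derivation of this particular identity.
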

\begin{proof}
The proof is given in Appendix~\ref{app:proof}.
\end{proof}

\noindent The formula of Proposition~\ref{avM} has a natural geometric interpretation in terms of the ribbon graphs generated by the complex matrix model. Each propagator of $M$ in~\eqref{eq:Mprop} carries two matrix structures: $Q$ is coupled to the \emph{first} index of $M$ (equivalently, the second index of $M^\dagger$), while $P$ is coupled to the \emph{second} index of $M$.

Because every propagator is represented by a ribbon with two strands—one following the first index and one following the second—the Feynman expansion automatically produces two independent families of faces:
\begin{itemize}
  \item faces traced by the first index (the “$Q$–faces”),
  \item faces traced by the second index (the “$P$–faces”).
\end{itemize}
A closed cycle of the first index of length $k$ appears in the ribbon graph precisely as a face of length $k$, and contributes the weight $\Tr(Q^{k})$. Similarly, a closed cycle of the second index of length $k$, represented as a face of length $k$, contributes $\Tr(P^{k})$. Thus the amplitude of a ribbon graph factorizes into a product of a $Q$–dependent contribution from first-index faces and a $P$–dependent contribution from second-index faces.

This geometric picture is exactly what is captured algebraically in Proposition~\ref{avM}. The permutation $\mu$ specifies the cycles of the first index and thus the $Q$–faces, while $\mu\sigma$ specifies the cycles of the second index and thus the $P$–faces. Each cycle contributes a trace of the corresponding matrix, so the expression naturally splits into a product of two independent trace invariants.

A schematic depiction of the two–stranded propagator and the two families of faces appears in Figure~\ref{fig:mface} locally for the “$Q$–faces” and Figure~\ref{fig:equivex1m} in a given example of a ribbon graph. 

\paragraph{Graph expansion.}
If $V(A)$ is analytic and a class-invariant, then its exponential has an expansion
\[
e^{V(A)}=\sum_{n=0}^\infty\;\sum_{\sigma\in S_n}
   \frac{1}{n!}\,w_{[\sigma]}\Tr_{[\sigma]}(A)\;,
\]
where $w$ is a complex parameter for a given equivalence class of an symmetric group element.
Substituting this expression into~\eqref{eq:ZCM} yields
\begin{equation}
    \frac{\mathcal Z_{\rm CM}[V](P,Q)}{\mathcal Z_{\rm CM}[0](P,Q)}
      = \sum_{n=0}^\infty \sum_{\sigma\in S_n}\sum_{\gamma\in S_n}
        \frac{1}{n!}\, w_{[\gamma\sigma]}\,
        N^{-n}\Tr_{[\gamma]}(Q)\Tr_{[\sigma]}(P)\;.
\end{equation}
Each term in the triple sum represents a ribbon graph with two families of faces weighted respectively by $Q$ and $P$ as illustrated in Figure \ref{fig:equivex1m}.

\paragraph{Structural link to DWMM.} 
The DWMM \eqref{eq:Z_DY} weights vertices by $\Tr(X^k)$ and faces by $\Tr(Y^k)$. On the other hand, in the present complex matrix model \eqref{eq:ZCM}, the two matrices $(P,Q)$ weight two complementary sets of faces. Therefore, the complex matrix model $\mathcal{Z}_{CM}$ \eqref{eq:ZCM} can be seen as a DWMM of the type which has two different weights $P$ and $Q$ for ribbon-graph faces, while the potential $V$ weights the ribbon-graph vertices.

\subsection{The self-adjoint two--matrix model}
\label{subsec:HMmodel}
\noindent In a parallel manner to the complex matrix model
presented in Section \ref{subsec:GLmodel},
 we consider a self-adjoint two--matrix
model defined on ${\rm H}(N)\times{\rm H}(N)$.
Let $A,B\in{\rm H}(N)$ be self-adjoint matrices distributed according to the partition function
\begin{equation}
\label{eq:ZH0}
\mathcal{Z}_{\rm HM}[V,Y]
  = \iint_{{\rm H}(N)^2} {\cal D}A\,{\cal D}B\;
      e^{- iN\Tr(AB) + V(A) + Y(B) }.
\end{equation}
The interaction $-iN\Tr(AB)$ couples the two matrices  $A$ and $B$ linearly, while $V$ and $Y$ are arbitrary one--matrix potentials. Expectation values of any function $f(A,B)$ in this theory are given by
\begin{equation}\label{eq:avH}
\langle f(A,B)\rangle{[V,Y]}
  := \frac{
      \iint_{{\rm H}(N)^2} {\cal D}A\,{\cal D}B\;
      f(A,B)\,e^{- iN\Tr(AB)+V(A)+Y(B)}
    }{
      \iint_{{\rm H}(N)^2} {\cal D}A\,{\cal D}B\;
      e^{- iN\Tr(AB)+V(A)+Y(B)}
    }\,.
\end{equation}

\paragraph{Gaussian case: $V=Y=0$.}
When both potentials are zero, the model is purely bilinear and the Wick contractions are fixed entirely by the $AB$ term. In particular,
\begin{equation}\label{HM00}
\langle A_{ij}B_{kl}\rangle[0,0]
    = - i N^{-1}\delta_{il}\delta_{kj},\qquad
\langle A_{ij}A_{kl}\rangle[0,0]=0,\qquad
\langle B_{ij}B_{kl}\rangle[0,0]=0.
\end{equation}
Thus the only edges of the ribbon graphs appearing in the Feynman expansion are $AB$-propagators. This propagator is illustarted in Figure~\ref{fig:abprop}.
\begin{figure}[H]
    \centering
    \includegraphics[width=0.3
    \linewidth]{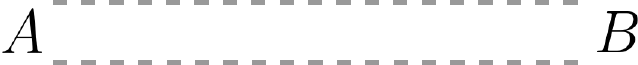}
    \caption{Ribbon graph representation of the propagator \eqref{HM00} of the self-adjoint matrix model defined by the partition function \eqref{eq:ZH}.
    }
    \label{fig:abprop}
\end{figure}

\paragraph{Turning on $Y$ but keeping $V=0$.}
When $Y\neq 0$, the $AB$ and $BB$ propagators are unchanged compared to the first and third expression in \eqref{HM00}. However, the $AA$ propagator receives a  modification that depends only on second derivatives of $Y$ at $B=0$. Namely,
\begin{equation}\label{HM0Y}
\langle A_{ij}A_{kl}\rangle[0,Y]
    = -N^{-2}
      \left.
        \frac{\partial^{2}e^{Y (B)}}
             {\partial B_{ji}\,\partial B_{lk}}
      \right|_{B=0}.
\end{equation}
Remark that no higher derivatives of $Y$ enter the propagator; its structure is entirely determined by the quadratic part of the potential at $B=0$.

A simple example is the Gaussian choice
$Y(B)=Y_{2}[P](B):=-\tfrac12\,N\,\Tr((PB)^{2})$, 
for which \eqref{HM0Y} becomes
\[\label{eq:aprop}
\langle A_{ij}A_{kl}\rangle[0,Y_2[P]]
  = N^{-1}P_{il}P_{kj}.
\]
In this case, the integral over
the field
$B$ can be performed explicitly, giving
\[\label{eq:w2pf}
\mathcal Z_{\rm HM}[V,Y_{2}]
    = {\cal N}
      \int_{{\rm H}(N)} {\cal D}A\;
      e^{-\frac12 N\Tr((P^{-1}A)^{2}) + V(A)},
\]
with
\(
{\cal N}
  = \int_{{\rm H}(N)} {\cal D}B\,e^{-\frac12 N\Tr((PB)^{2})}.
\)
Thus, a Gaussian potential in $B$ reduces the model to an ordinary self-adjoint one--matrix model in~$A$.

\paragraph{Multi-trace observables.}
Since invariant potentials of the form of any analytic function decompose into multi-trace invariants, it is natural to examine the Gaussian averages of these invariants when one of the potentials is zero. For $V=0$ and any $Y$, one has the compact formula
\begin{equation}
\langle \Tr_{[\sigma]}(A)\rangle[0,Y]
    = (iN)^{-n}\,
      \Tr_{[\sigma]} \left(\frac{\partial}{\partial B}\right)
      e^{Y (B)}\Big|_{B=0},
\qquad \sigma\in S_n.
\end{equation}
This identity is useful because it expresses Gaussian multi-trace averages entirely in terms of derivatives of the potential $Y$ without requiring any explicit diagrammatic expansions.

\begin{figure}[H]
    \begin{subfigure}{0.33\textwidth}
    \centering
    \begin{minipage}[c][2cm][c]{\linewidth}
    \centering
    \includegraphics[width=
    0.25\linewidth]{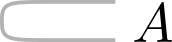}
    \end{minipage}
    \caption{$\Tr\,A$ vertex.}
    \label{fig:avert1}
    \end{subfigure}
    \begin{subfigure}{0.33\textwidth}
    \centering
    \begin{minipage}[c][2cm][c]{\linewidth}
    \centering
    \includegraphics[width=
    0.4\linewidth]{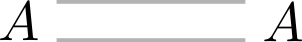}
    \end{minipage}
    \caption{$\Tr\,A^2$ vertex.}
    \label{fig:avert2}
    \end{subfigure}
    \begin{subfigure}{0.33\textwidth}
    \centering
    \begin{minipage}[c][2cm][c]{\linewidth}
    \centering
    \includegraphics[width=
    0.3\linewidth]{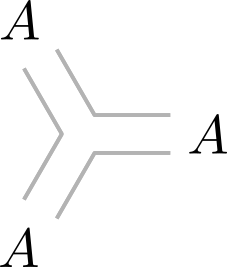}
    \end{minipage}
    \caption{$\Tr\,A^3$ vertex.}
    \label{fig:avert3}
    \end{subfigure}
     \caption{
    Ribbon graph representation of various examples of interactions of the field $A$ in the partition function \eqref{eq:ZH}.
    }
    \label{fig:avert}
\end{figure}

\begin{figure}[H]
    \begin{subfigure}{0.33\textwidth}
    \begin{minipage}[c][2cm][c]{\linewidth}
    \centering
    \includegraphics[width=
    0.25\linewidth]{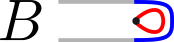}
    \end{minipage}
    \caption{$\Tr Q\;\Tr(PB)$ vertex.}
    \label{fig:bvert1}
    \end{subfigure}
    \begin{subfigure}{0.33\textwidth}
    \begin{minipage}[c][2cm][c]{\linewidth}
    \centering
    \includegraphics[width=
    0.4\linewidth]{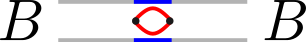}
    \end{minipage}
    \caption{$\Tr Q^2\;\Tr(PB)^2$ vertex.}
    \label{fig:bvert2}
    \end{subfigure}
    \begin{subfigure}{0.33\textwidth}
    \begin{minipage}[c][2cm][c]{\linewidth}
    \centering
    \includegraphics[width=
    0.3\linewidth]{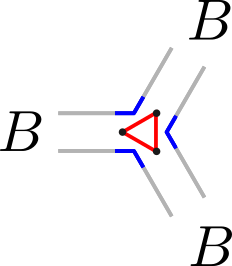}
    \end{minipage}
    \caption{$\Tr Q^3\;\Tr(PB)^3$ vertex.}
    \label{fig:bvert3}
    \end{subfigure}
    \caption{
    Ribbon graph representation of various examples of interactions of the dually-weighted field $B$ in the partition function \eqref{eq:ZH}. The blue corners represent a factor of $P$ in $(PB)^n$, and the red loops represent a factor of $\Tr\,Q^n$.
    }
    \label{fig:bvert}
\end{figure}

\paragraph{The logarithmic (dually weighted) potential.}
The key choice for the equivalence theorem is the dually weighted potential
\[\label{eq:YPQtrln}
Y_{\ln}
[P,Q](B)
    = - \Tr\ln \left(\mathbb 1^{\otimes 2}
      - i\,Q\otimes(PB)\right)=\sum_{k=1}^\infty\frac{i^k}{k}\Tr Q^k \, \Tr(PB)^k ,
\]
leading to the partition function
\begin{equation}\label{eq:ZH}
\mathcal Z_{\rm HM}[V,
Y_{\ln}
[P,Q]]
 = 
    \iint_{{\rm H}(N)^2} {\cal D}A\,{\cal D}B\;
    e^{
         -iN\Tr(AB)
         + V(A)
         - \Tr\ln(\mathbb 1^{\otimes 2}
                 - i\,Q\otimes(PB))
      }.
\end{equation}
Figures \ref{fig:avert} and \ref{fig:bvert} illustrate several interactions that can be considered in this model.
The Gaussian expectations reproduce exactly the two--face structure of the complex matrix model:

\begin{proposition}\label{avAB}
Let $A\in{\rm H}(N)$ be distributed according to
$\mathcal{Z}_{\rm HM}[0,
Y_{\ln}
[P,Q]]$ 
\eqref{eq:ZH0}.  
For any $\sigma\in S_n$,
\begin{equation}\label{eq:prop5.2}
\left\langle \Tr_{[\sigma]}(A)\right\rangle
[0,
Y_{\ln}
[P,Q]]
    = N^{-n} \sum_{\gamma\in S_n}
      \Tr_{[\gamma]}(Q)\,\Tr_{[\gamma\sigma]}(P).
\end{equation}
\end{proposition}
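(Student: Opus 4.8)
The plan is to start from the multi-trace observable formula already established for the self-adjoint model, namely $\langle \Tr_{[\sigma]}(A)\rangle[0,Y] = (iN)^{-n}\,\Tr_{[\sigma]}(\partial/\partial B)\,e^{Y(B)}\big|_{B=0}$, and specialize $Y=Y[P,Q]$. The first step is to make the generating function $e^{Y[P,Q](B)}$ explicit. Since $Y[P,Q](B) = -\Tr\ln(\mathbb{1}^{\otimes 2} - iQ\otimes(PB))$, exponentiating the trace-log gives $e^{Y[P,Q](B)} = \det(\mathbb{1}^{\otimes 2} - iQ\otimes(PB))^{-1} = \prod_{a,b}(1 - i q_a \nu_b)^{-1}$, where $q_a,\nu_b$ are the eigenvalues of $Q$ and $PB$ and I use that the eigenvalues of a Kronecker product are the products of eigenvalues. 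The Cauchy identity then yields the character expansion $e^{Y[P,Q](B)} = \sum_\lambda \chi_\lambda(iQ)\chi_\lambda(PB) = \sum_{n\ge 0}\sum_{\lambda\vdash n} i^{\,n}\,\chi_\lambda(Q)\,\chi_\lambda(PB)$, using homogeneity $\chi_\lambda(iQ)=i^{|\lambda|}\chi_\lambda(Q)$ and $s_\lambda(\text{eigenvalues})=\chi_\lambda$. Because $\Tr_{[\sigma]}(\partial/\partial B)$ is an order-$n$ differential operator, only the degree-$n$ piece survives at $B=0$, and the prefactor combines as $(iN)^{-n}i^{\,n}=N^{-n}$.

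The technical core is to evaluate the pairing $\big[\Tr_{[\sigma]}(\partial/\partial B)\,\Tr_{[\tau]}(PB)\big]_{B=0}$ for a fixed representative $\tau$, after expanding $\chi_\lambda(PB)=\frac{1}{n!}\sum_{\tau}\chi^\lambda(\tau)\Tr_{[\tau]}(PB)$ by Schur--Weyl. Writing $\Tr_{[\tau]}(PB)=\sum_{\bm k,\bm c}\prod_i P_{k_i c_i}B_{c_i k_{\tau(i)}}$ and $\Tr_{[\sigma]}(\partial/\partial B)=\sum_{\bm l}\prod_j \partial/\partial B_{l_{\sigma(j)}l_j}$, differentiating a product of $n$ linear factors produces a sum over bijections $\beta\in S_n$ matching the $n$ derivatives to the $n$ factors of $B$, each bijection contracting the free indices through Kronecker deltas. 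Carrying out the bookkeeping collapses every term to a single multi-trace of $P$, giving $\big[\Tr_{[\sigma]}(\partial/\partial B)\Tr_{[\tau]}(PB)\big]_{B=0} = \sum_{\beta\in S_n}\Tr_{[\tau\beta\sigma\beta^{-1}]}(P)$. I expect this to be the main obstacle: it is a purely combinatorial Wick-type count, and the real difficulty is keeping the three families of indices ($\bm k,\bm c,\bm l$) and the operator-derivative index convention consistent so that the conjugate $\beta\sigma\beta^{-1}$ appears correctly.

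With the pairing in hand, I reassemble the sum. Substituting back gives $\langle\Tr_{[\sigma]}(A)\rangle = N^{-n}\frac{1}{n!}\sum_{\lambda\vdash n}\sum_{\tau,\beta}\chi_\lambda(Q)\chi^\lambda(\tau)\Tr_{[\tau\beta\sigma\beta^{-1}]}(P)$. Performing the $\lambda$-sum first and invoking Schur--Weyl in the form $\sum_{\lambda\vdash n}\chi^\lambda(\tau)\chi_\lambda(Q)=\Tr_{[\tau]}(Q)$ turns this into $N^{-n}\frac{1}{n!}\sum_{\tau,\beta}\Tr_{[\tau]}(Q)\Tr_{[\tau\beta\sigma\beta^{-1}]}(P)$.

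Finally I collapse the $\beta$-sum using only that trace invariants are class functions. For each fixed $\beta$ I change variables $\tau=\beta\tau'\beta^{-1}$: then $\Tr_{[\tau]}(Q)=\Tr_{[\tau']}(Q)$ and $\Tr_{[\tau\beta\sigma\beta^{-1}]}(P)=\Tr_{[\beta\tau'\sigma\beta^{-1}]}(P)=\Tr_{[\tau'\sigma]}(P)$, so the summand becomes independent of $\beta$. The $\beta$-sum then contributes a factor $n!$ that cancels the $1/n!$, leaving $\langle\Tr_{[\sigma]}(A)\rangle[0,Y[P,Q]] = N^{-n}\sum_{\gamma\in S_n}\Tr_{[\gamma]}(Q)\Tr_{[\gamma\sigma]}(P)$, which is the claim. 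Note that this coincides term-by-term with Proposition~\ref{avM}, so the same computation already exhibits the equivalence of moments between the complex and self-adjoint models.
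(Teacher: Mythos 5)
Your proof is correct, and its skeleton coincides with the paper's: both arguments start from the derivative formula $\langle\Tr_{[\sigma]}(A)\rangle[0,Y]=(iN)^{-n}\,\Tr_{[\sigma]}(\partial/\partial B)\,e^{Y(B)}\big|_{B=0}$, isolate the degree-$n$ term of $e^{Y[P,Q](B)}$, evaluate the Leibniz/Wick pairing against $\Tr_{[\tau]}(PB)$, and collapse the resulting double permutation sum by a conjugation change of variables. The one genuine divergence is the expansion of the generating function: the paper expands $\det(\mathbb{1}^{\otimes 2}-iQ\otimes(PB))^{-1}$ directly as $\sum_m \frac{i^m}{m!}\sum_{\mu\in S_m}\Tr_{[\mu]}(Q)\Tr_{[\mu]}(PB)$, while you route through Cauchy's identity into Schur functions and then apply Schur--Weyl twice, once to expand $\chi_\lambda(PB)$ into multi-traces and once to resum $\sum_{\lambda\vdash n}\chi^\lambda(\tau)\chi_\lambda(Q)=\Tr_{[\tau]}(Q)$. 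These are literally the same expansion in different clothes: character orthogonality gives $\sum_{\lambda\vdash n}\chi_\lambda(Q)\chi_\lambda(PB)=\frac{1}{n!}\sum_{\tau\in S_n}\Tr_{[\tau]}(Q)\Tr_{[\tau]}(PB)$, so your two Schur--Weyl applications cancel and you land exactly where the paper stands before the pairing step; what the detour buys is a transparent bridge to the character-expectation corollaries of Section~\ref{relres} (which the paper instead derives \emph{from} this Proposition), at the cost of invoking Cauchy's identity as extra input. Your pairing lemma $\Tr_{[\sigma]}(\partial/\partial B)\Tr_{[\tau]}(PB)\big|_{B=0}=\sum_{\beta\in S_n}\Tr_{[\tau\beta\sigma\beta^{-1}]}(P)$ is stated without full execution, but it is correct under your transposed derivative convention and matches the paper's $\sum_{\nu}\Tr_{[\sigma\nu^{-1}\tau^{-1}\nu]}(P)$, since the two permutations are related by inversion and conjugation, both of which preserve cycle type; the paper's proof carries out exactly this bookkeeping explicitly, so no gap remains. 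Your final substitution $\tau=\beta\tau'\beta^{-1}$, with conjugation invariance of $\Tr_{[\cdot]}(Q)$ making the summand $\beta$-independent, is the same resummation the paper performs via $\mu\to\nu\mu^{-1}\nu^{-1}$.
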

\begin{proof}
    The proof is given in Appendix~\ref{app:proof}.
\end{proof}
\noindent It is straightforward to note that the right-hand side of \eqref{eq:prop5.2} 
in Proposition~\ref{avAB}
is identical to the expression obtained for the complex matrix model in Proposition~\ref{avM}. Thus, the two theories \eqref{eq:ZCM} and \eqref{eq:ZH0} assign the same weights to every multi-trace observable in the Gaussian sector
($ V = 0$).
This strongly suggests a deeper correspondence that goes beyond expectation values. We show that this is not merely a formal similarity: the two partition functions are \emph{exactly} equivalent. The matrix $B$ functions as the intermediate field whose logarithmic potential encodes the covariance of the complex matrix model \eqref{eq:ZCM}, and the identification of both theories follows from integrating out the resuling Gaussian distribution of $M$ from \eqref{eq:ZCM}.

\subsection{Exact matrix equivalence via intermediate fields}
\label{sec:matrixequiv}
\noindent 
 The previous subsections \ref{subsec:GLmodel} and \ref{subsec:HMmodel} showed that for $V=0$, the complex and self-adjoint models match at the level of all multi-trace observables. We show that this agreement is an implication of a more general ($V\neq 0$) result:
the partition function for complex matrices \eqref{eq:ZCM} and the partition function for self-adjoint matrices \eqref{eq:ZH0} with logarithmic potential \eqref{eq:YPQtrln} coincide.
This property has been observed in \cite{difrancesco1992generatingfunctionfatgraphs} for a quadratic potential. Furthermore, the bridge between the two theories is provided by an intermediate--field representation, in which the self-adjoint matrix $B$ 
(i.e., the dually-weighted intermediate field)
of \eqref{eq:ZH} plays the role of a Fourier variable enforcing the constraint $A = M^\dagger M$. Once this constraint is made explicit, the logarithmic potential $Y_{\ln}[P,Q](B)$ \eqref{eq:YPQtrln} naturally emerges from integrating out $M$ from \eqref{eq:ZCM}.
We remark that, depending on the potentials $V$ and $Y$, the partition functions \eqref{eq:ZCM} and \eqref{eq:ZH} may not be convergent. In that case, one may only be interested in their representation as a formal power series, or as a generating functions of discrete geometries.

\medskip

\begin{theorem}
\label{matrixequiv_thm}
Let $\mathcal Z_{\rm CM}[V](P,Q)$ \eqref{eq:ZCM} be the partition function for the complex matrix model, and $\mathcal Z_{\rm HM}[V,Y]$ \eqref{eq:ZH0} be the partition function for the self-adjoint two--matrix model. Then, for $Y=Y_{\ln}[P,Q]$, as given in  \eqref{eq:YPQtrln}, one has that
\[
\frac{\mathcal Z_{\rm CM}[V](P,Q)}{\mathcal Z_{\rm CM}[0](P,Q)}
   = 
\frac{\mathcal Z_{\rm HM}[V,
Y_{\ln}
[P,Q]]}{\mathcal Z_{\rm HM}[0,
Y_{\ln}
[P,Q]]}\;
\]
as formal power series.
\end{theorem}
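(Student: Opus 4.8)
The plan is to prove the identity by the intermediate--field (Hubbard--Stratonovich) route suggested in the text: starting from $\mathcal Z_{\rm CM}[V](P,Q)$, I would introduce an auxiliary self-adjoint matrix $A$ enforcing the constraint $A=M^\dagger M$ and then integrate out the original complex field $M$. Concretely, I would insert the resolution of identity $1=c_N\iint_{{\rm H}(N)^2}{\cal D}A\,{\cal D}B\;e^{-iN\Tr\!\big(B(A-M^\dagger M)\big)}$, in which the $B$-integration produces the matrix delta function $\delta(A-M^\dagger M)$ and $c_N$ is a $V$- and $M$-independent normalization. This replaces $V(M^\dagger M)$ by $V(A)$ and generates the linear coupling $-iN\Tr(AB)$ together with the extra quadratic piece $+iN\Tr(BM^\dagger M)$, so that the entire $M$-dependence becomes Gaussian.

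The next step is to carry out the Gaussian integral over $M$. Collecting the $M$-dependent exponent $-N\Tr(P^{-1}M^\dagger Q^{-1}M)+iN\Tr(BM^\dagger M)$ and vectorizing $M$ over its (row, column) index pair, I would identify the kernel $\mathcal K=N\big(Q^{-1}\otimes(P^{-1})^{\tp}-i\,\mathbb{1}\otimes B^{\tp}\big)$, so that the integral evaluates to a constant times $1/\det\mathcal K$. I would then factor $\mathcal K=N\,(Q^{-1}\otimes(P^{-1})^{\tp})\big(\mathbb{1}^{\otimes 2}-i\,Q\otimes(BP)^{\tp}\big)$, isolating a $B$-independent prefactor. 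Using that the spectrum of a tensor product is the product of the spectra, that $BP$ and $PB$ are cospectral, and that $Q^{\tp}$ and $Q$ are cospectral, I would show $\det\big(\mathbb{1}^{\otimes 2}-i\,Q\otimes(BP)^{\tp}\big)=\det\big(\mathbb{1}^{\otimes 2}-i\,Q\otimes PB\big)$, matching exactly the argument of the logarithm in $Y[P,Q]$. Hence $1/\det\big(\mathbb{1}^{\otimes 2}-i\,Q\otimes PB\big)=e^{Y[P,Q](B)}$.

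Assembling these pieces, $\mathcal Z_{\rm CM}[V](P,Q)$ equals a $V$-independent constant (collecting $c_N$, the Gaussian prefactor, and $\det(Q^{-1}\otimes(P^{-1})^{\tp})^{-1}$) times $\iint_{{\rm H}(N)^2} {\cal D}A\,{\cal D}B\;e^{-iN\Tr(AB)+V(A)+Y[P,Q](B)}=\mathcal Z_{\rm HM}[V,Y[P,Q]]$. Since this constant does not depend on $V$, it cancels upon forming the ratio $\mathcal Z_{\rm CM}[V]/\mathcal Z_{\rm CM}[0]$, which yields the claimed equality. I expect the main obstacle to be the determinant bookkeeping of the second step: keeping the tensor/transpose structure of $\mathcal K$ straight and proving that the $(BP)^{\tp}$ and $Q^{\tp}$ orderings collapse onto the precise form $\mathbb{1}^{\otimes 2}-i\,Q\otimes PB$ of \eqref{eq:YPQtrln}. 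A secondary issue is the rigorous justification of the Gaussian/Fourier manipulations---positivity of the Hermitian part of $\mathcal K$ and convergence of the $B$-contour---for which I would invoke the spectral conditions on $P$ and $Q$ recorded in the convergence appendix.
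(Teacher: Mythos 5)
Your proposal is correct and follows essentially the same route as the paper's proof: insert the Fourier (intermediate-field) representation of $\delta(A-M^\dagger M)$ over ${\rm H}(N)$, integrate out $M$ as a Gaussian with kernel $N\left(Q^{-1}\otimes(P^{-1})^{\tp}-i\,\mathbb 1\otimes B^{\tp}\right)$, rewrite the resulting determinant via $\det=e^{\Tr\ln}$ as the potential $Y[P,Q](B)$, and cancel the $V$-independent constants in the ratio. Your additional cospectrality bookkeeping, showing $\det\left(\mathbb 1^{\otimes 2}-i\,Q\otimes(BP)^{\tp}\right)=\det\left(\mathbb 1^{\otimes 2}-i\,Q\otimes PB\right)$, merely makes explicit a determinant evaluation the paper states in a single line.
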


\begin{proof}
For any $M\in\mathbb C^{N\times N}$, the matrix $M^\dagger M$ is self-adjoint. We insert a delta function enforcing $M^\dagger M=A\in \Hm(N)$,
\begin{equation}
\mathcal Z_{\rm CM}[V](P,Q)
 = \iint_{\mathbb C^{N\times N}} {\cal D}M^\dagger {\cal D}M
   \int_{{\rm H}(N)} {\cal D}A\,
   \delta(M^\dagger M - A)\,
   e^{-N\Tr(P^{-1} M^\dagger Q^{-1} M)+V(A)}.
\end{equation}
Representing the delta function as a Fourier integral over
${\rm H}(N)$, we get that 
\begin{equation}
\label{eq:deltaAM}
\delta(M^\dagger M - A)
  = {\cal N}_{\rm HM}^{-1}
     \int_{{\rm H}(N)} {\cal D}B\;
       e^{\,iN\Tr(B(M^\dagger M - A))},
\end{equation}
where $B\in \Hm(N)$ plays a role of an intermediate field, and 
\[
{\cal N}_{\rm HM}
   = \iint_{{\rm H}(N)^2} {\cal D}A\,{\cal D}B\,
     e^{-iN\Tr(AB)}
   = \mathcal Z_{\rm HM}[0,0]\,.
\]
Substituting this identity \eqref{eq:deltaAM} yields
\begin{equation}\label{eq:Zint}
\mathcal Z_{\rm CM}[V](P,Q)
   = {\cal N}_{\rm HM}^{-1}
      \iint_{\mathbb C^{N\times N}}\hspace{-7mm} {\cal D}M^\dagger {\cal D}M
      \iint_{{\rm H}(N)^2} \hspace{-7mm} {\cal D}A{\cal D}B\;
      e^{
         -N\Tr(P^{-1}M^\dagger Q^{-1}M)
         + iN\Tr(B M^\dagger M)
         - iN\Tr(BA)
         + V(A)
      }.
\end{equation}
For sufficiently well-behaved potentials $V$, for which the integrals are
absolutely convergent, one may exchange the order of integration and perform
the integral over $M$ first. More generally, for arbitrary potential $V$, the same manipulation can be
carried out at the level of formal power series. In either case, the integral
over $M$ is Gaussian, as is most clearly seen in tensor notation,
\[
\Tr \left(P^{-1} M^\dagger Q^{-1} M - B(M^\dagger M)\right)
  = M^*
     \left(
        Q^{-1}\otimes (P^{-1})^{ \tp}
        - i\,\mathbb 1\otimes B^{ \tp}
     \right)
     M.
\]
Thus,
\begin{equation}
\iint_{\mathbb C^{N\times N}} {\cal D}M^\dagger {\cal D}M\;
   e^{-N\Tr(P^{-1}M^\dagger Q^{-1}M)+iN\Tr(BM^\dagger M)}
 = {\cal N}_{\rm CM}\,
   \det(\mathbb 1^{\otimes 2} - i\,Q\otimes (P B))^{-1},
\end{equation}
where
\[
{\cal N}_{\rm CM}
 = \iint_{\mathbb C^{N\times N}} {\cal D}M^\dagger {\cal D}M\,
    e^{-N\Tr(P^{-1}M^\dagger Q^{-1}M)}
 = \mathcal Z_{\rm CM}[0](P,Q).
\]
Using $\det = e^{\Tr\ln}$, we obtain
\begin{equation}
\mathcal Z_{\rm CM}[V](P,Q)
 = {\cal N}_{\rm HM}^{-1} {\cal N}_{\rm CM}
    \iint_{{\rm H}(N)^2} {\cal D}A\,{\cal D}B\;
    e^{
         -iN\Tr(AB)
         + V(A)
         - \Tr\ln(\mathbb 1^{\otimes 2}
                 - i\,Q\otimes(PB))
      }.
\end{equation}
The remaining integral over $A$ and $B$ is precisely $\mathcal Z_{\rm HM}[V,
Y_{\ln}
[P,Q]]$, where the intermediate field $B$ has a dually weighted potential. The constant ${\cal N}_{\rm HM}^{-1} {\cal N}_{\rm CM}$ does not depend on $V$, thus the proof is complete.
\end{proof}

\noindent 
Theorem~\ref{sec:matrixequiv} shows that, after integrating out the complex matrix $M$, the self-adjoint intermediate field $B$ acquires a logarithmic potential of the dually weighted type. The intermediate field thus provides a link between the two matrix models, transforming the two–index-cycle structure of the complex one-matrix model into a self-adjoint two-matrix model.

\emph{Remark.}
The result is obtained at the level of formal power series since, depending on the choice of potential $V$, the partition functions involved may 
not be
absolutely convergent or may only be defined in a distributional sense. In such situations, the equivalence is to be interpreted as an identity between formal expansions. 
However,
for suitable choices of $V$
for which the integrals are well-defined
(e.g.\ when the action is sufficiently bounded so that the integrals are
absolutely convergent, or admit a well-defined Fourier transform), the
equivalence holds at the level of convergent integrals. In this case, the
derivation can be justified by exchanging the order of integration and
performing the Gaussian integral over $M$ explicitly.

In summary, in the \emph{complex matrix model} $\mathcal{Z}_{\rm CM}$, the Gaussian measure couples $Q$ to the first index of $M$ and $P$ to the second. As a consequence, the ribbon graphs generated by the theory naturally carry two independent families of faces: one tracking the index cycle acted on by $Q$, and one tracking the index cycle acted on by $P$. These two families of faces coexist within a single connected Feynman graph and are the geometric origin of the weights appearing in Proposition~\ref{avM}. In the \emph{self-adjoint matrix model}
$\mathcal{Z}_{\rm HM}$,
the interaction produced by the intermediate field $B$ reproduces exactly the same structure: the propagator determines one family of faces, while the dually weighted potential $Y_{\ln}[P,Q]$ supplies the weights for the other. By convention, in our presentation, the $Q$–dependent contribution appears as a ribbon-graph vertex weight and the $P$–dependent contribution as a ribbon-graph face weight, but this is a matter of choice: the roles of $P$ and $Q$ may be exchanged without changing the underlying combinatorics. We can visualize how elementary ribbon-graph elements (edges, corners, and faces) are mapped between the two-matrix model formulations with aid of Figures~\ref{fig:mprop}, ~\ref{fig:mvert}, ~\ref{fig:mface}, ~\ref{fig:abprop}, ~\ref{fig:avert}, ~\ref{fig:bvert} and ~\ref{fig:mbvert}. Vertices of $M$ (Figure~\ref{fig:mvert}) are equivalent to vertices of
the intermediate effective field
$A$ (Figure~\ref{fig:avert}), and faces of $M$ (Figure~\ref{fig:mface}) are equivalent to vertices of 
the dually-weighted intermediate field
$B$ (Figure~\ref{fig:bvert}). Figure~\ref{fig:DWMMex} illustrates this correspondence by displaying an example of the correspondence in a full ribbon graph, where the complex model is recovered from the self-adjoint theory with the dually weighted potential.
\begin{figure}[H]
    \centering
    \includegraphics[width=0.15
    \linewidth]{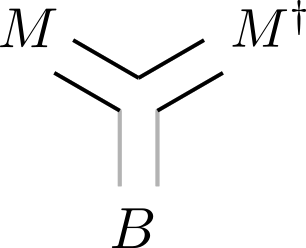}
    \caption{The interaction vertex $\Tr (B M^\dagger M)$, which appears in \eqref{eq:Zint}.
    }
    \label{fig:mbvert}
\end{figure}

\begin{figure}[H]
    \begin{subfigure}{0.3\textwidth}
    \centering
    \includegraphics[width=
    0.9\linewidth]{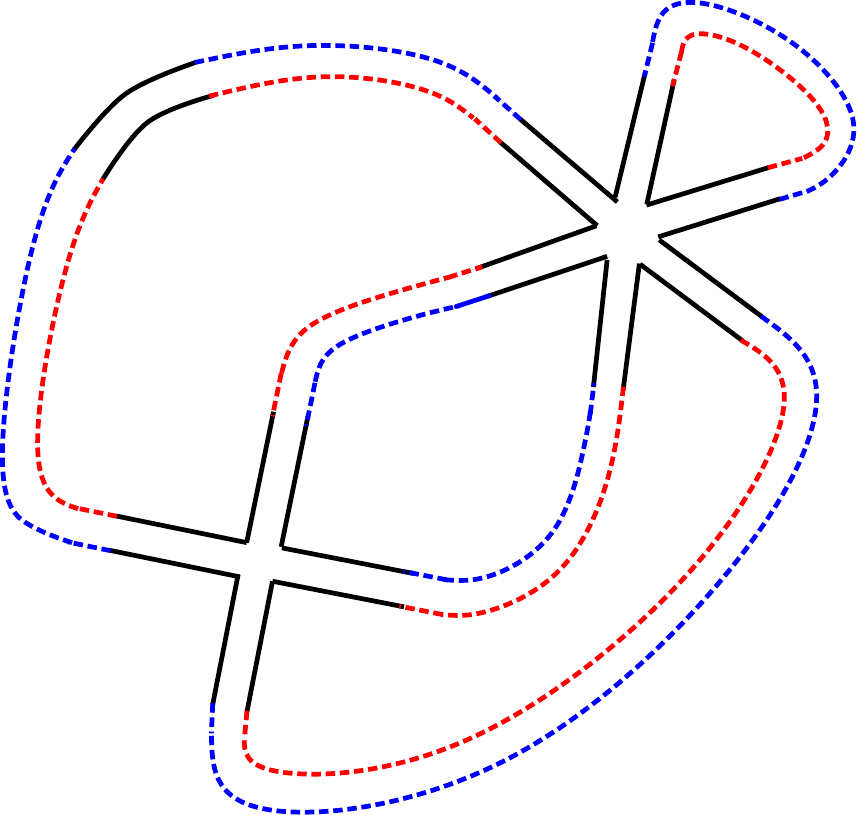}
    \caption{}
    \label{fig:equivex1m}
    \end{subfigure}
    \begin{subfigure}{0.39\textwidth}
    \centering
    \includegraphics[width=
    \linewidth]{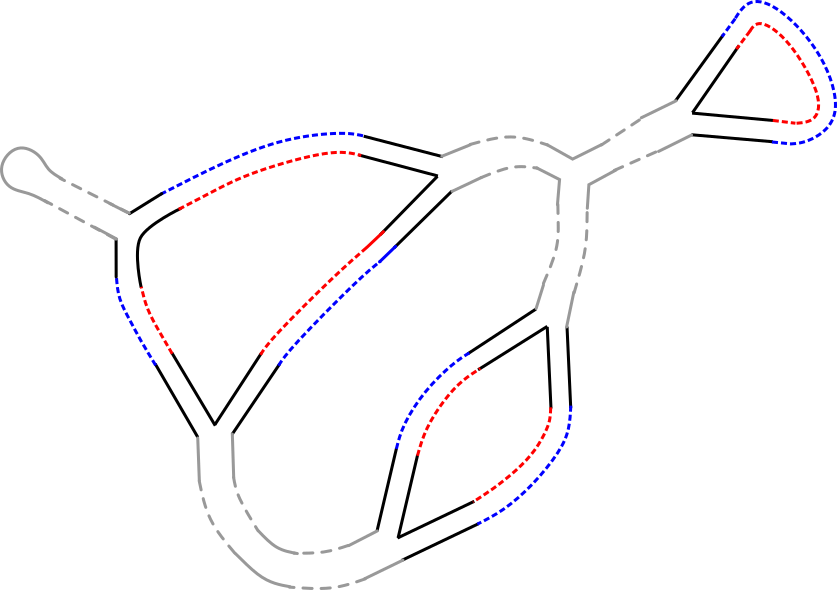}
    \caption{}
    \label{fig:equivex1abm}
    \end{subfigure}
    \begin{subfigure}{0.3\textwidth}
    \centering
    \includegraphics[width=
    \linewidth]{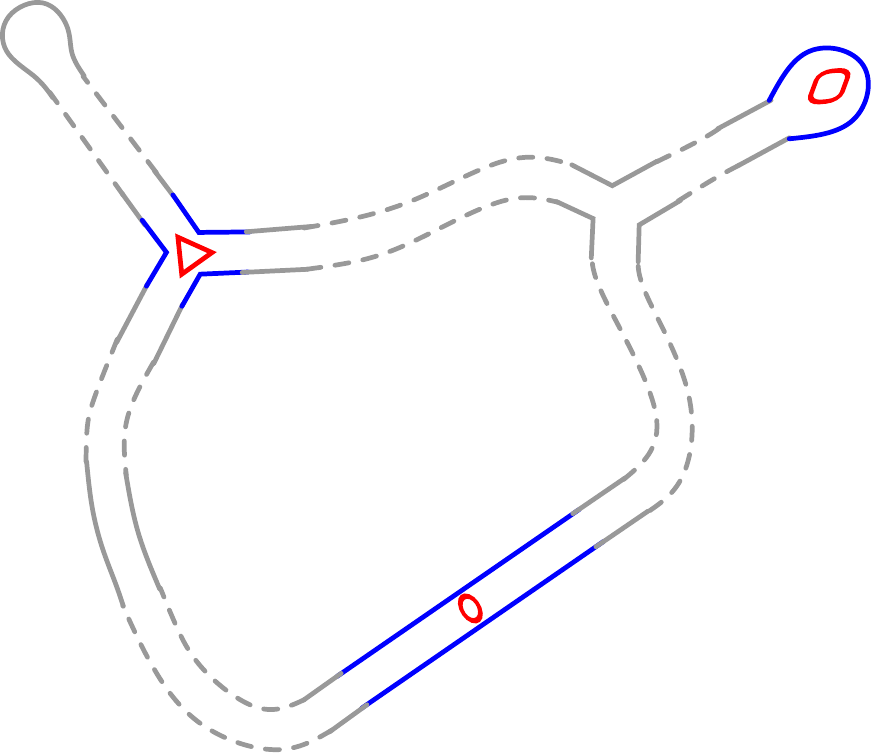}
    \caption{}
    \label{fig:equivex1ab}
    \end{subfigure}
    \caption{An example of equivalence between a graph in the: (a) complex matrix theory ${\cal Z}_{\rm CM}[V](P,Q)$ \eqref{eq:ZCM} using the graphical representations of the propagator, interaction vertices and faces shown in Figures~\ref{fig:mprop},~\ref{fig:mvert}, and~\ref{fig:mface}; (b) intermediate theory \eqref{eq:Zint} involving the  propagators and interaction vertices shown in Figures~\ref{fig:mprop}, \ref{fig:mface}, \ref{fig:abprop}, \ref{fig:avert}, and~\ref{fig:mbvert}; (c)
     the Hermitian matrix theory $\mathcal{Z}_{\rm HM}[V,
Y_{\ln}
    [P,Q]]$ \eqref{eq:ZH} now composed of the propagator and interaction vertices shown in Figure~\ref{fig:abprop}, \ref{fig:avert}, and~\ref{fig:bvert}.
    Gray lines represent intermediate fields $A$ and $B$.
     }
    \label{fig:DWMMex}
\end{figure}

\paragraph{Expectation values.}
\noindent
The equivalence of the partition functions proved in Theorem \ref{matrixequiv_thm} has an immediate implication:
the expectation values of any trace invariant depending only on the self-adjoint combination $M^\dagger M$
can be computed in either model.  
To make this precise, recall that for any function
$f:{\rm H}(N)\rightarrow\mathbb C$, the expectation values in each of the two
models can be written as
\[
\left\langle f(M^\dagger M)\right\rangle[V](P,Q)
           = \frac{\mathcal{Z}_{\rm CM}[V+\ln f](P,Q)}
           {\mathcal{Z}_{\rm CM}[V](P,Q)},
           \qquad
\left\langle f(A)\right\rangle[V,Y]
    = \frac{\mathcal{Z}_{\rm HM}[V+\ln f,\,Y]}
           {\mathcal{Z}_{\rm HM}[V,Y]}.
\]
Applying Theorem~\ref{matrixequiv_thm} to these ratios gives an exact
identity between all corresponding correlators. This is stated in the following Corollary.

\begin{corollary}
\label{cor:exp_val_mm}
For any potential $V$ and any function $f:{\rm H}(N)\to\mathbb C$,
\[
\left\langle f(M^\dagger M)\right\rangle[V](P,Q)
    = \left\langle f(A)\right\rangle[V,Y_{\ln}[P,Q]].
\]
\end{corollary}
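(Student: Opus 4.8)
The plan is to reduce both expectation values to ratios of partition functions with a shifted potential, and then invoke Theorem~\ref{matrixequiv_thm} on numerator and denominator separately. The crucial observation is that the observable $f$ depends only on the self-adjoint combination $M^\dagger M$ on the complex side, and only on $A$ on the self-adjoint side; in both cases it can therefore be absorbed into the $V$-potential without touching the quadratic term or the auxiliary potential $Y$. This means the corollary carries no new combinatorial or diagrammatic content: everything is inherited from the partition-function identity already established.

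First I would make the absorption explicit. On the complex side, inserting $f(M^\dagger M)$ into the Boltzmann weight $e^{V(M^\dagger M)}$ produces $e^{(V+\ln f)(M^\dagger M)}$, where $V+\ln f$ is understood as the class-invariant potential whose exponential equals $e^{V}f$ (no positivity of $f$ is needed; this is a purely formal shift of the potential, legitimate because the partition functions are defined for arbitrary $V$). Hence the numerator of $\langle f(M^\dagger M)\rangle[V](P,Q)$ is exactly $\mathcal Z_{\rm CM}[V+\ln f](P,Q)$, giving the first displayed identity preceding the corollary. Likewise, on the self-adjoint side $f(A)$ shifts $V(A)\to (V+\ln f)(A)$ while leaving both the mixed kinetic term $-iN\Tr(AB)$ and the intermediate-field potential $Y$ untouched, so the numerator of $\langle f(A)\rangle[V,Y]$ is $\mathcal Z_{\rm HM}[V+\ln f,\,Y]$.

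With both expectation values written as ratios, I would apply Theorem~\ref{matrixequiv_thm} twice at the same covariance data, once with potential $V$ and once with $V+\ln f$, in each case with $Y=Y[P,Q]$. Dividing the two instances of the theorem, the common normalizations $\mathcal Z_{\rm CM}[0](P,Q)$ on the left and $\mathcal Z_{\rm HM}[0,Y[P,Q]]$ on the right cancel, yielding
\[
\frac{\mathcal Z_{\rm CM}[V+\ln f](P,Q)}{\mathcal Z_{\rm CM}[V](P,Q)}
  = \frac{\mathcal Z_{\rm HM}[V+\ln f,\,Y[P,Q]]}{\mathcal Z_{\rm HM}[V,\,Y[P,Q]]},
\]
whose left-hand side is $\langle f(M^\dagger M)\rangle[V](P,Q)$ and whose right-hand side is $\langle f(A)\rangle[V,Y[P,Q]]$, which is precisely the claimed equality of correlators.

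I do not expect a serious obstacle, since the entire content is carried by Theorem~\ref{matrixequiv_thm}. The only point requiring care is the justification of the shift $V\to V+\ln f$: one should note that moving $f$ into the action does not alter the quadratic data, and that treating $\ln f$ formally introduces no genuine branch ambiguity, because the identity is ultimately an equality of the \emph{same} integrand written two ways. It is nonetheless worth stating that the result holds whenever the four partition functions $\mathcal Z_{\rm CM}[V]$, $\mathcal Z_{\rm CM}[V+\ln f]$, $\mathcal Z_{\rm HM}[V,Y[P,Q]]$ and $\mathcal Z_{\rm HM}[V+\ln f,Y[P,Q]]$ all converge, so that the ratios defining the expectation values are well defined.
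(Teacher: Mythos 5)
Your proposal is correct and matches the paper's own argument: the paper likewise rewrites both expectation values as ratios $\mathcal Z[V+\ln f]/\mathcal Z[V]$ and applies Theorem~\ref{matrixequiv_thm} to numerator and denominator, with the normalizations $\mathcal Z_{\rm CM}[0](P,Q)$ and $\mathcal Z_{\rm HM}[0,Y[P,Q]]$ cancelling in the quotient. Your added remarks on the formal nature of the shift $V\to V+\ln f$ and on convergence are sensible refinements, not deviations from the paper's route.
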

\noindent
A direct implication is that for $V=0$ and $f(A)=\Tr_{[\sigma]}(A)$,
\[\label{eq:traces_PQ}
 \left\langle \Tr_{[\sigma]}(M^\dagger M)\right\rangle[0](P,Q)
   =\left\langle \Tr_{[\sigma]}(A)\right\rangle[0,
Y_{\ln}
   [P,Q]]
    .
\]
Furthermore, using Proposition~\ref{avM} and 
Proposition~\ref{avAB}, we see explicitly that \eqref{eq:traces_PQ} holds.

\medskip

\paragraph{Discussion.}
The subsection \ref{sec:matrixequiv} is the core of the Section \ref{sec:equivmm}: the complex matrix model \eqref{eq:ZCM} and the self-adjoint matrix model \eqref{eq:ZH0} with the logarithmic intermediate potential  \eqref{eq:YPQtrln} are equivalent random geometry theories.
The field $B$ plays the role of an intermediate Fourier variable that enforces $A=M^\dagger M$, while its potential $
Y_{\ln}
[P,Q]$ 
encodes the $P$– and $Q$–dependence arising from integrating out $M$. Thus, the two models are dual descriptions of the same underlying matrix integral, and 
any
expectation value
depending on $M^\dagger M$ may be computed in whichever formulation.

\subsection{Examples}\label{examp}
\noindent The results in the previous subsections have notable implications once specific choices of the potential $V$ and matrices  $(P,Q)$ are made.  
This subsection collects several illustrative cases.  

\paragraph{Dually weighted complex matrix model.}

\noindent Consider the complex matrix model 
$\mathcal{Z}_{\rm CM}$ 
\eqref{eq:ZCM}
with covariance $(P,Q)$ and 
its potential $V$ to be dually weighted
\[
\label{eq:Vln}
V_{\ln}
[K,L](M^\dagger M)
= -\Tr\ln\left({\bb 1}^{\otimes 2}-i L\otimes(K M^\dagger M)\right)
\;
\]
for constant $N\times N$ matrices $K$ and $L$.
Its partition function reads
\[
\label{eq:ZCMVln}
\mathcal{Z}_{\rm CM}[
V_{\ln}
[K,L]](P,Q)
=\int_{\mathbb C^{N\times N}} {\cal D}M^\dagger{\cal D}M\;
e^{-N\Tr(P^{-1}M^\dagger Q^{-1}M)
-\Tr\ln({\bb 1}^{\otimes 2}-i L\otimes(K M^\dagger M))}
\;.
\]
As in Section~\ref{subsec:GLmodel}, contractions of the first index of the matrix
$M$ generate color--1 faces weighted by $\Tr(Q^n)$. Given the form of the dually weighted potential 
$
V_{\ln}
[K,L]$, 
each vertex carries a weight $\Tr(L^n)$, just as in the self-adjoint one-matrix model. However, in this case, the face structure is enriched by the presence of two independent families of cycles built into the complex matrix model. This is because contractions of the second index of the matrix $M$ generate color--2 faces weighted by $\Tr((P K)^n)$. In summary, for each $n$, one obtains
\begin{itemize}
    \item Vertices of degree $n$ with weight $\Tr(L^n)$.
    \item Color--1 faces of length $n$ with weight $\Tr(Q^n)$.
    \item Color--2 faces of length $n$ with weight $\Tr((P K)^n)$.
\end{itemize}
This reproduces the pattern of Feyman graphs characteristic of dually weighted models. However, in this case, the two families of faces arise from the two distinct indices of one complex matrix, $M$, instead of two distinst matrices $A$ and $B$. 
One may notice a certain redundancy in having both the constant matrices $P$ and $K$, as the model only depends on their product $PK$. This is directly connected to the fact that the DWMM can also be formulated by placing the constant matrix in the propagator instead of the potential.
\medskip

\noindent For random matrices $A,B\in{\rm H}(N)$, consider the self-adjoint model
\[
\mathcal{Z}_{\rm HM}[
V_{\ln}
[K,L],Y_{\ln}[P,Q]]
=\int dA dB\;e^{-iN\Tr(AB)
-\Tr\ln({\bb 1}^{\otimes 2}-i L\otimes(KA))
-\Tr\ln({\bb 1}^{\otimes 2}-i Q\otimes(PB))}
\;.
\]
In this case, ribbon graphs have the following properties:
\begin{itemize}
    \item $A$-vertices of degree $n$ 
each 
    with weight $\Tr(L^n)$.
    \item $B$-vertices of degree $n$ 
each 
    with weight $\Tr(Q^n)$.
    \item Faces of length $n$ 
each 
    with weight $\Tr((P K)^n)$.
\end{itemize}
This is a bipartite-graph version of
dually weighted matrix models \cite{Kazakov_1996, das_1990}, and according to Theorem~\ref{matrixequiv_thm}, we have
\[
\frac{\mathcal{Z}_{\rm CM}[
V_{\ln}
[K,L]](P,Q)}{\mathcal{Z}_{\rm CM}[0](P,Q)}
=
\frac{\mathcal{Z}_{\rm HM}[
V_{\ln}
[K,L],
Y_{\ln}
[P,Q]]}{\mathcal{Z}_{\rm HM}[0,
Y_{\ln}
[P,Q]]}
\;.
\]

\paragraph{$\mathbf{Q=C_2}$: The causally inspired model.}

An interesting case arises when $Q=C_2$ (See~\eqref{cnprop}).
We recall that the matrix $C_2$ is only well defined in the large $N$ limit \cite{Benedetti:2008hc}. Therefore,  consider the large-$N$ partition function\footnote{This is similar to the partition function seen in \eqref{eq:w2pf}, but it uses the matrix $C_2$, which is only well-defined in the large-N limit.}
\begin{equation}\label{abzc2}
    \mathcal{Z}_{\rm HM}[V,
Y_{\ln}
    [P,C_2]]
    =\int dA dB\;e^{-iN\Tr(AB)+\frac{1}{2}N\Tr((PB)^2)+V(A)}
    \;.
\end{equation}
The integration over 
the intermediate field $B$ is Gaussian and gives
\begin{equation}\label{c2pf}
    \tilde{\mathcal{Z}}_{\rm HM}
    ={\cal N}\int dA\;e^{-\frac{1}{2}N\Tr((P^{-1}A)^2)+V(A)}
    \;,
\end{equation}
where
${\cal N}
  = \int_{{\rm H}(N)} {\cal D}B\,e^{-\frac12 N\Tr((PB)^{2})}.$
Thus, up to a constant depending only on $N$, one recovers a self-adjoint one-matrix model.

Applying Proposition~\ref{avAB} with $Q=C_2$ yields
\[
    \left\langle\Tr_{[\sigma]}(A)\right\rangle[0,
Y_{\ln}
    [P, C_2]]
    = N^{-n}\sum_{\gamma\in S_n}
      \Tr_{[\gamma\sigma]}(P)\Tr_{[\gamma]}(C_2)
    \;.
\]
Using the defining property \eqref{cnprop}, this implies
\[
\left\langle\Tr_{[\sigma]}(A)\right\rangle[0,
Y_{\ln}
[P, C_2]]=0
\qquad\text{if $n$ is odd}\;,
\]
and
\[
\left\langle\Tr_{[\sigma]}(A)\right\rangle[0,
Y_{\ln}
[P, C_2]]
= N^{-m}\sum_{\gamma\in [2^m]}\Tr_{[\gamma\sigma]}(P)
\qquad\text{if $n=2m$ is even}\;.
\]

By Theorem~\ref{sec:matrixequiv}, the corresponding complex matrix model is
\begin{equation}
\label{ZCM}
{\cal Z}_{\rm CM}[V](P,C_2)
=\int dM^\dagger dM\;e^{-N\Tr(P^{-1}M^\dagger C_2^{-1}M)+V(M^\dagger M)}
\;.
\end{equation}
Corollary~\ref{cor:exp_val_mm} then gives
\[
\left\langle\Tr_{[\sigma]}(M^\dagger M)\right\rangle[0](P,C_2)
= N^{-m}\sum_{\gamma\in [2^m]}\Tr_{[\gamma\sigma]}(P)
\;,
\]
for $n=2m$ even and $1\le n\le N$.  
This correspondence is illustrated in Figure~\ref{fig:equivex2}.
\begin{figure}[H]
    \begin{subfigure}{0.24\textwidth}
    \centering
    \includegraphics[width=
    0.8\linewidth]{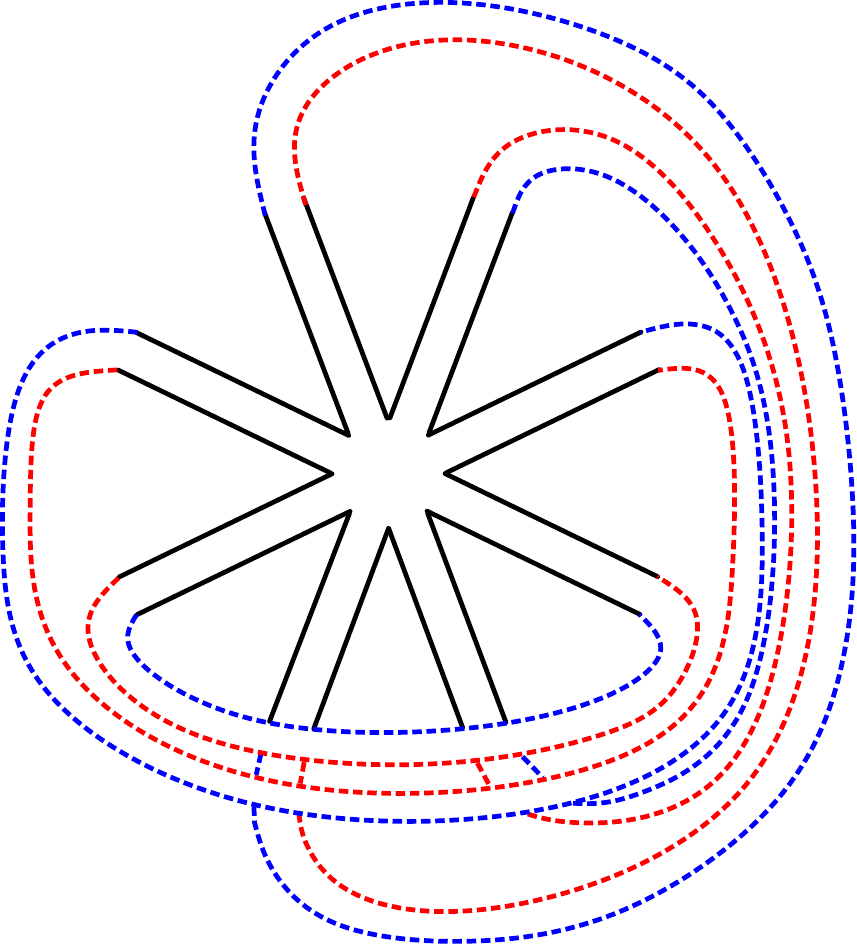}
    \caption{}
    \label{fig:equivex2m}
    \end{subfigure}
    \begin{subfigure}{0.24\textwidth}
    \centering
    \includegraphics[width=
    0.9\linewidth]{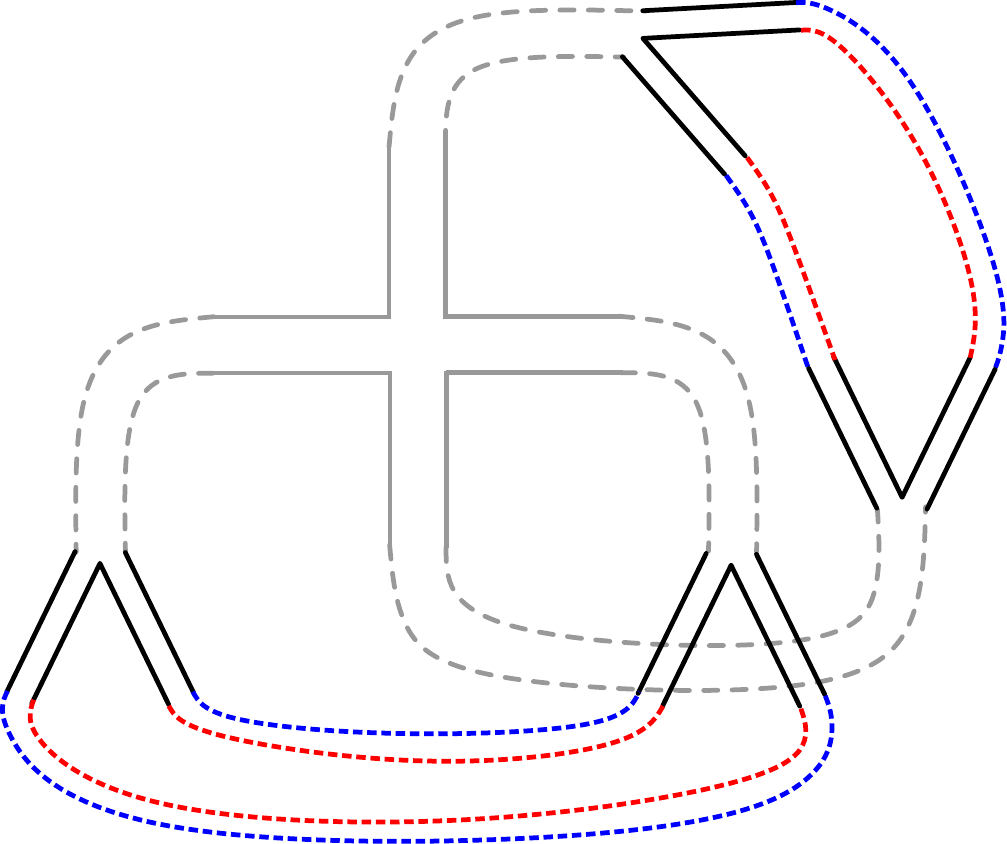}
    \caption{}
    \label{fig:equivex2abm}
    \end{subfigure}
    \begin{subfigure}{0.24\textwidth}
    \centering
    \includegraphics[width=
    0.9\linewidth]{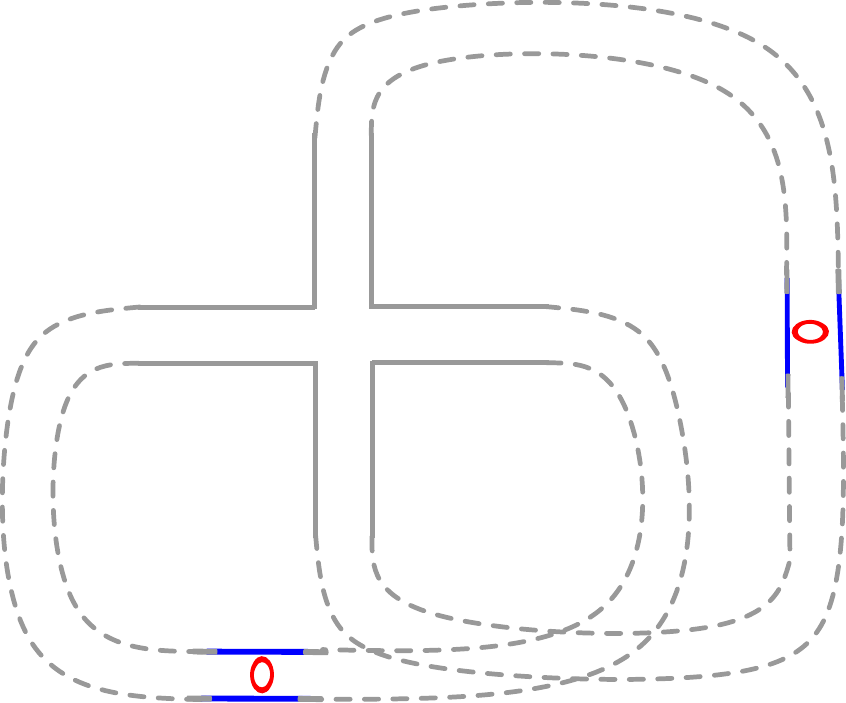}
    \caption{}
    \label{fig:equivex2ab}
    \end{subfigure}
    \begin{subfigure}{0.24\textwidth}
    \centering
    \includegraphics[width=
    0.8\linewidth]{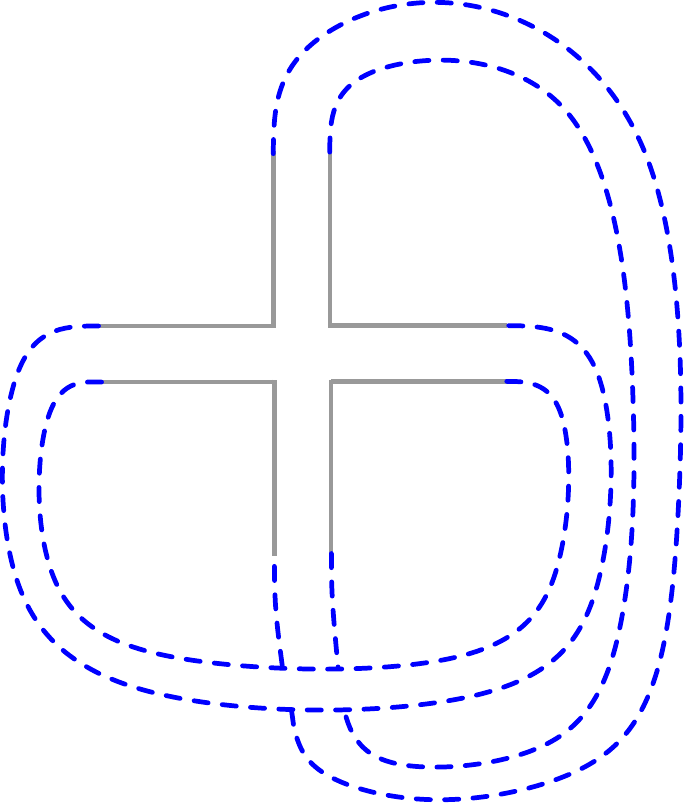}
    \caption{}
    \label{fig:equivex2a}
    \end{subfigure}
    \caption{An example of equivalence between a graph in the: (a) complex matrix theory ${\cal Z}_{\rm CM}[V](P,C_2)$ \eqref{ZCM}; (b) intermediate theory \eqref{eq:Zint}; (c)
     the Hermitian matrix theory $\mathcal{Z}_{\rm HM}[V,
Y_{\ln}
    [P,C_2]]$ \eqref{eq:ZH}; (d) 
     the Hermitian matrix theory $\mathcal{Z}_{\rm HM}[V,
Y_{\ln}
    [P,C_2]]$ \eqref{c2pf} after integrating $B$. The dashed blue lines represent the propagator of $A$ in \eqref{eq:aprop}.
    Intermediate fields $A$ and $B$ are represented in gray lines.
}
    \label{fig:equivex2}
\end{figure}

\noindent Furthermore, when $P={\bb 1}$,
\[\label{eq:Catalan}
\left\langle N^{-1}\Tr[(M^\dagger M)^n]\right\rangle[0]
= {\rm Cat}(m)\left(1+O(N^{-2})\right)
\;,
\]
where ${\rm Cat}(m)$ are the Catalan numbers.  
If $V(M^\dagger M)=-N\frac{g^2}{2}(M^\dagger M)^2$, then
\[\label{eq:quadmatrix}
\mathcal{Z}_{\rm HM}[V,
Y_{\ln}
[{\mathds{1}},C_2]]
=\int dA\;e^{-N\Tr\left(\frac{1}{2}A^2-\frac{g^2}{2}A^2\right)}
\;,
\]
showing that the model \eqref{model1} becomes equivalent to the Gaussian
model above \eqref{eq:quadmatrix}.
Appendix~\ref{app:ampli} explores the case when $Q = C_2$ further via explicit graph
resummation.

\paragraph{$\mathbf{Q=C_k}$: $k$-uniform rigidity and a partition-matroid structure.}

For general $k$, consider
\begin{equation}\label{abzck}
\mathcal{Z}_{\rm HM}[V,Y[P,C_k]]
=\iint dAdB\;
e^{
-iN\Tr(AB)+\frac{i^k}{k}N\Tr((PB)^k)+V(A)
},
\end{equation}
and the corresponding complex matrix model
\[
{\cal Z}_{\rm CM}[V](P,C_k)
=\int dM^\dagger dM\;
e^{-N\Tr(P^{-1}M^\dagger C_k^{-1}M)+V(M^\dagger M)}.
\]
Using \eqref{cnprop}, one has
\[
\Tr_{[\gamma]}(C_k)=0\ \text{unless }n=kd,
\qquad
\Tr_{[\gamma]}(C_k)=N^d\,\delta_{[\gamma],[k^d]}.
\]
Corollary~\ref{cor:exp_val_mm} then gives
\[\label{eq:matroidstructure}
\big\langle\Tr_{[\sigma]}(M^\dagger M)\big\rangle[0](P,C_k)
=
\big\langle\Tr_{[\sigma]}(A)\big\rangle[0, 
Y_{\ln}
[P,C_k]]
= N^{-d}\sum_{\gamma\in[k^d]}\Tr_{[\gamma\sigma]}(P).
\]

\smallskip
\noindent
\emph{Combinatorial meaning (``$k$-uniform blocks'').}
The constraint $[\gamma]=[k^d]$ means that the $Q$--faces (equivalently, the faces traced by the
$C_k$-weighted index) decompose into disjoint cycles of length exactly $k$.
Thus, the relevant strand-corners of the diagram are naturally partitioned into disjoint $k$-tuples,
one $k$-tuple for each such face.

\smallskip
\noindent
\emph{Matroid viewpoint.}
Let $E$ be the set of these corners (or, equivalently, the set of half-edges incident to the
$C_k$-faces), and let $\{E_1,\dots,E_d\}$ be the induced partition of $E$ into the $d$ disjoint
blocks of size $k$.
Then, there is a canonical partition matroid on $E$,
\[
\mathcal{I}=\{I\subseteq E:\ |I\cap E_r|\le 1\ \text{for all }r\},
\]
encoding the fact that each $C_k$-face behaves as a single $k$-ary ``rigid unit'':
one may choose at most one corner from each $k$-block.
In this sense, the choice $Q=C_k$ implements a 
$k$-uniform 
rigidity whose underlying incidence
structure is naturally organized by a partition matroid (a direct sum of rank--$1$ uniform matroids
on $k$ elements).

\subsection{Miscellaneous results}\label{relres}
\noindent Many analytic techniques for random matrix models rely on rewriting the partition
function or multitrace invariants using the representation theory of the symmetry group of the matrix ensemble (in this case,
${\rm U}(N)$ or ${\rm GL}(N)$).  
In this setting, a standard approach is the \emph{character expansion}, in which one decomposes
functions of a matrix $A$ (such as $e^{\Tr V(A)}$ or products of traces) into
linear combinations of irreducible characters $\chi_\lambda(A)$,
where $\lambda$ is a partition that labels an irreducible representation of ${\rm U}(N)$ or ${\rm GL}(N)$.
This method is particularly useful when the measure is invariant under
conjugation, since the characters form an orthonormal basis for class
functions and the expansion isolates the dependence on the eigenvalues.  

For the models studied in this paper, Proposition~\ref{avAB} provides closed
expressions for the expectation value of all multi-trace invariants of $A$. Therefore, it gives exactly the
data required to evaluate $\langle \chi_\lambda(A)\rangle$ for any partition
$\lambda$.  
Thus the following corollary packages the 
expectation-value
information of
Proposition~\ref{avAB} into a representation–theoretic form suitable for direct
use in character expansions, and connects our results to the standard
Itzykson--Zuber/Weingarten character techniques commonly used in random matrix
theory.

For the following corollary, we recall the partition function $\mathcal{Z}_{\rm HM}$ given in \eqref{eq:ZH0}, and the expectation value $\langle\cdot\rangle$ given in \eqref{eq:avH}.

\begin{corollary}
    Let $A\in \mathrm{Herm}(N)$ be a random matrix distributed
according to ${\cal Z}_{\mathrm{HM}}[0,Y_{\ln}[P,Q]]$.
Then, for any representation $r$ of $\GL(N)$,
\begin{equation}
    \left\langle \chi_{r}(A)\right\rangle[0,Y_{\ln}[P,Q]]=n!N^{-n} \frac{1}{\chi_r(e)}\chi_{r}(P)\chi_r(Q)\;,
\end{equation}
where $n$ is the size of $r$ and $e\in\Sy_n$ is the identity permutation.
\end{corollary}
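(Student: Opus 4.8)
The plan is to start from the closed-form multi-trace average in Proposition~\ref{avAB}, namely
\[
\left\langle \Tr_{[\sigma]}(A)\right\rangle[0,Y[P,Q]]
 = N^{-n}\sum_{\gamma\in S_n}\Tr_{[\gamma]}(Q)\,\Tr_{[\gamma\sigma]}(P),
\]
and then project it onto the character $\chi_r$ via Schur--Weyl duality. Concretely, writing $\lambda$ for the partition labelling the representation $r$ (of size $n$), I would use the inverse Schur--Weyl relation from the notation section,
\[
\chi_\lambda(A)=\frac{1}{n!}\sum_{\sigma\in S_n}\chi^{\lambda}(\sigma)\,\Tr_{[\sigma]}(A),
\]
take the expectation value, and pull the linear operation inside to obtain
$\langle\chi_\lambda(A)\rangle = \frac{N^{-n}}{n!}\sum_{\sigma,\gamma}\chi^\lambda(\sigma)\,\Tr_{[\gamma]}(Q)\,\Tr_{[\gamma\sigma]}(P)$.

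The next step is to resolve the two matrix traces back into characters. Applying Schur--Weyl duality again in the forward direction, $\Tr_{[\gamma]}(Q)=\sum_{\mu}\chi^\mu(\gamma)\chi_\mu(Q)$ and $\Tr_{[\gamma\sigma]}(P)=\sum_{\nu}\chi^\nu(\gamma\sigma)\chi_\nu(P)$, I would substitute and reorganize the sums so that the $\sigma$-- and $\gamma$--sums act only on the symmetric-group characters. The key combinatorial engine is then the double-class identity stated in the notation section,
\[
\frac{1}{n!}\sum_{\sigma\in S_n}\chi^\lambda(\alpha\sigma\beta\sigma^{-1})
  = \frac{1}{d^\lambda}\chi^\lambda(\alpha)\chi^\lambda(\beta),
\]
together with the discrete orthogonality $\frac{1}{n!}\sum_{\sigma}\chi^\mu(\sigma)^*\chi^\nu(\sigma)=\delta^{\mu\nu}$. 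The expectation is that these orthogonality and convolution relations collapse the double sum over $S_n\times S_n$, forcing $\mu=\nu=\lambda$ and producing a single factor of $1/d^\lambda=1/\chi^r(e)$ along with $\chi_\lambda(P)\chi_\lambda(Q)$.

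The main obstacle I anticipate is purely bookkeeping: correctly threading the permutation $\gamma\sigma$ through the two convolution/orthogonality identities so that the $\sigma$-- and $\gamma$--summations are organized into the exact shape required by the double-class formula. In particular one must be careful that the character of $\gamma\sigma$ (rather than a conjugate $\alpha\sigma\beta\sigma^{-1}$) is handled correctly; I would recast the sum by first summing over $\sigma$ using orthogonality to identify $\nu$ with $\lambda$, then summing over $\gamma$ to identify $\mu$ with $\lambda$, with the residual factor assembling into $n!/d^\lambda$. Collecting the prefactors $N^{-n}/n!$, the surviving $n!$ from orthogonality, and the $1/d^\lambda$ then yields exactly $n!\,N^{-n}\,\frac{1}{\chi^r(e)}\chi_r(P)\chi_r(Q)$, matching the claimed formula; verifying that the $n!$ factors combine as stated is the one place where a sign or normalization slip is most likely, so I would track those constants explicitly.
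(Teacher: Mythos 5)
Your plan is correct and is essentially the paper's own proof: both start from Proposition~\ref{avAB}, expand $\chi_r(A)$ in multi-traces via inverse Schur--Weyl, convert traces back to characters, and collapse the double permutation sum with the class-convolution (double-class) identity plus orthogonality, yielding the factor $n!/\chi^r(e)$. The only (immaterial) difference is ordering: the paper expands just $\Tr_{[\gamma\sigma]}(P)$ into characters and converts $\sum_\gamma\chi^r(\gamma)\Tr_{[\gamma]}(Q)=n!\,\chi_r(Q)$ at the very end, whereas you expand both traces at the outset --- the identities invoked and the constant bookkeeping are the same, and your worry about the $\sigma$-sum is handled exactly as you suggest (it is a convolution, legitimate here since $S_n$ characters are real).
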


\begin{proof}
The matrix characters have the multi-trace expansion
\begin{equation}
    \chi_r(A)=\frac{1}{n!}\sum_{\sigma\in {\rm S}_n}\chi_{r}(\sigma)\Tr_{[\sigma]}(A)\;.
\end{equation}
We can use Proposition~\ref{avAB} to find that
\begin{equation}
    \left\langle \chi_r(A)\right\rangle[0,Y_{\ln}[P,Q]]=\frac{1}{n!}N^{-n} \sum_{\substack{\sigma,\gamma\in {\rm S}_n}}\chi_{r}(\sigma)\Tr_{[\gamma \sigma]}(P)\Tr_{[\gamma]}(Q)\;.
\end{equation}
Using the Schur-Weyl duality decomposition
\begin{equation}
    \left\langle \chi_r(A)\right\rangle[0,Y_{\ln}[P,Q]]=\frac{1}{n!}N^{-n}\sum_{\substack{\sigma,\gamma\in {\rm S}_n}}\chi_{r}(\sigma)\sum_{s}\chi_{s}(\gamma\sigma)\chi_s(P)\Tr_{[\gamma]}(Q)\;.
\end{equation}
Using the sum over a class,
\begin{equation}
    \left\langle \chi_r(A)\right\rangle[0,Y_{\ln}[P,Q]]=\frac{1}{n!}N^{-n}\sum_{\substack{\sigma,\gamma\in {\rm S}_n}}\chi_{r}(\sigma)\sum_{s}\frac{\chi_{s}(\gamma)\chi_s(\sigma)}{\chi_s(e)}\chi_s(P)\Tr_{[\gamma]}(Q)
    \;,
\end{equation}
using character orthogonality
\begin{equation}
    \left\langle \chi_r(A)\right\rangle[0,Y_{\ln}[P,Q]]=N^{-n} \sum_{\substack{\gamma\in {\rm S}_n}}\sum_{s}\frac{\chi_{s}(\gamma)\delta_{rs}}{\chi_s(e)}\chi_s(P)\Tr_{[\gamma]}(Q)
    \;.
\end{equation}
Summing over $s$
\begin{equation}
    \left\langle \chi_r(A)\right\rangle[0,Y_{\ln}[P,Q]]=N^{-n} \sum_{\substack{\gamma\in {\rm S}_n}}\frac{\chi_{r}(\gamma)}{\chi_r(e)}\chi_r(P)\Tr_{[\gamma]}(Q)
    \;.
\end{equation}
Summing over $\gamma$
\begin{equation}
    \left\langle \chi_r(A)\right\rangle[0,Y_{\ln}[P,Q]]=n!N^{-n} \frac{1}{\chi_r(e)}\chi_r(P)\chi_r(Q)
    \;.
\end{equation}

\end{proof}
This expression is valid for any $N>0$ and $n\geq 0$. If $n\leq N$, we can use $\chi_r(e)=N^{-n}n!\chi_r(C_1)$. For the following corollary, we recall  the matrix $C_1$ given in \eqref{cnprop}, the partition function ${\cal Z}_{\mathrm{HM}}$ given in \eqref{eq:ZH0}, and the expectation value $\langle\cdot\rangle$ given in \eqref{eq:avH}.

\begin{corollary}\label{Cor:character_general_A}
Let $A\in \mathrm{Herm}(N)$ be a random matrix distributed
according to ${\cal Z}_{\mathrm{HM}}[0,Y_{\ln}[P,Q]]$.
Then, for any representation $r$ of $\GL(N)$ with size $n\leq N$,
\begin{equation}
    \left\langle \chi_r(A)\right\rangle[0,Y_{\ln}[P,Q]]=\frac{\chi_r(P)\chi_r(Q)}{\chi_r(C_1)}\;.
\end{equation}
\end{corollary}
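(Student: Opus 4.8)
The plan is to deduce this from the preceding corollary by re-expressing the dimension factor $\chi^r(e)$ through the special matrix $C_1$. That corollary already supplies, for every $N>0$ and every representation $r$ of size $n$,
\[
\left\langle \chi_r(A)\right\rangle[0,Y[P,Q]]
= n!\,N^{-n}\,\frac{1}{\chi^r(e)}\,\chi_r(P)\,\chi_r(Q),
\]
so the entire task reduces to proving the auxiliary identity $\chi^r(e)=N^{-n}\,n!\,\chi_r(C_1)$ valid when $n\le N$, and then substituting it in.

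To establish that identity I would expand the $\GL(N)$ character of $C_1$ via the Schur--Weyl duality already recorded in Section~\ref{sec:notations},
\[
\chi_r(C_1)=\frac{1}{n!}\sum_{\sigma\in \Sy_n}\chi^r(\sigma)\,\Tr_{[\sigma]}(C_1),
\]
and then use the cycle-type factorization $\Tr_{[\sigma]}(C_1)=\prod_j \Tr(C_1^{\lambda_j})$, where $\lambda$ is the cycle type of $\sigma$. The defining property~\eqref{cnprop}, namely $\Tr(C_1^{p})=N\,\delta_{p,1}$, then forces each factor to vanish unless the corresponding cycle has length $1$. Consequently $\Tr_{[\sigma]}(C_1)=0$ for every $\sigma$ except the identity, for which $\lambda=[1^n]$ and $\Tr_{[1^n]}(C_1)=N^{n}$. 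Only the single term $\sigma=e$ survives the sum, yielding $\chi_r(C_1)=\tfrac{1}{n!}\,\chi^r(e)\,N^{n}$, which is exactly the identity to be proved.

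The one genuine subtlety, and the point I would guard most carefully, is the hypothesis $n\le N$. The relation~\eqref{cnprop} pins down $\Tr(C_1^{p})$ only for $p=1,\dots,N$, whereas a partition $\lambda\vdash n$ may have parts as large as $n$; the argument that every cycle length is forced to $1$ therefore requires $\lambda_j\le N$ for all $j$, which holds precisely when $n\le N$. For $n>N$ the traces $\Tr(C_1^{p})$ with $p>N$ are unconstrained, and Schur functions in $N$ variables can additionally vanish for representations with more than $N$ rows, so the clean closed form is no longer available. Granting $n\le N$, substituting $\chi^r(e)=N^{-n}\,n!\,\chi_r(C_1)$ into the displayed expression cancels the factors $n!$ and $N^{-n}$ and collapses it to $\chi_r(P)\chi_r(Q)/\chi_r(C_1)$, which completes the argument.
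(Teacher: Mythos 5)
Your proof is correct and takes essentially the same route as the paper: both deduce the corollary from the preceding one via the substitution $\chi^r(e)=n!\,N^{-n}\chi_r(C_1)$. The paper merely asserts that identity, whereas you supply its short Schur--Weyl proof (only $\sigma=e$ survives in $\chi_r(C_1)=\frac{1}{n!}\sum_\sigma\chi^r(\sigma)\Tr_{[\sigma]}(C_1)$ by \eqref{cnprop}) and correctly pinpoint why $n\le N$ is needed --- a welcome elaboration, not a different approach.
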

\noindent Since the character $\chi_r(e)$ does not depend on $N$, one has that, at fixed $n$,
\begin{equation}
    \chi_r(e)=n!\lim_{N\rightarrow \infty}N^{-n}\chi_r(C_1)\;.
\end{equation}
This means that, by defining 
\begin{equation}
    \tilde\chi_r(C_1):=N^{n}\lim_{N\rightarrow \infty}N^{-n}\chi_r(C_1)\;,
\end{equation}
one has the following corollary.
\begin{corollary}\label{Cor:character_more_general_A}
    For any representation $r$ of $\GL(N)$ with size $n$, the expectation value of $\chi_r$ is 
\begin{equation}
    \left\langle \chi_r(A)\right\rangle[0,Y_{\ln}[P,Q]]=\frac{\chi_r(P)\chi_r(Q)}{\tilde\chi_r(C_1)}\;.
\end{equation}
\end{corollary}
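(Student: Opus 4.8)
The plan is to reduce Corollary~\ref{Cor:character_more_general_A} to the already–established Corollary~\ref{Cor:character_general_A} by taking a careful large-$N$ limit and then re-expressing everything in terms of the rescaled quantity $\tilde\chi_r(C_1)$. First I would recall that the preceding corollary gives, for $n \le N$,
\[
\left\langle \chi_r(A)\right\rangle[0,Y[P,Q]]
 = \frac{\chi_r(P)\,\chi_r(Q)}{\chi_r(C_1)},
\]
and that the general-$N$ formula $\langle \chi_r(A)\rangle = n!\,N^{-n}\,\chi_r(P)\chi_r(Q)/\chi^r(e)$ holds with no restriction on $N$. The key structural observation, already displayed in the excerpt, is that $\chi^r(e)=d^r$ is an $N$-independent integer (the dimension of the $S_n$-irrep), whereas $\chi_r(C_1)$ does depend on $N$. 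Thus the cleanest route is to treat the $N$-independent object $\chi^r(e)$ as the invariant anchor and absorb all $N$-dependence into the definition of $\tilde\chi_r(C_1)$.

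The main steps I would carry out are as follows. I would start from the displayed limit relation $\chi^r(e)=n!\,\lim_{N\to\infty}N^{-n}\chi_r(C_1)$, which identifies the leading large-$N$ coefficient of $\chi_r(C_1)$ with $d^r/n!$. Substituting the definition
\[
\tilde\chi_r(C_1) := N^{n}\lim_{N\to\infty}N^{-n}\chi_r(C_1)
\]
into this relation gives immediately $\tilde\chi_r(C_1)=N^{n}\,\chi^r(e)/n!$, or equivalently $n!\,N^{-n}/\chi^r(e)=1/\tilde\chi_r(C_1)$. I would then insert this into the universal (all-$N$) expression $\langle \chi_r(A)\rangle = n!\,N^{-n}\,\chi_r(P)\chi_r(Q)/\chi^r(e)$, yielding
\[
\left\langle \chi_r(A)\right\rangle[0,Y[P,Q]]
 = \frac{\chi_r(P)\,\chi_r(Q)}{\tilde\chi_r(C_1)},
\]
which is exactly the claimed identity. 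Crucially, this argument uses the first (unlabeled) corollary's formula, which is valid for all $n$, so no restriction $n\le N$ is needed in the final statement.

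The subtle point—and the step I expect to be the main obstacle—is justifying that $\tilde\chi_r(C_1)$ is well-defined and that the chain of substitutions is legitimate. Specifically, one must verify that $\chi_r(C_1)$ admits a genuine expansion whose leading term is of order $N^{n}$, so that the limit $\lim_{N\to\infty}N^{-n}\chi_r(C_1)$ exists and is nonzero; this follows because $C_1$ satisfies $\Tr(C_1^p)=N\delta_{p,1}$ by \eqref{cnprop}, forcing $\chi_r(C_1)=s_r$ evaluated at power sums $p_1=N$, $p_j=0$ for $j\ge 2$, which by the Schur-polynomial expansion is a polynomial in $N$ of degree exactly $n$ with leading coefficient $d^r/n!$. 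I would make this explicit using the Schur--Weyl decomposition $\chi_r(C_1)=\frac{1}{n!}\sum_{\sigma}\chi^r(\sigma)\Tr_{[\sigma]}(C_1)$ and noting that only the identity class $[1^n]$ survives at leading order, contributing $N^{n}\chi^r(e)/n!$. Once this leading-order behaviour is pinned down, the definition of $\tilde\chi_r(C_1)$ is unambiguous and the identity is immediate; the remainder is routine bookkeeping of the $N$-powers.
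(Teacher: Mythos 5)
Your proposal is correct and follows essentially the same route as the paper: both substitute the definition $\tilde\chi_r(C_1)=N^{n}\lim_{N\to\infty}N^{-n}\chi_r(C_1)=N^{n}\chi^r(e)/n!$ into the all-$N$ formula $\left\langle \chi_r(A)\right\rangle[0,Y[P,Q]]=n!\,N^{-n}\chi_r(P)\chi_r(Q)/\chi^r(e)$ from the first (unlabeled) corollary, thereby avoiding any $n\le N$ restriction. Your additional verification that the limit exists and is nonzero (via $\Tr(C_1^p)=N\delta_{p,1}$ and Schur--Weyl, so that only the identity class contributes) is a point the paper asserts without proof, and is a welcome tightening rather than a deviation.
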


\noindent This result allows us to evaluate several integrals through character expansion methods. This is in a way a character version of Proposition~\ref{avAB}, telling us that
\begin{corollary}
    For any class function $f$ of $A$, and given its expansion 
\begin{equation}
    f=\sum_{n=0}^{\infty}\sum_{r\vdash n}f
    ^{r}\chi_{r}\;,
\end{equation}
the expectation value of $f$ is 
\begin{equation}
    \left\langle f(A)\right\rangle[0,Y_{\ln}[P,Q]]=\sum_{n=0}^\infty \sum_{r\vdash n}f^{r}\frac{\chi_r(P)\chi_r(Q)}{\tilde\chi_r(C_1)}\;.
\end{equation}
\end{corollary}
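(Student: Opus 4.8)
The plan is to reduce this statement to the single-character result of Corollary~\ref{Cor:character_more_general_A} by exploiting the linearity of the expectation functional. I would start from the hypothesized character expansion $f=\sum_{n=0}^\infty\sum_{r\vdash n}f^r\chi_r$ and apply the expectation $\langle\,\cdot\,\rangle[0,Y[P,Q]]$ to both sides. Since this expectation is defined in~\eqref{eq:avH} as a normalized integral, it is a linear functional, so formally
\[
\langle f(A)\rangle[0,Y[P,Q]]=\sum_{n=0}^\infty\sum_{r\vdash n}f^r\,\langle\chi_r(A)\rangle[0,Y[P,Q]]\;.
\]

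Next I would invoke Corollary~\ref{Cor:character_more_general_A}, which gives $\langle\chi_r(A)\rangle[0,Y[P,Q]]=\chi_r(P)\chi_r(Q)/\tilde\chi_r(C_1)$ for every irreducible representation $r$ of size $n$. Substituting this evaluation term by term into the expansion immediately produces the claimed identity. In this sense the corollary is essentially a repackaging of Corollary~\ref{Cor:character_more_general_A}: once the expectation of a single character is known, the expectation of any class function follows by expanding it in the character basis, which is legitimate because the irreducible characters $\chi_r$ form a basis for class functions.

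The main obstacle — and indeed the only nontrivial step — is justifying the interchange of the infinite sum over representations with the integral defining the expectation. I would handle this by recording a suitable convergence hypothesis on $f$: for instance, absolute convergence of $\sum_{n,r}|f^r|\,\bigl|\langle\chi_r(A)\rangle[0,Y[P,Q]]\bigr|$, which by the single-character formula is equivalent to absolute convergence of $\sum_{n,r}\bigl|f^r\,\chi_r(P)\chi_r(Q)/\tilde\chi_r(C_1)\bigr|$. Under such a condition, dominated convergence (or Fubini--Tonelli applied to the joint $A,B$ integral defining $\mathcal{Z}_{\rm HM}$) permits the exchange of summation and integration, and the proof closes. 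For the analytic class functions relevant to the models studied here, such a bound would follow from growth estimates on the coefficients $f^r$ together with the polynomial bounds on $\chi_r(P)$ and $\chi_r(Q)$ in terms of the spectra of $P$ and $Q$, analogous to the convergence criteria already invoked for~\eqref{eq:ZG0}.
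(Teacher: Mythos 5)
Your proposal matches the paper's (implicit) argument exactly: the paper offers no separate proof, treating this corollary as an immediate consequence of Corollary~\ref{Cor:character_more_general_A} by expanding $f$ in the character basis and applying linearity of the expectation term by term, which is precisely your route. Your added discussion of when the sum--integral interchange is justified (absolute convergence, dominated convergence/Fubini) is a sound refinement of a point the paper passes over silently, since its identity is asserted at the same formal level as the expansion of $f$ itself.
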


\begin{corollary}
    For any representation $r$ of $\GL(N)$ with size $n\leq N$, and given the expansion of its character
\begin{equation}
    \chi_r(M^\dagger M)=\frac{1}{n!}\sum_{\sigma\in {\rm S}_n}\chi_{r}(\sigma)\Tr_{[\sigma]}(M^\dagger M)\;,
\end{equation}
where $n$ is the size of $r$, the expectation value of $\chi_r$ is 
\begin{equation}
    \left\langle \chi_r(M^\dagger M)\right\rangle[0](P,Q)=\frac{\chi_r(P)\chi_r(Q)}{\chi_r(C_1)}\;.
\end{equation}
\end{corollary}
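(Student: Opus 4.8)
The plan is to obtain this result as an immediate corollary of the exact matrix equivalence rather than through a fresh Gaussian computation. The key observation is that the integrand $\chi_r(M^\dagger M)$ depends on $M$ only through the self-adjoint combination $M^\dagger M$, so it is precisely an admissible observable for Corollary~\ref{cor:exp_val_mm}: since $\chi_r$ is a class function, it defines a legitimate function $f:{\rm H}(N)\to\mathbb C$. Taking $V=0$ and $f=\chi_r$ in that corollary, I would immediately write
\[
\left\langle \chi_r(M^\dagger M)\right\rangle[0](P,Q) = \left\langle \chi_r(A)\right\rangle[0,Y[P,Q]].
\]
This transports the entire question into the self-adjoint two--matrix model, where the answer has already been computed.

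Next I would invoke Corollary~\ref{Cor:character_general_A}, which evaluates exactly the right-hand side above for any representation $r$ of size $n\le N$, giving $\chi_r(P)\chi_r(Q)/\chi_r(C_1)$. Chaining the two identities yields the claim directly. The hypothesis $n\le N$ enters at precisely this step: it is the condition under which $\chi^r(e)$ may be rewritten as $N^{-n}n!\,\chi_r(C_1)$, so that the normalization by $\chi_r(C_1)$ is the correct one; for $n>N$ one would instead retain the more general form with $\tilde\chi_r(C_1)$ in the denominator as in Corollary~\ref{Cor:character_more_general_A}.

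As a self-contained consistency check, I would also be prepared to reprove the statement directly inside the complex model, mirroring the computation behind Corollary~\ref{Cor:character_general_A}. Starting from the given multi-trace expansion of $\chi_r(M^\dagger M)$, I would substitute the Gaussian average of Proposition~\ref{avM}, namely $\langle \Tr_{[\sigma]}(M^\dagger M)\rangle[0](P,Q) = N^{-n}\sum_{\mu\in S_n}\Tr_{[\mu]}(Q)\,\Tr_{[\mu\sigma]}(P)$, then apply Schur--Weyl duality to $\Tr_{[\mu\sigma]}(P)$, the double-class identity to perform the sum over $\sigma$, and finally character orthogonality to collapse the resulting representation sum. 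This is the same sequence of manipulations already carried out in the self-adjoint case, and it must return the identical closed form because Proposition~\ref{avM} and Proposition~\ref{avAB} share the same right-hand side.

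There is no substantial obstacle here: all of the analytic work resides in the equivalence Theorem~\ref{matrixequiv_thm} and in the self-adjoint character evaluation of Corollary~\ref{Cor:character_general_A}. The only points requiring care are verifying admissibility of $\chi_r$ as a function of $M^\dagger M$ (immediate) and tracking the range condition $n\le N$ that governs which normalization---$\chi_r(C_1)$ versus $\tilde\chi_r(C_1)$---appears in the denominator.
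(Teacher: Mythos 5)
Your proposal is correct and matches the paper's intended derivation exactly: the corollary is obtained by taking $V=0$ and $f=\chi_r$ in Corollary~\ref{cor:exp_val_mm} to transport the expectation to the self-adjoint model, and then applying Corollary~\ref{Cor:character_general_A}, with the hypothesis $n\le N$ entering precisely through the identity $\chi^r(e)=N^{-n}n!\,\chi_r(C_1)$ as you note. Your backup direct computation via Proposition~\ref{avM} is also sound, since it mirrors the paper's proof of the first corollary in Section~\ref{relres} and Propositions~\ref{avM} and~\ref{avAB} share the same right-hand side.
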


\section{Equivalence in random tensor models}
\label{sec:equivtensor}

\noindent
In this section we generalize the matrix equivalence to tensor models. We establish an explicit correspondence between random complex tensor models with covariance $R$ \footnote{Similarly to the matrix model case, this term is used to indicate that the tensor $R$ appears in the quadratic part of the action. It is not a formal statement about the probabilistic nature of these models.} and interaction $W(\phi,\phi^\dagger)$, and self-adjoint tensor models whose intermediate field carries a logarithmic potential. As in the matrix case, the equivalence holds at the level of partition functions and all 
trace-invariant observables
built from $\phi\phi^\dagger$ (or equivalently, from the self-adjoint field), and, in suitable situations, it admits a further reduction to lower–order self-adjoint tensors.

\subsection{The complex tensor model}
\label{subsec:complex-tensor}
\noindent We introduce the tensor analogue of the complex matrix model presented in Section~\ref{subsec:GLmodel}. Let $\phi\in\Cp_N^{\otimes D}$ be a complex order–$D$ tensor with Gaussian covariance $R\in\Cp_N^{\otimes D}\otimes\Cp_N^{*\otimes D}$ and interaction potential $W$ in the form of any function of $\phi$ and $\phi^\dagger$.  
The partition function we consider is
\begin{equation}
\label{ZCTs}
    {\cal Z}_{\rm CT}[W]
    (R)
    =\iint {\cal D}\phi^\dagger {\cal D}\phi\;e^{-\phi^\dagger 
    R^{-1}
    \phi+W(\phi, {\phi^\dagger})}\;.
\end{equation}
For any function $f$ of $\phi$ and $\phi^\dagger$, its expectation value in this model is given by
\begin{equation}
    \langle f(\phi,\phi^\dagger)\rangle[W]
    (R)
    =\frac{\iint {\cal D}\phi^\dagger {\cal D}\phi\;f(\phi,\phi^\dagger)\;e^{-\phi^\dagger 
    R^{-1}
    \phi+W(\phi,{\phi^\dagger})}}{\iint {\cal D}\phi^\dagger {\cal D}\phi\;e^{-\phi^\dagger 
    R^{-1}
    \phi+W(\phi,{\phi^\dagger})}}\;.
\end{equation}

\paragraph{Propagator.}
In the Gaussian model,
the covariance is simply
\begin{equation}
    \langle \phi\phi^\dagger\rangle[0]
    (R) = R
    \;.
\end{equation}
At this stage, the model behaves as a vector theory in the large tensor space, that is, the detailed structure of the $D$ tensor indices has not yet been used.

\paragraph{Trace invariants of tensors.}

To probe the genuine tensorial structure, we focus on invariants under the natural $U(N)^D$ action on $\Cp_N^{\otimes D}$. A distinguished family of such invariants is given by the trace invariants of tensors (see Section~\ref{sec:notations}), indexed by $\bm\sigma=(\sigma_1,\dots,\sigma_D)\in\Sy_n^D$. The following proposition gives their Gaussian ($W=0$) expectation values.

\begin{proposition}\label{avTM}
If $\phi$ is a random tensor distributed according to the partition function ${\cal Z}_{\rm CT}[0](R)$ \eqref{ZCTs}, then, for any $\bm\sigma\in \Sy_n^D$, the 
${\bm\sigma}$–bubble/trace invariant 
satisfies
\begin{equation}\label{eq:trphieexpec}
    \langle\Tr_{[{\bm\sigma}]}(\phi\phi^\dagger)\rangle[0]
    (R)
    =\sum_{\mu\in \Sy_n}\Tr_{[\mu \bm\sigma]}
    (R)
    \;.
\end{equation}
\end{proposition}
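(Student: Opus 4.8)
The plan is to expand the left-hand side of \eqref{eq:trphieexpec} in components using the definition of the tensor-pair trace invariant and then compute the Gaussian average by Wick's theorem. Writing $\Tr_{[\bm\sigma]}(\phi\phi^\dagger)=\sum_{\bm k}\prod_{j=1}^n \phi^\dagger_{\bm k_j}\,\phi^{(\bm\sigma_*\bm k)_j}$, the average contains exactly $n$ factors of $\phi$ and $n$ factors of $\phi^\dagger$, so the problem reduces to a purely combinatorial contraction count governed by the single covariance $R$.

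First I would use the Gaussian propagator structure: since $\langle\phi\,\phi\rangle[0](R)=\langle\phi^\dagger\phi^\dagger\rangle[0](R)=0$ and $\langle\phi\,\phi^\dagger\rangle[0](R)=R$, the only nonvanishing Wick contractions pair each $\phi^\dagger$ with a $\phi$. Such a pairing is labelled by a permutation $\nu\in\Sy_n$ that sends the $j$-th daggered field, carrying lower index $\bm k_j$, to the $\nu(j)$-th undaggered field, carrying upper index $(\bm\sigma_*\bm k)_{\nu(j)}$. Wick's theorem then yields
\[
\langle\Tr_{[\bm\sigma]}(\phi\phi^\dagger)\rangle[0](R)
=\sum_{\nu\in\Sy_n}\sum_{\bm k}\prod_{j=1}^n R^{(\bm\sigma_*\bm k)_{\nu(j)}}_{\bm k_j}\,.
\]

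The next step is to recognize each inner sum, at fixed $\nu$, as an operator trace invariant of $R$. Comparing the index pattern $R^{(\bm\sigma_*\bm k)_{\nu(j)}}_{\bm k_j}$ with the definition $\Tr_{[\bm\beta]}(R)=\sum_{\bm k}\prod_i R^{(\bm\beta_*\bm k)_i}_{\bm k_i}$ and matching lower indices ($i=j$), the column-$c$ upper indices agree precisely when $\beta_c^{-1}(j)=\sigma_c^{-1}(\nu(j))$ for all $j$, i.e. $\beta_c=\nu^{-1}\sigma_c$. Hence the inner sum equals $\Tr_{[\nu^{-1}\bm\sigma]}(R)$, where $\nu^{-1}\bm\sigma=(\nu^{-1}\sigma_1,\dots,\nu^{-1}\sigma_D)$ is the scalar-distribution product. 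Relabelling the summation variable $\mu=\nu^{-1}$, which is a bijection of $\Sy_n$, converts the sum into $\sum_{\mu\in\Sy_n}\Tr_{[\mu\bm\sigma]}(R)$, exactly the claimed right-hand side.

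I expect the main obstacle to be the permutation bookkeeping in this final identification: one must track carefully which $\alpha_c$ acts on the $i$-index in which column and in which direction, in order to confirm that the pairing $\nu$ produces $\Tr_{[\nu^{-1}\bm\sigma]}(R)$ rather than a conjugate or an inverse thereof. The permutation-conjugation and left/right invariance properties of the trace invariants recorded in Section~\ref{sec:notations} serve as a consistency check, since the final answer must be independent of the arbitrary labelling of the $n$ fields, and a small case such as $n=1$ or $D=1$ confirms the orientation.
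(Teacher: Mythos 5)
Your proof is correct, and the combinatorial core is the same as the paper's: both arguments reduce the average to a sum over a single pairing permutation and then use the identity $(\bm\sigma_*\bm k)_{\nu(j)}=((\nu^{-1}\bm\sigma)_*\bm k)_j$ to recognize the inner sum as $\Tr_{[\nu^{-1}\bm\sigma]}(R)$, followed by the relabelling $\mu=\nu^{-1}$. The only difference is packaging: where you invoke Wick's theorem directly (justified here, since $\langle\phi\phi\rangle=\langle\phi^\dagger\phi^\dagger\rangle=0$ and $\langle\phi\phi^\dagger\rangle=R$ for the measure $e^{-\phi^\dagger R^{-1}\phi}$), the paper effectively re-derives the Wick sum from scratch via a source term $\psi^\dagger\phi+\phi^\dagger\psi$, the Gaussian identity $\mathcal Z_{\rm CT}[W[\psi,\psi^\dagger]](R)=\mathcal Z_{\rm CT}[0](R)\,e^{\psi^\dagger R\psi}$, and a Leibniz expansion of $\Tr_{[\bm\sigma]}\bigl(\partial_{\psi^\dagger}\partial_{\psi}\bigr)(\psi^\dagger R\psi)^n/n!$; this produces a sum over \emph{two} permutations which the paper then collapses by a change of variables, landing exactly where your single pairing sum starts. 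Your route is shorter and more transparent, at the cost of assuming Wick's theorem for complex Gaussian tensor measures as known; the paper's route is self-contained and doubles as a derivation of that theorem in this setting. Your index bookkeeping is also right: matching $R^{(\bm\beta_*\bm k)_j}_{\bm k_j}$ against $R^{(\bm\sigma_*\bm k)_{\nu(j)}}_{\bm k_j}$ column by column gives $\beta_c^{-1}(j)=\sigma_c^{-1}(\nu(j))$, hence $\beta_c=\nu^{-1}\sigma_c$, confirming it is $\Tr_{[\nu^{-1}\bm\sigma]}(R)$ and not a conjugate or inverse, consistent with the permutation-conjugation invariance recorded in the notation section.
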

\begin{proof}
    The proof is given in Appendix~\ref{app:proof}.
\end{proof}

\noindent Thus, as in the matrix case, amplitudes are expressed entirely in terms
of the covariance 
$R$,
but they are now organized by the 
tensor trace-invariant
structure encoded in $\bm\sigma$.

\paragraph{Factorized covariance.}

A particularly simple structure appears when the covariance factorizes as\footnote{In particular, when $D=2$ and $R_1=Q$ and $R_2=P$, this reduces to the matrix model case.}
\[
R=\bigotimes_{c=1}^D R_c\;.
\]
In that case, the Gaussian propagator reads
\[\label{propfact}
\langle \phi\phi^\dagger\rangle[0]
(R)=\bigotimes_{c=1}^D R_c
\]
and is illustrated in Figure~\ref{fig:tensprop}.
\begin{figure}[H]
    \centering
    \includegraphics[width=0.3
    \linewidth]{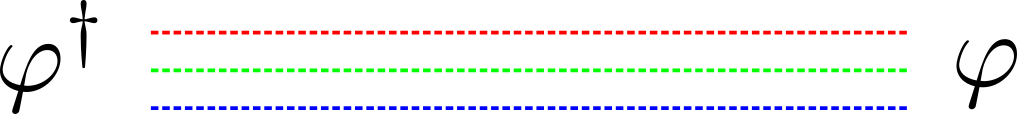}
    \caption{Ribbon graph representation of the propagator of the theory defined by the partition function \eqref{ZCTs} as given in \eqref{propfact}. The blue, green, and red strands respectively represent the presence of $R_1, R_2$, and $R_3$ in $R=R_1\otimes R_2\otimes R_3$.
    }
    \label{fig:tensprop}
\end{figure}
Furthermore, the expectation value \eqref{eq:trphieexpec} factorizes accordingly, as stated in the following Corollary.

\begin{corollary}
If $R=\bigotimes_{c}R_c$, then
\begin{equation}
    \langle\Tr_{[\bm\sigma]}(\phi\phi^\dagger)\rangle[0]
    (R)
    = N^{-n}\sum_{\mu\in S_n}
      \prod_{c=1}^D \Tr_{[\mu\sigma_c]}
      (R_c)
      \;.
\end{equation}
\end{corollary}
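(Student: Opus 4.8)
The plan is to derive the statement directly from Proposition~\ref{avTM} by evaluating the operator trace invariant on a factorized covariance. Proposition~\ref{avTM} already supplies the Gaussian expectation for arbitrary $R$,
\[
\langle\Tr_{[\bm\sigma]}(\phi\phi^\dagger)\rangle[0](R)=\sum_{\mu\in S_n}\Tr_{[\mu\bm\sigma]}(R),
\]
so, since $\mu\bm\sigma=(\mu\sigma_1,\dots,\mu\sigma_D)$ by the scalar-distribution convention, the whole task reduces to computing $\Tr_{[\bm\alpha]}(R)$ when $R=\bigotimes_{c=1}^D R_c$ and $\bm\alpha\in S_n^D$ is arbitrary.

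The core step is the color-by-color factorization
\[
\Tr_{[\bm\alpha]}\Big(\bigotimes_{c=1}^D R_c\Big)=\prod_{c=1}^D\Tr_{[\alpha_c]}(R_c).
\]
To establish it I would expand the definition $\Tr_{[\bm\alpha]}(P)=\sum_{\bm k}\prod_{i=1}^n P^{(\bm\alpha_*\bm k)_i}_{\bm k_i}$ using $P^{\bm a}_{\bm b}=\prod_c (R_c)^{a_c}_{b_c}$ together with the column action $(\bm\alpha_*\bm k)_{ic}=k_{\alpha_c^{-1}(i)\,c}$. Because the $n\times D$ array $\bm k$ decomposes into $D$ mutually independent columns, the summand is a product over $c$ and the sum factorizes into $\prod_c\big(\sum_{\{k_{ic}\}_i}\prod_i (R_c)^{k_{\alpha_c^{-1}(i)\,c}}_{k_{ic}}\big)$. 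Writing $m_i=k_{ic}$ and reindexing by $i\mapsto\alpha_c(i)$ converts the $c$-th factor into $\sum_m\prod_i (R_c)^{m_i}_{m_{\alpha_c(i)}}$, which is precisely the single-matrix multi-trace $\Tr_{[\alpha_c]}(R_c)$ of Section~\ref{sec:notations} once the operator components $(R_c)^a_b$ are read as matrix entries. Specializing $\bm\alpha=\mu\bm\sigma$ and substituting back into Proposition~\ref{avTM} then yields $\sum_{\mu\in S_n}\prod_{c=1}^D\Tr_{[\mu\sigma_c]}(R_c)$.

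It remains to account for the prefactor $N^{-n}$. This records the $1/N$ normalization carried by the factorized covariance---one factor per tensor in the degree-$n$ invariant---which I would extract using the homogeneity $\Tr_{[\bm\alpha]}(cR)=c^{\,n}\Tr_{[\bm\alpha]}(R)$ (immediate from the $n$-fold product in the definition); with the covariance normalized as $N^{-1}\bigotimes_c R_c$ this reproduces exactly the matrix-model weight of Proposition~\ref{avM} in the $D=2$ case $R_1=Q,\ R_2=P,\ \bm\sigma=(e,\sigma)$. The main obstacle is purely the index bookkeeping of the reindexing step: the operator convention places the permuted label upstairs, $(\bm\alpha_*\bm k)_{ic}=k_{\alpha_c^{-1}(i)\,c}$, whereas the matrix multi-trace $\Tr_{[\sigma]}(M)=\sum_k\prod_i M_{k_i k_{\sigma(i)}}$ places it downstairs, so the substitution $i\mapsto\alpha_c(i)$ must be carried out consistently in every color, and one must check that the resulting power of $N$ assembles into the single global factor $N^{-n}$ rather than a color-dependent exponent.
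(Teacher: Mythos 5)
Your proof is correct and follows essentially the same (implicit) derivation as the paper, which states the corollary as an immediate consequence of Proposition~\ref{avTM} via the color-by-color factorization $\Tr_{[\mu\bm\sigma]}\bigl(\bigotimes_{c}R_c\bigr)=\prod_{c}\Tr_{[\mu\sigma_c]}(R_c)$, exactly the computation you carry out with the independent columns of $\bm k$ and the reindexing $i\mapsto\alpha_c(i)$. Your closing observation about the prefactor is also on target: with the covariance literally equal to $\bigotimes_c R_c$, Proposition~\ref{avTM} produces no power of $N$, and the $N^{-n}$ in the corollary is accounted for only under the $N^{-1}$-per-propagator normalization that makes the $D=2$ specialization reduce to Proposition~\ref{avM}, which you correctly extract via the homogeneity $\Tr_{[\bm\alpha]}(cR)=c^{\,n}\Tr_{[\bm\alpha]}(R)$.
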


\noindent
\noindent
This factorized expression has a
clear
geometric meaning in the stranded Feynman-graph representation of tensor models: it decomposes into $D$ independent families of faces, one for each tensor index. This is because when the covariance factorizes as $R=\bigotimes_{c=1}^D R_c$,  the tensor propagator splits into $D$ independent strands, where the $c$-th strand
represents the propagator weight $R_c$.
This implies that the Wick contractions separate by color: the same permutation $\mu\in S_n$ governs the pairings of all tensor slots, but in each color $c$, it combines with the bubble permutation $\sigma_c$ to form $\mu\sigma_c$. The cycles of $\mu\sigma_c$ are precisely the faces traced by the $c$-th index of the tensor, and a cycle of length $l$ corresponds to a color-$c$ face of length $l$ in the stranded graph. Since a color-$c$ strand
represents a propagator with weight $R_c$,
 a color-$c$ face of length $l$ has a weight  $\Tr(R_c^l)$. This is the tensor analogue of the dually weighted structure of the complex matrix model presented in Section \ref{subsec:GLmodel}, now extended to $D$ colors.

\paragraph{Symmetry classes of potentials.}

So far, the complex model~\eqref{ZCTs} is defined for a completely general potential
$W(\phi,\phi^\dagger)$.  
In practice, we are interested in three nested symmetry classes given by:
\begin{itemize}
    \item scalar-$U(1)$ invariance:  
    for any $z\in U(1)$,
    \(
    W(z\phi,(z\phi)^\dagger )=W(\phi,\phi^\dagger)
    \).

    \item 
repeated-$U(N)$ invariance:
    for any $U\in U(N)$,
    \(
    W(U^{\otimes D}\phi,(U^{\otimes D}\phi)^\dagger )=W(\phi,\phi^\dagger)
    \).

    \item 
partial-$U(N)$ invariance:
    for any $U\in U(N)$,
    \(
    W(\check U\phi,(\check U\phi)^\dagger )=W(\phi,\phi^\dagger)
    \),
    where $\check U:=U\otimes {\bb 1}^{\otimes (D-1)}$.

    \item full-$U(N)^D$ invariance:  
    for any $U\in U(N)^D$,
    \(
    W(U\phi,(U\phi)^\dagger )=W(\phi,\phi^\dagger)
    \).
\end{itemize}

\noindent In particular, the scalar-$U(1)$ and partial-$U(N)$ cases will be relevant for the equivalence theorems we prove, and we elaborate on these below. Notice that scalar-$U(1)$ $\subset$ partial-$U(N)$ $\subset$ full-$U(N)^D$, and that scalar-$U(1)$ $\subset$ repeated-$U(N)$ $\subset$ full-$U(N)^D$, but there is no inclusion 
relation
between partial-$U(N)$ and repeated-$U(N)$.

\paragraph{Scalar-$U(1)$ invariance.}
Assume that
\(
W(z\phi,(z\phi)^\dagger)=W(\phi,\phi^\dagger)
\)
for all $z\in U(1)$.  
Then $W$ can be rewritten as a function $V$ of the positive operator
$\phi\phi^\dagger$, so that $W(\phi,\phi^\dagger)=V(\phi\phi^\dagger)$. A standard argument proceeds in two steps:
\begin{enumerate}
    \item Orbital invariance: if $z\in U(1)$, then
    \(
    (z\phi)(z\phi)^\dagger= \phi\phi^\dagger
    \),
    since $z^*=z^{-1}$.  

    \item Orbital uniqueness: Consider a set of complex tensors $\ophi$ and $\ophi^\dagger$ of order-$D$.
    If $\ophi\ophi^\dagger=\phi\phi^\dagger$, then there exists $z\in U(1)$ such that $\ophi=z\phi$. Indeed, for any tensor $\psi$ the identity
    \(
    \phi(\phi^\dagger\psi)=\ophi(\ophi^\dagger\psi)
    \)
    implies $\ophi=z\phi$ for some $z\in\Cp$, and inserting this into
    $\ophi\ophi^\dagger=\phi\phi^\dagger$ gives $|z|=1$.
\end{enumerate}
This shows that $V$ is naturally defined on the image of the map $\phi\mapsto\phi\phi^\dagger$, i.e.\ on rank–$1$ positive self-adjoint tensors. Tietze's extension theorem guarantees that $V$ can be extended to all of ${\rm H}(\Cp_N^{\otimes D})$, although such an extension need not be unique.

The difference between a general $W(\phi,\phi^\dagger)$ and a potential of the form $V(\phi\phi^\dagger)$ is illustrated in Figure~\ref{fig:TWV}: in the former, $\phi$ and $\phi^\dagger$ may appear in arbitrary orderings or in different ammounts, while in the latter they are naturally paired into copies of $\phi\phi^\dagger$.

\begin{figure}[h]
\centering
    \begin{subfigure}[H]{0.19\textwidth}
    \centering
    \includegraphics[width=0.8\linewidth]{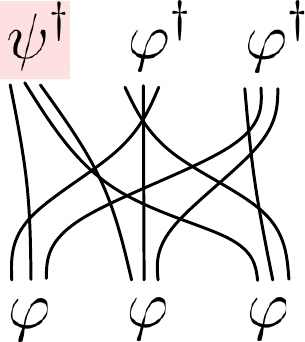}
    \caption{}
    \label{fig:symm1}
    \end{subfigure}
    \begin{subfigure}[H]{0.19\textwidth}
    \centering
    \includegraphics[width=0.8\linewidth]{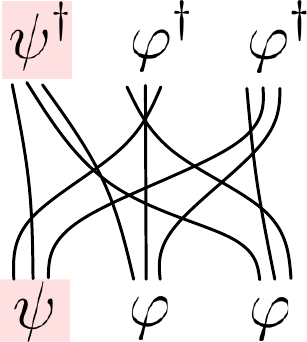}
    \caption{}
    \label{fig:symm2}
    \end{subfigure}
    \begin{subfigure}[H]{0.19\textwidth}
    \centering
    \includegraphics[width=0.8\linewidth]{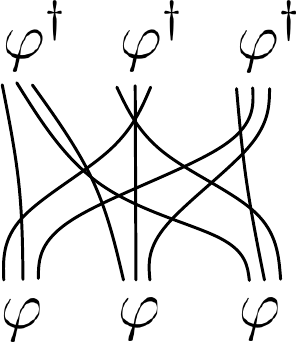}
    \caption{}
    \label{fig:symm3}
    \end{subfigure}
    \begin{subfigure}[H]{0.19\textwidth}
    \centering
    \includegraphics[width=0.8\linewidth]{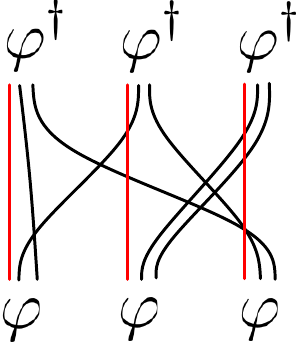}
    \caption{}
    \label{fig:symm4}
    \end{subfigure}
    \begin{subfigure}[H]{0.19\textwidth}
    \centering
    \includegraphics[width=0.8\linewidth]{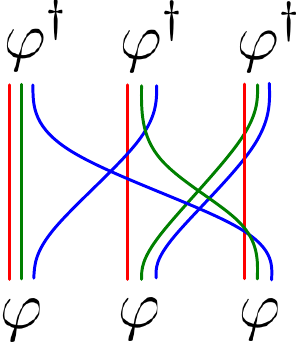}
    \caption{}
    \label{fig:symm5}
    \end{subfigure}
    \caption{We represent a few examples of potentials via stranded notation in the case $D=3$. Some examples include a constant tensor $\psi$ independent of $\phi$. Some strands may be colored if they exhibit a partial-$U(N)$ symmetry; the first, second and third indexes of $\phi$ are respectively represented in red, green and blue. The examples are: (a) A potential $W$ without scalar-$U(1)$ symmetry. 
    (b) A potential $V$ with scalar-$U(1)$ symmetry but without repeated-$U(N)$ symmetry. (c) A potential $V$ with repeated-$U(N)$ symmetry but without partial-$U(N)$ symmetry.
    (d) A potential $V$ with partial-$U(N)$ symmetry but without full-$U(N)^3$ symmetry. (e) A potential $V$ with full-$U(N)^3$ symmetry. 
    }
    \label{fig:TWV}
\end{figure}
\noindent In the scalar-$U(1)$ invariance setting,
the partition function we are interested in is
\begin{equation}\label{CTVscalar}
    {\cal Z}_{\rm CT}[V]
    (R)
    =\iint {\cal D}\phi^\dagger {\cal D}\phi\;e^{-\phi^\dagger 
    R^{-1}\phi+V(\phi{\phi^\dagger})}\;.
\end{equation}

\paragraph{Partial-$U(N)$ invariance.}
Assume that $W(\check U\phi,(\check U\phi)^\dagger)=W(\phi,\phi^\dagger)$ for any $
\check U:=U\otimes{\bb 1}^{\otimes(D-1)}$ with $U\in U(N)$.
Then, if $U(1)\subset U(N)$ invariance is included, $W$ can be rewritten as a function $V$ such that 
$W(\phi,\phi^\dagger)=V(\phi\phi^\dagger)$.
The additional invariance under $\check U$ implies that $V$ itself
factors  
in
the partial trace
${\rm Tr}_{(1)}(\phi\phi^\dagger)\in{\rm H}(\Cp_N^{\otimes(D-1)})$, and there exists a $\hat V$ such that 
\[
V(\phi\phi^\dagger)=\hat V({\rm Tr}_{(1)}(\phi\phi^\dagger))\;.
\]
The argument parallels the scalar-$U(1)$ case:
\begin{itemize}
    \item Orbital invariance:
    if $U\in U(N)$, then
    \(
    {\rm Tr}_{(1)} ((\check U\phi)(\check U\phi)^\dagger)
    ={\rm Tr}_{(1)} (\phi\phi^\dagger)
    \),
    which follows from\linebreak$U^\dagger=U^{-1}$ and the definition of the
    partial trace.

    \item Orbital uniqueness:
    if ${\rm Tr}_{(1)} (\psi\psi^\dagger)={\rm Tr}_{(1)} (\phi\phi^\dagger)$, then there
    exists $U\in U(N)$ such that $\psi=\check U\phi$.  
    One may see this by viewing $\phi$ and $\psi$ as $N\times N^{D-1}$
    matrices and using singular value decompositions
    \[
        \phi=(U_\phi\otimes {\hat U_\phi})S_\phi\;, \qquad
        \psi= (U_\psi\otimes {\hat U_\psi})S_\psi\;,
    \]
    with $U_\phi,U_\psi\in U(N)$, $\hat U_\phi,\hat U_\psi\in
    U(N^{D-1})$ and diagonal $S_\phi,S_\psi$.  
    The condition
    \(
    {\rm Tr}_{(1)}(\psi\psi^\dagger)={\rm Tr}_{(1)}(\phi\phi^\dagger)
    \)
    then implies that the singular values and the $\hat U$'s may be
    taken equal, and one obtains
    \(
    \psi=((U_\psi^{-1}U_\phi)\otimes{\bb 1}^{\otimes(D-1)})\phi\;.
    \)
\end{itemize}
This allows one to define $\hat V$ on the image of
$\phi\mapsto{\rm Tr}_{(1)}(\phi\phi^\dagger)$ via
\(
\hat V({\rm Tr}_{(1)}(\phi\phi^\dagger)):=V(\phi\phi^\dagger)
\),
and extend it to all of ${\rm H}(\Cp_N^{\otimes(D-1)})$, again by Tietze's theorem if needed. The situation is depicted schematically in Figure~\ref{fig:TWV}.

\paragraph{Potentials that have a trace-invariant expansion.}
Given these symmetries and motivated by the matrix case, we focus on potentials  $V(\phi\phi^\dagger)$ that decompose in the bubble/trace-invariant basis,
\begin{equation}
\label{eq:bubblebasispotential}
    V(\phi\phi^\dagger)
    =\sum_{n=1}^\infty \sum_{\bm\sigma\in\Sy_n^D}
      v_{[\bm\sigma]}\Tr_{[\bm\sigma]}(\phi\phi^\dagger)\;.
\end{equation}
This implies that its exponential admits a corresponding expansion in terms of bubbles,
\begin{equation}
    e^{V(\phi\phi^\dagger)}
    =\sum_{n=1}^\infty \sum_{\bm\sigma\in\Sy_n^D}
      w_{[\bm\sigma]}\Tr_{[\bm\sigma]}(\phi\phi^\dagger)\;.
\end{equation}
Substituting this into the partition function \eqref{ZCTs} yields
\begin{equation}
    \frac{{\cal Z}_{\rm CT}[V]
    (R)
    }{{\cal Z}_{\rm CT}[0]
    (R)
    }
    =\sum_{n=1}^\infty \sum_{\mu\in \Sy_n}\sum_{\bm\sigma\in\Sy_n^D}
      N^{-n}w_{[\bm\sigma]}\Tr_{[\mu \bm\sigma]}
      (R)
      \;.
\end{equation}
Moreover, in the factorized case $R=\bigotimes_{c}R_{c}$,
\begin{equation}
    \frac{{\cal Z}_{\rm CT}[V]
    (R)
    }{{\cal Z}_{\rm CT}[0]
    (R)
    }
    =\sum_{n=1}^\infty \sum_{\mu\in \Sy_n}\sum_{\bm\sigma\in\Sy_n^D}
      N^{-n}w_{[\bm\sigma]}\prod_{c}\Tr_{[\mu\sigma_c]}
      (R_{c})
      \;.
\end{equation}
These expressions are the tensor analogue of the matrix expansion in Section~\ref{subsec:GLmodel}. Each strand of color $c$ carries its own matrix $R_c$, and each color–$c$ face of length $l$ has a weight $\Tr(R_c^l)$. This is the natural starting point for introducing self-adjoint tensor models and for formulating a tensorial analogue of the dually weighted structures encountered in the matrix case.

\subsection{The self-adjoint two-tensor model}
\label{subsec:herm-tensor}
\noindent
Let $\Phi,\Psi\in\Hm(\mathbb C_N^{\otimes D})$ be random self-adjoint
tensors distributed according to the partition function
\begin{equation}
\label{HCT}
    {\cal Z}_{\rm HT}[V,Y]=\iint {\cal D}\Phi {\cal D}\Psi\;e^{-i \Tr\lp\Phi\Psi\rp+Y(\Psi)+V(\Phi)}\;.
\end{equation}
The potentials $V$ and $Y$ place us in the genuine tensor-model setting: unlike ordinary self-adjoint matrix models, here the natural invariants are those under the product group $U(N)^D\subset U(N^D)$, acting independently on each tensor factor
$\mathbb C_N$.
This symmetry imposes interactions to be of bubble type \eqref{eq:bubblebasispotential} and ensures that the Feynman diagrams retain their $D$-colored, multi-stranded structure\footnote{It is important to emphasize that enlarging the symmetry group reduces the effective order of the model. If the symmetry becomes $U(N^{D/2})^2\subset U(N^D)$, pairs of tensor indices are `locked' together and transform as composite indices; the theory then behaves as a matrix model. If one imposes the full $U(N^D)$ symmetry, all $D$ indices fuse into a single index, and the model degenerates into a vector model. Thus the degree of symmetry dictates the effective order of the theory: too much symmetry collapses a tensor model into a lower-order theory.}.

For any function $f$ of $\Phi,\Psi$, its expectation in this models is given by
\begin{equation}
    \langle f(\Phi,\Psi)\rangle[V,Y]=\frac{\iint {\cal D}\Phi {\cal D}\Psi\; f(\Phi,\Psi)\;e^{-i \Tr\lp\Phi\Psi\rp+Y(\Psi)+V(\Phi)}}{\iint {\cal D}\Phi {\cal D}\Psi\;e^{-i \Tr\lp\Phi\Psi\rp+Y(\Psi)+V(\Phi)}}\;.
\end{equation}

\paragraph{Propagator.}
If $Y=V=0$, one finds
\begin{equation}\label{HT00}
\langle \Phi\otimes\Psi\rangle[0,0]
    = - i \,\Sigma,\qquad
\langle \Phi\otimes\Phi\rangle[0,0]=0,\qquad
\langle \Psi\otimes\Psi\rangle[0,0]=0\,,
\end{equation}
where $\Sigma^{\bm a \bm c}_{\bm b \bm d}=\delta^{\bm a}_{\bm d}\delta^{\bm c}_{\bm b}$ as defined in Section~\ref{sec:notations}.
This means that the only non–vanishing propagators are those connecting $\Phi$ and $\Psi$. This propagator is illustrated in Figure~\ref{fig:PhiPsiprop}.
\begin{figure}[H]
    \centering
    \includegraphics[width=0.3
    \linewidth]{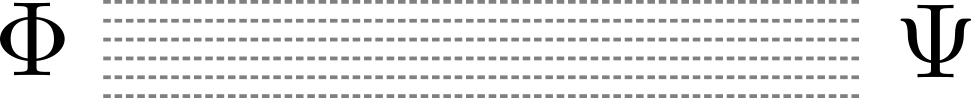}
    \caption{Stranded graph representation of the propagator of the theory defined by the partition function \eqref{HCT} as given in \eqref{HT00}. 
    }
    \label{fig:PhiPsiprop}
\end{figure}

Furthemore, if $V=0$ and $Y\neq 0$, 
\begin{equation}\label{HT00-again}
\langle \Phi\otimes\Psi\rangle[0,Y]
    = - i \,\Sigma,\qquad
\langle \Psi\otimes\Psi\rangle[0,Y]=0.
\end{equation}
However, the $\Phi\otimes\Phi$ propagator is corrected by second derivatives of $Y$,
\begin{equation}\label{eq:propderiv}
    \langle \Phi\otimes\Phi\rangle[0,Y]=-\left(\frac{\partial}{\partial \Psi}\otimes\frac{\partial}{\partial \Psi}\right)e^{Y(\Psi)}|_{\Psi=0}\;.
\end{equation}

\paragraph{Gaussian example.}
An illustrative case is the Gaussian choice
\(
Y_2(\Psi)=-\frac{1}{2}
\Tr((R\Psi)^2)
\;.
\)
In this case, it is convenient to rewrite
\begin{equation}
    Y_2(\Psi)=-\frac{1}{2}\Tr[\Sigma\cdot
    (R\otimes R)(\Psi\otimes\Psi)]\;,
\end{equation}
where the trace is on
$\Cp_N^{\otimes(2D)}\otimes \Cp_N^{*\otimes(2D)}$.  
Using this notation, \eqref{eq:propderiv} 
reads
\begin{equation}
    \langle \Phi\otimes\Phi\rangle[0,Y_2]=-\Sigma
    (R\otimes R)
    \;.
\end{equation}
The Gaussian potential $Y_2$ 
allows one to integrate $\Psi$ out
explicitly in~\eqref{HCT}, yielding
\begin{equation}\label{HCT2}
    {\cal Z}_{\rm HT}[V,Y_2]={\cal N}\int {\cal D}\Phi \;e^{-\frac{1}{2} \Tr\lp(
    R^{-1}\Phi)^2\rp+V(\Phi)}\;,
\end{equation}
with ${\cal N}=\int {\cal D}\Psi \;e^{-\frac{1}{2} \Tr\lp(
R
\Psi)^2\rp}$.  
Thus, the two-tensor model \eqref{HCT} with a Gaussian potential in $\Psi$
and an arbitrary $V(\Phi)$
effectively reduces to a self-adjoint one–tensor model \eqref{HCT2} in $\Phi$ 
with 
$V(\Phi)$.
This is directly analogous to the matrix-model case \eqref{eq:w2pf}.

\paragraph{Trace-invariant observables.}
For $V=0$ and general $Y$, the expectation value of  
tensor trace invariants may be expressed in terms of derivatives of $Y$.  
That is, for a multi–permutation $\bm \sigma\in\Sy_n^D$, 
\begin{equation}\label{eq:dpsi}
    \langle\Tr_{[{\bm\sigma}]}(\Phi)\rangle[0,Y]=i^{-n}\Tr_{[{\bm\sigma}]}\left ( \frac{\partial}{\partial \Psi} \right)e^{Y(\Psi)}|_{\Psi=0}\;.
\end{equation}
This formula provides an analytic handle on expectation values of trace invariants
purely in
terms of $Y$.

\paragraph{A logarithmic potential.} Of particular interest is the dually weighted logarithmic potential
\begin{equation}\label{eq:logpot}
Y_{\ln}
    [R](\Psi)=-\Tr\ln\lp {\bb 1}^{\otimes D}-i R\Psi\rp=\sum_{k=1}^{\infty}\frac{i^k}{k}\Tr((R\Psi)^k)
\end{equation}
for a fixed tensor $R\in \Cp_N^{\otimes D}\otimes\Cp_{N}^{*\otimes D}$. In this case the bubble expectations reproduce exactly the structure seen in the complex tensor model.

\begin{proposition}\label{avPhiPsi}
Let $\Phi\in{\rm H}(\Cp_N^{\otimes D})$ be distributed according to
$\mathcal{Z}_{\rm HT}[0,
Y_{\ln}
[R]]$.
For any $\bm \sigma\in S_n^D$,
\begin{equation}
\left\langle \Tr_{[\bm\sigma]}(\Phi)\right\rangle
[0,
Y_{\ln}
[R]]
    = \sum_{\mu\in S_n}
      \Tr_{[\mu\bm\sigma]}
      (R)
      \;.
\end{equation}
\end{proposition}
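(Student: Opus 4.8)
The plan is to evaluate the master formula \eqref{eq:dpsi} for the logarithmic potential \eqref{eq:logpot}, mirroring the matrix proof of Proposition~\ref{avAB}. With $V=0$, \eqref{eq:dpsi} reads $\langle\Tr_{[\bm\sigma]}(\Phi)\rangle[0,Y[R]]=i^{-n}\Tr_{[\bm\sigma]}(\partial/\partial\Psi)\,e^{Y[R](\Psi)}\big|_{\Psi=0}$. First I would exponentiate the potential, $e^{Y[R](\Psi)}=\det({\bb 1}^{\otimes D}-iR\Psi)^{-1}$, by inverting the $-\Tr\ln$. Since $\Tr_{[\bm\sigma]}(\partial/\partial\Psi)$ is homogeneous of degree $n$ in $\partial/\partial\Psi$ and we set $\Psi=0$ afterwards, only the degree-$n$ homogeneous part (in $\Psi$) of this determinant survives; all other orders are annihilated. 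The computation thus splits into isolating that degree-$n$ part and then differentiating it.

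To isolate it, I would expand $Y[R](\Psi)=\sum_{k\ge1}\frac{i^k}{k}\Tr((R\Psi)^k)$ — the power sums of the single operator $R\Psi$ on $\Cp_N^{\otimes D}$ — and apply the exponential formula relating power sums to sums over $\Sy_n$. This gives the degree-$n$ part as $\frac{i^n}{n!}\sum_{\mu\in\Sy_n}\prod_{c}\Tr((R\Psi)^{|c|})$ (product over the cycles $c$ of $\mu$), which in trace-invariant language is $\frac{i^n}{n!}\sum_{\mu}\Tr_{[(\mu,\dots,\mu)]}(R\Psi)$, the operator trace-invariant with all $D$ colors set equal to $\mu$. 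This is the same power-sum-to-permutations step used for $Q\otimes(PB)$ in the matrix case, now with the single covariance $R$ — hence no separate $Q$-trace and no $N^{-n}$, consistent with the $N$-free coupling $-i\Tr(\Phi\Psi)$.

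Then I would apply $i^{-n}\Tr_{[\bm\sigma]}(\partial/\partial\Psi)$. Writing each $(R\Psi)$ as $R$ contracted with one $\Psi$ exposes exactly $n$ copies of $\Psi$, which the $n$ derivatives of $\Tr_{[\bm\sigma]}(\partial/\partial\Psi)$ hit in all $n!$ bijective ways at $\Psi=0$. For a fixed $\mu$ and a fixed pairing $\pi\in\Sy_n$, the Kronecker deltas produced by the derivatives splice the $R$-operators together, and tracking the color-$c$ index flow identifies the surviving contraction as $\Tr_{[(\pi^{-1}\mu\pi)\bm\sigma]}(R)$. Summing over the $n!$ pairings and over $\mu\in\Sy_n$, and using that $\mu\mapsto\pi^{-1}\mu\pi$ is a bijection of $\Sy_n$ for each $\pi$, the double sum collapses to $n!\sum_{\mu}\Tr_{[\mu\bm\sigma]}(R)$; the $n!$ cancels the $1/n!$ and the $i^{-n}$ cancels the $i^n$, leaving exactly $\sum_{\mu\in\Sy_n}\Tr_{[\mu\bm\sigma]}(R)$.

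The main obstacle is the index bookkeeping in the differentiation step: verifying that, per pairing, the spliced $R$-contraction is precisely the color-wise composed multi-permutation $(\pi^{-1}\mu\pi)\bm\sigma=((\pi^{-1}\mu\pi)\sigma_1,\dots,(\pi^{-1}\mu\pi)\sigma_D)$ inside $\Tr_{[\cdot]}(R)$, and not some other composition. This is where the $D$-colored structure genuinely enters: the operator $R\Psi$ locks all $D$ colors together along each cycle of $\mu$, whereas $\Tr_{[\bm\sigma]}(\partial/\partial\Psi)$ acts color-by-color through $\bm\sigma$, so the pairing must interleave the two permutation data correctly before the conjugation by $\pi$ can be absorbed by re-summing over $\mu$. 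This single combinatorial lemma is the tensor analogue of the corresponding step in the proof of Proposition~\ref{avAB}. As a consistency check, the resulting right-hand side coincides term-by-term with the complex-tensor expectation of Proposition~\ref{avTM}, exactly as required if the two ensembles are to be identified.
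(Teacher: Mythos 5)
Your proposal is correct and follows essentially the same route as the paper's proof: starting from \eqref{eq:dpsi}, expanding $e^{Y[R](\Psi)}$ via the power-sum/permutation (cycle-index) formula so that only the degree-$n$ term $\frac{i^n}{n!}\sum_{\mu\in\Sy_n}\Tr_{[\mu]}(R\Psi)$ survives, distributing the $n$ derivatives over the $n$ copies of $\Psi$ as a sum over a second permutation, and absorbing the resulting conjugate $\pi^{-1}\mu\pi$ (the paper's $\nu^{-1}\mu\nu$) by re-summing over $\mu$, with the final relabeling $(\bm\sigma_*\bm k)_{\mu^{-1}(j)}=((\mu\bm\sigma)_*\bm k)_j$ yielding $\sum_{\mu}\Tr_{[\mu\bm\sigma]}(R)$. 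The ``combinatorial lemma'' you flag is exactly the Kronecker-delta bookkeeping carried out explicitly in the paper's Appendix~\ref{app:proof}, and your anticipated outcome of it is the correct one.
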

\begin{proof}
    The proof is given in Appendix~\ref{app:proof}.
\end{proof}

\noindent
The similarities between Proposition~\ref{avTM} and Proposition~\ref{avPhiPsi} indicate that the Gaussian ($W=0$ for 
Proposition~\ref{avTM}
and
$V=0$ for 
Proposition~\ref{avPhiPsi}) sectors of the complex and self-adjoint tensor models agree at the level of trace-invariant observables. In light of the matrix-model case, this suggests an analogous equivalence between the complex tensor model $\mathcal{Z}_{\rm CT}$ with covariance $R$ and the self-adjoint tensor model $\mathcal{Z}_{\rm HT}$ with logarithmic potential \eqref{eq:logpot}. The next subsection shows that this equivalence extends to the partition functions $\mathcal{Z}_{\rm CT}$ and $\mathcal{Z}_{\rm HT}$ for any potential $V$.

\subsection{Exact tensor equivalences via intermediate fields}\label{sec:exacttensorequiv}

\noindent
In this section, we show that this agreement extends to the full ($V \ne 0$)
partition functions and to all observables (i.e.\ expectation values)
depending only on the self-adjoint combination $\phi\phi^\dagger$. Similarly to the matrix model equivalence, this is done through an intermediate field representation.
In this formulation, the self-adjoint tensor $\Phi$ appears as an
{\it intermediate effective field}, while a second self-adjoint tensor $\Psi$ appears as a {\it dually-weighted intermediate field} in which the logarithmic interaction, deformed by the covariance inherited from the propagator of the complex model, gives rise to the characteristic dually weighted structure.

\subsubsection{Equivalence for a scalar-\texorpdfstring{$U(1)$}{U(1)} invariant potential \texorpdfstring{$V(\phi {\phi^\dagger})$}{V(phi dagger)}}

\noindent Assuming a scalar-$U(1)$ invariance, we consider a complex tensor
$\phi\in\Cp_N^{\otimes D}$ distributed according to the partition function
\begin{equation}
\label{ZCT}
    {\cal Z}_{\rm CT}[V](R)
    =\iint {\cal D}\phi^\dagger {\cal D}\phi\;
    e^{-\phi^\dagger R^{-1} \phi+ V(\phi\phi^\dagger)}\;,
\end{equation}
with the complex field $\phi$ carrying the potential $V(\phi\phi^\dagger)$. We also consider self-adjoint tensors
$\Phi,\Psi\in\Hm(\mathbb C_N^{\otimes D})$ distributed according to
\begin{equation}
\label{eq:scalarU1_self-adjoint}
     {\cal Z}_{\rm HT}[V,Y_{\ln}[R]]=\iint {\cal D}\Phi {\cal D}\Psi\;e^{-i \Tr\lp\Phi\Psi\rp+Y_{\ln}[R](\Psi)+V(\Phi)}\;,
    \quad
{Y}_{\ln}[R](\Psi)
    =-\Tr\ln({\bb 1}^{\otimes D}-iR\Psi)\;.
\end{equation}
In this formulation, $\Phi$ plays the role of an {\it intermediate effective field}, carrying an effective potential $V(\Phi)$,
while $\Psi$ appears as a {\it dually-weighted intermediate field}, carrying a dually-weighted logarithmic potential $Y_{\ln}[R](\Psi)$.
The following theorem establishes the equivalence between these partition functions.

\begin{theorem}\label{mainthm}
   Consider the scalar-$U(1)$–invariant  $\Cp_N^{\otimes D}$ random tensor model with partition function \eqref{ZCT}
   and the ${\rm H}(\Cp_N^{\otimes D})$ random tensor partition function $\mathcal{Z}_{\rm HT}[V,Y_{\ln}[R]]$ \eqref{eq:scalarU1_self-adjoint}. For any potential function $V$, 
\begin{equation}
    \frac{\mathcal{Z}_{\rm CT}[V]
    (R)
    }{\mathcal{Z}_{\rm CT}[0]
    (R)}=\frac{\mathcal{Z}_{\rm HT}[V,
Y_{\ln}
    [R]
    ]}{\mathcal{Z}_{\rm HT}[0,
Y_{\ln}
    [R]
    ]}\;
\end{equation}
as formal power series.
\end{theorem}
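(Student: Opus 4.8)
The plan is to transport the intermediate-field argument of the matrix equivalence (Theorem~\ref{matrixequiv_thm}) to the order-$D$ tensor setting. Since the potential in \eqref{CTVscalar} is scalar-$U(1)$ invariant it is already written as $V(\phi\phi^\dagger)$ with $\phi\phi^\dagger\in\Hm(\Cp_N^{\otimes D})$, so no preprocessing of $V$ is needed. First I would insert the resolution of identity $1=\int_{\Hm(\Cp_N^{\otimes D})}\mathcal D\Phi\,\delta(\phi\phi^\dagger-\Phi)$, which is legitimate because $\phi\phi^\dagger$ is self-adjoint for every $\phi$, and use it to replace $V(\phi\phi^\dagger)$ by $V(\Phi)$. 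Representing the delta function as a Fourier integral over $\Hm(\Cp_N^{\otimes D})$,
\[
\delta(\phi\phi^\dagger-\Phi)=\mathcal N_{\rm HT}^{-1}\int_{\Hm(\Cp_N^{\otimes D})}\mathcal D\Psi\;e^{\,i\Tr(\Psi(\phi\phi^\dagger-\Phi))},\qquad \mathcal N_{\rm HT}=\mathcal Z_{\rm HT}[0,0],
\]
introduces the self-adjoint intermediate field $\Psi$ that will end up carrying the logarithmic potential.

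The core computation is then the Gaussian integral over $\phi$. Using the contraction identity $\Tr(\Psi\phi\phi^\dagger)=\phi^\dagger\Psi\phi$, the exponent collapses to $-\phi^\dagger\lp R^{-1}-i\Psi\rp\phi$, so that integrating $\phi$ is an ordinary complex Gaussian on the $N^D$-dimensional space $\Cp_N^{\otimes D}$:
\[
\iint\mathcal D\phi^\dagger\mathcal D\phi\;e^{-\phi^\dagger\lp R^{-1}-i\Psi\rp\phi}=\mathcal Z_{\rm CT}[0](R)\,\det\lp{\bb 1}^{\otimes D}-iR\Psi\rp^{-1},
\]
where I would factor $R^{-1}-i\Psi=R^{-1}\lp{\bb 1}^{\otimes D}-iR\Psi\rp$, so the $\Psi$-independent part reconstructs the free normalization $\mathcal Z_{\rm CT}[0](R)$ while the remaining determinant is normalized to unity at $\Psi=0$. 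Rewriting $\det(\cdot)^{-1}=e^{-\Tr\ln(\cdot)}$ produces exactly the potential $Y[R](\Psi)$ of \eqref{eq:logpot}. Collecting everything, the leftover $\Phi,\Psi$ integral is precisely $\mathcal Z_{\rm HT}[V,Y[R]]$ of \eqref{HCT}, so that $\mathcal Z_{\rm CT}[V](R)=\mathcal N_{\rm HT}^{-1}\mathcal Z_{\rm CT}[0](R)\,\mathcal Z_{\rm HT}[V,Y[R]]$. Because the prefactor does not depend on $V$, specializing to $V=0$ fixes $\mathcal N_{\rm HT}=\mathcal Z_{\rm HT}[0,Y[R]]$, and dividing the general-$V$ identity by its $V=0$ counterpart gives the claimed equality of normalized partition functions.

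The step I expect to be the main obstacle is the interchange of the $\phi$-integration with the $\Phi,\Psi$-integration. Formally this is a Fubini move, but it must be justified for the oscillatory measure: convergence of the $\phi$-Gaussian requires the Hermitian part of $R^{-1}-i\Psi$ to be positive, which I would reduce to the spectral conditions on $R$ (the tensor analogue of the complex-model convergence criteria), and convergence of the subsequent $\Psi$-integral against the logarithmic weight must be controlled exactly as in the matrix case. A conceptually interesting point special to the tensor setting is that $\phi\phi^\dagger$ is always rank one, so the delta constraint is supported on a measure-zero cone in $\Hm(\Cp_N^{\otimes D})$; this causes no difficulty, since integrating $\phi$ out first liberates $\Phi$ to range over all of $\Hm(\Cp_N^{\otimes D})$, and it is precisely this mechanism that reconciles the rank-one complex model with the full self-adjoint two-tensor model. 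As a consistency check I would confirm that the resulting $V=0$ multi-trace averages coincide, which is exactly the content of Propositions~\ref{avTM} and~\ref{avPhiPsi}.
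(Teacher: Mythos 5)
Your proposal is correct and follows essentially the same route as the paper's proof: insertion of $\delta(\phi\phi^\dagger-\Phi)$, its Fourier representation over $\Hm(\Cp_N^{\otimes D})$ introducing $\Psi$, the Gaussian $\phi$-integration via $\Tr(\Psi\phi\phi^\dagger)=\phi^\dagger\Psi\phi$ producing $\det({\bb 1}^{\otimes D}-iR\Psi)^{-1}=e^{Y[R](\Psi)}$, and identification of the residual integral as $\mathcal Z_{\rm HT}[V,Y[R]]$. Your way of fixing the normalization (dividing by the $V=0$ specialization) is equivalent to the paper's observation that $Y[R](0)=0$ implies $\mathcal N_{\rm HT}=\mathcal Z_{\rm HT}[0,Y[R]]$, and your added remarks on Fubini/convergence and the rank-one support of the delta constraint are sound observations beyond what the paper itself makes explicit.
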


\begin{proof}
For any tensor $\phi\in \Cp_N^{\otimes D}$, the tensor $\phi\phi^\dagger$
is self-adjoint, $\phi\phi^\dagger \in \Hm(\Cp_N^{\otimes D})$.  
We insert into the partition function a delta function over self-adjoint
tensors, rewriting the potential as a function of $\Phi$:
\begin{equation}
    {\cal Z}_{\rm CT}[V]
    (R)
    =\iint {\cal D}\phi^\dagger {\cal D}\phi\int {\cal D}\Phi\;\delta(\phi\phi^\dagger-\Phi)\;e^{-\phi^\dagger
    {R}^{-1}
    \phi+V(\Phi)}\;.
\end{equation}
The delta function admits a Fourier representation over
$\Hm(\Cp_N^{\otimes D})$,
\begin{equation}
    \delta(\phi\phi^\dagger-\Phi)={\cal N}_{\rm HT}^{-1}\int {\cal D}\Psi\;e^{i\Tr \lp\Psi\lp\phi\phi^\dagger-\Phi\rp\rp}\;,
\end{equation}
where
\[
{\cal N}_{\rm HT}=\iint {\cal D}\Phi {\cal D}\Psi\;e^{-i\Tr \lp\Psi\Phi\rp}={\cal Z}_{\rm HT}[0,0]\;.
\]
Substituting this representation yields
\begin{equation}
    {\cal Z}_{\rm CT}[V]
    (R)
    ={\cal N}_{\rm HT}^{-1}\iint {\cal D}\phi^\dagger {\cal D}\phi\iint {\cal D}\Phi {\cal D}\Psi\;e^{-\phi^\dagger
    {R}^{-1}
    \phi+i\Tr \lp\Psi\lp\phi\phi^\dagger-\Phi\rp\rp+V(\Phi)}\;.
\end{equation}
Using $\Tr\lp\Psi\phi\phi^\dagger\rp=\phi^\dagger\Psi\phi$,
we rewrite
the exponent as
\begin{equation}\label{intpftens}
    {\cal Z}_{\rm CT}[V]
    (R)
    ={\cal N}_{\rm HT}^{-1}\iint {\cal D}\phi^\dagger {\cal D}\phi\iint {\cal D}\Phi {\cal D}\Psi\;e^{-\phi^\dagger \lp 
    R^{-1}
    -i\Psi\rp\phi-i\Tr \lp\Psi\Phi\rp+V(\Phi)}\;.
\end{equation}
For sufficiently well-behaved interactions $V$, one may exchange the order of
integration and perform the integral over $\phi$ first. More generally, for any $V$, this
operation can always be performed at the level of formal power series. The integral over
$\phi$ is Gaussian, and can therefore be computed explicitly,
\begin{equation}
    \iint {\cal D}\phi^\dagger {\cal D}\phi\;e^{-\phi^\dagger\lp
    {R}^{-1}
    -i\Psi\rp\phi}={\cal N}_{\rm CT}\det({\bb 1}^{\otimes D}-i
    R
    \Psi)^{-1}\;,
\end{equation}
with
\(
{\cal N}_{\rm CT}=\iint {\cal D}\phi^\dagger {\cal D}\phi\;e^{-\phi^\dagger 
R^{-1}
\phi}={\cal Z}_{\rm CT}[0]
(R)
\;.
\)
Using $\det=e^{\Tr\ln}$ we obtain
\begin{equation}
    {\cal Z}_{\rm CT}[V]
    (R)
    ={\cal N}_{\rm HT}^{-1}{\cal N}_{\rm CT} \iint {\cal D}\Phi {\cal D}\Psi\;e^{-\Tr\ln\lp {\bb 1}^{\otimes D}-i
    R
    \Psi\rp-i\Tr \lp\Psi\Phi\rp+V(\Phi)}\;.
\end{equation}
The remaining integral is precisely ${\cal Z}_{\rm HT}[V,
Y_{\ln}
[R]]$, hence
\begin{equation}
    {\cal Z}_{\rm CT}[V]
    (R)
    ={\cal N}_{\rm HT}^{-1}{\cal N}_{\rm CT}{\cal Z}_{\rm HT}
    [V,
Y_{\ln}
    [R]]
    \;.
\end{equation}
Since $
Y_{\ln}
[R](0)=0$, one has ${\cal N}_{\rm HT}={\cal Z}_{\rm HT}[0,
Y_{\ln}
[R]]$, and the claim follows.
\end{proof}
\medskip
\noindent
\emph{Remark.}  As in the matrix case, the equivalence is carried at the
level of formal power series, since for general choices of interaction the
tensor integrals may fail to be absolutely convergent or may only be defined
in a distributional sense. For suitable potentials, however, the
equivalence extends to well-defined tensor integrals.

One explicit example is given by mapping the sextic complete interaction model to a cubic interaction model as shown in Figure~\ref{fig:excomp}, which includes the vertices shown in Figure~\ref{fig:exvert}. The sextic model is explored further in \eqref{eq:ZCTV6}.
\begin{figure}[H]
    \begin{subfigure}[c]{0.24\textwidth}
    \centering
    \includegraphics[width=0.5
    \linewidth,angle=0]{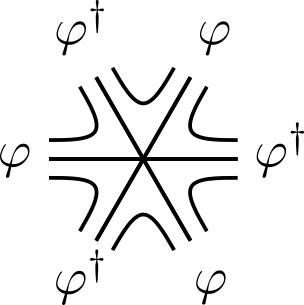}
    \caption{}
    \label{fig:exvert1}
    \end{subfigure}
    \begin{subfigure}[c]{0.24\textwidth}
    \centering
    \includegraphics[width=0.5
    \linewidth,angle=0]{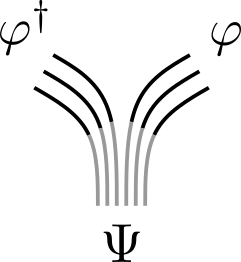}
    \caption{}
    \label{fig:exvert3}
    \end{subfigure}
    \begin{subfigure}[c]{0.24\textwidth}
    \centering
    \includegraphics[width=0.5
    \linewidth,angle=0]{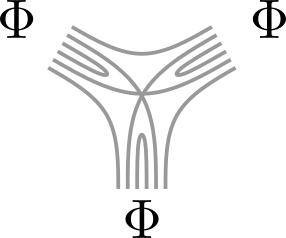}
    \caption{}
    \label{fig:exvert2}
    \end{subfigure}
    \begin{subfigure}[c]{0.24\textwidth}
    \centering
    \includegraphics[width=0.5
   \linewidth,angle=0]{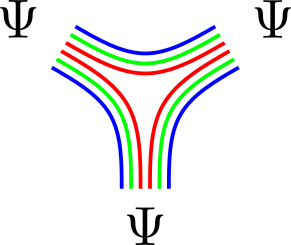}
    \caption{}
    \label{fig:exvert4}
    \end{subfigure}
    \caption{(a)
    A sextic interaction vertex that can be present in the potential $W$ of the partition function \eqref{ZCTs}. 
    (b)
    The $i \phi^\dagger \Psi \phi$ vertex in \eqref{intpftens}.
    (c)
    An example of a cubic potential of 
the intermediate effective field 
    $\Phi$ which may be present in $V$ in \eqref{eq:HTpf}.
    (d)
The cubic potential of the 
dually-weighted intermediate field $\Psi$, which appears as the third order term in the expansion of the potential $\Tr\ln\lp {\bb 1}^{\otimes D}-i
    R \Psi\rp$ in \eqref{eq:HTpf}.
}
    \label{fig:exvert}
\end{figure}

\begin{figure}[H]
    \begin{subfigure}[c]{0.499\textwidth}
    \begin{minipage}[c][6cm][c]{\linewidth}
    \centering
    \includegraphics[width=0.5
    \linewidth,angle=0]{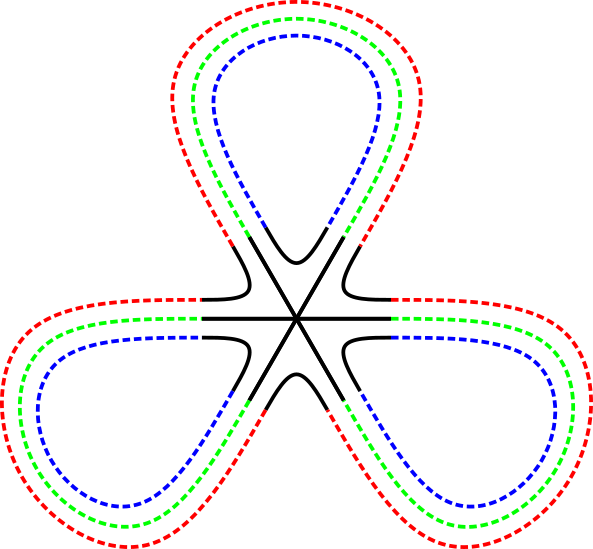}
    \end{minipage}
    \caption{}
    \label{fig:excompc}
    \end{subfigure}
    \begin{subfigure}[c]{0.499\textwidth}
    \begin{minipage}[c][6cm][c]{\linewidth}
    \centering
    \includegraphics[width=0.65
    \linewidth,angle=0]{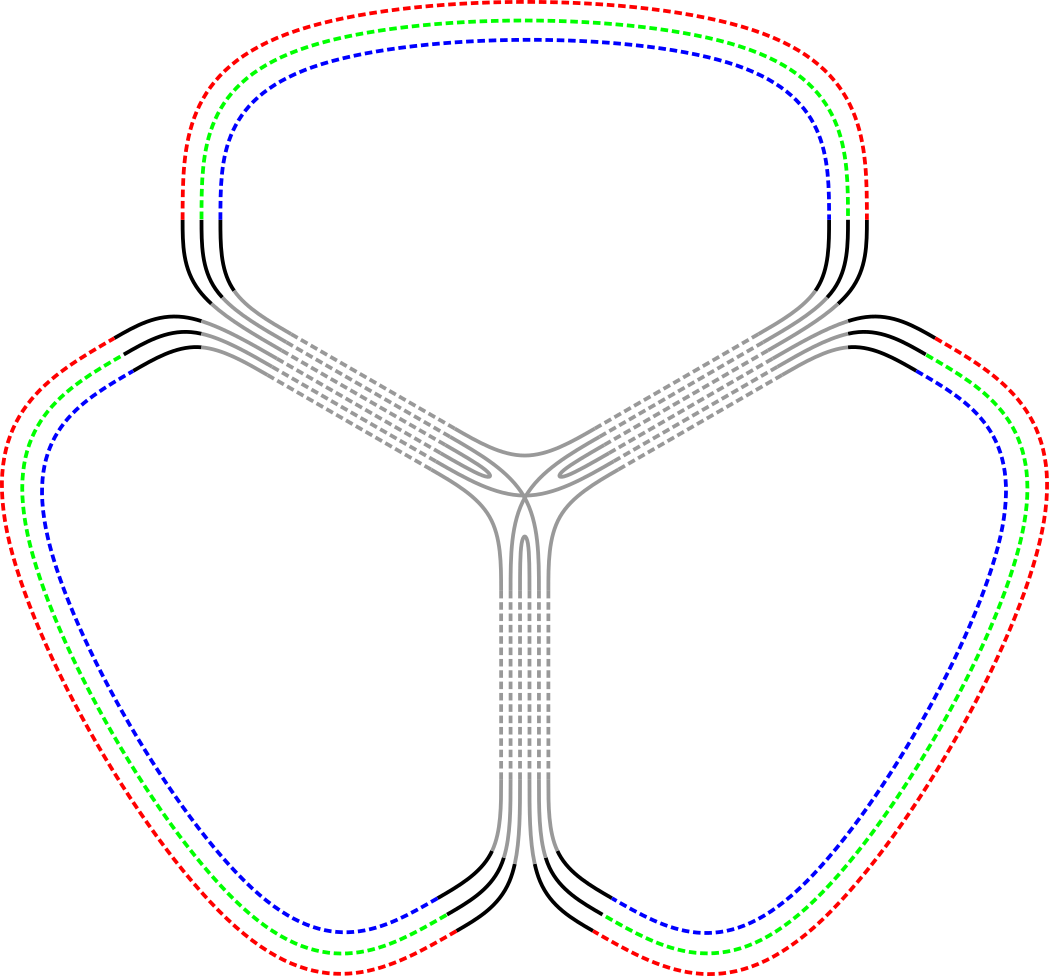}
    \end{minipage}
    \caption{}
    \label{fig:excompcs1}
    \end{subfigure}
    \begin{subfigure}[c]{0.499\textwidth}
    \begin{minipage}[c][6cm][c]{\linewidth}
    \centering
    \includegraphics[width=0.5
    \linewidth,angle=0]{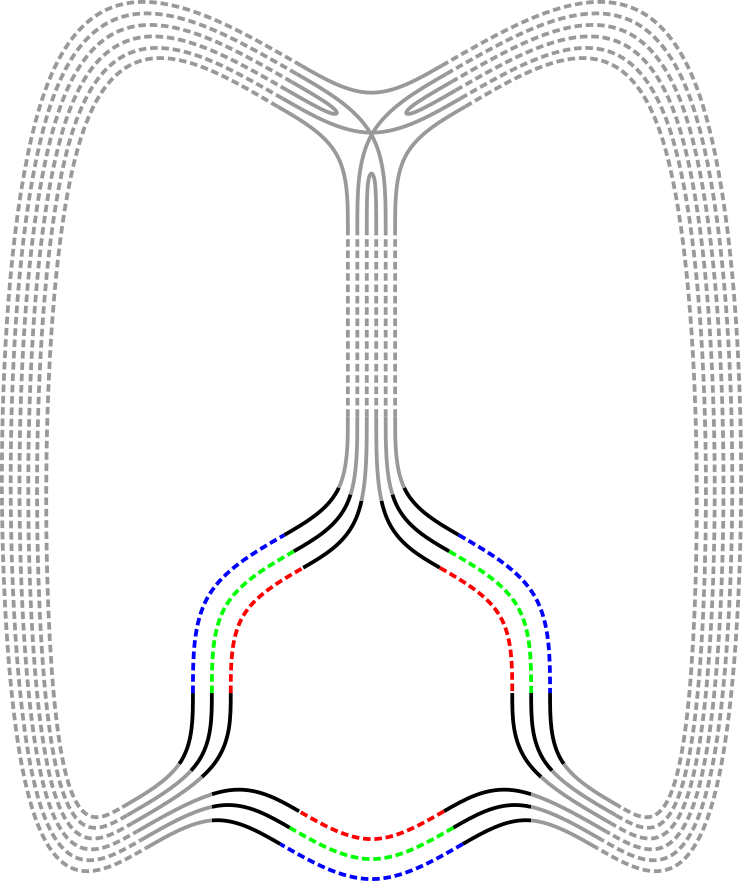}
    \end{minipage}
    \caption{}
    \label{fig:excompcs2}
    \end{subfigure}
    \begin{subfigure}[c]{0.499\textwidth}
    \begin{minipage}[c][6cm][c]{\linewidth}
    \centering
    \includegraphics[width=0.4
    \linewidth,angle=0]{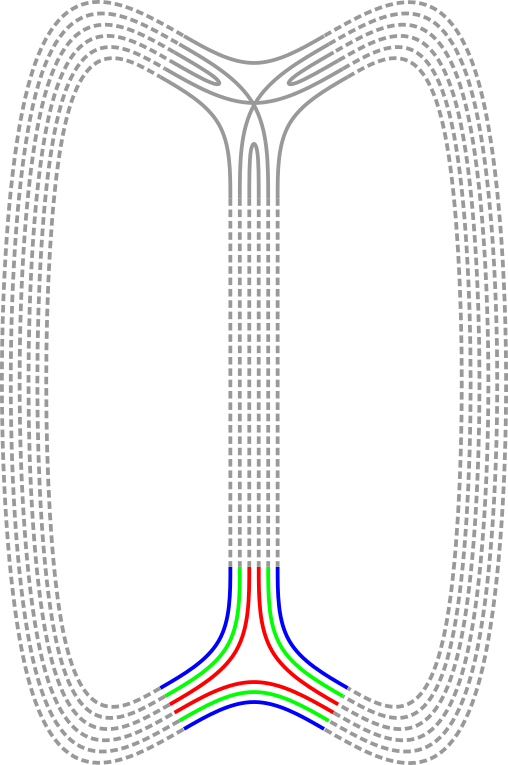}
    \end{minipage}
    \caption{}
    \label{fig:excomps}
    \end{subfigure}
    \caption{An example of graphs whose amplitudes are equal in the different tensor models. Those graphs are composed by the vertices shown in Figure~\ref{fig:exvert} and the propagators shown in Figure~\ref{fig:tensprop} and Figure~\ref{fig:PhiPsiprop}. (a) A graph that can be generated by the partition function \eqref{ZCTs}; (b-c)
    Two representations of the same graph that can be generated by the partition function \eqref{intpftens}; (d) A graph that can be generated by the partition function \eqref{eq:HTpf}.   
}
    \label{fig:excomp}
\end{figure}

\paragraph{Expectation values.}

As in the matrix case, the equivalence of the partition functions
directly implies an equivalence between observables  (i.e. expectation values).  
For any function $f$,
\[
\left\langle f(\phi\phi^\dagger)\right\rangle[V]
(R)
    = \frac{\mathcal{Z}_{\rm CT}
    (R)
    [V+\ln f]}
           {\mathcal{Z}_{\rm CT}[V]
           (R)
           },\qquad
\left\langle f(\Psi)\right\rangle[V,Y
[R]
]
    = \frac{\mathcal{Z}_{\rm HT}[V+\ln f,\,
Y_{\ln}
    [R]
    ]}
           {\mathcal{Z}_{\rm HT}[V,
Y_{\ln}
           [R]
           ]}\;,
\]
which leads to the following corollary.

\begin{corollary}\label{cor:exp_val_tm}
For any potential $V$ and any trace invariant $f:{\rm H}(\Cp_N^{\otimes D})\to\mathbb C$,
\[
\left\langle f(\phi\phi^\dagger)\right\rangle[V]
(R)
    = \left\langle f(\Phi)\right\rangle[V,
Y_{\ln}
    [R]
    ].
\]
\end{corollary}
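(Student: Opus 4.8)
The plan is to obtain this as an immediate consequence of Theorem~\ref{mainthm} together with the two ratio-of-partition-function identities displayed immediately above the corollary, in exact parallel with the way Corollary~\ref{cor:exp_val_mm} follows from Theorem~\ref{matrixequiv_thm} in the matrix setting. No new integration is required: the entire content is carried by the already-established equality of normalized partition functions, so the argument is purely algebraic.

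First I would note that inserting a trace invariant $f$ of $\phi\phi^\dagger$ into the complex-model expectation value amounts to shifting the potential $V \to V + \ln f$, since $e^{V(\phi\phi^\dagger)}\,f(\phi\phi^\dagger) = e^{(V+\ln f)(\phi\phi^\dagger)}$; the same holds on the self-adjoint side with $V(\Phi) \to (V+\ln f)(\Phi)$. I would then apply Theorem~\ref{mainthm} twice, once with potential $V+\ln f$ and once with $V$, and divide the two resulting identities. The common normalizations $\mathcal{Z}_{\rm CT}[0](R)$ and $\mathcal{Z}_{\rm HT}[0,Y[R]]$ cancel, leaving
\begin{equation}
\frac{\mathcal{Z}_{\rm CT}[V+\ln f](R)}{\mathcal{Z}_{\rm CT}[V](R)}
= \frac{\mathcal{Z}_{\rm HT}[V+\ln f,\,Y[R]]}{\mathcal{Z}_{\rm HT}[V,\,Y[R]]}\;.
\end{equation}
By the two ratio formulas stated just above the corollary, the left-hand side is $\langle f(\phi\phi^\dagger)\rangle[V](R)$ and the right-hand side is $\langle f(\Phi)\rangle[V,Y[R]]$, which is precisely the assertion.

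The computation is a one-line cancellation, so there is no analytic obstacle; the only point requiring care is the \emph{admissibility} of the shifted potential. Theorem~\ref{mainthm} is proved under the scalar-$U(1)$ invariance hypothesis, i.e.\ it applies to potentials that are genuine functions of $\phi\phi^\dagger$. The shift $V \to V+\ln f$ remains inside this class precisely when $f$ is a trace invariant on $\Hm(\Cp_N^{\otimes D})$, which is exactly the hypothesis imposed on $f$ in the corollary; this is the single place where that hypothesis is used. I would also emphasize that $\ln f$ enters only as a formal bookkeeping device for the multiplicative insertion $e^{V}f = e^{V+\ln f}$, so no positivity of $f$ is needed: the substance of the argument is the cancellation of normalizations, not any pointwise property of $\ln f$ itself.
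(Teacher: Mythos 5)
Your proposal is correct and matches the paper's own (implicit) argument exactly: the paper likewise expresses both expectation values as ratios $\mathcal{Z}[V+\ln f]/\mathcal{Z}[V]$ and applies Theorem~\ref{mainthm} with the shifted potential, letting the Gaussian normalizations cancel. Your added remark on admissibility of $V+\ln f$ under the scalar-$U(1)$ hypothesis, and on $\ln f$ being a purely formal bookkeeping device, is a sensible clarification of a point the paper leaves tacit, but it does not change the route.
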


\noindent In particular, if $V=0$ and $f$ is a  trace invariant of tensors, one recovers Propositions~\ref{avTM} and~\ref{avPhiPsi} as special cases.

\medskip

\noindent
The arguments above establish an equivalence between random $\Cp_N^{\otimes D}$ 
tensor models \eqref{CTVscalar} and random $\Hm(\Cp_N^{\otimes D})$ 
tensor models  \eqref{HCT} with logaritmic potential \eqref{eq:logpot}. Moreover, under the additional partial-$U(N)$ invariance, one can further reduce the self-adjoint side to tensors of order $2(D-1)$ as we now describe.

\subsubsection{Equivalence for partial-\texorpdfstring{$U(N)$}{U(N)} invariant potential \texorpdfstring{$\hat V({\rm Tr}_{(1)}(\phi{\phi^\dagger}))$}{hat V(Tr1 phi phidagger)}
}
\noindent Assuming partial-$U(N)$ invariance, we consider a complex tensor
$\phi\in\Cp_N^{\otimes D}$ distributed according to the partition function
\begin{equation}
\label{ZCT1}
    {\cal Z}_{\rm CT}[\hat V\circ {\rm Tr}_{(1)}](R)
    =\iint {\cal D}\phi^\dagger {\cal D}\phi\;
    e^{-\phi^\dagger R^{-1} \phi+\hat V({\rm Tr}_{(1)}(\phi\phi^\dagger))}\;.
\end{equation}
We also consider self-adjoint tensors
$\hat\Phi,\hat\Psi\in\Hm(\mathbb C_N^{\otimes (D-1)})$ distributed according to
\begin{equation}
\label{eq:partialUN_self-adjoint}
    {\cal Z}_{\rm \hat HT}[\hat V,\hat{Y}_{\ln}[R]]
    =\iint d\hat\Phi d\hat\Psi\;
    e^{-i \Tr(\hat\Psi\hat\Phi)
    +\hat{Y}_{\ln}[R](\hat\Psi)
    +\hat V(\hat\Phi)}\,,
    \quad
\hat{Y}_{\ln}[R](\hat\Psi)
    =-\Tr\ln({\bb 1}^{\otimes D}-iR({\bb 1}\otimes \hat\Psi))\;.
\end{equation}
In this formulation, $\hat\Phi$ plays the role of an {\it intermediate effective field},
while $\hat\Psi$ appears as a {\it dually-weighted intermediate field}.
The following theorem establishes the equivalence between these partition functions.

\begin{theorem}
\label{theorem:partialtensor}
   Consider the $\Cp_N^{\otimes D}$ random tensor model defined by
   \eqref{ZCT1},
   and the ${\rm H}(\Cp_N^{\otimes (D-1)})$ random tensor model defined by \eqref{eq:partialUN_self-adjoint}.
   For any potential function $\hat{V}$, 
\begin{equation}
    \frac{\mathcal{Z}_{\rm CT}[\hat V\circ{\rm Tr}_{(1)}]
    (R)
    }{\mathcal{Z}_{\rm CT}[0]
    (R)}=\frac{\mathcal{Z}_{\rm \hat HT}[\hat V,
\hat{Y}_{\ln}
    [R]
    ]}{\mathcal{Z}_{\rm \hat HT}[0,
\hat{Y}_{\ln}
    [R]
    ]}\;
\end{equation}
as formal power series.
\end{theorem}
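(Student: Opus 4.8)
The plan is to mirror the intermediate-field argument used in the proof of Theorem~\ref{mainthm}, but now constraining the \emph{partial} trace ${\rm Tr}_{(1)}(\phi\phi^\dagger)$ rather than the full operator $\phi\phi^\dagger$. Since $\phi\phi^\dagger$ is self-adjoint, so is its partial trace ${\rm Tr}_{(1)}(\phi\phi^\dagger)\in\Hm(\Cp_N^{\otimes(D-1)})$, which makes an insertion over the lower-order space well-defined. I would first write
\[
{\cal Z}_{\rm CT}[\hat V\circ {\rm Tr}_{(1)}](R)=\iint {\cal D}\phi^\dagger {\cal D}\phi\int {\cal D}\hat\Phi\;\delta\lp{\rm Tr}_{(1)}(\phi\phi^\dagger)-\hat\Phi\rp\,e^{-\phi^\dagger R^{-1}\phi+\hat V(\hat\Phi)},
\]
and then represent the delta function as a Fourier integral over $\hat\Psi\in\Hm(\Cp_N^{\otimes(D-1)})$, with the normalization ${\cal N}_{\rm \hat HT}=\iint {\cal D}\hat\Phi{\cal D}\hat\Psi\;e^{-i\Tr(\hat\Psi\hat\Phi)}={\cal Z}_{\rm \hat HT}[0,0]$.

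The step I expect to be the crux is rewriting the source term $\Tr\lp\hat\Psi\,{\rm Tr}_{(1)}(\phi\phi^\dagger)\rp$, which lives on $\Cp_N^{\otimes(D-1)}$, as a bilinear form in the full tensor $\phi\in\Cp_N^{\otimes D}$. A short computation in components, using the definition of the partial trace together with $[\phi\phi^\dagger]^{\bm a}_{\bm b}=\phi^{\bm a}\phi^\dagger_{\bm b}$, should yield the adjunction identity
\[
\Tr\lp\hat\Psi\,{\rm Tr}_{(1)}(\phi\phi^\dagger)\rp=\phi^\dagger\lp{\bb 1}\otimes\hat\Psi\rp\phi,
\]
which expresses that ${\rm Tr}_{(1)}$ is dual to the embedding $\hat\Psi\mapsto{\bb 1}\otimes\hat\Psi$ with respect to the trace pairing. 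This is precisely the mechanism that promotes the order-$(D-1)$ intermediate field $\hat\Psi$ to an operator on $\Cp_N^{\otimes D}$ that modifies the quadratic form of $\phi$, and it is what allows the lower-order self-adjoint model to emerge; getting the index bookkeeping of this identity right (in particular that the inert ${\bb 1}$ sits in the traced-out slot $1$) is the only genuinely delicate point.

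With this identity in hand, the exponent becomes $-\phi^\dagger\lp R^{-1}-i({\bb 1}\otimes\hat\Psi)\rp\phi-i\Tr(\hat\Psi\hat\Phi)+\hat V(\hat\Phi)$, so the $\phi$-integral is Gaussian and I would evaluate it as ${\cal N}_{\rm CT}\det\lp{\bb 1}^{\otimes D}-iR({\bb 1}\otimes\hat\Psi)\rp^{-1}$ with ${\cal N}_{\rm CT}={\cal Z}_{\rm CT}[0](R)$, exactly as in the proof of Theorem~\ref{mainthm}. Writing $\det=e^{\Tr\ln}$ then reproduces the dually weighted logarithmic potential
\[
\hat Y[R](\hat\Psi)=-\Tr\ln\lp{\bb 1}^{\otimes D}-iR({\bb 1}\otimes\hat\Psi)\rp
\]
appearing in \eqref{eq:partialUN_self-adjoint}. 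The leftover integral over $\hat\Phi,\hat\Psi$ is then precisely ${\cal Z}_{\rm \hat HT}[\hat V,\hat Y[R]]$; since $\hat Y[R](0)=0$ gives ${\cal N}_{\rm \hat HT}={\cal Z}_{\rm \hat HT}[0,\hat Y[R]]$, collecting the $\hat V$-independent prefactors ${\cal N}_{\rm \hat HT}^{-1}{\cal N}_{\rm CT}$ yields the claimed ratio identity.
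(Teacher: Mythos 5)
Your proposal is correct and follows essentially the same route as the paper's proof: the same delta-function insertion over $\Hm(\Cp_N^{\otimes(D-1)})$, the same Fourier representation with normalization ${\cal N}_{\rm \hat HT}={\cal Z}_{\rm \hat HT}[0,0]$, the key identity $\Tr\lp\hat\Psi\,{\rm Tr}_{(1)}(\phi\phi^\dagger)\rp=\phi^\dagger({\bb 1}\otimes\hat\Psi)\phi$, the Gaussian integration producing $\det({\bb 1}^{\otimes D}-iR({\bb 1}\otimes\hat\Psi))^{-1}$, and the final identification via $\hat Y[R](0)=0$. You also correctly isolate the one delicate point (the inert identity sitting in the traced-out slot), which the paper handles in exactly the same way.
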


\begin{proof}
For any $\phi\in \Cp_N^{\otimes D}$, the partial trace
${\rm Tr}_{(1)}(\phi\phi^\dagger)$ is self-adjoint,
${\rm Tr}_{(1)}(\phi\phi^\dagger) \in \Hm(\Cp_N^{\otimes (D-1)})$.  
We insert a delta function over $\Hm(\Cp_N^{\otimes (D-1)})$,
\begin{equation}
    {\cal Z}_{\rm CT}[\hat V\circ {\rm Tr}_{(1)}]
    (R)
    =\iint {\cal D}\phi^\dagger {\cal D}\phi\int {\cal D}\hat\Phi\;\delta({\rm Tr}_{(1)}(\phi\phi^\dagger)-\hat\Phi)\;e^{-\phi^\dagger
    {R}^{-1}
    \phi+\hat V(\hat\Phi)}\;.
\end{equation}
The delta function is represented as an integral over
$\hat\Psi\in \Hm(\Cp_N^{\otimes (D-1)})$,
\begin{equation}
    \delta({\rm Tr}_{(1)}(\phi\phi^\dagger)-\hat\Phi)={\cal N}_{\rm \hat HT}^{-1}\int {\cal D}\hat\Psi\;e^{i\Tr \lp\hat\Psi\lp {\rm Tr}_{(1)}(\phi\phi^\dagger)-\hat\Phi\rp\rp}\;,
\end{equation}
with
\[
{\cal N}_{\rm \hat HT}=\iint {\cal D}\hat\Phi {\cal D}\hat\Psi\;e^{-i\Tr \lp\hat\Psi\hat\Phi\rp}={\cal Z}_{\rm HT}[0,0]\;.
\]
This gives
\begin{equation}
    {\cal Z}_{\rm CT}[\hat V\circ {\rm Tr}_{(1)}]
    (R)
    ={\cal N}_{\rm \hat HT}^{-1}\iint {\cal D}\phi^\dagger {\cal D}\phi\iint {\cal D}\hat\Phi {\cal D}\hat\Psi\;e^{-\phi^\dagger
    {R}^{-1}
    \phi+i\Tr \lp\hat\Psi\lp{\rm Tr}_{(1)}(\phi\phi^\dagger)-\hat\Phi\rp\rp+V(\hat\Phi)}\;.
\end{equation}
Using
\(
\Tr\lp\hat\Psi{\rm Tr}_{(1)}(\phi\phi^\dagger)\rp=\phi^\dagger({\bb 1}\otimes\hat\Psi)\phi
\),
we rewrite the exponent and
\begin{equation}
    {\cal Z}_{\rm CT}[\hat V\circ {\rm Tr}_{(1)}]
    (R)
    ={\cal N}_{\rm \hat HT}^{-1}\iint {\cal D}\phi^\dagger {\cal D}\phi\iint {\cal D}\hat\Phi {\cal D}\hat\Psi\;e^{-\phi^\dagger \lp 
    R^{-1}
    -i\,{\bb 1}\otimes\hat\Psi\rp\phi-i\Tr \lp\hat\Psi\hat\Phi\rp+V(\hat\Phi)}\;.
\end{equation}
For sufficiently well-behaved interactions, one may exchange the order of
integration and perform the integral over $\phi$ first. More generally, this
step can be understood at the level of formal power series. The integral over
$\phi$ is Gaussian, and can therefore be computed explicitly,
\begin{equation}
    \iint {\cal D}\phi^\dagger {\cal D}\phi\;e^{-\phi^\dagger\lp
    {R}^{-1}
    -i\,{\bb 1}\otimes\hat\Psi\rp\phi}={\cal N}_{\rm CT}\det({\bb 1}^{\otimes D}-i
    R
    ({\bb 1}\otimes\hat\Psi))^{-1}\;,
\end{equation}
where
\(
{\cal N}_{\rm CT}=\iint {\cal D}\phi^\dagger {\cal D}\phi\;e^{-\phi^\dagger 
R^{-1}
\phi}={\cal Z}_{\rm CT}[0]
(R)
\;.
\)
Using $\det=e^{\Tr\ln}$, we obtain
\begin{equation}
    {\cal Z}_{\rm CT}[\hat V\circ {\rm Tr}_{(1)}]
    (R)
    ={\cal N}_{\rm \hat HT}^{-1}{\cal N}_{\rm CT} \iint {\cal D}\hat\Phi {\cal D}\hat\Psi\;e^{-\Tr\ln\lp {\bb 1}^{\otimes D}-i
    R
    ({\bb 1}\otimes\hat\Psi)\rp-i\Tr \lp\hat\Psi\hat\Phi\rp+V(\hat\Phi)}\;.
\end{equation}
The final integral is ${\cal Z}_{\rm \hat HT}[\hat V,
\hat{Y}_{\ln}
[R]]$, so
\begin{equation}
    {\cal Z}_{\rm CT}[\hat V\circ {\rm Tr}_{(1)}]
    (R)
    ={\cal N}_{\rm \hat HT}^{-1}{\cal N}_{\rm CT}{\cal Z}_{\rm \hat HT}[\hat V,
\hat{Y}_{\ln}
    [R]
    ]\;.
\end{equation}
Since $
\hat{Y}_{\ln}
[R](0)=0$, one has ${\cal N}_{\rm \hat HT}={\cal Z}_{\rm \hat HT}[0,
\hat{Y}_{\ln}
[R]]$, which completes the proof.
\end{proof}
\medskip
\noindent
\emph{Remark.}
The same considerations as in the matrix case and in
Theorem~\ref{mainthm} apply: the equivalence is understood at the level of
formal power series in general, and extends to convergent tensor integrals
whenever the latter are well-defined.

\paragraph{Matrix equivalence as a particular tensor equivalence.}

For clarity of exposition, the equivalence for random matrix models
established in Section~\ref{sec:matrixequiv} was presented separately
from the tensor equivalence proved in Section~\ref{sec:equivtensor}.
Nevertheless, the matrix result is in fact a special case of the tensor
equivalence, corresponding to tensors of order $D=2$. To see this explicitly, we rewrite the partition function
\eqref{eq:ZCM} of the complex matrix model as
\begin{equation} \label{eq:matrixastensor}
\mathcal{Z}_{\rm CM}[V](P,Q) = \iint_{\mathbb C^{N\times N}} {\cal D}M^\dagger\,{\cal D}M\; e^{-N\sum_{ijkl}M^*_{ij} Q^{-1}_{ik} P^{-1}_{lj} M_{kl} + V([\sum_{k}M_{ki}^*M_{kj}]_{ij})}\;. 
\end{equation}
Interpreting $M$ as an order–$2$ complex tensor,
$M\in\mathbb C^{\otimes 2}_N$, the quadratic term is precisely of the
form $\phi^\dagger R^{-1}\phi$ appearing in Theorem~\ref{theorem:partialtensor},
with factorized propagator
$
R = Q \otimes P^{\tp}.
$
Moreover, the interaction depends only on the partial contraction
$M^\dagger M$, which coincides with the partial trace
$\Tr_{(1)}(\phi\phi^\dagger)$ in the tensor notation. Therefore, the matrix equivalence follows directly from
Theorem~\ref{theorem:partialtensor} upon specializing to $D=2$, and the matrix intermediate field corresponds to a
self-adjoint tensor of order $2(D-1)=2$.
The self-adjoint matrix model obtained in Section~\ref{sec:matrixequiv}
is simply the order-$2$ instance of the general self-adjoint tensor
formulation.

\subsection{Examples}
\label{sec:examplestensor}

\noindent In the following, we illustrate the tensor equivalence with several representative examples.
These include interactions of different orders and symmetry types, and demonstrate how
complex tensor models can be mapped to self-adjoint models of equal or reduced effective
tensor order, depending on the structure of the interaction.

\paragraph{Factorized propagators and causal examples.}
In the partial-$U(N)$ setting, if the propagator factorizes as $R=Q\otimes \hat P$ with $Q\in \Cp_N\otimes \Cp_N^*$ and $\hat P\in \Cp_N^{\otimes(D-1)}\otimes \Cp_N^{*\otimes(D-1)}$, the logarithmic potential \eqref{eq:partialUN_self-adjoint}
simplifies to
\begin{equation}\label{eq:factex}
    {\cal Z}_{\rm \hat HT}
    [\hat V,
\hat{Y}_{\ln}
    [R]]
    =\iint {\cal D}\hat\Phi {\cal D}\hat\Psi\;e^{-i \Tr(\hat\Psi\hat\Phi)+\sum_{k=1}^\infty \frac{i^k}{k}\Tr\lp Q^k\rp\Tr\lp(\hat P\hat\Psi)^k\rp+\hat V(\hat\Phi)}\;,
\end{equation}
with $\hat\Phi,\hat\Psi\in\Hm(\mathbb C_N^{\otimes (D-1)})$.
Due to Theorem \ref{theorem:partialtensor}, this partition function is equivalent to
\begin{equation}
    {\cal Z}_{\rm CT}[V](Q\otimes \hat P)
    =\iint {\cal D}\phi^\dagger {\cal D}\phi\;e^{-\phi^\dagger (Q\otimes \hat P)^{-1}\phi+V(\phi {\phi^\dagger})}\;.
\end{equation}
with $\phi\in\Cp_N^{\otimes D}$. Moreover, if $Q=C_2$ (see \eqref{cnprop}), one obtains the Gaussian reduction
\begin{equation}
    {\cal Z}_{\rm \hat HT}
    [\hat V,
\hat{Y}_{\ln}
    [R]]
    =\iint {\cal D}\hat\Phi {\cal D}\hat\Psi\;e^{-i \Tr(\hat\Psi\hat\Phi)-N\frac{1}{2}\Tr\lp(\hat P\hat\Psi)^2\rp+\hat V(\hat\Phi)}\;,
\end{equation}
with $\hat\Phi,\hat\Psi\in\Hm(\mathbb C_N^{\otimes (D-1)})$. Equivalently, after integrating out $\hat\Psi$,
\begin{equation}\label{eq:c2case}
    {\cal Z}_{\rm \hat HT}
    [\hat V,
\hat{Y}_{\ln}
    [R]]
    ={\cal N}\int {\cal D}\hat\Phi \;e^{-N\frac{1}{2}\Tr\lp(\hat P^{-1}\hat\Phi)^2\rp+\hat V(\hat\Phi)}\;,
\end{equation}
with
\(
{\cal N}=\int{\cal D}\hat\Psi \;e^{-N\frac{1}{2}\Tr((\hat P\hat \Psi)^2)}
\).
Due to Theorem \ref{theorem:partialtensor},
this model is equivalent to the complex tensor model
\begin{equation}
    {\cal Z}_{\rm CT}[V](C_2\otimes \hat P)
    =\iint {\cal D}\phi^\dagger {\cal D}\phi\;e^{-\phi^\dagger (C_2\otimes \hat P)^{-1}\phi+V(\phi {\phi^\dagger})}\;,
\end{equation}
with $\phi\in\Cp_N^{\otimes D}$. This provides a tensorial analogue of the causal matrix example presented in Section~\ref{examp}.

\paragraph{Equivalences for trace-invariant potentials.}
Consider a complex tensor model whose potential is given by any trace-invariant 
\[
\label{eq:excomplexscalarU1potential}
V_{[\bm \alpha]}(\phi\phi^\dagger)
=\Tr_{[\bm{\alpha}]}(\phi\phi^\dagger)
=\sum_{\bm k}\prod_{j=1}^n
\phi^\dagger_{\bm k_j}\,
\phi^{(\bm{\alpha}_*\bm k)_j},
\]
where  multi-permutation $\bm \alpha \in \Sy_n^D$ specifies the trace-invariant.
A natural choice for the corresponding self-adjoint model, with
$\Phi \in \Hm(\Cp_N^{\otimes D})$, is the trace-invariant potential dictated by the same multi-permutation $\bm \alpha$
\[
\label{eq:exsascalarU1potential}
V_{[\bm \alpha]}(\Phi)
=\Tr_{[\bm{\alpha}]}(\Phi)
=\sum_{\bm k}\prod_{j=1}^n
\Phi_{\bm k_j}^{(\bm{\alpha}_*\bm k)_j}.
\]
Then, the pair of complex and self-adjoint models with potentials
\eqref{eq:excomplexscalarU1potential} and \eqref{eq:exsascalarU1potential}
is an example 
in which a trace-invariant interaction of an order-$D$
complex tensor model is equivalent to a trace-invariant interaction of a self-adjoint
tensor model of order $2D$, according to 
Theorem~\ref{mainthm}. This choice of effective corresponding potential above \eqref{eq:exsascalarU1potential}, however, is not unique. Indeed, in the complex model the pairing between
$\phi$ and $\phi^\dagger$ is defined only up to left multiplication of
$\bm \alpha$ by an element $\mu\in\Sy_n$, since
\[
\Tr_{[\mu\bm{\alpha}]}(\phi\phi^\dagger)
=\Tr_{[\bm{\alpha}]}(\phi\phi^\dagger),
\qquad \forall\,\mu\in\Sy_n.
\]
In contrast, for the self-adjoint field $\Phi$, different representatives of the same
left-equivalence class may define inequivalent trace invariants, i.e.
$\Tr_{[\mu\bm{\alpha}]}(\Phi)\neq\Tr_{[\bm{\alpha}]}(\Phi)$ in general. This freedom can be exploited to reach a particularly simple representative, which we demonstrate below.

Writing $\bm \alpha=(\alpha_1,\bm \alpha_{\hat 1})$ with
$\bm \alpha_{\hat 1}\in\Sy_n^{D-1}$, and choosing $\mu=\alpha_1^{-1}$, one finds
\begin{equation}
\Tr_{[\bm{\alpha}]}(\phi\phi^\dagger)
=\Tr_{[\alpha_1^{-1}\bm{\alpha}_{\hat 1}]}
\!\left(\Tr_{(1)}(\phi\phi^\dagger)\right),
\end{equation}
where $\Tr_{(1)}$ denotes the partial trace over the first tensor index. As a consequence, the interaction can be equivalently expressed in terms of a
self-adjoint tensor
$\hat\Phi\in\Hm(\Cp_N^{\otimes(D-1)})$, leading to the reduced trace-invariant potential
\[
\label{eq:exsaspartialUNpotential}
\hat V_{[\bm \alpha]}(\hat\Phi)
=\Tr_{[\alpha_1^{-1}\bm{\alpha}_{\hat 1}]}(\hat\Phi)
=\sum_{\bm k_{\cdot\hat 1}}
\prod_{j=1}^n
\hat\Phi_{\bm k_{j\hat 1}}^{\big((\alpha_1^{-1}\bm{\alpha}_{\hat 1})_*\bm k\big)_{j\hat 1}}.
\]
The self-adjoint tensor model with this potential above \eqref{eq:exsaspartialUNpotential} together with the complex counterpart given in \eqref{eq:excomplexscalarU1potential}, then provides an example in which a trace-invariant interaction of an order-$D$
complex tensor model is equivalent to a trace-invariant interaction of a self-adjoint
tensor model of order $2(D-1)$, according to Theorem~\ref{theorem:partialtensor}.

More specific examples of these trace-invariant potentials 
can be found in
\begin{itemize}
\item[(i)]
\eqref{eq:pillowpotential}  and \eqref{eq:effpillowpotential} for the complex model potential \eqref{eq:excomplexscalarU1potential} and the corresponding self-adjoint potential of scalar-$U(1)$ symmetry \eqref{eq:exsascalarU1potential} respectively, 
and
\item[(ii)]
\eqref{eq:lowpillowpotential} and \eqref{eq:efflowpillowpotential} for the complex model potential \eqref{eq:excomplexscalarU1potential} and the corresponding self-adjoint model of partial-$U(N)$ symmetry \eqref{eq:exsaspartialUNpotential} respectively. 
\end{itemize}
See also the corresponding Figures  for (i) Figures \ref{fig:pillows} and \ref{fig:pillows_sa}, and (ii) Figures  \ref{l17} and \ref{l17e} for the graphic illustrations of these trace-invariant potentials.

\paragraph{ (i) From an order–$D$ quartic pillow model to an order–$2D$ quadratic self-adjoint model.}

Consider an
order-$D$
complex tensor model with quartic pillow interaction
\begin{equation}\label{eq:pillowpotential}
    V_{\rm 4}(\phi{\phi^\dagger})= N\frac{\lambda}{2}\sum_{c=1}^D\sum_{\bm a,\bm b}{\phi^\dagger}_{\bm a}\phi^{{\bm a}_{\hat{c}}\bm b }{\phi^\dagger}_{\bm b}\phi^{{\bm b}_{\hat{c}}{\bm a} }\;,
\end{equation}
where $\lambda$ is a coupling  
constant. Figure~\ref{fig:pillows} illustrates the interaction vertices of this model.

\begin{figure}[h]
    \centering
    \includegraphics[width=
    0.6\linewidth]{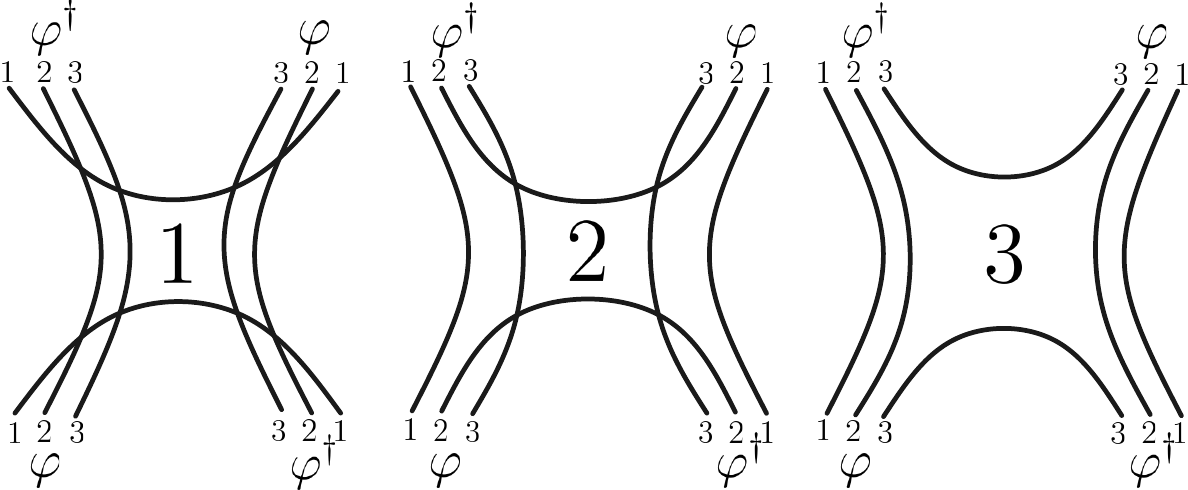}
    \caption{The three types of pillow vertices present in the $D=3$ tensor model with potential \eqref{eq:pillowpotential}.}
    \label{fig:pillows}
\end{figure}

\noindent Changing $\phi^{\bm a}\phi^{\dagger}_{\bm b}\rightarrow\Phi^{\bm a}_{\bm b}$, one finds the
corresponding potential on the self-adjoint side to be
\begin{equation}\label{eq:effpillowpotential}
     V_4(\Phi)= N\frac{\lambda}{2}\sum_{c=1}^D\sum_{\bm a,\bm b}{\Phi}_{\bm a}^{\bm a_{\hat{c}}\bm b }{\Phi}_{\bm b}^{\bm b_{\hat{c}}\bm a }\;,
\end{equation}
where we illustrate in Figure \ref{fig:pillows_sa} the interaction vertices of this model corresponding to Figure \ref{fig:pillows} for the complex tensor model potential \eqref{eq:pillowpotential}.
\begin{figure}[h]
    \centering
    \begin{subfigure}[H]{0.3\textwidth}
    \centering
    \includegraphics[width=0.8\linewidth]{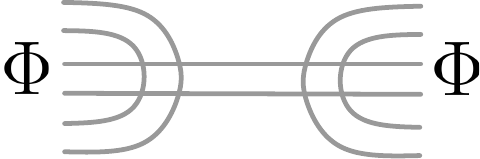}
    \caption{}
    \label{fig:spil1}
    \end{subfigure}
    \;\;
    \begin{subfigure}[H]{0.3\textwidth}
    \centering
    \includegraphics[width=0.8\linewidth]{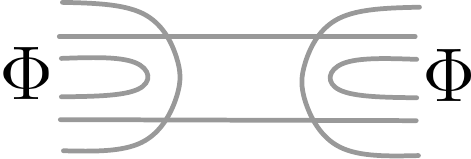}
    \caption{}
    \label{fig:spil2}
    \end{subfigure}
    \;\;
    \begin{subfigure}[H]{0.3\textwidth}
    \centering
    \includegraphics[width=0.8\linewidth]{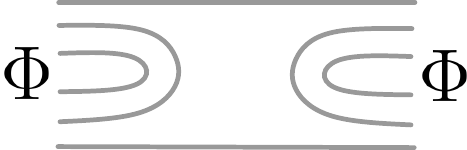}
    \caption{}
    \label{fig:spil3}
    \end{subfigure}
    \caption{The three types of quadratic vertices 
of the intermediate effective field $\Phi$
    present in the self-adjoint tensor model with potential \eqref{eq:effpillowpotential}. These are labelled consistently with the three quartic pillows in Figure \ref{fig:pillows}.
    }
    \label{fig:pillows_sa}
\end{figure}

\noindent
We consider a random complex tensor $\phi\in \mathbb{C}_N^{\otimes D}$ with
partition function
\begin{equation}
    {\cal Z}_{\rm CT}[V_{\rm 4}](R)=\iint {\cal D}{\phi^\dagger} {\cal D}\phi\,e^{-{\phi^\dagger}R^{-1}\phi
    +V_4(\phi\phi^\dagger)
    }
    \;,
\end{equation}
and a random self–adjoint tensor
$\Phi\in\Hm(\mathbb{C}_N^{\otimes (D-1)})$ distributed according to
\begin{equation}
   {\cal Z}_{\rm  HT}[ {V}_{\rm 4},
{Y}_{\ln}
   [R]]= \iint {\cal D}\Phi {\cal D}\Psi\;e^{-i\Tr \lp\Psi\Phi\rp-\Tr\ln\lp {\bb 1}^{\otimes D}-i
    R
    \Psi\rp+V_4(\Phi)}\;.
\end{equation}
Theorem~\ref{mainthm} implies that
\begin{equation}
    \frac{{\cal Z}_{\rm CT}[V_4](R)}{{\cal Z}_{\rm CT}[0](R)}=\frac{{\cal Z}_{\rm  HT}[ V_4,
{Y}_{\ln}
    [R]]}{{\cal Z}_{\rm  HT}[0,
{Y}_{\ln}
    [R]]}\;.
\end{equation}
In this example the complex quartic pillow interaction on the order–$D$ tensor is mapped to a purely quadratic interaction on the order–$2D$ self–adjoint tensor.

\paragraph{(ii) From an order–$D$ quartic pillow model to an order–$2(D-1)$ quadratic self-adjoint model.}\label{sec:pillowgaussiantensor}
Consider again an
order-$D$
complex tensor model with quartic pillow interaction as in \eqref{eq:pillowpotential}. That example does not have a unique equivalent representation. For example, by rewriting the potential \eqref{eq:pillowpotential} as 
\begin{equation}\label{eq:lowpillowpotential}
    V_{\rm 4}(\phi{\phi^\dagger})= N\frac{\lambda}{2}\sum_{\bm a,\bm b}\left({\phi^\dagger}_{\bm a}\phi^{{\bm b}_{\hat{1}}{\bm a} }{\phi^\dagger}_{\bm b}\phi^{{\bm a}_{\hat{1}}\bm b }+\sum_{c=2}^D{\phi^\dagger}_{\bm a}\phi^{{\bm a}_{\hat{c}}\bm b }{\phi^\dagger}_{\bm b}\phi^{{\bm b}_{\hat{c}}{\bm a} }\right)\;,
\end{equation}

\noindent Changing $\sum_{b_1}\phi^{\bm a_{\hat 1}\bm b}\phi^{\dagger}_{\bm b}\rightarrow\Phi^{\bm a_{\hat 1}}_{\bm b_{\hat 1}}$, one finds the
corresponding potential on the self-adjoint side to be
\begin{equation}\label{eq:efflowpillowpotential}
    \hat V_4(\Phi)=N\frac{\lambda}{2}\sum_{\bm a_{\hat 1},{\bm b}_{\hat 1}}\left({\Phi^{\bm b_{\hat 1}}_{\bm a_{\hat 1}}}{\Phi^{\bm a_{\hat 1}}_{\bm b_{\hat 1}}}+\sum_{c=2}^D\Phi^{ \bm b_{\hat 1 \hat c}\bm a_{\hat 1}}_{\bm a_{\hat 1}}\Phi^{ \bm a_{\hat 1\hat c}\bm b_{\hat 1}}_{\bm b_{\hat 1}}\right)\;.
\end{equation}
\noindent
We consider a random complex tensor $\phi\in \mathbb{C}_N^{\otimes D}$ with
partition function
\begin{equation}
\label{pfpil}
    {\cal Z}_{\rm CT}[V_{\rm 4}](R)=\iint {\cal D}{\phi^\dagger} {\cal D}\phi\,e^{-{\phi^\dagger}R^{-1}\phi
    +V_4(\phi\phi^\dagger)
    }
    \;,
\end{equation}
and a random self–adjoint tensor
$\hat\Phi\in\Hm(\mathbb{C}_N^{\otimes (D-1)})$ distributed according to
\begin{equation}
\label{pfpilef}
   {\cal Z}_{\rm \hat HT}[ \hat{V}_{\rm 4},
{Y}_{\ln}
   [R]]= \iint_{{\rm H}({\Cp_N^{\otimes (D-1)}})^2} {\cal D}\hat\Phi{\cal D}\hat\Psi\; e^{-i\Tr (\hat\Psi\hat\Phi)-\Tr\ln\lp {\bb 1}^{\otimes D}-i
    R
    ({\bb 1}\otimes\hat\Psi)\rp+\hat V_4(\Phi)}\;.
\end{equation}
Theorem~\ref{theorem:partialtensor} implies that
\begin{equation}\label{eq:V4eq}
    \frac{{\cal Z}_{\rm CT}[V_4](R)}{{\cal Z}_{\rm CT}[0](R)}=\frac{{\cal Z}_{\rm \hat HT}[\hat V_4,
{Y}_{\ln}
    [R]]}{{\cal Z}_{\rm \hat HT}[0,\
{Y}_{\ln}
    [R]]}\;.
\end{equation}
In this example the complex quartic pillow interaction on the order–$D$ tensor is mapped to a purely quadratic interaction on the order–$2(D-1)$ self–adjoint tensor.

In particular, if $R=C_2\otimes \hat P$, the self-adjoint two-tensor model 
\eqref{pfpilef}
becomes a one-tensor model with partition function as seen in \eqref{eq:c2case},
\begin{equation}\label{eq:Z_HT_v4}
    {\cal Z}_{\rm \hat HT}[\hat V_4,
\hat{Y}_{\ln}
    [C_2\otimes \hat P]]=\iint_{{\rm H}({\Cp_N^{\otimes (D-1)}})} {\cal D}\hat\Phi \;e^{-\frac{N}{2}\Tr((\hat P^{-1}\hat\Phi)^2)+\hat V_4(\hat\Phi)}\;,
\end{equation}
Additionally, in this particular case, the equality in \eqref{eq:V4eq} can be verified by explicit computation of the partition function. We elaborate on the case $D=3$ and $\hat P={\bb 1}^{\otimes 2}$ in Appendix \ref{app:pillows}. In particular, we show that
\begin{equation}
    \frac{{\cal Z}_{\rm CT}[V_4](C_2\otimes {\bb 1}^{\otimes 2})}{{\cal Z}_{\rm CT}[0](C_2\otimes {\bb 1}^{\otimes 2})}=(1-\lambda)^{-\frac{1}{2}(N^2-1)^2}
    (1-(1+N)\lambda)^{-(N^2-1)}
    (1-(1+2N)\lambda)^{-\frac{1}{2}}=\frac{{\cal Z}_{\rm \hat HT}[\hat V_4,
\hat{Y}_{\ln}
    [C_2\otimes {\bb 1}^{\otimes 2}]]}{{\cal Z}_{\rm \hat HT}[0,
\hat{Y}_{\ln}
    [C_2\otimes {\bb 1}^{\otimes 2}]]}\;.
\end{equation}

\paragraph{An order-$3$ sextic complex tensor model and its order-$4$ cubic self-adjoint counterpart.}
Consider a tensor model defined by the partition function
\begin{equation}
\label{eq:ZCTV6}
    {\cal Z}_{\rm CT}[V_6](C_2\otimes {\bb 1}^{\otimes 2})=\int d{\phi^\dagger} d\phi\,e^{-\phi^\dagger (C_2^{-1}\otimes {\bb 1}^{\otimes 2})\phi+V_6(\phi\phi^\dagger)}\;,
\end{equation}
with $\phi \in \mathbb C_N^{\otimes 3}$ and potential
\begin{equation}
\label{sextic}
\scalebox{0.8}{$\begin{aligned}
    V_{\rm 6}(\phi{\phi^\dagger})= \frac{\lambda_1}{2}&\sum_{\bm a, \bm b, \bm c}{\phi^\dagger}_{a_1a_2a_3}\phi^{b_1a_2a_3}{\phi^\dagger}_{b_1b_2b_3}\phi^{c_1b_2b_3}{\phi^\dagger}_{c_1c_2c_3}\phi^{a_1c_2c_3}
    +\frac{\lambda_2}{2}\sum_{\bm a, \bm b, \bm c}{\phi^\dagger}_{a_1a_2a_3}\phi^{a_1b_2a_3}{\phi^\dagger}_{b_1b_2b_3}\phi^{b_1c_2b_3}{\phi^\dagger}_{c_1c_2c_3}\phi^{c_1a_2c_3}\\
    +\frac{\lambda_3}{2}&\sum_{\bm a, \bm b, \bm c}{\phi^\dagger}_{a_1a_2a_3}\phi^{a_1a_2b_3}{\phi^\dagger}_{b_1b_2b_3}\phi^{b_1b_2c_3}{\phi^\dagger}_{c_1c_2c_3}\phi^{c_1c_2a_3}
    +\frac{\lambda_4}{2}\sum_{\bm a, \bm b, \bm c}{\phi^\dagger}_{a_1a_2a_3}\phi^{b_1a_2b_3}{\phi^\dagger}_{b_1b_2b_3}\phi^{a_1b_2c_3}{\phi^\dagger}_{c_1c_2c_3}\phi^{c_1c_2a_3}\\
    +\frac{\lambda_5}{2}&\sum_{\bm a, \bm b, \bm c}{\phi^\dagger}_{a_1a_2a_3}\phi^{b_1b_2a_3}{\phi^\dagger}_{b_1b_2b_3}\phi^{c_1a_2b_3}{\phi^\dagger}_{c_1c_2c_3}\phi^{a_1c_2c_3}
    +\frac{\lambda_6}{2}\sum_{\bm a, \bm b, \bm c}{\phi^\dagger}_{a_1a_2a_3}\phi^{a_1b_2b_3}{\phi^\dagger}_{b_1b_2b_3}\phi^{b_1c_2a_3}{\phi^\dagger}_{c_1c_2c_3}\phi^{c_1a_2c_3}\\
    +\frac{\lambda_7}{2}&\sum_{\bm a, \bm b, \bm c}{\phi^\dagger}_{a_1a_2a_3}\phi^{a_1b_2c_3}{\phi^\dagger}_{b_1b_2b_3}\phi^{b_1c_2a_3}{\phi^\dagger}_{c_1c_2c_3}\phi^{c_1a_2b_3}\;.
\end{aligned}$}
\end{equation}
The seven sextic-interaction terms are depicted graphically in Figure~\ref{l17}.
\begin{figure}[H]
    \begin{subfigure}{0.133\textwidth}
    \centering
    \includegraphics[width=
    \linewidth]{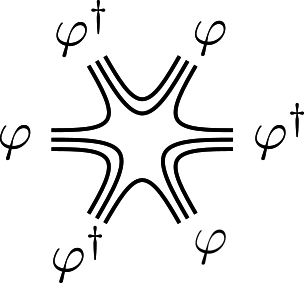}
    \caption{$\lambda_1$ vertex.}
    \label{l1}
    \end{subfigure}
    \begin{subfigure}{0.133\textwidth}
    \centering
    \includegraphics[width=
    \linewidth]{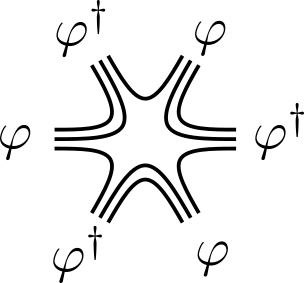}
    \caption{$\lambda_2$ vertex.}
    \label{l2}
    \end{subfigure}
    \begin{subfigure}{0.133\textwidth}
    \centering
    \includegraphics[width=
    \linewidth]{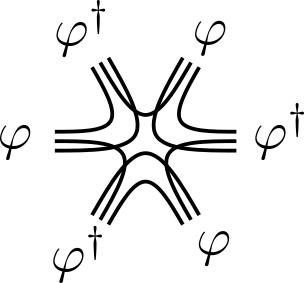}
    \caption{$\lambda_3$ vertex.}
    \label{l3}
    \end{subfigure}
    \begin{subfigure}{0.133\textwidth}
    \centering
    \includegraphics[width=
    \linewidth]{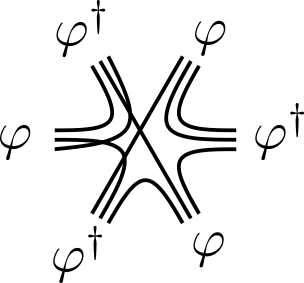}
    \caption{$\lambda_4$ vertex.}
    \label{l4}
    \end{subfigure}
    \begin{subfigure}{0.133\textwidth}
    \centering
    \includegraphics[width=
    \linewidth]{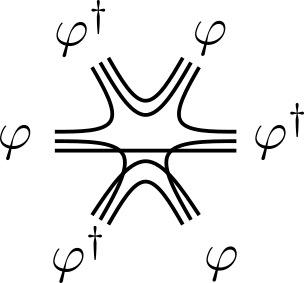}
    \caption{$\lambda_5$ vertex.}
    \label{l5}
    \end{subfigure}
    \begin{subfigure}{0.133\textwidth}
    \centering
    \includegraphics[width=
    \linewidth]{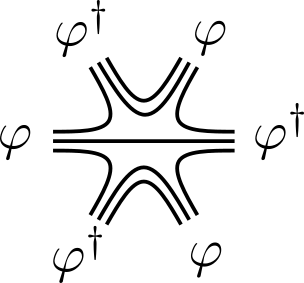}
    \caption{$\lambda_6$ vertex.}
    \label{l6}
    \end{subfigure}
    \begin{subfigure}{0.133\textwidth}
    \centering
    \includegraphics[width=
    \linewidth]{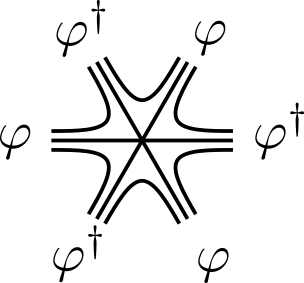}
    \caption{$\lambda_7$ vertex.}
    \label{l7}
    \end{subfigure}
    \caption{Stranded representation of the sextic vertices generated by the partition function \eqref{eq:ZCTV6}. The couplings $\lambda_1$, $\lambda_2$, and $\lambda_3$ correspond to the same bubble with different assignments of colors. The same applies to the couplings $\lambda_4$, $\lambda_5$, and $\lambda_6$. The coupling $\lambda_7$ corresponds to a vertex with the topology of a complete graph.}
    \label{l17}
\end{figure}

\noindent The equivalence stated in Theorem~\ref{theorem:partialtensor} implies that ${\cal Z}_{\rm CT}[V_6](C_2\otimes {\bb 1}^{\otimes 2})$ coincides with the partition function of the effective self-adjoint model of order-$4$ tensor
\begin{equation}
\label{eq:ZHT6}
    {\cal Z}_{\rm \hat HT}[\hat V_6,
\hat{Y}_{\ln}
    [C_2\otimes {\bb 1}^{\otimes 2}]]=\iint_{{\rm H}({\Cp_N^{\otimes 2}})} {\cal D}\hat\Phi \;e^{-\frac{N}{2}\Tr(\hat\Psi^2)+\hat V_6(\hat\Psi)}\;,
\end{equation}
where $\hat\Phi$ is a self–adjoint tensor, $\hat\Phi^{a_2a_3}_{b_2b_3}=(\hat\Phi^{b_2b_3}_{a_2a_3})^*$, and the potential is
\begin{equation}
\label{eq:hatV6}
\begin{split}
    \hat V_{\rm 6}(\hat\Phi)=\frac{\lambda_1}{2}&\sum_{\bm a_{\hat 1}, \bm b_{\hat 1}, \bm c_{\hat 1}}\hat\Phi^{c_2c_3}_{a_2a_3}\hat\Phi^{a_2a_3}_{b_2b_3}\hat\Phi^{b_2b_3}_{c_2c_3}
    +\frac{\lambda_2}{2}\sum_{\bm a_{\hat 1}, \bm b_{\hat 1}, \bm c_{\hat 1}}\hat\Phi^{b_2a_3}_{a_2a_3}\hat\Phi^{c_2b_3}_{b_2b_3}\hat\Phi^{a_2c_3}_{c_2c_3}\\
    +\frac{\lambda_3}{2}&\sum_{\bm a_{\hat 1}, \bm b_{\hat 1}, \bm c_{\hat 1}}\hat\Phi^{a_2b_3}_{a_2a_3}\hat\Phi^{b_2c_3}_{b_2b_3}\hat\Phi^{c_2a_3}_{c_2c_3}
    +\frac{\lambda_4}{2}\sum_{\bm a_{\hat 1}, \bm b_{\hat 1}, \bm c_{\hat 1}}\hat\Phi^{b_2c_3}_{a_2a_3}\hat\Phi^{a_2b_3}_{b_2b_3}\hat\Phi^{c_2a_3}_{c_2c_3}\\
    +\frac{\lambda_5}{2}&\sum_{\bm a_{\hat 1}, \bm b_{\hat 1}, \bm c_{\hat 1}}\hat\Phi^{c_2c_3}_{a_2a_3}\hat\Phi^{b_2a_3}_{b_2b_3}\hat\Phi^{a_2b_3}_{c_2c_3}
    +\frac{\lambda_6}{2}\sum_{\bm a_{\hat 1}, \bm b_{\hat 1}, \bm c_{\hat 1}}\hat\Phi^{b_2b_3}_{a_2a_3}\hat\Phi^{c_2a_3}_{b_2b_3}\hat\Phi^{a_2c_3}_{c_2c_3}\\
    +\frac{\lambda_7}{2}&\sum_{\bm a_{\hat 1}, \bm b_{\hat 1}, \bm c_{\hat 1}}\hat\Phi^{b_2c_3}_{a_2a_3}\hat\Phi^{c_2a_3}_{b_2b_3}\hat\Phi^{a_2b_3}_{c_2c_3}\;.
\end{split}
\end{equation}
In this self-adjoint model ${\cal Z}_{\rm \hat HT}[V_6](C_2\otimes {\bb 1}^{\otimes 2})$ \eqref{eq:ZHT6}, the interactions are cubic, and the corresponding vertices are shown in Figure~\ref{l17e}.
\begin{figure}[H]
    \begin{subfigure}{0.133\textwidth}
    \centering
    \includegraphics[width=
    \linewidth]{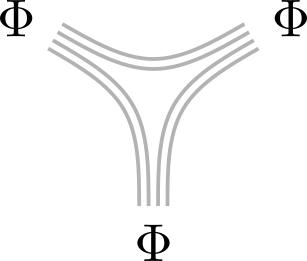}
    \caption{$\lambda_1$ vertex.}
    \label{l1e}
    \end{subfigure}
    \begin{subfigure}{0.133\textwidth}
    \centering
    \includegraphics[width=
   \linewidth]{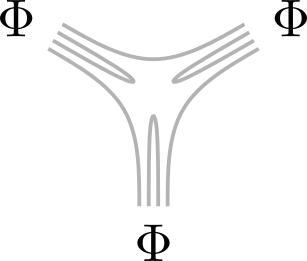}
    \caption{$\lambda_2$ vertex.}
    \label{l2e}
    \end{subfigure}
    \begin{subfigure}{0.133\textwidth}
    \centering
    \includegraphics[width=
    \linewidth]{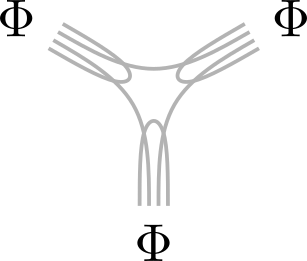}
    \caption{$\lambda_3$ vertex.}
    \label{l3e}
    \end{subfigure}
    \begin{subfigure}{0.133\textwidth}
    \centering
    \includegraphics[width=
    \linewidth]{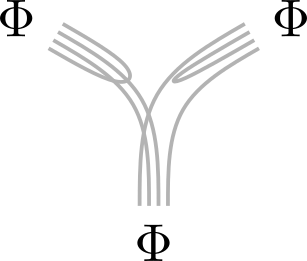}
    \caption{$\lambda_4$ vertex.}
    \label{l4e}
    \end{subfigure}
    \begin{subfigure}{0.133\textwidth}
    \centering
    \includegraphics[width=
    \linewidth]{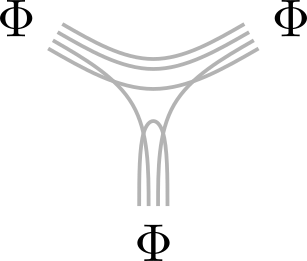}
    \caption{$\lambda_5$ vertex.}
    \label{l5e}
    \end{subfigure}
    \begin{subfigure}{0.133\textwidth}
    \centering
    \includegraphics[width=
    \linewidth]{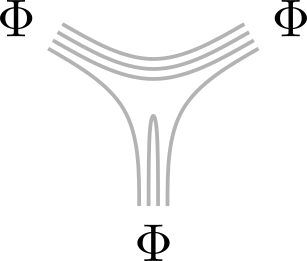}
    \caption{$\lambda_6$ vertex.}
    \label{l6e}
    \end{subfigure}
    \begin{subfigure}{0.133\textwidth}
    \centering
    \includegraphics[width=
    \linewidth]{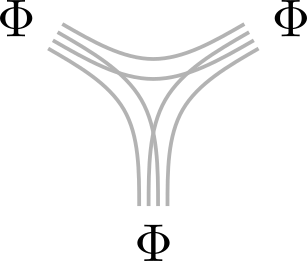}
    \caption{$\lambda_7$ vertex.}
    \label{l7e}
    \end{subfigure}
    \caption{Stranded representation of the cubic vertices 
of the intermediate effective field $\Phi$
    generated by the partition function \eqref{eq:ZHT6}.
    }
    \label{l17e}
\end{figure}

It is worth noticing that there is an intermediate action where all fields are present. In this sextic model, it is equal to
\begin{equation}\label{eq:intpf}
    {\cal Z}_{\rm INT}[\hat V_6,
\hat{Y}_{\ln}
    [C_2\otimes {\bb 1}^{\otimes 2}]]
    =\iint {\cal D}\phi^\dagger {\cal D}\phi\iint {\cal D}\hat\Phi {\cal D}\hat\Psi\;e^{-\phi^\dagger \lp 
    R^{-1}
    -i\,{\bb 1}\otimes\hat\Psi\rp\phi-i\Tr \lp\hat\Psi\hat\Phi\rp+\hat V_6(\hat\Phi)}\;,
\end{equation}
and incorporates the transformations shown in Fig.~\ref{fig:vertexint}.
\begin{figure}[H]
    \centering
    \includegraphics[width=0.15
    \linewidth]{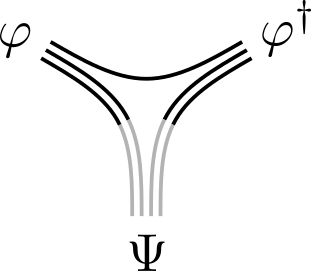}
    \caption{The interaction vertex $\phi^\dagger\Psi\phi$, which appears in \eqref{eq:intpf}.
    }
    \label{fig:Phiphi}
\end{figure}
\begin{figure}[H]
    \centering
    \setcounter{subfigure}{0}
    \begin{subfigure}[H]{0.3\textwidth}
    \centering
    \includegraphics[width=\linewidth]{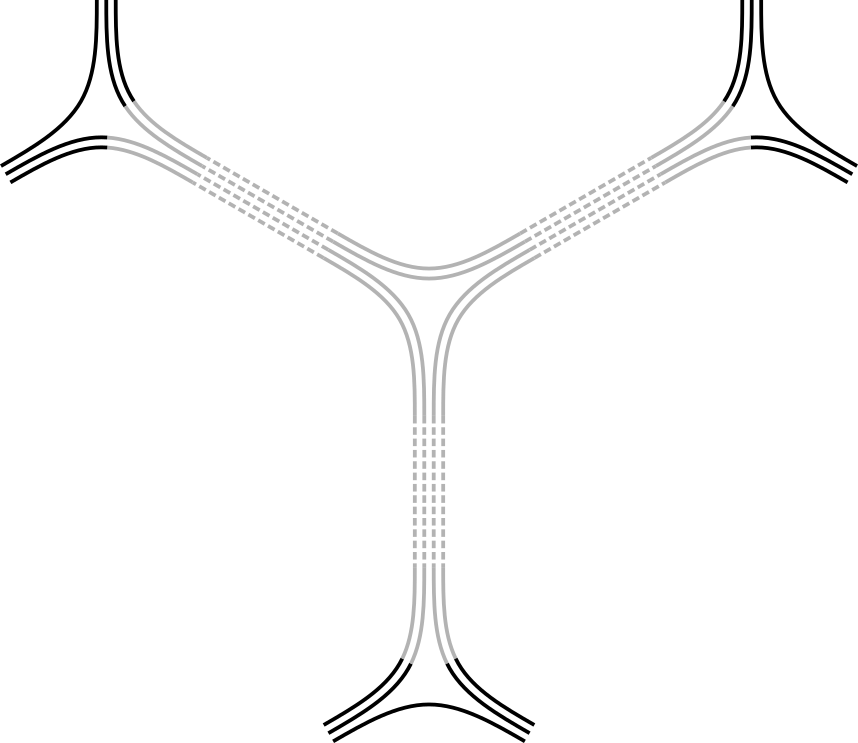}
    \caption{}
    \label{fig:vertexint1}
    \end{subfigure}
    \;\;
    \setcounter{subfigure}{5}
    \begin{subfigure}[H]{0.3\textwidth}
    \centering
    \includegraphics[width=\linewidth]{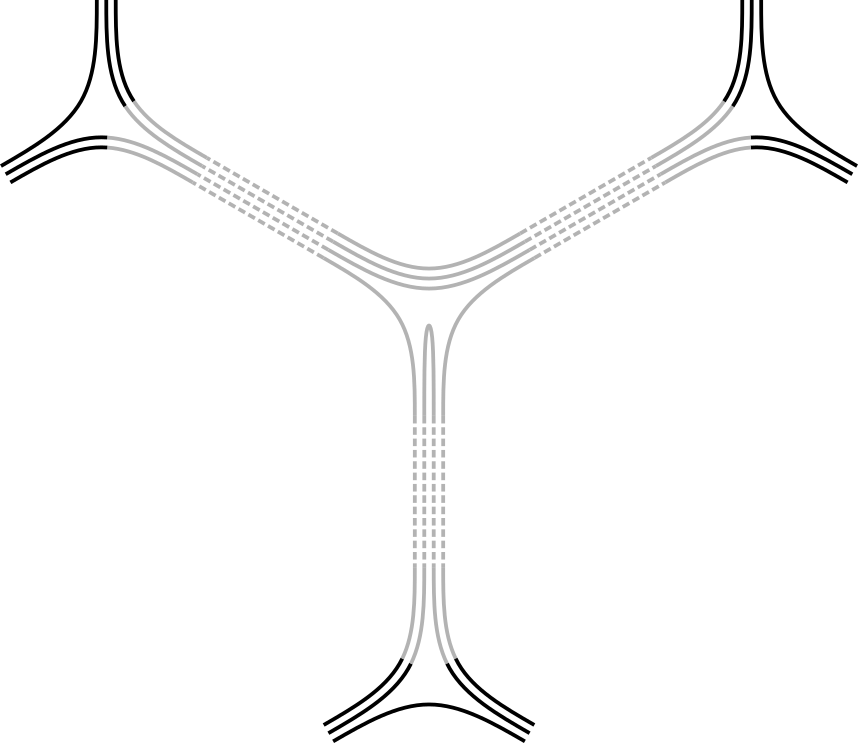}
    \caption{}
    \label{fig:vertexint2}
    \end{subfigure}
    \;\;
    \setcounter{subfigure}{6}
    \begin{subfigure}[H]{0.3\textwidth}
    \centering
    \includegraphics[width=\linewidth]{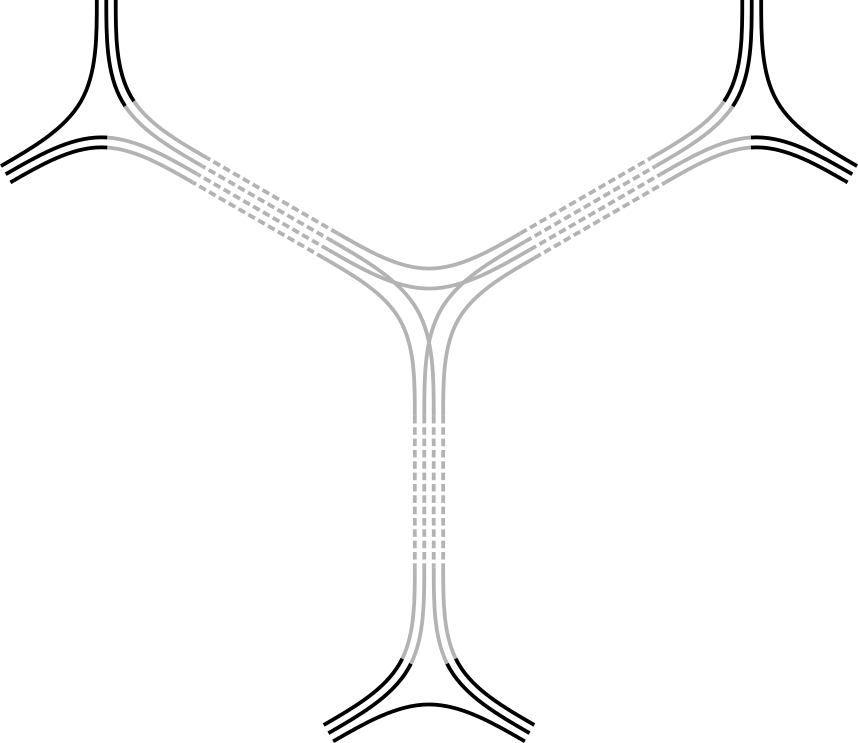}
    \caption{}
    \label{fig:vertexint3}
    \end{subfigure}
    \caption{Examples of the intermediate steps (according to \eqref{eq:intpf}) in the equivalence between the sextic vertex represented in Fig. \ref{l1}, Fig. \ref{l6}, Fig. \ref{l7}, and the cubic vertex represented in Fig. \ref{l1e}, Fig. \ref{l6e}, Fig. \ref{l7e} ((a), (f) and (g), respectively).
Intermediate fields $\Psi$ and $\Phi$ are presented in gray lines as in Figure~\ref{fig:PhiPsiprop}.
}
    \label{fig:vertexint}
\end{figure}

\paragraph{From an order–$D$ quartic enhanced pillow model to an order–$2D$ quadratic self-adjoint model.}

Consider an order-$D$ complex tensor model with enhanced quartic pillow interaction
\begin{equation}\label{eq:enhancedpillowpotential}
    V_{\rm 4}(\phi{\phi^\dagger})= N\frac{\lambda}{2}\sum_{\kappa=1}^D\sum_{\bm a,\bm b, \bm c}{\psi^\dagger}_{\bm c}\psi^{{\bm c}_{\hat{\kappa}}{\bm a} }{\phi^\dagger}_{\bm a}\phi^{{\bm a}_{\hat{\kappa}}\bm b }{\phi^\dagger}_{\bm b}\phi^{{\bm b}_{\hat{\kappa}}{\bm c} }\;,
\end{equation}
where $\lambda$ is a coupling  
constant, and $\psi$ is a constant tensor 
which effectively adds weights to certain effective strands (see Figure \ref{fig:pillows2}): the constant tensor $\psi$, then can be viewed to modify the usual pillow model, which is explored in \eqref{eq:pillowpotential}. 
\begin{figure}[H]
    \centering
    \includegraphics[width=0.7
    \linewidth]{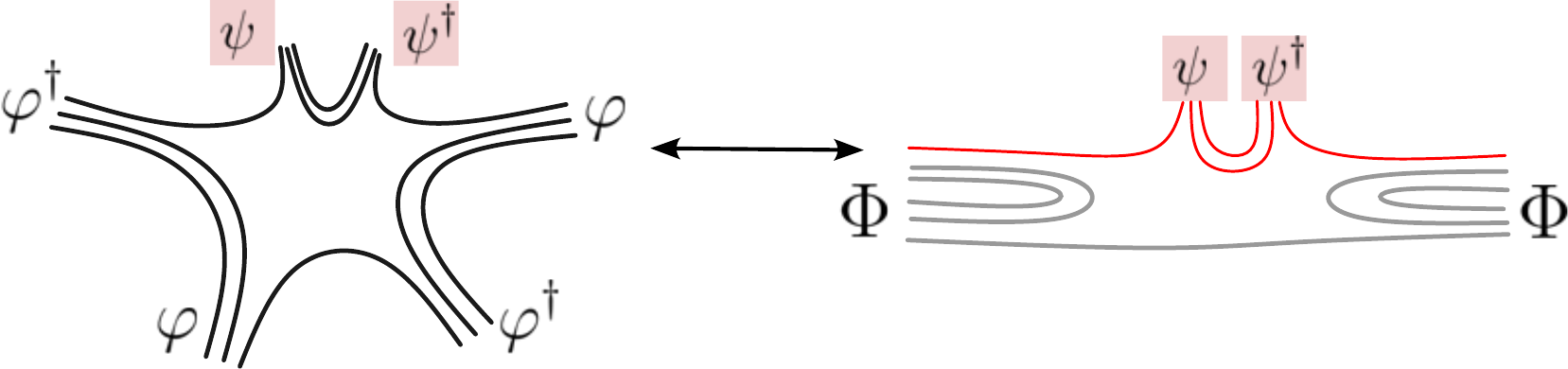}
    \caption{Examples of the interaction bubbles present in \textbf{left}: the  complex tensor model with quartic potential \eqref{eq:enhancedpillowpotential} and \textbf{right}: the self-transpose tensor model with quadratic potential  \eqref{eq:effenhancedpillowpotential} for the case $D=3$. In this case, the thinner red lines correspond to the indices contracted with $\psi$ and $\psi^\dagger$.
    }
    \label{fig:pillows2}
\end{figure}

\noindent Changing $\phi^{\bm a}\phi^{\dagger}_{\bm b}\rightarrow\Phi^{\bm a}_{\bm b}$, one finds the
corresponding potential on the self-adjoint side to be
\begin{equation}\label{eq:effenhancedpillowpotential}
     V_4(\Phi)= N\frac{\lambda}{2}\sum_{\kappa=1}^D\sum_{\bm a,\bm b, \bm c}({\psi\psi^\dagger})_{\bm c}^{{\bm c}_{\hat{\kappa}}{\bm a} }{\Phi}_{\bm a}^{{\bm a}_{\hat{\kappa}}\bm b }{\Phi}_{\bm b}^{{\bm b}_{\hat{\kappa}}{\bm c} }\;.
\end{equation}
An example of the type of vertices in this model is shown in Figure \ref{fig:pillows2}.
We consider a random complex tensor $\phi\in \mathbb{C}_N^{\otimes D}$ with
partition function with potential \eqref{eq:enhancedpillowpotential}
\begin{equation}
    {\cal Z}_{\rm CT}[V_{\rm 4}](R)=\iint {\cal D}{\phi^\dagger} {\cal D}\phi\,e^{-{\phi^\dagger}R^{-1}\phi
    +V_4(\phi\phi^\dagger)
    }
    \;,
\end{equation}
and a random self–adjoint tensor
$\Phi\in\Hm(\mathbb{C}_N^{\otimes D})$ distributed according to
\begin{equation}
   {\cal Z}_{\rm  HT}[ {V}_{\rm 4},
{Y}_{\ln}
   [R]]= \iint {\cal D}\Phi {\cal D}\Psi\;e^{-i\Tr \lp\Psi\Phi\rp-\Tr\ln\lp {\bb 1}^{\otimes D}-i
    R
    \Psi\rp+V_4(\Phi)}\;,
\end{equation}
with potential \eqref{eq:effenhancedpillowpotential}.
Theorem~\ref{mainthm} implies that
\begin{equation}
    \frac{{\cal Z}_{\rm CT}[V_4](R)}{{\cal Z}_{\rm CT}[0](R)}=\frac{{\cal Z}_{\rm HT}[ V_4,
{Y}_{\ln}
    [R]]}{{\cal Z}_{\rm  HT}[0,
{Y}_{\ln}
    [R]]}\;.
\end{equation}
In this example the enhanced quartic pillow interaction on the order–$D$ complex tensor is mapped to a quadratic interaction on the order–$2D$ self–adjoint tensor.

\subsection{A prospect on real tensor models}\label{sec:tetragaussian}
\noindent Up to this point, we have shown equivalences of complex tensor models only. However, real tensor models are of considerable interest for quantum gravity, and several families have been extensively studied in the physics literature~\cite{Eichhorn:2018phj,Eichhorn:2018ylk,Eichhorn:2019hsa}. One motivation is that the numerical values of the critical exponents characterizing the continuum limit appear to be insensitive to whether the tensor ensemble is complex or real. Since the real case is technically simpler, such models are often preferred in FRG.
Nevertheless, although some results have suggested the existence of a continuum limit compatible with four-dimensional quantum gravity~\cite{Eichhorn:2018phj,Eichhorn:2018ylk,Eichhorn:2019hsa},
no study has yet demonstrated the emergence of geometries beyond the branched–polymer phase. This limitation has renewed interest in introducing a causal structure, in close analogy with what has been achieved in the causal matrix model case~\cite{Benedetti:2008hc,Castro:2020dzt}. It is precisely in this context that the equivalences established in this work may provide a useful framework for implementing such constraints.

In this subsection, we establish an equivalence between an order-$3$ real tensor model with the matrix $C_2$ \eqref{cnprop} inserted in the first strand of the propagator, and an order-$4$ real self-transpose tensor model without $C_2$. We define the self-transpose tensor model by the following condition: for $\Phi \in {\bb R}_N^{\otimes(D-1)}\otimes{\bb R}_N^{\otimes(D-1)}$, their tensor elements $\Phi^{\bm{a}}_{\bm{b}}$, with $\bm{a},\bm{b}\in \{1,...,N\}^{D-1}$, satisfy $\Phi^{\bm{a}}_{\bm{b}}=\Phi^{\bm{b}}_{\bm{a}}$.

This result shows that a tensor model implementing a causal-like\footnote{
However, this model generates graphs that are strictly a subset of CDT as their dual triangulation just like the model presented in Subection~\ref{subsec:GLmodel}.}
constraint can be reformulated as a model in which this constraint is traded for a symmetry of the tensor indices (namely, symmetry under transposition). As a consequence, this equivalence offers a setting in which the continuum limit may be investigated using FRG techniques without having to deal explicitly with the presence of the constant matrix $C_2$. Whether this reformulation is technically simpler in practice remains an open question and is left for future investigation. Moreover, in the tensor models considered here, the interaction are of tetrahedral type. This structure is more closely aligned with the geometric picture of gluing $D$-simplices to construct a discrete spacetime, thereby their connection with simplicial approaches to quantum gravity is more straightforward.

Given this motivation, we move on to probing the equivance by computing the partition functions of both models. We do this for the case $D=3$, however, we expect that this equivalence should be generalized easily to any $D$.

First, for the order-$3$ real tensor model we obtain the following result.
\begin{proposition} \label{prop:realtensor}
Let $\phi \in\mathbb R_N^{\otimes 3}$ be a random tensor distributed according to the partition function
\begin{equation}
\label{eq:ZRT4}
{\cal Z}_{\rm RT}(\lambda)=\int d\phi\,e^{-S_{\rm RT}[\phi](\lambda)}\;
\end{equation}
where,
\begin{equation}
    S_{\rm RT}[\phi](\lambda)= N\Big(\frac{1}{2}\sum_{ijk}\phi^{ijk}[C^{-1}_2]_{ii'}\phi^{i'jk}+\frac{\lambda}{4}\sum_{ijk,i'j'k'}\phi^{ijk}\phi^{ij'k'}\phi^{i'jk'}\phi^{i'j'k}\Big)\;
\end{equation}
and $C_2$ is defined as in \eqref{cnprop}.
Then, \eqref{eq:ZRT4} can be written as follows 
\begin{equation}
    \frac{{\cal Z}_{\rm RT}(\lambda)}{{\cal Z}_{\rm RT}(0)}=(1+\lambda)^{-\frac{1}{2}\left(\frac{N(N+1)}{2}\right)^2}(1-\lambda)^{-\frac{1}{2}\left(\frac{N(N-1)}{2}\right)^2}\;.
\end{equation}
\end{proposition}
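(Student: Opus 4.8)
The plan is to trade the order-$3$ real tensor $\phi$ for the self-transpose order-$4$ tensor obtained by contracting the $C_2$-carrying strand, and then to observe that on this new field both the induced kinetic term and the interaction are \emph{quadratic}, so that the entire integral is Gaussian. Concretely, I would first introduce the $N^2\times N^2$ real symmetric matrix $\hat\Phi_{(jk)(j'k')}:=\sum_i \phi^{ijk}\phi^{ij'k'}$, which is exactly the partial trace $\Tr_{(1)}(\phi\phi^\dagger)$ of Section~\ref{sec:notations} specialized to the real case (contracting color~$1$, i.e.\ the $C_2$ strand). A direct index bookkeeping then shows that the tetrahedral quartic is precisely a quadratic form in $\hat\Phi$,
\begin{equation}
\sum_{ijk,i'j'k'}\phi^{ijk}\phi^{ij'k'}\phi^{i'jk'}\phi^{i'j'k}
=\sum_{jk,j'k'}\hat\Phi_{(jk)(j'k')}\,\hat\Phi_{(jk')(j'k)}
=\Tr\!\big(\hat\Phi\,\tau\hat\Phi\big),
\end{equation}
where $\tau$ is the partial transposition exchanging only the color-$3$ labels $k\leftrightarrow k'$ of the row and column multi-indices. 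The interaction is thus governed entirely by the involution $\tau$ ($\tau^2={\bb 1}$) acting on symmetric matrices.

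Next I would run the intermediate-field argument of Theorem~\ref{theorem:partialtensor} in its real incarnation, with factorized covariance $R=C_2\otimes{\bb 1}^{\otimes 2}$. That is, insert a delta function fixing $\hat\Phi=\hat\Phi(\phi)$, Fourier-represent it with a real symmetric intermediate field $\hat\Psi$, and perform the (now Gaussian) integral over $\phi$. This produces a factor $\det\!\big(N\,C_2^{-1}\otimes{\bb 1}-2i\,{\bb 1}\otimes\hat\Psi\big)^{-1/2}$, whose $\Tr\ln$ expands as $\sum_{p\ge1}\tfrac1p(2i/N)^p\,\Tr(C_2^p)\,\Tr(\hat\Psi^p)$. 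Here the defining property $\Tr(C_2^p)=N\delta_{p,2}$ of \eqref{cnprop} does the essential work: it annihilates every term except $p=2$, collapsing the induced potential for $\hat\Psi$ to a pure Gaussian $\propto\Tr(\hat\Psi^2)$. Integrating out the now-Gaussian $\hat\Psi$ then leaves a Gaussian integral over the symmetric matrix $\hat\Phi$ with quadratic form proportional to ${\bb 1}+\lambda\tau$; all $\lambda$-independent Jacobians and Gaussian prefactors coincide at coupling $\lambda$ and at $\lambda=0$, hence cancel in the ratio.

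The final step is to diagonalize $\tau$ on the space of real symmetric $N^2\times N^2$ matrices and read off the Gaussian determinant. Writing the row/column multi-indices inside $(\mathbb R^N_j\otimes\mathbb R^N_k)$ and regrouping as $(\mathbb R^N_j\otimes\mathbb R^N_{j'})\otimes(\mathbb R^N_k\otimes\mathbb R^N_{k'})$, the symmetry of $\hat\Phi$ is the joint flip $P_j\otimes P_k=+{\bb 1}$, whereas $\tau={\bb 1}_j\otimes P_k$. Decomposing each two-index factor into its symmetric and antisymmetric parts shows that symmetric $\hat\Phi$ split as $(S_j\otimes S_k)\oplus(A_j\otimes A_k)$, with $\tau=+1$ on the first summand (dimension $(N(N+1)/2)^2$) and $\tau=-1$ on the second (dimension $(N(N-1)/2)^2$). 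The Gaussian integral over these modes then yields $(1+\lambda)^{-\frac12(N(N+1)/2)^2}(1-\lambda)^{-\frac12(N(N-1)/2)^2}$ for the ratio, as claimed.

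The step I expect to be the main obstacle is the use of $C_2$: no finite-$N$ matrix can satisfy $\Tr(C_2^p)=N\delta_{p,2}$ for \emph{all} $p$ (there are only $N$ eigenvalues), so truncating the $\Tr\ln$ to its $p=2$ term—and thereby reducing the intermediate field to an exactly Gaussian one—is only legitimate under the large-$N$ definition of $C_2$ recalled after \eqref{cnprop}. I would therefore need to argue either that the powers $\Tr(\hat\Psi^p)$ with $p>N$ do not affect the quantity computed, or that the identity is to be read in the $N\to\infty$ sense in which $C_2$ is defined; making this precise is the delicate point, whereas the algebraic reduction to a quadratic form and the $\tau$-diagonalization are routine.
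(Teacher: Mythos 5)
Your proposal is correct, but it takes a genuinely different route from the paper's. The paper proves Proposition~\ref{prop:realtensor} by direct perturbative resummation: it expands the free energy in $\lambda$, evaluates the connected Gaussian correlators $\langle(\phi^{ijk}\phi^{ij'k'}\phi^{i'jk'}\phi^{i'j'k})^p\rangle_c$ by counting the necklace graphs forced by the two-$C_2$-per-face constraint (with parity-dependent weights $N^4+N^2$ for even $p$ and $2N^3$ for odd $p$), and resums the series into the two logarithms. You instead run a real incarnation of the intermediate-field argument of Theorem~\ref{theorem:partialtensor}: the partial trace $\hat\Phi=\Tr_{(1)}(\phi\phi^\dagger)$ turns the tetrahedral quartic into the quadratic form $\Tr(\hat\Phi\,\tau\hat\Phi)$, the property $\Tr(C_2^p)=N\delta_{p,2}$ collapses the induced $\Tr\ln$ to a Gaussian in $\hat\Psi$, and the final determinant is read off by diagonalizing the involution $\tau$ on symmetric matrices via the $(S_j\otimes S_k)\oplus(A_j\otimes A_k)$ split — which reproduces, in cleaner eigenspace language, exactly the computation the paper performs separately in Proposition~\ref{prop:partialsymtensor} with the eigenvectors $u_{ij},v_{ij}$ and the block decomposition of $C$. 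What your route buys is significant: the paper only \emph{verifies} the agreement of Propositions~\ref{prop:realtensor} and~\ref{prop:partialsymtensor} by two independent calculations and presents the real-tensor equivalence as an ``indication,'' whereas your argument would establish it structurally, at the price of having to prove the real/symmetric (orthogonal) analogue of the complex/Hermitian intermediate-field representation — routine (Fourier representation of the delta over real symmetric matrices, real Gaussian giving $\det^{-1/2}$) but nowhere proved in the paper. Two specific cautions: first, your expansion omits the prefactor $\tfrac12$ coming from $\det^{-1/2}=e^{-\frac12\Tr\ln}$; this is not cosmetic, because only with it does the $p=2$ term give the induced Gaussian $-\tfrac1N\Tr(\hat\Psi^2)$, so that completing the square after integrating out $\hat\Psi$ yields the quadratic form $\tfrac{N}{4}\Tr\big(\hat\Phi(\mathbb 1+\lambda\tau)\hat\Phi\big)$ with relative coefficient exactly $\lambda$ — with your written coefficient one would instead land on $\mathbb 1+2\lambda\tau$ and the wrong exponents. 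Second, the large-$N$/formal-series status of $C_2$ that you flag as the delicate point is real, but it afflicts the paper's own proof identically, since the graph resummation also invokes $\Tr(C_2^p)=N\delta_{p,2}$ at every order in $\lambda$; both arguments are to be read in the sense in which $C_2$ is defined after \eqref{cnprop}, so this does not disadvantage your approach relative to the paper's.
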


\begin{proof}
We start with the free energy ${\cal F}_{\rm RT}(\lambda)=\ln {\cal Z}_{\rm RT}(\lambda)$ and write
\begin{equation}
    {\cal F}_{\rm RT}(\lambda)-{\cal F}_{\rm RT}(0)=\sum_{p=1}^\infty \frac{1}{p!}\left(-\frac{\lambda N}{4}\right)^p\langle (\phi^{ijk}\phi^{ij'k'}\phi^{i'jk'}\phi^{i'j'k})^p\rangle_c(0)\;,
\end{equation}
where $\langle\cdot\rangle_c(0)$ is the connected part of the 
\begin{equation}
    \langle (\phi^{ijk}\phi^{ij'k'}\phi^{i'jk'}\phi^{i'j'k})^p\rangle(0)=\frac{\int d\phi\,e^{-S_{\rm RT}[\phi](\lambda=0)}\; (\phi^{ijk}\phi^{ij'k'}\phi^{i'jk'}\phi^{i'j'k})^p}{\int d\phi\,e^{-S_{\rm RT}[\phi](\lambda=0)}},
\end{equation}
at vanishing coupling $\lambda=0$.
If $p$ is even, the graph resummation gives
\begin{equation}
\langle (\phi^{ijk}\phi^{ij'k'}\phi^{i'jk'}\phi^{i'j'k})^p\rangle_c(0)
    =
\begin{cases}
    (p-1)!4^{p-1}N^{-2p}N^p(N^4+N^2)\;, 
    \quad {\text{for $p$ even}}
    \,,
    \\
    (p-1)!4^{p-1}N^{-2p}N^p(N^3+N^3)
    \quad {\text{for $p$ odd}}
    \,.
\end{cases}
\end{equation}
Together, these imply that
\begin{equation}
\begin{split}
        {\cal F}_{\rm RT}(\lambda)-&{\cal F}_{\rm RT}(0)=\\
        &2N^3\sum_{p=1,3,...}^\infty \frac{1}{p!}\left(-\frac{\lambda N}{4}\right)^p (p-1)!4^{p-1}N^{-p}+(N^4+N^2)\sum_{p=2,4,...}^\infty \frac{1}{p!}\left(-\frac{\lambda N}{4}\right)^p (p-1)!4^{p-1}N^{-p}.
\end{split}
\end{equation}
Simplifying the factors of $N$ and $p$, we obtain
\begin{equation}
    {\cal F}_{\rm RT}(\lambda)-{\cal F}_{\rm RT}(0)=\frac{N^3}{2}
    \sum_{p \;{\text{odd}}}^\infty
    \frac{(-\lambda)^p}{p}+\frac{N^4+N^2}{4}
    \sum_{p \; {\text{even}}}^\infty 
    \frac{(-\lambda)^p}{p}\;.
\end{equation}
By splitting the sum into even and odd indices, this summation can be rearranged to obtain
\begin{equation}
    {\cal F}_{\rm RT}(\lambda)-{\cal F}_{\rm RT}(0)=\frac{2N^3+N^4+N^2}{8}\sum_{p=1}^\infty \frac{(-\lambda)^p}{p}+\frac{-2N^3+N^4+N^2}{8}\sum_{p=1}^\infty \frac{\lambda^p}{p}\;.
\end{equation}
Identifying the expansion of the logarithm in the previous expression, we write
\begin{equation}
    {\cal F}_{\rm RT}(\lambda)-{\cal F}_{\rm RT}(0)=\frac{N^2(N+1)^2}{8}(-\ln(1+\lambda))+\frac{N^2(N-1)^2}{8}(-\ln (1-\lambda))\;.
\end{equation}
Finally, by using ${\cal F}_{\rm RT}=\ln {\cal Z}_{\rm RT}$, we get
\begin{equation}
    {\cal F}_{\rm RT}(\lambda)-{\cal F}_{\rm RT}(0)=(1+\lambda)^{-\frac{N^2(N+1)^2}{8}}(1-\lambda)^{-\frac{N^2(N-1)^2}{8}}\;.
\end{equation}
\end{proof}
\noindent For the real self-transpose tensor model, we obtain the following result.
\begin{proposition}
\label{prop:partialsymtensor}
    Consider the random tensor $\Phi \in {\bb R}_N^{\otimes 4}$ distributed according to the partition function
\begin{equation}\label{eq:thm_part_sym_tensor}
     {\cal Z}_{\rm ST}(\lambda)= \iint {\cal D}\Phi e^{-S_{\rm ST}[\Phi](\lambda)} \;,\quad  {\rm where} \quad S_{\rm ST}[\Phi](\lambda)=N\Big(\frac{1}{2}\sum_{ijkl}\Phi_{ijkl}\Phi_{ijkl}+\frac{\lambda}{2}\sum_{ijkl}\Phi_{ijkl}\Phi_{ikjl} \Big),
\end{equation}
and $\Phi$ is a  self-transpose real tensor ($\Phi_{ijkl}=\Phi_{lkji}$). Then, this partition function can be written explicitly in terms of $\lambda$ as follows
\begin{equation}
    \frac{{\cal Z}_{\rm ST}(\lambda)}{{\cal Z}_{\rm ST}(0)}=(1+\lambda)^{-\frac{1}{2}\left(\frac{N(N+1)}{2}\right)^2}(1-\lambda)^{-\frac{1}{2}\left(\frac{N(N-1)}{2}\right)^2}\;.
\end{equation}
\end{proposition}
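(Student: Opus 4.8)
The plan is to exploit that $S_{\rm ST}[\Phi](\lambda)$ is \emph{purely quadratic} in $\Phi$, so that $\mathcal{Z}_{\rm ST}(\lambda)$ is an honest Gaussian integral and the ratio reduces to a determinant. Viewing $\Phi$ as a vector in $\mathbb{R}^{N^4}$, I would first rewrite the action as
\begin{equation}
S_{\rm ST}[\Phi](\lambda)=\frac{N}{2}\,\Phi^{\tp}\big({\bb 1}+\lambda\,T\big)\Phi,
\end{equation}
where $T$ is the linear operator that swaps the two middle indices, $[T\Phi]_{ijkl}=\Phi_{ikjl}$, i.e.\ the slot transposition $(2\,3)$. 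Since swapping twice is the identity, $T^2={\bb 1}$ and $T$ is symmetric, so its only eigenvalues are $\pm 1$; on the $T=\pm 1$ eigenspaces the operator ${\bb 1}+\lambda T$ acts as multiplication by $1\pm\lambda$.

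Next I would check that the restriction to self-transpose tensors is legitimate. The constraint $\Phi_{ijkl}=\Phi_{lkji}$ reads $R\Phi=\Phi$ for the reversal operator $[R\Phi]_{ijkl}=\Phi_{lkji}$, i.e.\ the permutation $(1\,4)(2\,3)$. A direct computation gives $TR=RT$, so $T$ maps the self-transpose subspace to itself and remains symmetric there. The Gaussian integral over this subspace then yields $\mathcal{Z}_{\rm ST}(\lambda)=c_N\,\det_{\rm sub}({\bb 1}+\lambda T)^{-1/2}$, with a $\lambda$-independent constant $c_N$ that cancels in the ratio (for $|\lambda|<1$ the form is positive definite, and the identity extends by analytic continuation). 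Hence
\begin{equation}
\frac{\mathcal{Z}_{\rm ST}(\lambda)}{\mathcal{Z}_{\rm ST}(0)}=(1+\lambda)^{-d_+/2}\,(1-\lambda)^{-d_-/2},
\end{equation}
where $d_\pm=\dim\{\Phi:\ R\Phi=\Phi,\ T\Phi=\pm\Phi\}$.

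The last step is to compute $d_\pm$. The operators ${\bb 1},T,R,TR$ realize the group $\langle(1\,4),(2\,3)\rangle\cong\mathbb{Z}_2\times\mathbb{Z}_2$ acting on the four tensor slots. The projector onto the $(R=+1,\,T=\epsilon)$ subspace is $\tfrac14({\bb 1}+R)({\bb 1}+\epsilon T)$, so
\begin{equation}
d_\epsilon=\tfrac14\big[\Tr{\bb 1}+\epsilon\,\Tr T+\Tr R+\epsilon\,\Tr(TR)\big].
\end{equation}
Each slot-permutation operator $\sigma$ on $(\mathbb{R}^N)^{\otimes 4}$ has trace $N^{c(\sigma)}$, with $c(\sigma)$ the number of cycles of $\sigma$, giving $\Tr{\bb 1}=N^4$, $\Tr T=\Tr(TR)=N^3$ and $\Tr R=N^2$. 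This produces $d_+=\tfrac14(N^4+2N^3+N^2)=(N(N+1)/2)^2$ and $d_-=\tfrac14(N^4-2N^3+N^2)=(N(N-1)/2)^2$, which inserted above reproduces exactly the claimed expression, matching $\mathcal{Z}_{\rm RT}$ of Proposition~\ref{prop:realtensor}.

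The point requiring care is the dimension counting on the \emph{constrained} subspace: one must verify that $T$ preserves the self-transpose subspace ($TR=RT$) and is symmetric there, so that simultaneous diagonalization and the projector-trace formula apply. Once this is in place, the remainder is a routine Gaussian determinant, which is notably simpler than the diagrammatic resummation used for the real model in Proposition~\ref{prop:realtensor}, since here the self-transpose action is already Gaussian.
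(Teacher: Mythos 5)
Your proof is correct, and while it shares the paper's overarching strategy (the action is purely quadratic, so the ratio is a Gaussian determinant), the way you compute that determinant is genuinely different from the paper's. The paper restricts at the outset to the independent components of the self-transpose tensor and decomposes the covariance operator $C$ into blocks according to index coincidences ($i>l$; $i=l,\,j>k$; $i=l,\,j=k$), then diagonalizes $I_N\otimes I_N+\lambda\Sigma$ by exhibiting the symmetric eigenvectors $u_{ij}$ (eigenvalue $1+\lambda$) and antisymmetric eigenvectors $v_{ij}$ (eigenvalue $1-\lambda$), finally multiplying the three block determinants. You instead stay on the full space $\mathbb{R}^{N^4}$, write the form as $\tfrac{N}{2}\Phi^{\tp}(\mathbb{1}+\lambda T)\Phi$ with $T$ the slot transposition $(2\,3)$, observe that $T$ commutes with the constraint involution $R=(1\,4)(2\,3)$ (so restriction to the self-transpose subspace is legitimate and $T$ stays symmetric there), and count the $\pm1$ eigenspace dimensions via traces of the projectors $\tfrac14(\mathbb{1}+R)(\mathbb{1}+\epsilon T)$ using $\Tr P_\sigma=N^{c(\sigma)}$; your values $\Tr T=\Tr(TR)=N^3$, $\Tr R=N^2$ are right, your $d_\pm$ match the exponents, and the consistency check $d_++d_-=\tfrac12(N^4+N^2)=\dim\{R\Phi=\Phi\}$ holds. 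What each approach buys: yours avoids the case analysis over coinciding indices entirely, makes the multiplicity bookkeeping automatic (the paper's explicit route is more error-prone — its displayed determinant of $I_N\otimes I_N+\lambda\Sigma$ even contains a typo, repeating the $(1+\lambda)^{N(N+1)/2}$ factor where $(1-\lambda)^{N(N-1)/2}$ is meant), and generalizes immediately to other slot-permutation insertions or symmetry constraints; the paper's block decomposition is more elementary and displays the eigenvectors concretely. Your two points of care — that the coordinate measure on independent components differs from orthonormal coordinates on the subspace only by a $\lambda$-independent Jacobian, and that positivity requires $|\lambda|<1$ with analytic continuation beyond — are exactly the right ones and are handled adequately.
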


\begin{proof}
We start by writing the action \eqref{eq:thm_part_sym_tensor} as
\begin{equation}
     {S}_{\rm ST}[\Phi](\lambda)
    =\frac{N}{2}\sum_{ijkl}
    \Phi_{ijkl}
    C^{ijkl;i'j'k'l'}
    \Phi_{i'j'k'l'}\;,
\end{equation}
where
\begin{equation}
    C^{ijkl;i'j'k'l'}=\delta_{ii'}\delta_{jj'}\delta_{kk'}\delta_{ll'}+\lambda \delta_{ii'}\delta_{jk'}\delta_{kj'}\delta_{ll'}.
\end{equation}
Then, the computation of the partition function \eqref{eq:thm_part_sym_tensor} reduces to computing the determinant of the operator $C$. By noticing that
\begin{equation}
    \begin{split}
        S_{\rm ST}[\Phi](\lambda)&=2\frac{N}{2}\sum_{ijkl;i>l}\sum_{i'j'k'l'}\Phi_{ijkl}\left(\delta_{ii'}\delta_{jj'}\delta_{kk'}\delta_{ll'}+\lambda \delta_{ii'}\delta_{jk'}\delta_{kj'}\delta_{ll'} \right)\Phi_{i'j'k'l'}\\
        &+2\frac{N}{2}\sum_{ijk;j>k}\left(1+\lambda\right)\Phi_{ijki}^2+\frac{N}{2}\sum_{ij}\left(1+\lambda\right)\Phi_{ijji}^2
        \,,
    \end{split}
\end{equation}
we proceed by decomposing $C$ into three blocks
\begin{equation}\label{eq:Pdecomp}
    C=C_{(>)}\oplus C_{(=,>)}\oplus C_{(=,=)}\;,
\end{equation}
with (See Section~\ref{sec:notations} for the definition of $\Sigma$)
\begin{equation}
    C_{(>)}:=I_{\frac{N(N-1)}{2}}\otimes (I_N\otimes I_N+\lambda \Sigma)\,,
\quad
    C_{(=,>)}:=(1+\lambda)I_N\otimes I_{\frac{N(N-1)}{2}}\,,
\quad
    C_{(=,=)}:=(1+\lambda )I_N\otimes I_{N}\,.
\end{equation}
The determinant of $I_N\otimes I_N+\lambda \Sigma$ can be computed from its eigenvalues. Considering the canonical basis $e_i$ with coefficients $(e_i)_j=\delta_{ij}$ with $i\leq j$, the vectors 
\begin{equation}
    u_{ij}= e_i\otimes e_j+e_j\otimes e_i
\end{equation}
are eigenvectors of $\Sigma$ with eigenvalues $\lambda$. Additionally, for $i<j$,
\begin{equation}
    v_{ij}= e_i\otimes e_j-e_j\otimes e_i
\end{equation}
are the remaining eigenvectors of $\Sigma$ with eigenvalues $-\lambda$.  These are also eigenvectors of $I_N\otimes I_N+\lambda \Sigma$, $u_{ij}$, with eigenvalues $1+\lambda$, and $v_{ij}$ with eigenvalues $1-\lambda$. Thefore, we have that
\begin{equation}
    \det(I_N\otimes I_N+\lambda \Sigma)=(1+\lambda)^{\frac{N(N+1)}{2}}(1+\lambda)^{\frac{N(N+1)}{2}}\;,
\end{equation}
which implies that
\begin{equation}
    \det(C_{(>)})=\lb(1+\lambda)^{\frac{N(N+1)}{2}}(1-\lambda)^{\frac{N(N-1)}{2}}\rb^{\frac{N(N-1)}{2}}\;.
\end{equation}
The computation of the remaining parts of \eqref{eq:Pdecomp} is trivial and the result is
\begin{equation}
    \det(C_{(=,>)})=(1+\lambda)^{N\frac{N(N-1)}{2}}\,,\qquad
    \det(C_{(=,=)})=(1+\lambda)^{N^2}\,.
\end{equation}
Therefore, given the decomposition \eqref{eq:Pdecomp}, we have that
\begin{equation}
\det(C)=\det(C|_{\lambda=0})(1+\lambda)^{\lp \frac{N(N+1)}{2}\rp^2}(1-\lambda)^{\lp \frac{N(N-1)}{2}\rp^2}\;.
\end{equation}
Finally, using this expression in 
\begin{equation}
    \frac{{\cal Z}_{\rm ST}(\lambda)}{{\cal Z}_{\rm ST}(0)}=\frac{\det(C_\lambda)^{-\frac{1}{2}}}{\det(C|_{\lambda=0})^{-\frac{1}{2}}}\;,
\end{equation}
we conclude that
\begin{equation}
    \frac{{\cal Z}_{\rm ST}(\lambda)}{{\cal Z}_{\rm ST}(0)}=(1+\lambda)^{-\frac{1}{2}\left(\frac{N(N+1)}{2}\right)^2}(1-\lambda)^{-\frac{1}{2}\left(\frac{N(N-1)}{2}\right)^2}\;.
\end{equation}
\end{proof}
\noindent Due to Proposition~\ref{prop:realtensor} and Proposition~\ref{prop:partialsymtensor}, we have that the two models are equivalent.
\begin{figure}[h]
    \centering
    \includegraphics[width=0.8
    \linewidth]{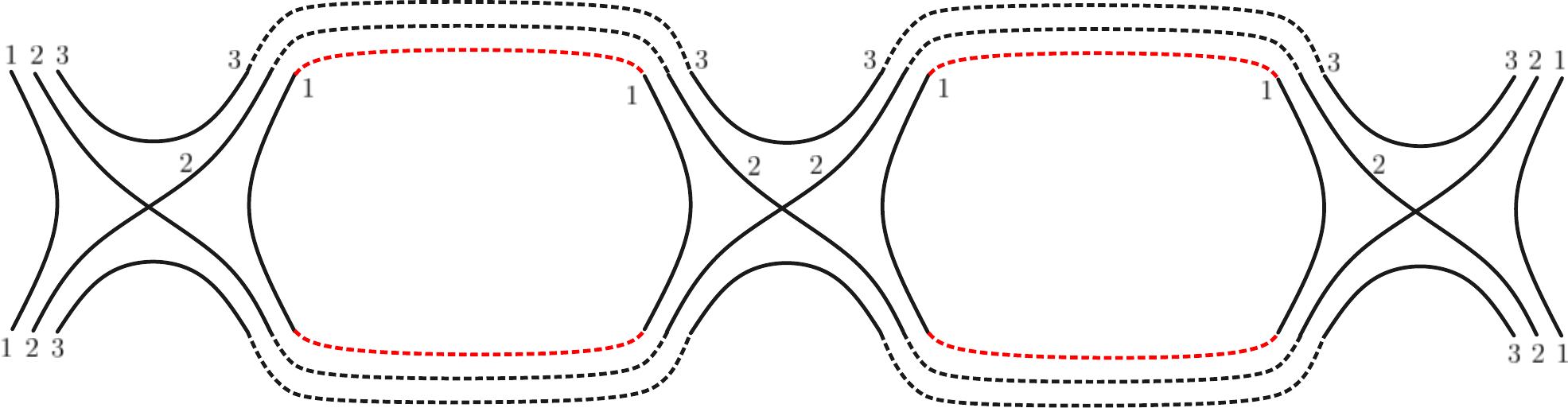}
    \caption{An example of a segment of a graph in the order-$3$ real tensor model. The red lines indicate the presence of the matrix $C_2$ in the propagator. In this case, the interactions are tetrahedral. In general, any graph generated by the partition function \eqref{eq:ZRT4} will be a closure of a sequence of vertices as presented here.}
    \label{tetragraph}
\end{figure}
\begin{figure}[h]
    \centering
    \includegraphics[width=0.7
    \linewidth]{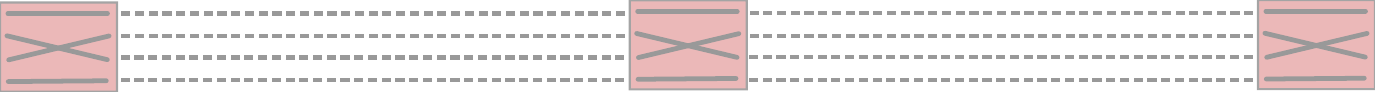}
    \caption{An example of a segment of a graph in the effective model: a real self-transpose tensor model without $C_2$. The red boxes correpond to the quadratic interactions in \eqref{eq:thm_part_sym_tensor}. The parallel dashed gray lines represent the propagators. Any graph in this model will be a closure of a sequence of vertices as presented here.}
    \label{tetragrapheff}
\end{figure}

In summary, the proved equivalence shows that a real tensor model of order 3 with a rigidity constraint is equivalent to a real self-transpose tensor model of order 4 without rigidity constraints.  
In terms of the Feynman graph expansion, the rigidity constraint in the order-$3$ model leads to necklace/chain-type graphs as illustated in  Figure~\ref{tetragraph}. This is remiscent of the complex matrix model \eqref{bc}.
In other words, the tetrahedral interactions deform into necklace/chain-type graphs.

This result may appear unappealing in the context of QG since the tetrahedral interactions do not produce extended geometries in all directions, they produce necklace/chain-type geometries which extend in one direction only.  However, a deeper study of the continuum limit of this model, particularly the values of its critical exponents, is necessary to corroborate if this is in the branched polymer universality class. We elaborate on this in the future direction section.

We also want to highlight that since $C_2$ was suitable to generate CDT 
using \emph{Hermitian matrix models}, an alternative direction is to find an
equivalent constant matrix/tensor 
to $C_2$ which is suitable for imposing a causal structure in \emph{complex matrix and tensor models}.

\section{Conclusions and outlook}\label{sec:concl}
\noindent In this paper, we have explored 
and presented newly found
equivalences between complex and self-adjoint random matrix and tensor models. More specifically, we established exact equivalences via intermediate–field representations between
\begin{itemize}
    \item complex matrix models with 
    a general covariance $(P,Q)$
    and potentials depending on $M^\dagger M$, and self–adjoint two–matrix models whose intermediate field carries a logarithmic dually-weighted potential 
(Theorem \ref{matrixequiv_thm});
    \item complex tensor models with covariance $R$ and potentials depending on $\phi\otimes\phi^\dagger$, and self–adjoint tensor models
    with dually-weighted logarithmic potentials
(Theorems
     \ref{mainthm}, and \ref{theorem:partialtensor}).
\end{itemize}

In all cases, the equivalence holds at the level of partition functions and all trace invariants built from the self–adjoint combinations: $M^\dagger M$ for matrices, and $\phi\otimes\phi^\dagger$ for tensors. In general, intermediate field representation reduces the order of the polynomial interaction potential and can provide a simpler analysis. As an application of the matrix equivalence, we showed in
Section~\ref{examp} that a class of non-Gaussian matrix models can be
reformulated as Gaussian ones. By focusing on the equivalence of observables, including disc functions, we obtained a new representation of the Catalan numbers.

In the random matrix model literature, a widely used technique is the
character expansion, which expresses partition functions and observables in
terms of irreducible characters of the symmetry group, often reducing the
effective degrees of freedom and simplifying the analysis. In the present
setting, Section~\ref{relres} shows that the expectation values obtained in
this work provide the necessary input to evaluate
$\langle \chi_\lambda(A)\rangle$ for arbitrary partitions $\lambda$,
thereby connecting our results to standard character-based methods.

\subsection*{Future directions}
\noindent From a mathematical point of view, it would be interesting to understand whether the rigid
block structures arising in the $Q=C_k$ models admit a more intrinsic combinatorial
characterization, and whether the associated random matrix and tensor ensembles can be
systematically related to well–defined classes of matroidal or hypergraph structures \cite{oxley2006matroid}.

A point that remains to be explored concerns the convergence conditions for the tensor $R$, which plays the role of the covariance in the tensor model case. This analysis should be analogous to the one presented in Appendix~\ref{change} for the covariance $(P,Q)$ in the matrix model case. We plan to investigate this question in future work.
Furthermore, intermediate field representation played a crucial role \cite{Lionni:2016ush} to prove Borel summability (which then uniquely defines the theory non-perturbatively)  of the vector, matrix, and tensor models of quartic interaction. 
Therefore, our novel intermediate field representation techniques which now cover more general (i.e., dually-weighted) models may allow to proe Borel summability of a larger class of models.

 On the physics side, there remains to be investigated if the matrix and tensor models including the matrix $C_2$ we introduced, belong to universality classes interesting for quantum gravity. As we discussed, the fact that the graph struture of the models we introduced reduces to chain-like graphs does not discard them. As it has been shown~\cite{Eichhorn:2018phj,Eichhorn:2018ylk,Eichhorn:2019hsa}, even when the continuum limit of some tensor models is dominated by melonic graphs, which in principle do not resemble extended geometries, their critical exponents are compatible with those of some QG models (e.g. unimodular gravity). This may be analogous to what happens in two-dimensional CDT~\cite{Benedetti:2008hc,Castro:2020dzt} where one of the critical exponents of the matrix model (namely, the string susceptibility) corresponds to that of the tree universality class. However, this tree structure does not represent the full two-dimensional geometry per se; rather, it encodes the global foliation structure. We plan to explore this research direction by carrying out an FRG analysis of the real tensor model \eqref{eq:ZRT4}.

In this work, we decided to focus on the case $Q=C_k$ and, in particular, $Q=C_2$ as defined in \eqref{cnprop}. However, in \cite{Benedetti:2008hc} $C_2$ was found to be suitable to impose causality in the graph expansion of Hermitian matrix models in particular. Therefore, it remains to be investigated if suitable tensor structures analogous to $C_2$ that serve to impose a causal structure in complex matrix and tensor models exist. If they do, it will be of interest to explore whether the equivalences proved in this work can be applied to them. Indeed, the equivalences we construct are sufficiently general to accommodate a broad class of covariance structures $(P,Q)$ in the matrix case and $R$ in the tensor model case. This is an avenue we hope to explore in future work.

On the more speculative side, we noted that in the matrix model case, the dual graphs Figure~\ref{fig:korder} have a strong resemblence with layered posets (two-orders) appearing in the Causal Set configuration space \cite{Carlip:2022nsv}. These correspond to spacetimes with only three time steps, which dominate the Causal Set path integral. They are therefore regarded as problematic, since such spacetimes do not resemble our extended universe. It remains an open question whether the continuum limit of the random matrix model \eqref{model1} presented here retains these pathological configurations. We expect this analysis can be done using FRG techniques similarly to \cite{Eichhorn:2013isa,Castro:2020dzt}.

\section*{Acknowledgements}
\noindent We thank Dario Benedetti and Luca Lionni for discussions, and Beno\^it Collins, Colin McSwiggen, and  Jean-Bernard Zuber for correspondence. A.C. work was partially funded by the ANR-20-CE48-0018 “3DMaps” grant and  the Deutsche Forschungsgemeinschaft (DFG, German Research Foundation) under Germany's Excellence Strategy EXC 2181/1 - 390900948 (the Heidelberg STRUCTURES Excellence Cluster).

\paragraph{Conflicts of Interest} The authors have no conflict of interest to declare.

\appendix
\section{The matrix \texorpdfstring{$C_2$}{C2}}\label{c2app}
\noindent As seen in \eqref{cnprop}, the $N \times N$ matrix $C_m$, with $m=1,...,N$, is a matrix that satisfies the condition
\begin{equation}
    \mathrm{Tr}(C_m^{p})=N\delta_{p,m}\;,
\end{equation}
for $p=1,...,N$. An explicit expression for $C_m$ in the large N limit was obtained in \cite{abranches2025}. The matrix $C_m$, and particularly $C_2$, appears in different and seemingly unrelated contexts in matrix models and representation theory. Some examples are:
\begin{itemize}
    \item Character expansion.  
It is known \cite{fulton1991representation} that for two given matrices $A$ and $B$ in $\GL(N)$, their traces and characters satisfy the following property:
\begin{equation}
    e^{\sum_{k=1}^\infty\frac{1}{p}\mathrm{Tr}(A^p)\mathrm{Tr}(B^p)}=\prod_{i,j}\frac{1}{1-a_ib_j}=\sum_{r}\chi_r(A)\chi_r(B)\;,
\end{equation}
where the first equality is obtained by expanding the l.h.s. in terms of the eigenvalues $a_i$ and $b_i$ of $A$ and $B$, respectively. The second equality comes from Cauchy's identity (see equation A.13 in \cite{fulton1991representation}). By choosing $B=C_m$, we get the following character expansion formula in the large-$N$ limit:
\begin{equation}
    e^{\frac{N}{m}\mathrm{Tr}(A^m)}=\sum_{r}\chi_r(C_m)\chi_r(A)\;,
\end{equation}
In particular, for $m=2$, we get
\begin{equation}
    e^{\frac{N}{2}\mathrm{Tr}(A^2)}=\sum_{r}c_r\,\chi_r(A)\;,
\end{equation}
with coefficients $c_r=\chi_r(C_2)$.

\item Di Francesco-Itzykson integral. The Di Francesco-Itzykson integral appears when evaluating the Gaussian average of the character function. That is,
\begin{equation}
    \left \langle\chi_r(A)\right\rangle=\frac{\int dA\,\chi_r(A)\,e^{-\frac{N}{2}\mathrm{Tr}A^2}}{\int dA\,e^{-\frac{N}{2}\mathrm{Tr}A^2}}\;.
\end{equation}

As shown in \cite{difrancesco1992generatingfunctionfatgraphs}, as well as an alternative proof in \cite{abranches2025}, this integral satisfies
\begin{equation}
    \left \langle\chi_r(A)\right\rangle=\frac{\chi_r(\mathbb 1)\chi_r(C_2)}{\chi_r(C_1)},
\end{equation}
In Section~\ref{relres}, we show a generalization of this. Namely, in Corolaries \ref{Cor:character_general_A} and \ref{Cor:character_more_general_A}, we show that $\left \langle\chi_r(A)\right\rangle$ can be expressed in terms of $C_1$ in some non-Gaussian ensembles for any representation $r$ of $\GL(N)$.

\item Causal matrix model. The causal matrix model defined by the partition function \eqref{cdtmm} (Section \ref{subsec:CDT}) and its ribbon graphs have three basic properties: the propagators of A and B correspond respectively to spacelike ribbon-graph edges and timelike ribbon-graph edges; all vertices have degree three and have two ribbon-graph spacelike edges and one ribbon-graph timelike edge; all faces have either two or zero timelike ribbon-graph edges. The matrix $C_2$ is responsible for the last condition. The property $\mathrm{Tr}(C_2^p)=N\delta_{p,2}$ imposes that all the faces with timelike ribbon-graph edges must have exactly two.
\end{itemize}

\section{Change of variables for matrix integrals}\label{change}
\noindent In this Appendix, we prove the validity of the change of variables from \eqref{bc} to \eqref{model1}. That is, we show that
\begin{equation}\label{theo}
    \frac{1}{\mathcal N'}\int dM'\,e^{-N\mathrm{Tr}\left[\frac{1}{2}M'^\dagger M' + \frac{g}{2}(CM'^\dagger M')^2\right]}=\frac{1}{\mathcal N}\int dM\,e^{-N\mathrm{Tr}\left[\frac{1}{2}C^{-1}M^\dagger M + \frac{g}{2}(M^\dagger M)^2\right]}
    \,.
\end{equation}
We do so by performing a change of variables $M'\rightarrow M$ and $M'^\dagger\rightarrow C^{-1}M^\dagger$ with complex matrices $M$ and $M'$, where $C$ is a constant matrix\footnote{Not to be confused with $C_k$ defined in \eqref{cnprop}.}. \\
As a first step, let us analyze a simpler example equivalent to the case $N=1$, that is the integration of a single complex variable $z$ over the whole complex plane. This is
\begin{equation}
    \int_{\mathbb C} dz'\, e^{-\left[\frac{1}{2}z'^*z'+\frac{g}{2}(c \,z'^*z')^2\right]}=c^{-1}\int dz\, e^{-\left[\frac{1}{2}c^{-1}\,z^*z+\frac{g}{2}(z^*z)^2\right]}\;.
\end{equation}
for a nonzero complex number $c$. Starting with the integral
\begin{equation}
    I=\int_{\mathbb C} dz'\, e^{-\left[\frac{1}{2}z'^*z'+\frac{g}{2}(c \,z'^*z')^2\right]},
\end{equation}
we separate the r.h.s. into integrals of the real part and imaginary parts of $z'$,
\begin{equation}\label{eq:I_xy}
    I=\int_{\mathbb R \times \mathbb R} dx'dy'\, e^{-\left[\frac{1}{2}(x'-iy')(x'+iy')+\frac{g}{2}(c \,(x'-iy')(x'+iy'))^2\right]}\;.
\end{equation}
Then, we define two new complex numbers $x$ and $y$ such that
\begin{equation}
    x+iy = x'+iy'\qquad\mathrm{and}\qquad x-iy=c\,(x'-iy')\;.
\end{equation}
These are given in terms of $x'$ and $y'$ by
\begin{equation}
    x= \frac{1+c}{2}x'+\frac{1-c}{2}iy'\qquad\mathrm{and}\qquad y=\frac{1-c}{2i}x'+\frac{1+c}{2}y'\;,
\end{equation}
or inversely,
\begin{equation}
    x'= \frac{1+c^{-1}}{2}x+\frac{1-c^{-1}}{2}iy\qquad\mathrm{and}\qquad y'=\frac{1-c^{-1}}{2i}x+\frac{1+c^{-1}}{2}y\;.
\end{equation}
With this definition, we make a change variables in \eqref{eq:I_xy} from $y'$ to $y$. This is
\begin{equation}
    I=\frac{2}{1+c}\int_{\mathbb R} dx'\int_{\mathbb L_{x'}  } dy\, e^{-\left[\frac{1}{2}(x'-iy'(y))(x'+iy'(y))+\frac{g}{2}(c \,(x'-iy'(y))(x'+iy'(y)))^2\right]}\;,
\end{equation}
where the contour $\mathbb L_{x'} = \frac{1-c}{2i}x'+\frac{1+c}{2}\mathbb R$. An example of this contour is given in Figure~\ref{Lxpfig}.
\begin{figure}[H]
    \centering
    \begin{subfigure}[t]{0.45\textwidth}
    \centering
        \includegraphics[width=
    0.8\linewidth]{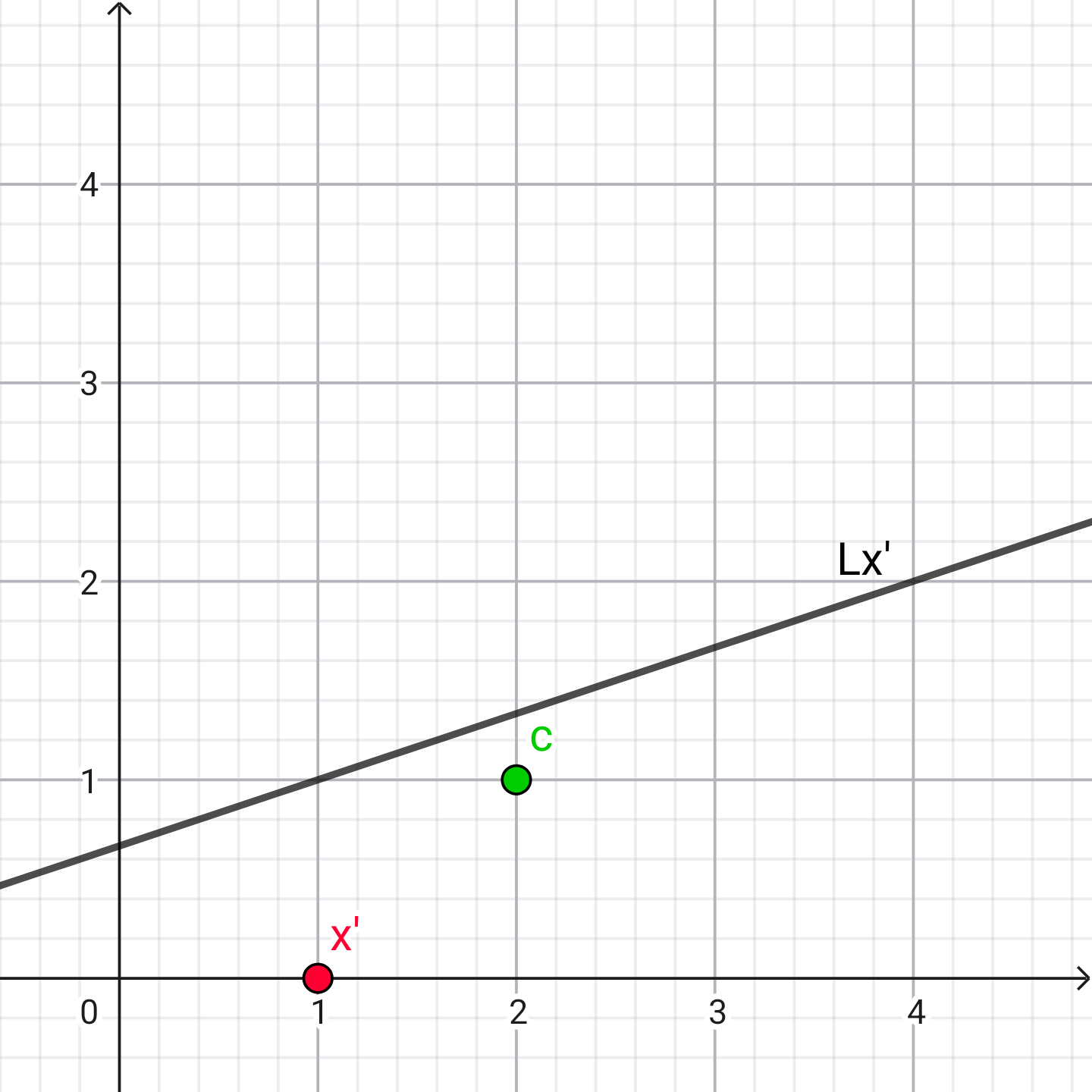}
    \caption{The complex plane featuring an example of the contour $\mathbb L_{x'}$ for given $x'$ and $c$. The slope depends only on $c$.}
    \label{Lxpfig}
    \end{subfigure}
    \quad
    \begin{subfigure}[t]{0.45\textwidth}
    \centering
    \includegraphics[width=0.8\linewidth]{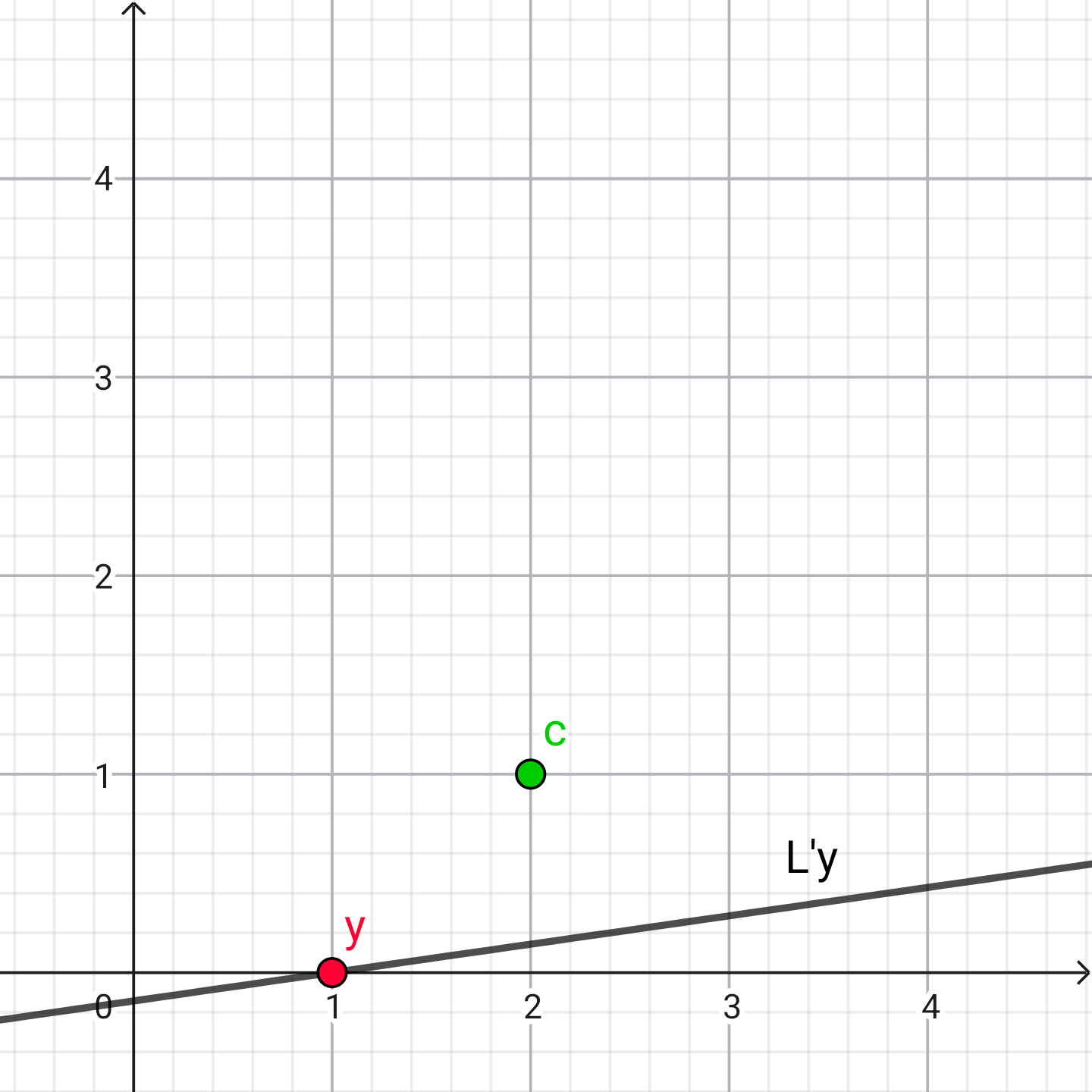}
    \caption{The complex plane featuring an example of the contour $\mathbb L'_{y}$ for given $y$ and $c$. The slope depends only on $c$.}
    \label{Lpyfig}
    \end{subfigure}
\end{figure}
\noindent By changing the contour from $\mathbb L_{x'}$ to $\mathbb R$, we find that
\begin{equation}
    I=\frac{2}{1+c}\int_{\mathbb R} dx'\int_{\mathbb R  } dy\, e^{-\left[\frac{1}{2}(x'-iy'(y))(x'+iy'(y))+\frac{g}{2}(c \,(x'-iy'(y))(x'+iy'(y)))^2\right]}\;.
\end{equation}
Similarly, we make a change variables from $x'$ to $x$ to obtain
\begin{equation}
    I=c^{-1}\int_{\mathbb R  } dy\int_{\mathbb L'_{y}} dx\, e^{-\left[\frac{1}{2}c^{-1}(x-iy)(x+iy)+\frac{g}{2}((x-iy)(x+iy))^2\right]}\;,
\end{equation}
where $\mathbb L'_{y}=\frac{2c}{1+c}\mathbb R+\frac{1-c}{1+c}y$. An example of this contour is given in Figure~\ref{Lpyfig}.\\
As a last step, we change the contour from $\mathbb L'_{y}$ to $\mathbb R$ to find
\begin{equation}
    I=c^{-1}\int_{\mathbb R  } dy\int_{\mathbb R} dx\, e^{-\left[\frac{1}{2}c^{-1}(x-iy)(x+iy)+\frac{g}{2}((x-iy)(x+iy))^2\right]}\;,
\end{equation}
or simply
\begin{equation}
    I=c^{-1}\int dz\, e^{-\left[\frac{1}{2}c^{-1}\,z^*z+\frac{g}{2}(z^*z)^2\right]}\;.
\end{equation}
For the case of complex matrices, we aim to perform a change of variables in the integral
\begin{equation}\label{eq:I_matrix}
    I=\iint_{\Cp^{N\times N}} dM'\,e^{-N\mathrm{Tr}\left[\frac{1}{2}M'^\dagger M' + \frac{g}{2}(CM'^\dagger M')^2\right]}\;,
\end{equation}
where $C$ is a constant matrix\footnote{Not to be confused with $C_k$ with the condition \eqref{cnprop}.}. Following an equivalent set of change of variables in the previous case, we write $M'_{ij}=X'_{ij}+i Y'_{ij}$. The integral \eqref{eq:I_matrix} becomes
\begin{equation}
    I=\int_{\mathbb R^{N\times N}} dX'\int_{\mathbb R^{N\times N}}dY'\,e^{-N\mathrm{Tr}\left[\frac{1}{2}(X'-iY')^T(X'+iY') + \frac{g}{2}(C(X'-iY')^T(X'+iY'))^2\right]}\;.
\end{equation}
Then, we introduce two complex matrices $X$ and $Y$ such that
\begin{equation}\label{mcv}
    X+iY = X'+iY'\qquad\mathrm{and}\qquad X-iY=(X'-iY')C\;.
\end{equation}
For this choice, we have used that $C=C^T$. Since $C$ is a diagonal matrix, $C(X'-iY')^T=[(X'-iY')C]^T$ is diagonal. The matrices $X$ and $Y$ are given explicitly in terms of $X'$, $Y'$ and the eigenvalues $c_j$ of $C$ by
\begin{equation}
    X_{ij}= X'_{ij}\frac{1+c_j}{2}+iY'_{ij}\frac{1-c_j}{2}\qquad\mathrm{and}\qquad Y_{ij}=X'_{ij}\frac{1-c_j}{2i}+Y'_{ij}\frac{1+c_j}{2}\;,
\end{equation}
or inversely,
\begin{equation}
    X'_{ij}= X_{ij}\frac{1+c_j^{-1}}{2}+iY_{ij}\frac{1-c_j^{-1}}{2}\qquad\mathrm{and}\qquad Y'_{ij}=X_{ij}\frac{1-c_j^{-1}}{2i}+Y_{ij}\frac{1+c_j^{-1}}{2}\;,
\end{equation}
With this definition, we make a change variables from $Y'$ to $Y$,
\begin{equation}
    I=\prod_{ij}\frac{2}{1+c_j}\int_{\mathbb R} dX'_{ij}\int_{\mathbb L_{X'_{ij}}} dY_{ij}\,e^{-N\mathrm{Tr}\left[\frac{1}{2}(X'-iY')^T(X'+iY') + \frac{g}{2}(C(X'-iY')^T(X'+iY'))^2\right]}\;,
\end{equation}
where $\mathbb L_{X'_{ij}} = \frac{1-c_j}{2i}X'_{ij}+\frac{1+c_j}{2}\mathbb R$. For simplicity, we avoid explicitly substituting $Y'$ in the integrand.\\
Changing the contour from $\mathbb L_{x'}$ to $\mathbb R$, we find that
\begin{equation}
    I=\det\left(\frac{2}{1+C}\right)^N\int_{\mathbb R^N} dX'\int_{\mathbb R^N} dY\, e^{-N\mathrm{Tr}\left[\frac{1}{2}(X'-iY')^T(X'+iY') + \frac{g}{2}(C(X'-iY')^T(X'+iY'))^2\right]}\;.
\end{equation}
Similarly, we make a change variables from $X'$ to $X$,
\begin{equation}
    I=\det\left(\frac{2}{1+C}\right)^N\prod_{ij}\frac{1+c^{-1}_j}{2}\int_{\mathbb R  } dY_{ij}\int_{\mathbb L'_{Y_{ij}}} dX_{ij}\, e^{-N\mathrm{Tr}\left[\frac{1}{2}C^{-1}(X-iY)^T(X+iY) + \frac{g}{2}((X-iY)^T(X+iY))^2\right]}\;,
\end{equation}
where $\mathbb L'_{Y_{ij}}=\frac{2}{1+c_j^{-1}}\mathbb R+\frac{1-c}{1+c}Y_{ij}$, and we used \eqref{mcv} to write the integrand explicitly in terms of $X$ and $Y$. Finally, we change the contour from $\mathbb L'_{y}$ to $\mathbb R$ to find
\begin{equation}
    I=\det(C)^{-N}\int_{\mathbb R^N  } dY\int_{\mathbb R^N} dX\,  e^{-N\mathrm{Tr}\left[\frac{1}{2}C^{-1}(X-iY)^T(X+iY) + \frac{g}{2}((X-iY)^T(X+iY))^2\right]}\;,
\end{equation}
or simply
\begin{equation}
    I=\det(C)^{-N}\iint_{\GL(N)} dM\, e^{-N\mathrm{Tr}\left[\frac{1}{2}C^{-1}M^\dagger M + \frac{g}{2}(M^\dagger M)^2\right]}\;.
\end{equation}
Therefore, we find \eqref{theo} with $\mathcal N = \det(C)^{-N}\mathcal{N}'$.

\section{On the convergence criteria on \texorpdfstring{$P$}{P} and \texorpdfstring{$Q$}{Q}}\label{convcrit}
\noindent In this Appendix, we analyze the conditions that the invertible  normal matrices $P$ and $Q$ must satisfy for the convergence of the integrals in the partition function
\begin{equation}\label{ptfpq}
    {\cal Z}_{\rm CM}[0](P,Q)=\int dM^\dagger dM\;e^{-N {\rm Tr}(P^{-1}M^\dagger Q^{-1} M )}\;.
\end{equation}
Since $P$ and $Q$ are normal, there exist diagonal matrices $P^{(d)}$ and $Q^{(d)}$, and unitary matrices $P^{(u)}$ and $Q^{(u)}$ such that
\begin{equation}
    P=P^{(u)}P^{(d)}{P^{(u)}}^{\dagger}\qquad{\rm{and}}\qquad Q=Q^{(u)}Q^{(d)}{Q^{(u)}}^{\dagger}\;.
\end{equation}
Since $P$ and $Q$ are invertible, all their eigenvalues are nonzero and $P^{(d)}$ and $Q^{(d)}$ are also invertible. By using this property, the partition function can be written as
\begin{equation}
     {\cal Z}_{\rm CM}[0](P,Q)=\int dM^\dagger dM\;e^{-N {\rm Tr}\lb {P^{(d)}}^{-1}({Q^{(u)}}^{\dagger} M P^{(u)})^{\dagger}{Q^{(d)}}^{-1}({Q^{(u)}}^{\dagger} M P^{(u)})\rb}\;.
\end{equation}
By performing the change of variables $M \rightarrow {Q^{(u)}} M {P^{(u)}}^{\dagger}$ and $M^\dagger \rightarrow {P^{(u)}} M^\dagger {Q^{(u)}}^{\dagger}$, we see that the model is equivalent to the one where $P$ and $Q$ are substituted by the diagonal matrices. This is
\begin{equation}
     {\cal Z}_{\rm CM}[0](P,Q)=\int dM^\dagger dM\;e^{-N {\rm Tr}\lb {P^{(d)}}^{-1}M^{\dagger}{Q^{(d)}}^{-1}M\rb}\;.
\end{equation}
Denoting $p_k$ and $q_k$ ($1\leq k \leq N$) the eigenvalues of $P$ and $Q$ respectively, we note that
\begin{equation}
     {\cal Z}_{\rm CM}[0](P,Q)=\int dM^\dagger dM\;e^{-N \sum_{k,l}\lb q^{-1}_kp^{-1}_l |M_{kl}|^2\rb}\;.
\end{equation}
Therefore, the convergence is achieved iff
\begin{equation}
    {{\rm Re}}(q_k p_l)\geq 0\qquad {\rm for} \qquad 1\leq k, l \leq N\;.
\end{equation}
To make sense of this geometrically, we rewrite this condition as 
\begin{equation}
    -\frac{\pi}{2}\leq{\rm Arg}(q_k p_l)\leq \frac{\pi}{2}\qquad {\rm for} \qquad 1\leq k, l \leq N\;.
\end{equation}
By looking at any given $k$, we see that all $p_l$ must be in the half-plane around the complex conjugate $q_k^*$. The same thing can be said about the distribution of $q_k$ around any $p_l^*$. This means that there exists a $\delta\in[0,2\pi]$ and an $\alpha \in [0, \pi/2]$ such that
\begin{equation}
    -\alpha\leq\Arg(p_k)+\delta\leq +\alpha
\end{equation}
and
\begin{equation}
    -\frac{\pi}{2}
    +\alpha\leq\Arg(q_k)-\delta\leq \frac{\pi}{2}-\alpha\;.
\end{equation}
We can assume that $\delta=0$, since we could either change $P\rightarrow e^{i\delta}P$ and $Q\rightarrow e^{-i\delta}Q$ or $M \rightarrow e^{-i\delta}$ and $M^\dagger \rightarrow e^{i\delta}$ to remove $\delta$. With this assumption, a possible distribution of $p_k$ and $q_l$ is illustrated in Figure~\ref{eigdist}.

\begin{figure}[H]
     \centering
    \includegraphics[width=
    0.3\linewidth]{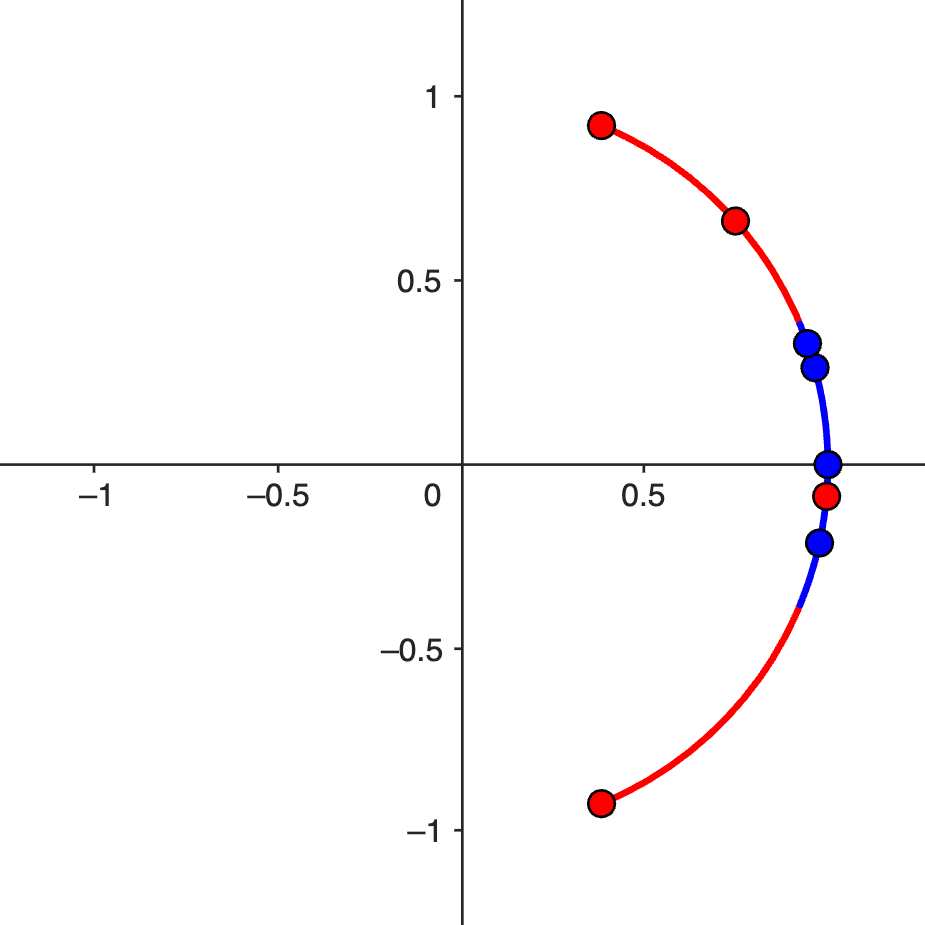}
    \caption{The complex plane of eigenvalues of $P$ and $Q$.
    An example of the distribution of eigenvalues of $P$ and $Q$. If the eigenvalues of $Q$ are distributed along the blue arc (including the end points) then the eigenvalues of $P$ must be distributed along the red arc (not necessarely including the end points) for the integral \eqref{ptfpq} to converge.
    We also show in dots
    an explicit example of eigenvalues of $P$ and $Q$ for which the integral \eqref{ptfpq} converges for $N=4$. The red (blue) points are the eigenvalues of $P$ ($Q$).
    }
    \label{eigdist}
\end{figure}

\newpage
\section{Proofs}\label{app:proof}
\noindent In this appendix we collect the proofs of several propositions  appearing  in the main text.

\paragraph{Proof of Proposition~\ref{avM}.}

\begin{proof}
We introduce a complex source matrix $J$ coupled linearly to $M$ by
\[
W[J,J^\dagger](M,M^\dagger)
  := N\Tr(M^\dagger J + J^\dagger M)\;.
\]
The corresponding generating functional is
\[
\mathcal Z_{\rm G}[J,J^\dagger](P,Q)
  = \iint {\cal D}M^\dagger\,{\cal D}M\;
    e^{-N\Tr(P^{-1}M^\dagger Q^{-1} M) + N\Tr(M^\dagger J + J^\dagger M)}.
\]
A standard Gaussian calculation gives
\begin{equation}\label{eq:ZGJJ}
\mathcal Z_{\rm G}[J,J^\dagger](P,Q)
  = \mathcal Z_{\rm G}[0](P,Q)\,
    e^{N\Tr(P J^\dagger Q J)}\;.
\end{equation}
The multi-trace invariant observable $\Tr_{[\sigma]}(M^\dagger M)$ can be generated by derivatives with respect to $J$ and $J^\dagger$ in the following form
\begin{equation}
    \left\langle \Tr_{[\sigma]} (M^\dagger M)\right\rangle[0](P)
    = N^{-2n}\,\bigl(\mathcal Z_{\rm G}[0](P,Q)\bigr)^{-1}
      \left.
      \Tr_{[\sigma]} \left[\frac{\partial}{\partial J^\dagger}
                          \frac{\partial}{\partial J}\right]^{ \tp}
      \mathcal Z_{\rm G}[J,J^\dagger](P,Q)
      \right|_{J=0}.
\end{equation}
Using \eqref{eq:ZGJJ},
\begin{equation}
    \left\langle \Tr_{[\sigma]} (M^\dagger M)\right\rangle[0](P)
    = N^{-2n}
      \left.
      \Tr_{[\sigma]} \left[\frac{\partial}{\partial J^\dagger}
                            \frac{\partial}{\partial J}\right]^{ \tp}
      e^{N\Tr(P J^\dagger Q J)}
      \right|_{J=0}.
\end{equation}
Since the operator $\Tr_{[\sigma]}(\partial/\partial J^\dagger\,\partial/\partial J)^{\tp}$ is homogeneous of degree $2n$ in the derivatives, only the term of order $n$ in the exponential contributes to the previous expression. This results in
\begin{equation}
\label{eq:ordernmultitrace}
    \left\langle \Tr_{[\sigma]} (M^\dagger M)\right\rangle[0](P)
    = \frac{1}{n!}\,N^{-n}\,
      \Tr_{[\sigma]} \left[\frac{\partial}{\partial J^\dagger}
                           \frac{\partial}{\partial J}\right]^{ \tp}
      \bigl(\Tr(P J^\dagger Q J)\bigr)^n.
\end{equation}
Now, we expand both the multi-trace 
invariant operator and the $n$th power
of $\Tr(P J^\dagger Q J)$ appearing in r.h.s. of \eqref{eq:ordernmultitrace}.
Using the definition of $\Tr_{[\sigma]}$, we write
\begin{equation}
\label{eq:matrix-expanded}
\Tr_{[\sigma]} \left[\frac{\partial}{\partial J^\dagger}
                      \frac{\partial}{\partial J}\right]^{ \tp}
\bigl(\Tr(P J^\dagger Q J)\bigr)^n
=
\sum_{\bm t,\bm t'}
\prod_{i=1}^n
\frac{\partial}{\partial J_{t'_i t_i}}
\frac{\partial}{\partial J^\dagger_{t_{\sigma(i)} t'_i}}
\;
\sum_{\bm k,\bm k',\bm \ell,\bm \ell'}
\prod_{j=1}^n
J_{k'_jk_j}P_{k_j\ell_j}
J^\dagger_{\ell_j\ell'_j}Q_{\ell'_j k'_j},
\end{equation}
where $\bm t=(t_1,\dots,t_n)$, $\bm t'=(t'_1,\dots,t'_n)$, and similarly for
$\bm k,\bm k',\bm \ell,\bm \ell'$.

Applying the multivariate Leibniz rule to distribute the derivatives
among the $n$ factors of $J$ and $J^\dagger$ produces a sum over two
permutations $\mu,\gamma\in S_n$ specifying which factor is hit by
each derivative:
\begin{equation}
\begin{aligned}
 &\prod_{i=1}^n
   \frac{\partial}{\partial J_{t'_i t_i}}
   \frac{\partial}{\partial J^\dagger_{t_{\sigma(i)} t'_i}}
   \prod_{j=1}^n
   J_{k'_jk_j}P_{k_j\ell_j}
   J^\dagger_{\ell_j\ell'_j}Q_{\ell'_j k'_j}
\\[2mm]
 &= \sum_{\mu,\gamma\in S_n}
    \prod_{i=1}^n
    \frac{\partial}{\partial J_{t'_i t_i}}
        J_{k'_{\mu(i)}k_{\mu(i)}}P_{k_{\mu(i)}\ell_{\mu(i)}}
    \frac{\partial}{\partial J^\dagger_{t_{\sigma(i)} t'_i}}
        J^\dagger_{\ell_{\gamma(i)}\ell'_{\gamma(i)}}Q_{\ell'_{\gamma(i)}k'_{\gamma(i)}}.
\end{aligned}
\end{equation}
Each factor can be differentiated explicitly, leading to
\begin{equation}
\frac{\partial}{\partial J_{t'_i t_i}}
 J_{k'_{\mu(i)}k_{\mu(i)}}
= \delta_{k'_{\mu(i)},t'_i}\,\delta_{k_{\mu(i)},t_i},\qquad
\frac{\partial}{\partial J^\dagger_{t_{\sigma(i)} t'_i}}
 J^\dagger_{\ell_{\gamma(i)}\ell'_{\gamma(i)}}
= \delta_{\ell_{\gamma(i)},t_{\sigma(i)}}\,\delta_{\ell'_{\gamma(i)},t'_i}.
\end{equation}
Substituting these expressions back into \eqref{eq:matrix-expanded} and performing the sums over $J$–indices yields a product of Kronecker deltas, we obtain
\begin{equation}
    \Tr_{[\sigma]} \left[\frac{\partial}{\partial J^\dagger}
                          \frac{\partial}{\partial J}\right]^{ \tp}
    \bigl(\Tr(P J^\dagger Q J)\bigr)^n
    = \sum_{\mu,\gamma\in S_n}
      \sum_{\bm t,\bm \ell,\bm k'}
      \prod_{i=1}^n
      P_{t_i,\ell_{\mu(i)}}\,
      \delta_{\ell_{\gamma(i)},t_{\sigma(i)}}\,
      Q_{k'_{\mu(i)},k'_{\gamma(i)}}.
\end{equation}
Evaluating the sums over $t_i$ and relabelling indices in a standard
way reduces the last expression to
\begin{equation}
    \Tr_{[\sigma]} \left[\frac{\partial}{\partial J^\dagger}
                          \frac{\partial}{\partial J}\right]^{ \tp}
    \bigl(\Tr(P J^\dagger Q J)\bigr)^n
    = n!\sum_{\gamma\in S_n}
       \sum_{\bm \ell,\bm k'}
       \prod_{i=1}^n
       P_{\ell_i,\ell_{\gamma\sigma(i)}}\,
       Q_{k'_i,k'_{\gamma(i)}}.
\end{equation}
Using the definition of the multi-traces, the remaining sums factorize,
\[
\sum_{\bm \ell}\prod_{i=1}^n P_{\ell_i,\ell_{\gamma\sigma(i)}}
   = \Tr_{[\gamma\sigma]}(P),\qquad
\sum_{\bm k'}\prod_{i=1}^n Q_{k'_i,k'_{\gamma(i)}}
   = \Tr_{[\gamma]}(Q).
\]
Hence
\begin{equation}
    \Tr_{[\sigma]} \left[\frac{\partial}{\partial J^\dagger}
                          \frac{\partial}{\partial J}\right]^{ \tp}
    \bigl(\Tr(P J^\dagger Q J)\bigr)^n
    = n!\sum_{\gamma\in S_n}
      \Tr_{[\gamma]}(Q)\,\Tr_{[\gamma\sigma]}(P).
\end{equation}
Substituting back into the expression for the expectation value \eqref{eq:ordernmultitrace}, we obtain
\[
    \left\langle \Tr_{[\sigma]} (M^\dagger M)\right\rangle[0](P)
    = \frac{1}{n!}N^{-n}\,
      n!\sum_{\gamma\in S_n}
      \Tr_{[\gamma]}(Q)\,\Tr_{[\gamma\sigma]}(P)
    = N^{-n}\sum_{\gamma\in S_n}
      \Tr_{[\gamma]}(Q)\,\Tr_{[\gamma\sigma]}(P),
\]
which is precisely the statement of Proposition~\ref{avM}.
\end{proof}

\paragraph{Proof of Proposition~\ref{avAB}.}

\begin{proof}
As in Section~\ref{subsec:HMmodel}, repeated integration by parts shows that for any $\sigma\in S_n$ and any potential $Y$ one has
\begin{equation}\label{eq:HM-deriv}
\left\langle \Tr_{[\sigma]}(A)\right\rangle[0,Y]
    = (iN)^{-n}\,
      \Tr_{[\sigma]} \left(\frac{\partial}{\partial B}\right)
      e^{- \Tr\ln\bigl(\mathbb 1^{\otimes 2}- i\,Q\otimes(PB)\bigr)}\Big|_{B=0}.
\end{equation}
Here, we use the expansion
\[
e^{- \Tr\ln\bigl(\mathbb 1^{\otimes 2}- i\,Q\otimes(PB)\bigr)}
 = \sum_{m=0}^\infty \frac{1}{m!}
      \sum_{\mu\in S_m}
      \Tr_{[\mu]}\bigl(i\,Q\otimes(PB)\bigr)= \sum_{m=0}^\infty \frac{i^m}{m!}
      \sum_{\mu\in S_m} \Tr_{[\mu]}(Q)
      \Tr_{[\mu]}\bigl(PB\bigr),
\]
 Substituting this into \eqref{eq:HM-deriv} gives
\begin{equation}
\left\langle \Tr_{[\sigma]}(A)\right\rangle[0,Y_{\ln}[P,Q]]
  = (iN)^{-n}
    \sum_{m=0}^\infty \frac{i^m}{m!}\sum_{\mu\in S_m}
    \Tr_{[\sigma]} \left(\frac{\partial}{\partial B}\right)
    \Tr_{[\mu]}(Q)
      \Tr_{[\mu]}\bigl(PB\bigr)|_{B=0}.
\end{equation}
The operator $\Tr_{[\sigma]}(\partial/\partial B)$ is homogeneous of degree $n$ in the derivatives, so only the term $m=n$ contributes. This leads to
\begin{equation}\label{eq:HM-precomb}
\left\langle \Tr_{[\sigma]}(A)\right\rangle[0,Y_{\ln}[P,Q]]
  = \frac{1}{n!}\,N^{-n}
    \sum_{\mu\in S_n}
    \Tr_{[\sigma]} \left(\frac{\partial}{\partial B}\right)
     \Tr_{[\mu]}(Q)
      \Tr_{[\mu]}(PB)\;.
\end{equation}
Since $\Tr_{[\mu]}(Q)$ does not depend on $B$, the derivative acts only on 
$\Tr_{[\mu]}(PB)$.  Using
\[
\Tr_{[\mu]}(PB)
  = \sum_{\bm a,\bm c}\;
    \prod_{j=1}^n P_{a_j c_j}\,B_{c_j a_{\mu(j)}}\;,
\qquad
\Tr_{[\sigma]} \left(\frac{\partial}{\partial B}\right)
  = \sum_{\bm k}\prod_{j=1}^n
     \left[\frac{\partial}{\partial B}\right]_{k_jk_{\sigma(j)}},
\]
where $\bm a=(a_1,\dots,a_n)$, and similarly for $\bm c$ and $\bm k$, the product rule gives as a result
\begin{align}
\Tr_{[\sigma]} \left(\frac{\partial}{\partial B}\right)\Tr_{[\mu]}(PB)
 &= 
 \sum_{\nu\in S_n}
 \sum_{\bm k}\sum_{\bm a,\bm c}
 \prod_{j=1}^n
  P_{a_{\nu(j)}c_{\nu(j)}}
  \left[\frac{\partial}{\partial B}\right]_{k_jk_{\sigma(j)}}
  B_{c_{\nu(j)}a_{\mu(\nu(j))}} \nonumber\\[4pt]
 &= \sum_{\nu\in S_n}
    \sum_{\bm k}\sum_{\bm a,\bm c}
    \prod_{j=1}^n
      P_{a_{\nu(j)}c_{\nu(j)}}
      \delta_{k_{\sigma(j)},\,a_{\mu(\nu(j))}}
      \delta_{k_j,\,c_{\nu(j)}}\;.
\end{align}
The $\bm a$- and $\bm c$-sums collapse by using the deltas, leaving only a constraint among 
the $\bm k$--indices.  Then, one obtains
\[
\Tr_{[\sigma]} \left(\frac{\partial}{\partial B}\right)\Tr_{[\mu]}(PB)
  = \sum_{\nu\in S_n}
    \sum_{\bm k}\prod_{j=1}^n P_{k_{\sigma\nu^{-1}\mu^{-1}\nu(j)}\,k_{j}}\;,
\]
where the sum over $\bm a$ is precisely the trace invariant
\[
\sum_{\bm k}\prod_{j=1}^n P_{k_{\sigma\nu^{-1}\mu^{-1}\nu(j)}\,k_{j}}
  = \Tr_{[\sigma\nu^{-1}\mu^{-1}\nu]}(P).
\]
Substituting into~\eqref{eq:HM-precomb} yields
\begin{equation}
\left\langle \Tr_{[\sigma]}(A)\right\rangle[0,Y_{\ln}[P,Q]]
  = \frac{1}{n!}\,N^{-n}
    \sum_{\mu,\nu\in S_n}
     \Tr_{[\mu]}(Q)\Tr_{[\sigma\nu^{-1}\mu^{-1}\nu]}(P)
     \;.
\end{equation}
With the substitution $\mu\rightarrow \nu\mu^{-1}\nu^{-1}$ and summing over $\nu$,  we find
\[
\left\langle \Tr_{[\sigma]}(A)\right\rangle[0,Y_{\ln}[P,Q]]
  = N^{-n}
    \sum_{\mu\in S_n}\Tr_{[\mu]}(Q)\,\Tr_{[\mu\sigma]}(P),
\]
which is the desired result.
\end{proof}

\paragraph{Proof of Proposition~\ref{avTM}.}

\begin{proof}
Let $\psi\in\mathbb C_N^{\otimes D}$ be an external source tensor, and consider the linear source term
\[
W[\psi,\psi^\dagger](\phi,\phi^\dagger)
    := \psi^\dagger\phi + \phi^\dagger\psi \;.
\]
The corresponding generating functional is
\begin{equation}
\label{eq:ZCTRW}
     {\cal Z}_{\rm CT}
     [W[\psi,\psi^\dagger]](R)
    =\iint {\cal D}\phi^\dagger {\cal D}\phi\;
      e^{-\phi^\dagger 
      R^{-1}
      \phi+\psi^\dagger\phi+\phi^\dagger\psi}\;.
\end{equation}
By standard Gaussian integration, one has that
\begin{equation}
    {\cal Z}_{\rm CT}
    [W[\psi,\psi^\dagger]](R)
      = {\cal Z}_{\rm CT}[0]
      (R)
      \,
        e^{\psi^\dagger 
        R
        \psi}\;.
\end{equation}
The expectation values of trace invariants of tensors $\Tr_{[{\bm\sigma}]}(\phi\phi^\dagger)$ can be generated by differentiating with respect to the sources. More precisely,
\begin{equation}
    \langle\Tr_{[{\bm\sigma}]}(\phi\phi^\dagger)\rangle[0]
    (R)
    = ({\cal Z}_{\rm CT}[0]
    (R))^{-1}
    \,
      \left.\Tr_{[{\bm\sigma}]} \left(
      \frac{\partial}{\partial \psi^{\dagger}}\,
      \frac{\partial}{\partial \psi}
      \right)
      {\cal Z}_{\rm CT}
      (R)
      [W[\psi,\psi^\dagger]]
      \right|_{\psi=0}\;.
\end{equation}
Using the explicit form of ${\cal Z}_{\rm CT}[W](R)$ \eqref{eq:ZCTRW},
\begin{equation}
    \langle\Tr_{[{\bm\sigma}]}(\phi\phi^\dagger)\rangle[0] (R)= \left.
      \Tr_{[{\bm\sigma}]} \left(
        \frac{\partial}{\partial \psi^{\dagger}}\,
        \frac{\partial}{\partial \psi}
      \right)
      e^{\psi^\dagger R \psi} \right|_{\psi=0}\;.
\end{equation}
Only the term of total degree $n$ in $\psi$ and $\psi^\dagger$ contributes to this expression, hence
\begin{equation}
    \langle\Tr_{[{\bm\sigma}]}(\phi\phi^\dagger)\rangle[0]
    (R)
    = \frac{1}{n!}\,
      \Tr_{[{\bm\sigma}]} \left(
        \frac{\partial}{\partial \psi^{\dagger}}\,
        \frac{\partial}{\partial \psi}
      \right)
      (\psi^\dagger  R
      \psi)^n\;.
\end{equation}
Now, we expand both the trace invariant
and the $n$th power of the quadratic form. 
Writing multi-indices as $\bm a,\bm b,\bm k$ and using the definition of $\Tr_{[{\bm\sigma}]}$, we obtain
\begin{equation}
     \langle\Tr_{[{\bm\sigma}]}(\phi\phi^\dagger)\rangle[0]
     (R)
     = \frac{1}{n!}
       \Biggl(\sum_{\bm k}
              \prod_{j=1}^n
              \left[\frac{\partial}{\partial \psi}\right]_{\bm k_j}
              \left[\frac{\partial}{\partial \psi^{\dagger}}\right]^{(\bm{\sigma}_*\bm k)_j}
       \Biggr)
       \Biggl(\sum_{\bm a,\bm b}
              \prod_{l=1}^n
              \psi^\dagger_{\bm a_l} 
              R^
              {\bm a_l}_{\bm b_l}\psi^{\bm b_l}
       \Biggr).
\end{equation}
Applying the (multi-variable) Leibniz rule and distributing the derivatives among the $n$ factors of $\psi$ and $\psi^\dagger$ generates the following sum over permutations $\mu,\nu\in S_n$
\begin{equation}
     \langle\Tr_{[{\bm\sigma}]}(\phi\phi^\dagger)\rangle[0]
     (R)
     = \frac{1}{n!}
       \sum_{\mu,\nu\in\Sy_n}
       \sum_{\bm k}
       \sum_{\bm a,\bm b}
       \prod_{j=1}^n
       \delta_{\bm a_{\mu(j)}}^{(\bm{\sigma}_*\bm k)_j}\,
       R^{\bm a_j}_{\bm b_j}\,
       \delta^{\bm b_{\nu(j)}}_{\bm k_j}\;.
\end{equation}
Summing over $\bm a$ and $\bm b$ enforces these Kronecker deltas and yields
\begin{equation}
     \langle\Tr_{[{\bm\sigma}]}(\phi\phi^\dagger)\rangle[0]
     (R)
     = \frac{1}{n!}
       \sum_{\mu,\nu\in\Sy_n}
       \sum_{\bm k}
       \prod_{j=1}^n
         R_{\bm k_j}^{(\bm{\sigma}_*\bm k)_{\mu^{-1}\nu(j)}}\;.
\end{equation}
A change of variables $\mu\mapsto \nu\mu$ removes the sum over $\nu$, this results in
\begin{equation}
     \langle\Tr_{[{\bm\sigma}]}(\phi\phi^\dagger)\rangle[0]
     (R)
     = \sum_{\mu\in\Sy_n}
       \sum_{\bm k}
       \prod_{j=1}^n
        R_{\bm k_j}^{(\bm{\sigma}_*\bm k)_{\mu^{-1}(j)}}\;.
\end{equation}
Finally, by the very definition of the action of a multi-permutation on multi-indices,
\[
(\bm{\sigma}_*\bm k)_{\mu^{-1}(j)}
    = ((\mu\bm{\sigma})_*\bm k)_{j}\,,
\]
we have that
\begin{equation}
     \langle\Tr_{[{\bm\sigma}]}(\phi\phi^\dagger)\rangle[0](R)
     = \sum_{\mu\in\Sy_n}
       \sum_{\bm k}
       \prod_{j=1}^n
         R_{\bm k_j}^{((\mu\bm{\sigma})_*\bm k)_{j}}\;.
\end{equation}
The right-hand side is exactly the definition of the trace invariant $\Tr_{[\mu\bm\sigma]}(R)$. Furthermore, summing over $\mu$ results in
\[
    \langle\Tr_{[{\bm\sigma}]}(\phi\phi^\dagger)\rangle[0]
    (R)
    = \sum_{\mu\in\Sy_n}\Tr_{[\mu\bm\sigma]}
    (R)
    \;,
\]
which is the claimed result.
\end{proof}

\paragraph{Proof of Proposition~\ref{avPhiPsi}.}
\begin{proof}
From \eqref{eq:dpsi} we have, for $\bm\sigma\in S_n^D$,
\begin{equation}
    \langle\Tr_{[{\bm\sigma}]}(\Phi)\rangle[0,Y]
    = i^{-n}\,\Tr_{[{\bm\sigma}]} \left ( \frac{\partial}{\partial \Psi} \right)
      e^{-\Tr\ln\bigl( {\bb 1}^{\otimes D}-i 
      R\Psi\bigr)}\Big|_{\Psi=0}\;.
\end{equation}
Expanding the exponential and the logarithm, we only need the term of total degree $n$ in $\Psi$. This is
\begin{equation}
    e^{-\Tr\ln\bigl( {\bb 1}^{\otimes D}-i 
    R
    \Psi\bigr)}
    = \sum_{m=0}^\infty\frac{1}{m!}
      \sum_{\mu\in\Sy_m}\Tr_{[\mu]}(i 
      R
      \Psi)\;,
\end{equation}
so that
\begin{equation}
    \langle\Tr_{[{\bm\sigma}]}(\Phi)\rangle[0,Y]
    = \frac{1}{n!}\sum_{\mu\in \Sy_n}
      \Tr_{[{\bm\sigma}]} \left ( \frac{\partial}{\partial \Psi} \right)
      \Tr_{[\mu]}\bigl(
      R
      \Psi\bigr)\;.
\end{equation}
Using the explicit definitions of the trace invariants, we write
\begin{equation}
      \langle\Tr_{[{\bm\sigma}]}(\Phi)\rangle[0,Y]
      = \frac{1}{n!}\sum_{\mu\in\Sy_n}
        \Biggl(\sum_{\bm k}
               \prod_{j=1}^n
               \left[\frac{\partial}{\partial \Psi}\right]^{(\bm{\sigma}_*\bm k)_j}_{\bm k_j}
        \Biggr)
        \Biggl(\sum_{\bm a,\bm b}
               \prod_{l=1}^n
               R^{\bm a_{\mu(l)}}_{\bm b_l}
               \Psi^{\bm b_l}_{\bm a_{l}}
        \Biggr)\;.
\end{equation}
Applying the distributive (multi-variable) Leibniz rule and distributing the derivatives among the factors of $\Psi$ produces a sum over permutations $\nu\in S_n$. This results in
\begin{equation}
     \langle\Tr_{[{\bm\sigma}]}(\Phi)\rangle[0,Y]
     = \frac{1}{n!}
       \sum_{\mu\in\Sy_n}\sum_{\nu\in\Sy_n}
       \sum_{\bm k}
       \sum_{\bm a,\bm b}
       \prod_{j=1}^n
       R^{\bm a_{\mu(j)}}_{\bm b_j}\,
       \delta_{\bm a_{\nu(j)}}^{(\bm{\sigma}_*\bm k)_j}\,
       \delta^{\bm b_{\nu(j)}}_{\bm k_j}\;.
\end{equation}
Summing over $\bm a$ and $\bm b$ gives
\begin{equation}
    \langle\Tr_{[{\bm\sigma}]}(\Phi)\rangle[0,Y]
    = \frac{1}{n!}
      \sum_{\mu\in\Sy_n}\sum_{\nu\in\Sy_n}
      \sum_{\bm k}
      \prod_{j=1}^n
      R^{(\bm{\sigma}_*\bm k)_{\nu^{-1}\mu\nu(j)}}_{\bm k_j}\;.
\end{equation}
Since a conjugation change of variables $\mu\mapsto \nu\mu\nu^{-1}$ removes the sum over $\nu$, we obtain
\begin{equation}
      \langle\Tr_{[{\bm\sigma}]}(\Phi)\rangle[0,Y]
      = \sum_{\mu\in\Sy_n}
        \sum_{\bm k}
        \prod_{j=1}^n
        R_{\bm k_j}^{(\bm{\sigma}_*\bm k)_{\mu(j)}}\;.
\end{equation}
Replacing $\mu$ by its inverse (which runs again over all of $S_n$) gives
\begin{equation}
      \langle\Tr_{[{\bm\sigma}]}(\Phi)\rangle[0,Y]
      = \sum_{\mu\in\Sy_n}
        \sum_{\bm k}
        \prod_{j=1}^n
        R_{\bm k_j}^{(\bm{\sigma}_*\bm k)_{\mu^{-1}(j)}}\;.
\end{equation}
As before, the action of a multi-permutation on multi-indices satisfies
\[
(\bm{\sigma}_*\bm k)_{\mu^{-1}(j)}
  = \bigl((\mu\bm{\sigma})_*\bm k\bigr)_{j}\,.
\]
Therefore, we obtain
\begin{equation}
      \langle\Tr_{[{\bm\sigma}]}(\Phi)\rangle[0,Y]
      = \sum_{\mu\in\Sy_n}
        \sum_{\bm k}
        \prod_{j=1}^n
        R_{\bm k_j}^{((\mu\bm{\sigma})_*\bm k)_{j}}\;.
\end{equation}
The product in the last expression is precisely the definition of the trace invariant
$\Tr_{[\mu\bm\sigma]}(R)$, hence
\[
\left\langle \Tr_{[\bm\sigma]}(\Phi)\right\rangle[0,Y]
    = \sum_{\mu\in S_n}
      \Tr_{[\mu\bm\sigma]}
      (R)\;,
\]
which is the desired formula.
\end{proof}

\section{Equivalences for the quartic models}\label{quaex}
\noindent In this Appendix, we present explicit demonstrations of the equivalences we proved for quartic 
complex
matrix (Theorem~\ref{matrixequiv_thm}) and tensor models (Theorems~\ref{mainthm} and~\ref{theorem:partialtensor}). These examples are particularly instructive, as they show how the presence of the matrix $C_2$ \eqref{cnprop} in the propagator effectively reduces a quartic interaction to a quadratic/Gaussian model. 

\subsection{Quartic matrix models}\label{app:ampli}
\noindent We start by writing the partition function \eqref{model1} as a power series in~$g$,
\begin{equation}\label{eq:Zmodel1}
    \mathcal{Z}(g)
    =\sum_{k=0}^{\infty}\frac{1}{k!}\left(-\frac{Ng}{2}\right)^k
    \bigl\langle\bigl(\Tr[(M^\dagger M)^2]\bigr)^k\bigr\rangle\;,
\end{equation}
where
\begin{equation}
    \langle f(M)\rangle
    =\frac{1}{\mathcal N}\int dM\,f(M)\,
    e^{-N\Tr\left[M^\dagger C_2^{-1} M\right]}
\end{equation}
denotes the Gaussian average of any function $f$. The two–point functions are
\begin{equation}
    \langle M_{ij}^\dagger M_{kl}\rangle
    = \frac{1}{N}(C_2)_{il}\delta_{kj}\;,
\end{equation}\label{propeq}
while
\begin{equation}
    \langle M_{ij} M_{kl}\rangle = 0\;, \qquad
    \langle M_{ij}^\dagger M_{kl}^\dagger\rangle = 0\;.
\end{equation}\label{nopropeq}
In order to restrict the expansion \eqref{eq:Zmodel1} to connected ribbon graphs, we work with the free energy
\begin{equation}
    \mathcal{F}(g)=\ln \frac{\mathcal Z(g)}{\mathcal Z(0)}
    =\sum_{k=0}^{\infty}\frac{1}{k!}\left(-\frac{Ng}{2}\right)^k
    \bigl\langle\bigl(\Tr[(M^\dagger M)^2]\bigr)^k\bigr\rangle_{c}\;,
\end{equation}\label{fre}
where $\langle \cdot\rangle_{c}$ denotes connected ribbon graphs.

We proceed by analyzing the combinatorics of a single quartic ribbon-graph vertex. Each matrix $M^\dagger$ must contract with a matrix $M$. If it contracts with the matrix $M$ immediately to its right, we obtain the configuration in Figure~\ref{vertprop1}. The corresponding propagator forms a ribbon-graph face that carries $\Tr C_2$, while the enclosed face contributes $\Tr \mathbb{1}=N$. Since we assume $\Tr C_2 = 0$, the amplitude of such graphs vanishes, meaning that they have weight zero.

\begin{figure}[h]
    \centering
    \begin{subfigure}{0.3\textwidth}
    \centering
    \includegraphics[width=0.6\linewidth]{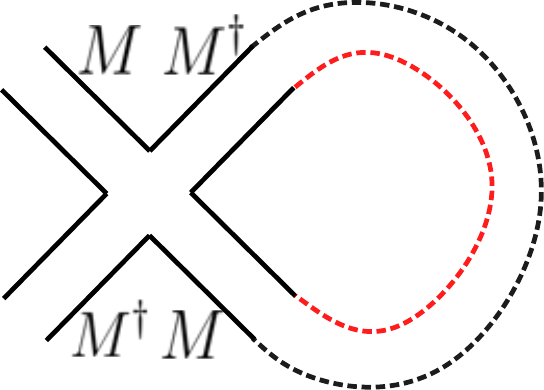}
    \caption{ }
    \label{vertprop1}
\end{subfigure}
\begin{subfigure}{0.3\textwidth}
    \centering
    \includegraphics[width=0.4\linewidth]{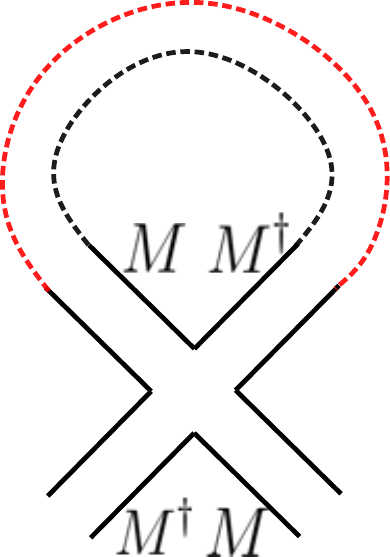}
    \caption{ }
    \label{vertprop2}
\end{subfigure}
\begin{subfigure}{0.3\textwidth}
    \centering
    \includegraphics[width=0.3\linewidth]{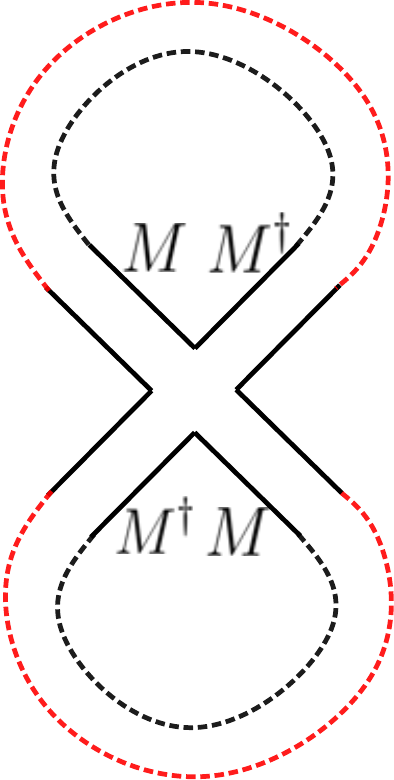}
    \caption{    }
    \label{vertexgraph}
\end{subfigure}
    \caption{Construction of a ribbon graph with one ribbon-graph vertex. The red color indicates the presence of $C_2$ matrix. (a) A vertex with $M^\dagger$ connected to the $M$ immediately to its right. (b) A vertex with $M^\dagger$ connected to the $M$ immediately to its left. (c) The only graph with one vertex with nonzero amplitude.}
    \label{1vert}
\end{figure}
If instead the $M^\dagger$ at a given vertex contracts with the $M$ on its left, we obtain the configuration of Figure~\ref{vertprop2}. Along the face where $C_2$ is present, every propagator contributes a factor of $C_2$. The requirement that this face contains exactly two such factors implies that the remaining two half–edges must connect in such a way that no additional $C_2$ matrices are introduced along that face. This forces the second edge to close on the same vertex, producing the graph in Figure~\ref{vertexgraph}. Thus, whenever an edge connects a vertex to itself, there must be a second such edge.

An analogous mechanism holds for edges connecting two distinct vertices: whenever one propagator connects two given vertices, there must be another propagator connecting the same pair. In other words, edges always come in pairs.

Next, we consider two vertices connected by a single edge as in Figure~\ref{tvop}. By the same argument as above, the presence of $C_2$ on the propagator, together with the restriction that a face can carry only two factors of $C_2$, restricts the second edge to be present as in Figure~\ref{tvtp}.
\begin{figure}[h]
    \centering
    \begin{subfigure}{0.49\textwidth}
    \centering
    \includegraphics[width=0.6
    \linewidth]{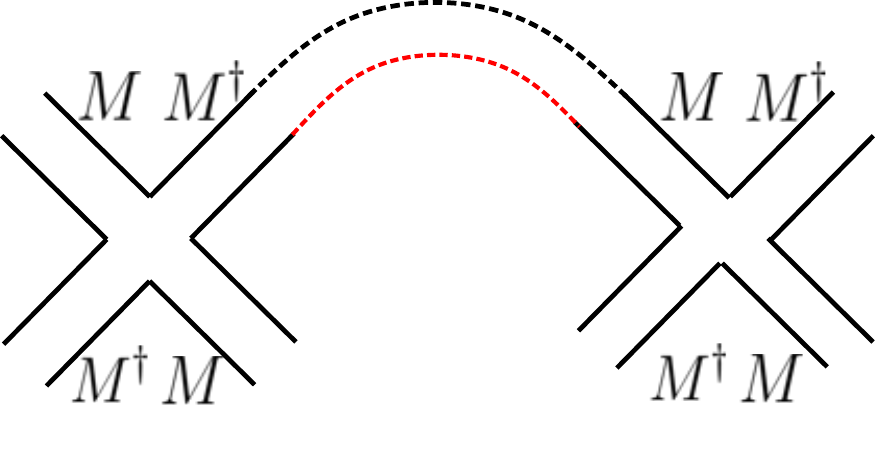}
    \caption{Two vertices connected by one propagator.}
    \label{tvop}
\end{subfigure}
\begin{subfigure}{0.49\textwidth}
    \centering
    \includegraphics[width=0.6
    \linewidth]{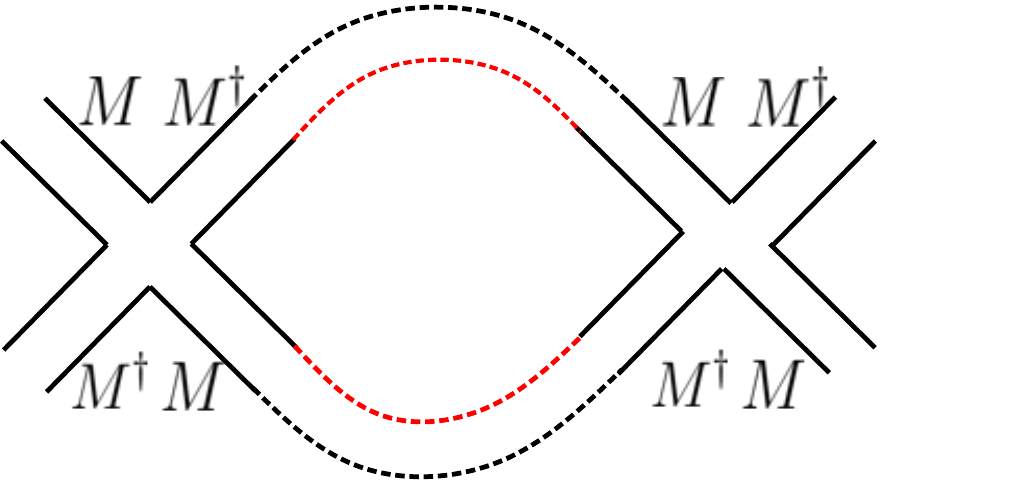}
    \caption{Two vertices connected by two propagators.}
    \label{tvtp}
\end{subfigure}
    \caption{The only possible way 
    to connect two vertices through a propagator with $C_2$ present.}
    \label{2vert}
\end{figure}

From this point, the reasoning is straightforward. After adding a pair of edges between two vertices, the four remaining half–edges are subject to the same constraints as the four half–edges of a single vertex, and the construction repeats. A connected graph with $k$ vertices therefore has the structure shown in Figure~\ref{fig:korder}. By construction, this is the only connected graph with $k$ vertices, up to combinatorial symmetries.

At this point, we can compute the symmetry factor and amplitude of each of these connected graphs. For each connected graph with $k$ vertices there are $(k-1)!$ ways to order these vertices along the loop. Each vertex also has a $\mathbb{Z}_2$ symmetry corresponding to a rotation, so fixing one vertex as a reference yields an additional factor $2^{k-1}$. The $2k$ propagators contribute a factor $N^{-2k}$, while the $k+2$ faces contribute a factor $N^{k+2}$. Thus
\begin{equation}
    \bigl\langle (\Tr[(M^\dagger M)^2])^k\bigr\rangle_{c}
    = (k-1)!\,2^{k-1}N^{-2k} N^{k+2}\;.
\end{equation}
Substituting this into \eqref{fre}, we obtain
\begin{equation}
    \mathcal{F}(g)
    =\frac{N^2}{2}\sum_{k=1}^{\infty}\frac{(-g)^k}{k}
    =-\frac{N^2}{2}\ln (1+g)\;.
\end{equation}
This implies
\begin{equation}
        \mathcal{Z}(g)=(1+g)^{-\frac{N^2}{2}}\;.
\end{equation}
This hints at an equivalence with a simpler self-adjoint model: a Gaussian matrix model. Let us show this by considering the Gaussian self–adjoint matrix model defined by the partition function
\begin{equation}\label{eq:Gaussian_mm_g}
    \mathcal{Z}'(g)=\frac{1}{\mathcal N'}\int dA\,e^{-N\Tr\left(\frac{1+g}{2}A^2\right)}\;,
\end{equation}
where $\mathcal N'$ is chosen such that $\mathcal{Z}'(0)=1$. A simple change of variables shows that
\begin{equation}
    \mathcal{Z}'(g)=(1+g)^{-\frac{N^2}{2}}\;.
\end{equation}
Therefore,
\begin{equation}
    \mathcal{Z}(g)=\mathcal{Z}'(g)\;.
\end{equation}
In other words, the complex matrix model with a propagator containing $C_2$  and a quartic potential \eqref{model1} is \emph{equivalent}, at the level of the partition function, to a self–adjoint matrix model without $C_2$ and with a purely quadratic potential \eqref{eq:Gaussian_mm_g}.

\subsection{Quartic pillow tensor models}\label{app:pillows}

\noindent Having discussed the matrix case, we now consider the analogous tensor model case. In this case, the action of the self–adjoint tensor model is purely quadratic, and, again, we make the equivalence explicit by computing and matching the partition functions. Concretely, we focus on the order $D=3$ complex model with partition function \eqref{pfpil} for the case $R=C_2\otimes\bb 1^{\otimes 2}$. This is
\begin{equation}\label{eq:Z_CT4_app}
    {\cal Z}_{\rm CT}(\lambda)
    =\iint {\cal D}\phi^\dagger {\cal D}\phi\;e^{-\phi^\dagger( C_2^{-1}\otimes {\bb 1}^{\otimes 2})\phi+V_4(\lambda)}\;,
\end{equation}
with
\begin{equation*}
\label{eq:potquarticpillowcomplex}
    V_{\rm 4}(\lambda)= N\frac{\lambda}{2}\sum_{a,b}\sum_{c=1}^3{\phi^\dagger}_a\phi^{a_{\hat{c}}b }{\phi^\dagger}_b\phi^{b_{\hat{c}}a }\;.
\end{equation*}
For convinience, in this Appendix, we refer to the explicit dependence on only the coupling constant $\lambda$. This was made implicit in Subsection \ref{sec:examplestensor}.

The kinetic term of the order-3 complex tensor model \eqref{eq:Z_CT4_app} contains $C_2$ contracted with the first index of the tensor, which breaks the symmetry among the indices/strands and singles out the color–1  index/strand. To reflect this asymmetry, we represent the edges in the Feynman graphs with a highlighted color–1 strand, as shown in Figure~\ref{props}. 
\begin{figure}[h]
    \begin{subfigure}{0.5\textwidth}
    \centering
    \includegraphics[width=
    0.8\linewidth]{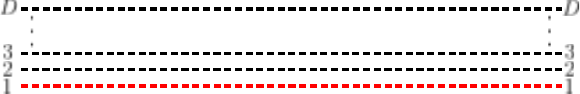}
    \caption{An edge when $D$ is any positive integer.}
    \label{propq}
    \end{subfigure}
    \begin{subfigure}{0.5\textwidth}
    \centering
    \includegraphics[width=
    0.8\linewidth]{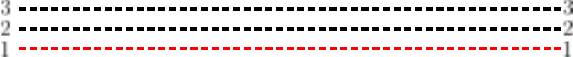}
    \caption{An edge when $D=3$.}
    \label{prop3}
    \end{subfigure}
    \caption{Edges of the complex tensor model \eqref{eq:Z_CT4_app} model are composed of $D$ strands. The strand associated with color 1 is highlighted in red and carries the factor of $C_2$ from the edge.}
    \label{props}
\end{figure}

As for the vertices, there are three types, identical up to a permutation of colors. Each vertex is pillow–like, with one color playing a distinguished role (that is, the contraction of this color includes a factor of $C_2$). This is illustrated in Figure~\ref{fig:pillows}. 

\subsubsection*{Evaluation through resummation.}\label{pillowmod}

\noindent
In this subsection, we evaluate one side of the equivalence by computing the partition function of the complex tensor model using graph resummation.

First, we compute certain averages of the pillow tensor model by resumming the connected graphs of $v$ pillow vertices. There are $v_1$ pillows of color 1, $v_2$ pillows of color 2, and $v_3$ pillows of color 3. That is, $v=v_1+v_2+v_3$. In this case, we compute the connected averages
\begin{equation}\label{v123}
   \left\langle \prod_{k=1}^{3}\left(\sum_{a,b}\varphi^a{\varphi^\dagger}_{a_{\hat k} b_k}{\varphi^\dagger}_{b_{\hat k} a_k}\varphi^b\right)^{v_k}\right\rangle_{c}(\lambda=0)\;,
\end{equation}
for $v_1\geq0$, $v_2\geq0$ and $v_3\geq0$, with $v=v_1+v_2+v_3>0$. In general, the graph of this theory looks like the one shown in Figure~\ref{gp}.
These are necklace graphs composed of a cycle of the three types of pillows, and their amplitudes differ from one another. The average $\eqref{v123}$ is the sum of the amplitudes of all the respective associated connected graphs.
\begin{figure}[h]
    \centering
    \includegraphics[width=
    0.75\linewidth]{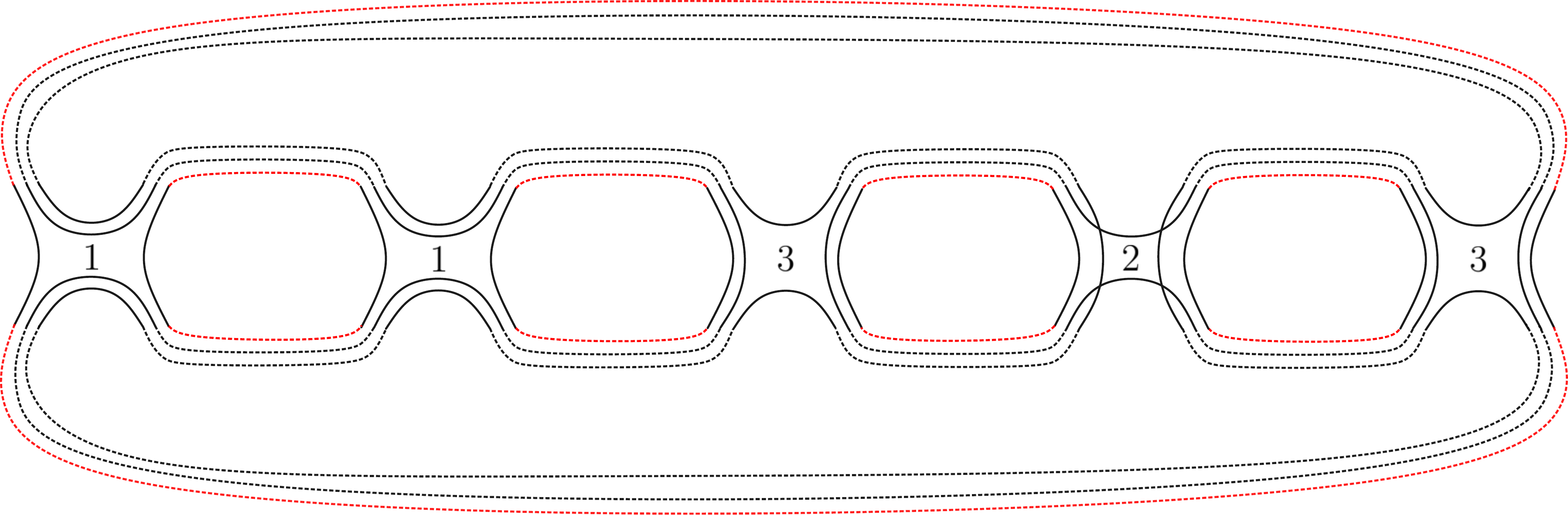}
    \caption{An example of a graph in our complex quartic pillow tensor theory \eqref{eq:Z_CT4_app}.
    This example contains five vertices, including two pillows of color 1, one pillow of color 2, and two pillows of color 3. In general, a graph in this theory is a necklace composed of a sequence of pillows of various colors.}
    \label{gp}
\end{figure}
\begin{figure}[h]
    \begin{subfigure}{0.33\textwidth}
    \centering
    \includegraphics[width=
    0.7\linewidth]{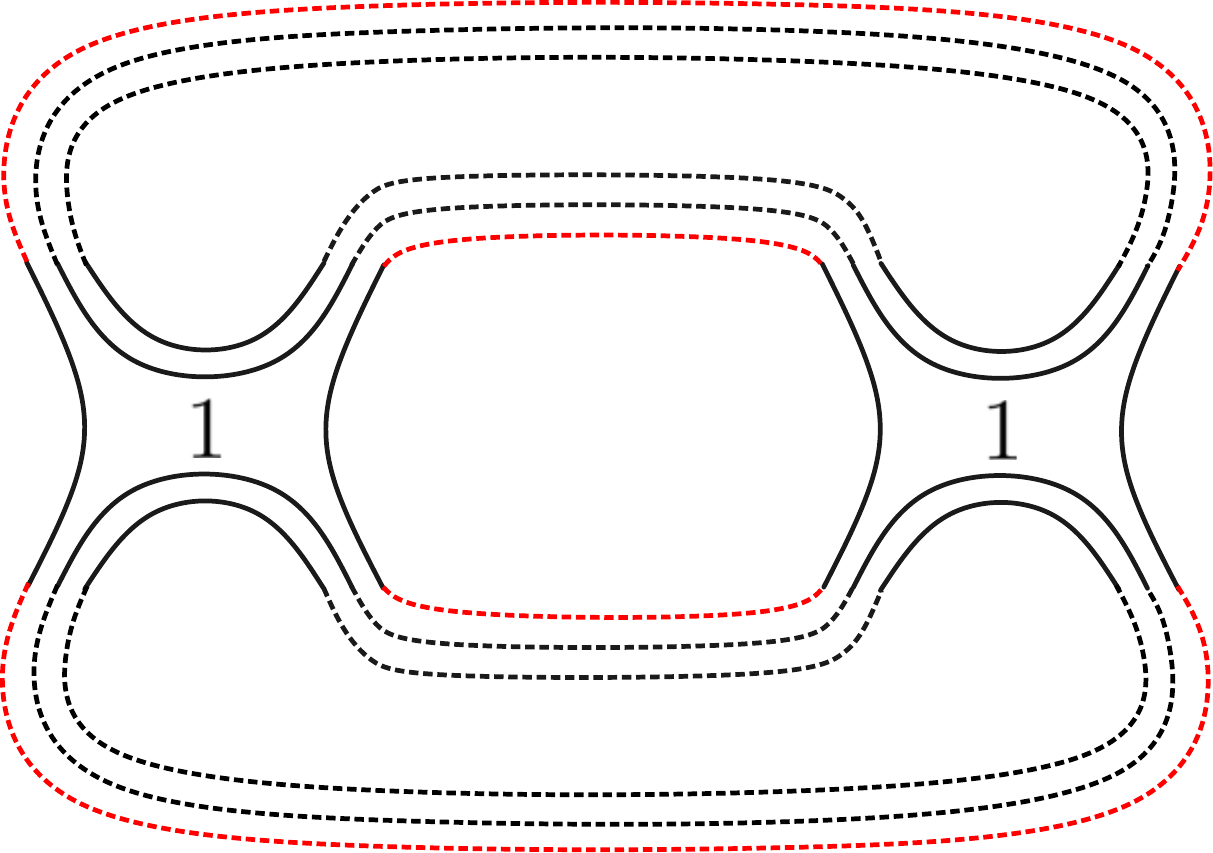}
    \caption{}
    \label{ex1pil}
    \end{subfigure}
    \begin{subfigure}{0.33\textwidth}
    \centering
    \includegraphics[width=
    0.7\linewidth]{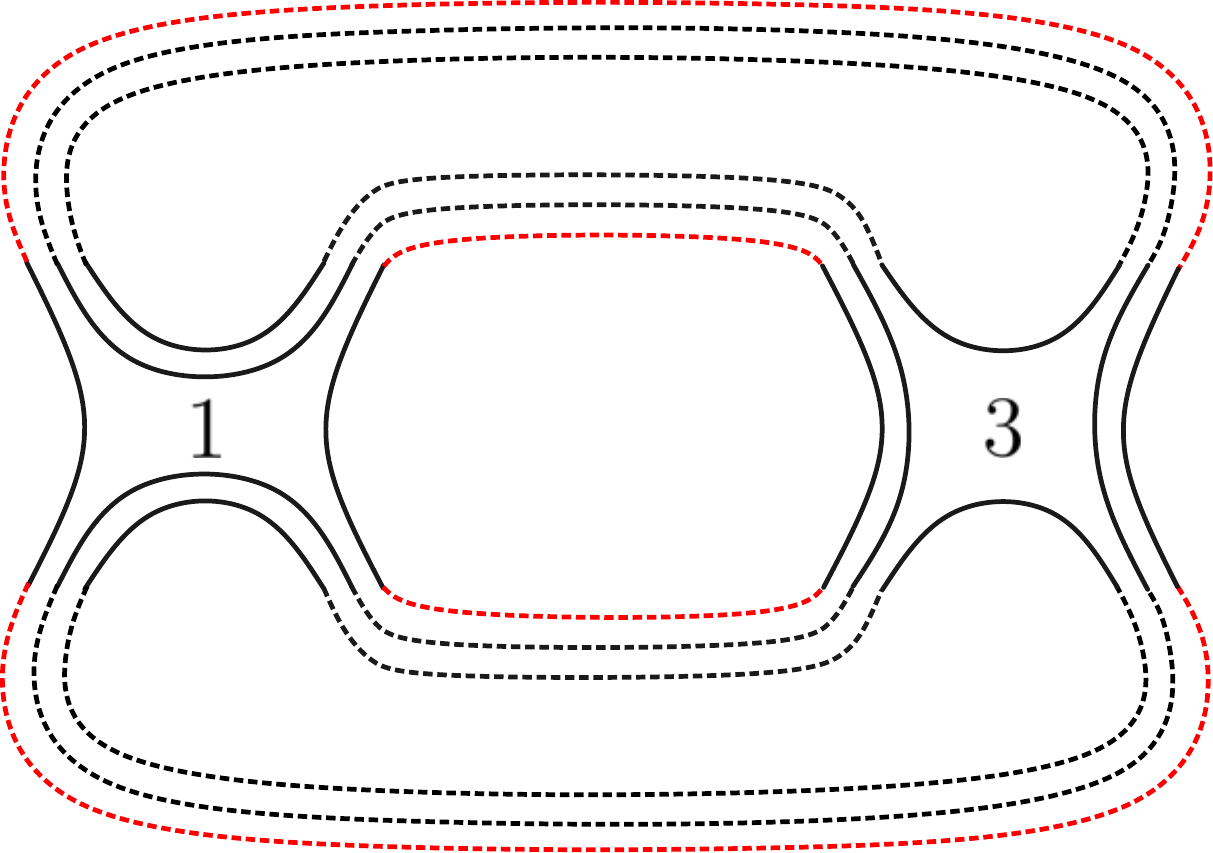}
    \caption{}
    \label{ex2pil}
    \end{subfigure}
    \begin{subfigure}{0.33\textwidth}
    \centering
    \includegraphics[width=
    0.7\linewidth]{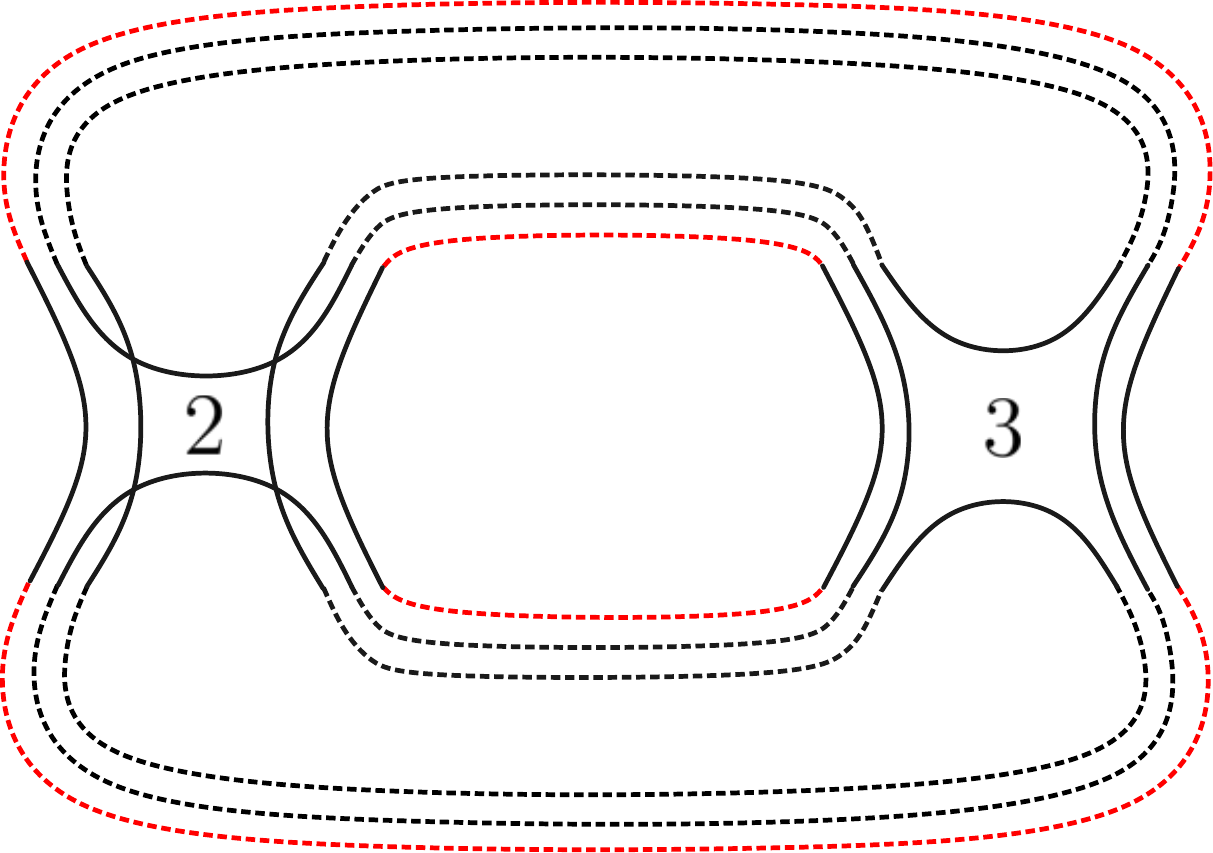}
    \caption{}
    \label{ex3pil}
    \end{subfigure}
    \caption{Examples of graphs in the complex tensor model \eqref{eq:Z_CT4_app}. (a) A graph with only color-1 vertices. (b) A graph with only color-1 and color-3 vertices. (c) A graph with color-2 and color-3 vertices.
    }
    \label{expil}
\end{figure}
Furthermore, for any graph, each edge contributes with a factor of $N^{-1}$ to the amplitude. Since all the $v$ vertices have degree $4$, there are $2v$ edges. Therefore, the edges will contribute with a factor of $N^{2v}$. 

For the faces of color 1, each contributes a factor of $N$. Since each of the $v$ vertices is a part of two color-1 faces, and every color-1 face contains two vertices, every graph with $v$ vertices has $v$ faces of color 1. Therefore, the color-1 faces contribute with a factor of $N^{v}$.

The number of faces of colors 2 and 3 will depend on the numbers $v_3$ and $v_2$, respectively. If $v_2=0$, then there are two faces of color 3. If $v_2>0$, there are $v_2$ faces of color 3. Similarly, if $v_3=0$, there are two faces of color 2, and if $v_3>0$, there are $v_3$ faces of color 2. The multiplicity factor of this graph is $2^{v-1}\frac{v_1!v_2!v_3!}{v}$, obtained by accounting for the permutation of the vertices, the parity of each vertex, and the rotation and parity symmetries of the whole cycle. 

Putting together these factors, the amplitude of a connected graph $G$ is
\begin{equation}
    A(G)=2^{v-1}\frac{v_1!v_2!v_3!}{v}N^{-2v}N^{v}N^{v_2+2 \delta_{v_2,0}}N^{v_3+2 \delta_{v_3,0}}\;. 
\end{equation}
Accounting for permutations of all the colors, there are $\frac{v!}{v_1!v_2!v_3!}$ graphs with the given amounts $v_1, v_2$ and $v_3$. Therefore, we find that
\begin{equation}\label{eq:avpil}
   \left\langle \prod_{k=1}^{3}\left(\sum_{a,b}\varphi^a{\varphi^\dagger}_{a_{\hat k} b_k}{\varphi^\dagger}_{b_{\hat k} a_k}\varphi^b\right)^{v_k}\right\rangle_{c}(0)=2^{v-1}(v-1)!N^{-2v}N^{v}N^{v_2+2 \delta_{v_2,0}}N^{v_3+2 \delta_{v_3,0}}\;.
\end{equation}
This equation gives the desired expression for the average \eqref{v123}.

Given this expression, we prove the following Proposition for the partition function of this tensor model.
\begin{proposition}
Consider a random tensor $\phi \in \mathbb{C}_N^{\otimes 3}$ distributed according to the partition function
\begin{equation}
    {\cal Z}_{\rm CT}(\lambda)
    =\iint {\cal D}\phi^\dagger {\cal D}\phi\;e^{-\phi^\dagger( C_2^{-1}\otimes {\bb 1}^{\otimes 2})\phi+V_4(\lambda)}\;,
\end{equation}
where
\begin{equation}
    V_{\rm 4}(\lambda)= N\frac{\lambda}{2}\sum_{a,b}\sum_{c=1}^3{\phi^\dagger}_a\phi^{a_{\hat{c}}b }{\phi^\dagger}_b\phi^{b_{\hat{c}}a }\;.
\end{equation}
Then,
\begin{equation}
    \frac{ {\cal Z}_{\rm CT}(\lambda)
    }{ {\cal Z}_{\rm CT}(0)
    }=(1-\lambda)^{-\frac{1}{2}(N^2-1)^2}
    (1-(1+N)\lambda)^{-(N^2-1)}
    (1-(1+2N)\lambda)^{-\frac{1}{2}}\;.
\end{equation}
\end{proposition}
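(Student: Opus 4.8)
The plan is to compute the free energy $\mathcal{F}(\lambda)=\ln\big({\cal Z}_{\rm CT}(\lambda)/{\cal Z}_{\rm CT}(0)\big)$ directly from the connected-graph data already assembled in \eqref{eq:avpil}, and then exponentiate. First I would write the normalized partition function as the Gaussian vacuum expectation $\langle e^{V_4(\lambda)}\rangle(0)$, so that by the linked-cluster theorem $\mathcal{F}(\lambda)$ is the sum of the connected correlators (cumulants) of $V_4$. Writing $V_4(\lambda)=\tfrac{N\lambda}{2}(O_1+O_2+O_3)$ with $O_c:=\sum_{a,b}\phi^\dagger_a\phi^{a_{\hat c}b}\phi^\dagger_b\phi^{b_{\hat c}a}$, the multilinearity of cumulants gives
\[
\mathcal{F}(\lambda)=\sum_{v\ge 1}\frac{1}{v!}\Big(\frac{N\lambda}{2}\Big)^{v}\sum_{v_1+v_2+v_3=v}\binom{v}{v_1,v_2,v_3}\,\big\langle O_1^{v_1}O_2^{v_2}O_3^{v_3}\big\rangle_c(0),
\]
where the inner connected correlator is exactly the quantity evaluated in \eqref{eq:avpil}.

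Next I would substitute \eqref{eq:avpil}, namely $\langle O_1^{v_1}O_2^{v_2}O_3^{v_3}\rangle_c(0)=2^{v-1}(v-1)!\,N^{-v_1+2\delta_{v_2,0}+2\delta_{v_3,0}}$ (using $v=v_1+v_2+v_3$ to reduce the $N$-power), and carry out the routine bookkeeping of factorials and powers of two. The key cancellation is $\tfrac{1}{v!}\,2^{v-1}(v-1)!=\tfrac{2^{v-1}}{v}$, which together with $(N\lambda/2)^v$ collapses the whole prefactor to $\tfrac{(N\lambda)^v}{2v}$. This leaves
\[
\mathcal{F}(\lambda)=\sum_{v\ge1}\frac{(N\lambda)^v}{2v}\,S_v,\qquad S_v:=\sum_{v_1+v_2+v_3=v}\binom{v}{v_1,v_2,v_3}N^{-v_1}N^{2\delta_{v_2,0}}N^{2\delta_{v_3,0}}.
\]

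The remaining task is to evaluate $S_v$ in closed form and recognize the resulting $\lambda$-series. The face-counting corrections are handled by linearizing $N^{2\delta_{v_j,0}}=1+(N^2-1)\delta_{v_j,0}$ for $j=2,3$, which splits $S_v$ into four multinomial sums, each summable by the multinomial theorem: the unconstrained sum gives $(N^{-1}+2)^v$, the two singly-constrained sums ($v_2=0$ or $v_3=0$) each give $(N^{-1}+1)^v$, and the doubly-constrained sum ($v_2=v_3=0$) gives $N^{-v}$. Multiplying through by $N^v$ converts these into $(1+2N)^v$, $(1+N)^v$ and $1$, so that $N^v S_v=(1+2N)^v+2(N^2-1)(1+N)^v+(N^2-1)^2$. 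Substituting back, $\mathcal{F}(\lambda)$ becomes three copies of the logarithmic series $\sum_{v\ge1}\tfrac{(a\lambda)^v}{v}=-\ln(1-a\lambda)$ with $a\in\{1+2N,\,1+N,\,1\}$ weighted by $\tfrac12,\,(N^2-1),\,\tfrac{(N^2-1)^2}{2}$ respectively; exponentiating reproduces the claimed product, including the three distinct exponents.

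The main obstacle is combinatorial rather than analytic: correctly encoding the asymmetry induced by $C_2$, i.e.\ that color-$1$ faces are rigidly pinned to length two (contributing the uniform factor $N^v$), while the color-$2$ and color-$3$ face counts jump from $v_3,v_2$ up to $2$ precisely when $v_2$ or $v_3$ vanishes. This is exactly what produces the $\delta_{v_2,0},\delta_{v_3,0}$ corrections in \eqref{eq:avpil}, and hence the three separate exponents in the final answer; the inclusion--exclusion linearization above is the cleanest way to disentangle them. Some care is also needed to ensure that the symmetry and multiplicity factors already folded into \eqref{eq:avpil} are not double-counted when passing from labeled Wick contractions to the cumulant expansion.
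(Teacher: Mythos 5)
Your proposal is correct and takes essentially the same route as the paper's proof: both expand the free energy by the linked-cluster theorem, decompose $V_4$ by color, substitute the connected correlator \eqref{eq:avpil}, and resum the resulting $\lambda$-series into the three logarithms $-\tfrac{(N^2-1)^2}{2}\ln(1-\lambda)-(N^2-1)\ln(1-(1+N)\lambda)-\tfrac12\ln(1-(1+2N)\lambda)$. The only difference is bookkeeping: you linearize $N^{2\delta_{v_j,0}}=1+(N^2-1)\delta_{v_j,0}$ and apply the multinomial theorem once, whereas the paper splits the sum into the cases $v_2,v_3$ zero or not and then recombines the constrained sums --- the same inclusion--exclusion read in the opposite direction.
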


\begin{proof}
We start by expressing the free energy ${\cal F}_{\rm CT4}(\lambda)=\ln {\cal Z}_{\rm CT4}(\lambda)$ perturbatively in the coupling $\lambda$. This is
\begin{equation}
    {\cal F}_{\rm CT}(\lambda)-{\cal F}_{\rm CT}(0)
    =\sum_{v=0}^\infty \frac{1}{v!}\left(\frac{N \lambda}{2}\right)^v
    \left\langle \left(
    \sum_{c=1}^3\sum_{a,b}{\phi^\dagger}_{a}\phi^{a_{\hat{c}}b }{\phi^\dagger}_{b}\phi^{b_{\hat{c}}a }
    \right)^v\right\rangle_{c}(0)\;,
\end{equation}
where $\langle \cdot \rangle_{c}(0)$ denotes the connected  Gaussian average at $\lambda=0$. We further separate contributions by color, writing $v=v_1+v_2+v_3$ to obtain
\begin{align}
    {\cal F}_{\rm CT}(\lambda)-{\cal F}_{\rm CT}(0)
    =\sum_{v_1,v_2,v_3=0}^\infty \frac{1}{v_1!v_2!v_3!}
    \left(\frac{N \lambda}{2}\right)^v
    \left\langle \prod_{c=1}^{3}\left(
    \sum_{a,b}{\phi^\dagger}_{a}\phi^{a_{\hat{c}}b }{\phi^\dagger}_{b}\phi^{b_{\hat{c}}a }
    \right)^{v_c}\right\rangle_{c}(0)\;.
\end{align}
Therefore, applying \eqref{eq:avpil}, the free energy evaluates to
\begin{equation}
\begin{split}
    {\cal F}_{\rm CT}(\lambda)-{\cal F}_{\rm CT}(0)
    &=\sum_{v_1=1}^\infty \frac{2^{v-1}}{v_1!}\left(\frac{N \lambda}{2}\right)^v
    (v-1)!N^{-v}N^{4}\\
    &\quad+2\sum_{\substack{v_1=0\\v_2=1}}^\infty \frac{2^{v-1}}{v_1!v_2!}\left(\frac{N \lambda}{2}\right)^v
    (v-1)!N^{-v}N^{2+v_2}\\
    &\quad+\sum_{\substack{v_1=0\\v_2=1\\v_3=1}}^\infty \frac{2^{v-1}}{v_1!v_2!v_3!}\left(\frac{N \lambda}{2}\right)^v
    (v-1)!N^{-v}N^{v_2+v_3}\;.
\end{split}
\end{equation}
This simplifies to
\begin{equation}
\begin{split}
    {\cal F}_{\rm CT}(\lambda)-{\cal F}_{\rm CT}(0)
    &=\frac{1}{2}N^4\sum_{v=1}^\infty \frac{1}{v}\lambda^v
    +N^2\sum_{\substack{v_1=0\\v_2=1}}^\infty\frac{1}{v} \frac{v!}{v_1!v_2!}\lambda^v N^{v_2}+\frac{1}{2}\sum_{\substack{v_1=0\\v_2=1\\v_3=1}}^\infty \frac{1}{v}\frac{v!}{v_1!v_2!v_3!}\lambda^v N^{v_2+v_3}\;.
\end{split}
\end{equation}
By rewriting the sums, we obtain
\begin{equation}
\begin{split}
    {\cal F}_{\rm CT}(\lambda)-{\cal F}_{\rm CT}(0)
    &=\frac{1}{2}N^4\sum_{v=1}^\infty \frac{1}{v}\lambda^{v}
    +N^2\sum_{v=1}^\infty\sum_{v_1+v_2=v} \frac{1}{v}\frac{v!}{v_1!v_2!}\lambda^v N^{v_2}
      -N^2\sum_{v=1}^\infty \frac{1}{v}\lambda^{v}\\
    &\quad+\frac{1}{2}\sum_{v=1}^\infty\sum_{v_1+v_2+v_3=v} \frac{1}{v}\frac{v!}{v_1!v_2!v_3!}\lambda^v N^{v_2+v_3}
      -\sum_{v=1}^\infty\sum_{v_1+v_2=v} \frac{1}{v}\frac{v!}{v_1!v_2!}\lambda^v N^{v_2}\\
    &\quad+\frac{1}{2}\sum_{v=1}^\infty \frac{1}{v}\lambda^{v}\;.
\end{split}
\end{equation}
Several sums repeat and can be combined. Doing so and applying the binomial theorem, we find
\begin{equation}
\begin{split}
    {\cal F}_{\rm CT}(\lambda)-{\cal F}_{\rm CT}(0)
    &=\frac{1}{2}(N^4-2N^2+1)\sum_{v=1}^\infty \frac{1}{v}\lambda^{v}
    +(N^2-1)\sum_{v=1}^\infty\frac{1}{v}[(1+N)\lambda]^v+\frac{1}{2}\sum_{v=1}^\infty \frac{1}{v}[(1+2N)\lambda]^v\\
    &=-\frac{1}{2}(N^2-1)^2\ln(1-\lambda)
      -(N^2-1)\ln(1-(1+N)\lambda)-\frac{1}{2}\ln(1-(1+2N)\lambda)\;.
\end{split}
\end{equation}
Therefore,
\begin{equation}
    \frac{ {\cal Z}_{\rm CT}(\lambda)
    }{ {\cal Z}_{\rm CT}(0)
    }=\frac{{\cal Z}_{\rm CT}[V_4](C_2\otimes {\bb 1}^{\otimes 2})}{{\cal Z}_{\rm CT}[0](C_2\otimes {\bb 1}^{\otimes 2})}=(1-\lambda)^{-\frac{1}{2}(N^2-1)^2}
    (1-(1+N)\lambda)^{-(N^2-1)}
    (1-(1+2N)\lambda)^{-\frac{1}{2}}\;.
\end{equation}
\end{proof}

\subsubsection*{Evaluation through determinant.}
\noindent Next, we show that the partition function of the self-adjoint effective model \eqref{pfpilef} for $R=C_2\otimes\bb 1^{\otimes 2}$ gives the same result. We do so by computing the Gaussian integral explicitly.

Consider order–4 tensor $\Phi$ and the partition function 
\begin{equation}
\label{eq:Z_quart_tm}
   {\cal Z}_{\rm \hat HT}(\lambda)= 
   \iint_{{\rm H}({\Cp_N^{\otimes 2}})}
   {\cal D}\hat\Phi\; e^{-\frac{N}{2}\Tr(\hat\Phi^2)+\hat V_4(\hat \Phi)}\;.
\end{equation}
The potential $\hat V_4$ is given by
\begin{align}
\label{eq:V_self_ad}
    \hat V_{\rm 4}(\lambda)=N\frac{\lambda}{2}\sum_{a_{\hat 1},b_{\hat 1}}\Bigl(
    \hat\Phi^{b_2b_3}_{a_2a_3}\hat\Phi^{a_2a_3}_{b_2b_3}
    +\hat\Phi^{b_2a_3}_{a_2a_3}\hat\Phi^{a_2b_3}_{b_2b_3}
    +\hat\Phi^{a_2b_3}_{a_2a_3}\hat\Phi^{b_2a_3}_{b_2b_3}\Bigr)
    \;.
\end{align}
Figure~\ref{expilef} illustrates a few examples of the connected graphs in this self-adjoint theory.

\vskip 10pt
\begin{figure}[H]
    \begin{subfigure}{0.333\textwidth}
    \centering
    \includegraphics[width=
    0.9\linewidth]{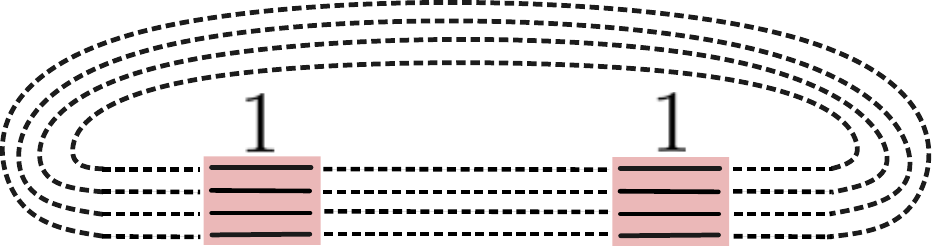}
    \caption{}
    \label{ex1pilef}
    \end{subfigure}
    \begin{subfigure}{0.333\textwidth}
    \centering
    \includegraphics[width=
    0.9\linewidth]{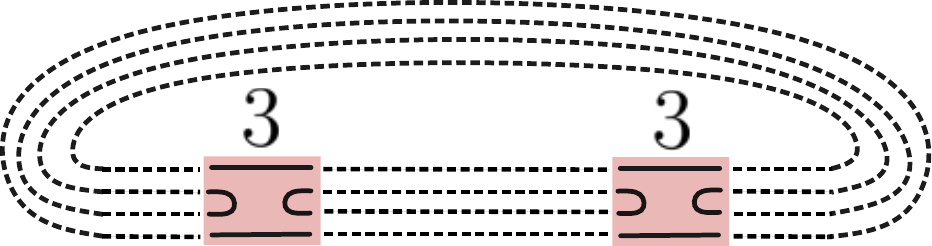}
    \caption{}
    \label{ex2pilef}
    \end{subfigure}
    \begin{subfigure}{0.333\textwidth}
    \centering
    \includegraphics[width=
    0.9\linewidth]{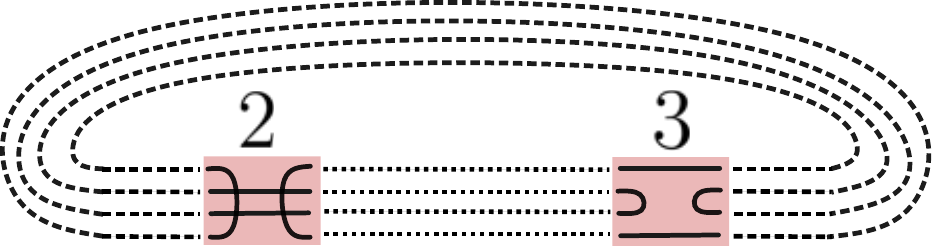}
    \caption{}
    \label{ex3pilef}
    \end{subfigure}
    \caption{Examples of graphs in the effective self-adjoint tensor model 
    \eqref{eq:Z_quart_tm}
    that arises from 
    the complex tensor model \eqref{eq:Z_CT4_app} with quartic pillow interactions \eqref{eq:potquarticpillowcomplex}. 
    The red boxes are the interations in \eqref{eq:V_self_ad} and the black lines are the propagators of this model. For each graph of the complex tensor model, there is a corresponding graph in the effective model: (a) corresponds to the equivalent of the graph in Figure~\ref{ex1pil}, (b) corresponds to the equivalent of the graph in Figure~\ref{ex2pil}, and (c) corresponds to the equivalent of the graph in Figure~\ref{ex3pil}.}
    \label{expilef}
\end{figure}

\noindent Considering the action $\hat S_{\rm 4}[\hat\Phi](\lambda)=\frac{1}{2}\Tr(\hat\Phi^2)-\hat V_{\rm 4}(\lambda)$ with $V_{\rm 4}(\lambda)$ given by \eqref{eq:V_self_ad}, we note this can be re-expressed as
\begin{equation}
\begin{split}
        \hat S_{\rm 4}[\hat\Phi](\lambda)=\frac{N}{2}\sum_{c_{\hat 1},d_{\hat 1}}\sum_{a_{\hat 1},b_{\hat 1}}
        (\hat\Phi^{c_2c_3}_{d_2d_3})^*
        \Bigl[
        (1-\lambda)\delta^{c_2c_3}_{a_2 a_3}\delta_{d_2d_3}^{ b_2b_3}
        - \lambda \delta^{c_2b_3}_{a_2 a_3}\delta_{d_2d_3}^{ b_2c_3}
        - \lambda\delta^{b_2c_3}_{a_2 a_3}\delta_{d_2d_3}^{ c_2b_3}
        \Bigr]\hat\Phi^{a_2a_3}_{b_2b_3}\;.
\end{split} 
\end{equation}
We decompose $\hat\Phi$ into real symmetric and antisymmetric parts under exchange $(a_2a_3)\leftrightarrow(b_2b_3)$,
\begin{equation}
    [\hat\Phi^{(\rm h)}]^{a_2a_3}_{b_2b_3}
    =[\hat\Phi^{(\rm s)}]^{a_2a_3}_{b_2b_3}
    +i[\hat\Phi^{(\rm a)}]^{a_2a_3}_{b_2b_3}\;.
\end{equation}
with $[\hat\Phi^{(\rm s)}]^{a_2a_3}_{b_2b_3}=[\hat\Phi^{(\rm s)}]^{b_2b_3}_{a_2a_3}$ and $[\hat\Phi^{(\rm a)}]^{a_2a_3}_{b_2b_3}=-[\hat\Phi^{(\rm a)}]^{b_2b_3}_{a_2a_3}$. This results in
\begin{equation}
\begin{split}
     {\hat S}_4[\hat\Phi](\lambda)
    &=\frac{N}{2}\sum_{c_{\hat 1},d_{\hat 1}}\sum_{a_{\hat 1},b_{\hat 1}}
    {[\hat\Phi^{(s)}]^{c_2c_3}_{d_2d_3}}
    \Bigl[
    (1-\lambda)\delta^{c_2c_3}_{a_2 a_3}\delta_{d_2d_3}^{ b_2b_3}
    - \lambda \delta^{c_2b_3}_{a_2 a_3}\delta_{d_2d_3}^{ b_2c_3}
    - \lambda\delta^{b_2c_3}_{a_2 a_3}\delta_{d_2d_3}^{ c_2b_3}
    \Bigr][\hat\Phi^{(s)}]^{a_2a_3}_{b_2b_3}\\
    &\quad+ \frac{N}{2}\sum_{c_{\hat 1},d_{\hat 1}}\sum_{a_{\hat 1},b_{\hat 1}}
    {[\hat\Phi^{(a)}]^{c_2c_3}_{d_2d_3}}
    \Bigl[
    (1-\lambda)\delta^{c_2c_3}_{a_2 a_3}\delta_{d_2d_3}^{ b_2b_3}
    - \lambda \delta^{c_2b_3}_{a_2 a_3}\delta_{d_2d_3}^{ b_2c_3}
    - \lambda\delta^{b_2c_3}_{a_2 a_3}\delta_{d_2d_3}^{ c_2b_3}
    \Bigr][\hat\Phi^{(a)}]^{a_2a_3}_{b_2b_3}\;,
\end{split}
\end{equation}
which allows us to computate the partition function \eqref{eq:Z_quart_tm} in terms of standard Gaussian integration methods. With this key insight in hand, we proceed to prove the following proposition for the partition function.
\begin{proposition}
The partition function ${\cal Z}_{\rm \hat HT}(\lambda)$ \eqref{eq:Z_quart_tm} can be computed explicitly in terms of the coupling $\lambda$ and it results in
\begin{equation}
    \frac{{\cal Z}_{\rm \hat HT}(\lambda)}{{\cal Z}_{\rm \hat HT}(0)}
   =(1-\lambda)^{-\frac{1}{2}(N^2-1)^2}
    (1-(1+N)\lambda)^{-(N^2-1)}
    (1-(1+2N)\lambda)^{-\frac{1}{2}}\;.
\end{equation}
\end{proposition}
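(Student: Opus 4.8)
The plan is to evaluate $\mathcal Z_{\rm\hat HT}(\lambda)$ directly as a Gaussian integral over the real vector space $\Hm(\Cp_N^{\otimes 2})$ of Hermitian $N^2\times N^2$ matrices, exploiting the fact that the action $\hat S_4[\hat\Phi](\lambda)$ is purely quadratic. Writing the quadratic form as $\hat S_4=\tfrac N2\langle\hat\Phi,C(\lambda)\hat\Phi\rangle$ with respect to the Hilbert--Schmidt inner product, the integral gives
\[
\frac{\mathcal Z_{\rm\hat HT}(\lambda)}{\mathcal Z_{\rm\hat HT}(0)}=\left(\frac{\det C(\lambda)}{\det C(0)}\right)^{-1/2},\qquad C(0)={\bb 1},
\]
so the whole problem reduces to diagonalizing the single operator $C(\lambda)$ on $\Hm(\Cp_N^{\otimes 2})$ and collecting its eigenvalues with their real multiplicities.

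The first step is to identify $C(\lambda)$. Combining the free term $\tfrac N2\Tr(\hat\Phi^2)$ with the first pillow term $N\tfrac\lambda2\Tr(\hat\Phi^2)$ produces the coefficient $(1-\lambda)$ of the identity operator, while the two remaining pillow terms are \emph{partial-trace operators}. Reading off the index contractions in the rewritten action, one gets
\[
C(\lambda)=(1-\lambda){\bb 1}-\lambda\,(P_2+P_3),
\]
where $P_c$ traces out the $c$-th tensor slot and reinserts the identity on that slot, i.e. $P_2\hat\Phi={\bb 1}\otimes\Tr_{(1)}(\hat\Phi)$ and $P_3\hat\Phi=\Tr_{(2)}(\hat\Phi)\otimes{\bb 1}$. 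Verifying this identification is the step most prone to error, since one must track carefully which $\delta$-pattern corresponds to tracing one color versus the other; I expect this index bookkeeping to be the main (if routine) obstacle.

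The key structural input is the algebra generated by $P_2,P_3$: both are self-adjoint for the Hilbert--Schmidt inner product, they commute, and a short computation yields $P_2^2=NP_2$, $P_3^2=NP_3$, and $P_2P_3=P_3P_2=E$ with $E\hat\Phi=(\Tr\hat\Phi){\bb 1}$ and $E^2=N^2E$. Hence $\pi_2:=P_2/N$ and $\pi_3:=P_3/N$ are commuting orthogonal projections with $\pi_2\pi_3=E/N^2$, and their joint spectral decomposition splits $\Hm(\Cp_N^{\otimes 2})$ into four eigenspaces labelled by $(\epsilon_2,\epsilon_3)\in\{0,1\}^2$, on which $P_c$ acts as the scalar $N\epsilon_c$. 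The eigenvalue of $C(\lambda)$ on such a block is $1-\bigl(1+N(\epsilon_2+\epsilon_3)\bigr)\lambda$, giving $1-\lambda$, $1-(1+N)\lambda$ (twice), and $1-(1+2N)\lambda$. I would then compute the real dimensions: the block $(0,0)$ consists of Hermitian matrices with both partial traces vanishing, of dimension $(N^2-1)^2$; the blocks $(1,0)$ and $(0,1)$ are $\{{\bb 1}\otimes X:\ X\ \text{Hermitian traceless}\}$ and its color-swapped analogue, each of dimension $N^2-1$; and $(1,1)=\mathbb R{\bb 1}$ has dimension $1$. These sum to $N^4=\dim_{\mathbb R}\Hm(\Cp_N^{\otimes 2})$, a useful consistency check.

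Assembling the determinant then gives
\[
\det C(\lambda)=(1-\lambda)^{(N^2-1)^2}\,\bigl(1-(1+N)\lambda\bigr)^{2(N^2-1)}\,\bigl(1-(1+2N)\lambda\bigr),
\]
and raising to the power $-1/2$ reproduces exactly the claimed formula. The reality constraint (Hermiticity) can be handled either directly, as above, or through the symmetric/antisymmetric decomposition $\hat\Phi^{(\rm h)}=\hat\Phi^{(\rm s)}+i\hat\Phi^{(\rm a)}$ set up just before the statement: since $P_2$ and $P_3$ commute with transposition, each eigenspace splits compatibly into symmetric and antisymmetric parts on which $C(\lambda)$ acts by the \emph{same} scalar, so only the total real dimension of each $(\epsilon_2,\epsilon_3)$ block enters the determinant. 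This matches the earlier resummation result, completing the equivalence.
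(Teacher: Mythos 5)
Your proposal is correct, and while it follows the paper's overall strategy (reduce the purely quadratic action to $\det C(\lambda)^{-1/2}$), it diagonalizes $C(\lambda)$ by a genuinely different decomposition. The paper works in explicit coordinates: it splits the index space into four blocks according to whether the slot indices coincide, $C=C_{(\ne,\ne)}\oplus C_{(=,\ne)}\oplus C_{(\ne,=)}\oplus C_{(=,=)}$, evaluates each block determinant from the spectrum of the all-ones matrix $J_N$, and only recovers the exponents $(N^2-1)^2$, $2(N^2-1)$, $1$ after multiplying the four block determinants; the Hermiticity of the measure is handled by the symmetric/antisymmetric split $\hat\Phi^{(\rm h)}=\hat\Phi^{(\rm s)}+i\hat\Phi^{(\rm a)}$. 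You instead recognize $C(\lambda)=(1-\lambda){\bb 1}-\lambda(P_2+P_3)$ as a polynomial in two commuting self-adjoint partial-trace-and-reinsert operators with $P_c^2=NP_c$ and $P_2P_3=E$, so that $P_c/N$ are commuting orthogonal projections whose joint spectral decomposition yields the eigenvalues $1-(1+N(\epsilon_2+\epsilon_3))\lambda$ directly, with multiplicities read off as dimensions of invariantly characterized subspaces. Your identification of $P_2,P_3$ checks out against the rewritten action: the term $\sum\hat\Phi^{b_2a_3}_{a_2a_3}\hat\Phi^{a_2b_3}_{b_2b_3}$ equals $\Tr\bigl((\Tr_{(3)}\hat\Phi)^2\bigr)$ and corresponds to tracing one slot and reinserting the identity there, and similarly for the other term, so the kernel is exactly your $C(\lambda)$; your dimension count $(N^2-1)^2+2(N^2-1)+1=N^4$, together with the observation that each $\dagger$-stable complex eigenspace contributes its complex dimension as the real dimension of its Hermitian part, settles the reality bookkeeping that the paper handles by the $(\rm s)/(\rm a)$ split. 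What each approach buys: the paper's computation is elementary and fully explicit, requiring nothing beyond the eigenvalues of $J_N$; yours is coordinate-free, far less prone to index errors, makes the degeneracy structure conceptually transparent (the eigenvalues are labelled by which partial traces vanish), and generalizes more readily, e.g.\ to the order-$2(D-1)$ pillow reductions, where one would have $D-1$ commuting projections of the same type and eigenvalues $1-(1+N\sum_c\epsilon_c)\lambda$.
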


\begin{proof}
Given that the partition function \eqref{eq:Z_quart_tm} can be expressed in Gaussian form as
\begin{equation}
     {\hat S}_4[\hat\Phi](\lambda)
    =\frac{N}{2}\sum_{c_{\hat 1},d_{\hat 1}}\sum_{a_{\hat 1},b_{\hat 1}}
    \hat\Phi^{c_2c_3}_{d_2d_3}
    C^{c_2c_3;b_2b_3}_{d_2d_3;a_2a_3}
    \hat\Phi^{a_2a_3}_{b_2b_3}\;,
\end{equation}
with\footnote{Not to be confused with $C_k$ defined in \eqref{cnprop}.}
\begin{equation}
\label{eq:Coperator}
   C^{c_2c_3;b_2b_3}_{d_2d_3;a_2a_3}
   =(1-\lambda)\delta^{c_2c_3}_{a_2 a_3}\delta_{d_2d_3}^{ b_2b_3}
   - \lambda \delta^{c_2b_3}_{a_2 a_3}\delta_{d_2d_3}^{ b_2c_3}
   - \lambda\delta^{b_2c_3}_{a_2 a_3}\delta_{d_2d_3}^{ c_2b_3}\;.
\end{equation}
By decomposing this operator $C$ \eqref{eq:Coperator} into four blocks according to whether the index pairs coincide or not, we obtain
\begin{equation}
    C=C_{(\ne,\ne)}\oplus C_{(=,\ne)}\oplus C_{(\ne,=)}\oplus C_{(=,=)}\;,
\end{equation}
where
\begin{itemize}
    \item $j\neq k, i\neq l$ and $j'\neq k', i'\neq l'$ space:
\begin{equation}
    C_{(\ne,\ne)}:=(1-\lambda)I_{N}\otimes I_{N-1}\otimes I_{N}\otimes I_{N-1}\;.
\end{equation}

    \item $j= k, i\neq l$ and $j'=k', i'\neq l'$ space:
\begin{equation}
    C_{(=,\ne)}:=[(1-\lambda)I_{N}-\lambda J_{N}]\otimes I_{N}\otimes I_{N-1}\;.
\end{equation}

    \item $j\ne k, i= l$ and $j'\ne k', i'= l'$ space:
\begin{equation}
    C_{(\ne,=)}:=I_{N}\otimes I_{N-1}\otimes[(1-\lambda)I_{N}-\lambda J_{N}]\;.
\end{equation}

    \item $j= k, i= l$ and $j'= k', i'= l'$ space:
\begin{equation}
    C_{(=,=)} := (1-\lambda)I_{N}\otimes I_{N}
    -I_{N}\otimes \lambda J_{N}-\lambda J_{N}\otimes I_{N}\;.
\end{equation}
\end{itemize}
In these expressions, $I_N$ denotes the $N\times N$ identity and $J_N$ is the $N\times N$ matrix with all entries equal to $1$. The matrix $J_N$ has $N-1$ eigenvalues $0$ and one eigenvalue $N$, so
\begin{equation}\label{detjn}
    \det [(1-\lambda)I_{N}-\lambda J_{N}] =(1-\lambda)^{N-1}\bigl(1-(1+N)\lambda\bigr)\;.
\end{equation}
Taking tensor products of the corresponding eigenvectors, one also finds
\begin{equation}\label{detjnjn}
    \det \bigl[(1-\lambda)I_{N}\otimes I_{N}-I_{N}\otimes \lambda J_{N}- \lambda J_{N}\otimes I_{N}\bigr]
    =(1-\lambda)^{(N-1)^2}
    \bigl(1-(1+N)\lambda\bigr)^{2(N-1)}
    \bigl(1-(1+2N)\lambda\bigr)\;.
\end{equation}
Therefore, the determinants of the blocks are
\begin{equation}
    \det (
    C_{(\ne,\ne)}) = (1-\lambda)^{N^2(N-1)^2}\;,
\end{equation}
\begin{equation}
    \det (C_{(=,\ne)}) = \det (
    C_{(\ne,=)}) = \bigl[(1-\lambda)^{N-1}(1-(1+N)\lambda)\bigr]^{N(N-1)}\;,
\end{equation}
and
\begin{equation}
    \det (
    C_{(=,=)})  =(1-\lambda)^{(N-1)^2}
    \bigl(1-(1+N)\lambda\bigr)^{2(N-1)}
    \bigl(1-(1+2N)\lambda\bigr)\;.
\end{equation}
Combining the blocks, we obtain
\begin{equation}
\begin{aligned}
    \det(C)
    &=\det (
    C_{(\ne,\ne)})\det (
    C_{(=,\ne)})\det (
    C_{(\ne,=)})\det (
    C_{(=,=)})\\
    &=(1-\lambda)^{(N^2-1)^2}
    \bigl(1-(1+N)\lambda\bigr)^{2(N^2-1)}
    \bigl(1-(1+2N)\lambda\bigr)\;.
    \end{aligned}
\end{equation}
Since $\det(C)=1$ when $\lambda=0$, the Gaussian integral yields
\begin{equation}
    \frac{{\cal Z}_{\rm \hat HT}(\lambda)}{{\cal Z}_{\rm \hat HT}(0)}=
    \det(C)^{-1/2}\;,
\end{equation}
hence
\begin{equation}
     \frac{{\cal Z}_{\rm \hat HT}(\lambda)}{{\cal Z}_{\rm \hat HT}(0)}=\frac{{\cal Z}_{\rm \hat HT}[ \hat{V}_{\rm 4},Y[C_2\otimes {\bb 1^{\otimes 2}}]]}{{\cal Z}_{\rm \hat HT}[ 0,Y[C_2\otimes {\bb 1^{\otimes 2}}]]}
     =(1-\lambda)^{-\frac{1}{2}(N^2-1)^2}
     (1-(1+N)\lambda)^{-(N^2-1)}
     (1-(1+2N)\lambda)^{-\frac{1}{2}}\;.
\end{equation}
\end{proof}
We recover exactly the result of the resummation computation.

\bibliographystyle{utphys}
\bibliography{ref}

\end{document}